\definecolor{darkbrown}{rgb}{0.4, 0.26, 0.13}
\newcounter{algsubstate}
\renewcommand{\thealgsubstate}{\alph{algsubstate}}
 \definecolor{orange}{RGB}{230,170,120}
  \definecolor{green}{RGB}{120,200,120}
 \theoremstyle{plain}
 \newtheorem{thm}{Theorem}[section]
 \newtheorem{lem}[thm]{Lemma}
 \newtheorem{prop}[thm]{Proposition}
 \theoremstyle{definition}
 \newtheorem{defn}{Definition}[section]
 \newtheorem{exmp}{Example}[section]
  \newtheorem{ass}{Assumption}[section]
 \theoremstyle{definition}
 \newtheorem{rem}{Remark}
 \def\BState{\State\hskip-\ALG@thistlm}
\def\spacingset#1{\renewcommand{\baselinestretch}%
{#1}\small\normalsize} \spacingset{1}
\definecolor{coquelicot}{rgb}{1.0, 0.22, 0.0}
\algnewcommand\algorithmicforeach{\textbf{for each}}
\titlespacing*{\section}{0pt}{1.0ex plus .2ex minus .1ex}{0.5ex}
\titlespacing*{\subsection}{0pt}{0.8ex plus .2ex minus .1ex}{0.4ex}
\titlespacing*{\subsubsection}{0pt}{0.6ex plus .2ex minus .1ex}{0.3ex}
\titleformat{\paragraph}[runin]
  {\normalfont\normalsize\bfseries}{\theparagraph}{0.5em}{}
\titlespacing*{\paragraph}{0pt}{0.4ex plus .1ex minus .1ex}{0.6em}
\title{Evidence aggregation with ignorance in mind: 
learning what we do (not) know for archetypes discovery\footnote{We thank Isaiah Andrews, Paul Goldsmith-Pinkham, Florencia Hnilo, Guido Imbens, Madeline McKelway, Konrad Menzel, Jose Montiel-Olea, Muriel Niederle, Ashesh Rambachan, Neil Shephard, Jesse Shapiro, Rahul Singh, Nicholas Swanson, Elie Tamer, the editor and referees for helpful comments. We thank Jesus David Martinez Hernandez, Alexander Almeida, and Camilla Cherubini for exceptional research assistance. This project has received financial support by the NSF Grant SES 2447088. Davide Viviano also acknowledges generous support from the Harvard Griffin fund. } 
  }
\author{Emily Breza $\quad$ Arun G. Chandrasekhar $\quad$ Davide Viviano}
\date{First version: January, 2025 \\ This version: \today }
\begin{document}
\maketitle

 \spacingset{1.45}

 \begin{abstract}

When evaluating policy interventions, researchers often pursue two related goals: identifying which individuals or contexts benefit most, and determining whether patterns of treatment effect heterogeneity can be used to aggregate evidence across environments. We develop a framework that aggregates treatment effect heterogeneity, defined over individual and environmental characteristics, into interpretable summaries while setting aside contexts in which extrapolation is unreliable and further evidence is needed. The procedure therefore learns both how to summarize heterogeneous effects and when researchers should admit ignorance. We derive finite-sample regret guarantees, provide data-driven guarantees for selecting the complexity of the summary class, and inference procedures that quantify the value of follow-up data collection. We illustrate the approach by reanalyzing a multifaceted anti-poverty program implemented in six countries.

 \end{abstract}

\newpage

\section{Introduction}

Given the rise of experimental methods and increasingly rich data in social science, researchers can now measure the effects of policy interventions in larger, more representative populations and across a wider range of contexts. In many applications, policymakers want to know where and for whom a promising intervention should be scaled, and when additional evidence is needed. More broadly, social scientists are interested in learning and summarizing heterogeneous effects from the data. Achieving these goals requires understanding patterns of treatment effect heterogeneity across a potentially high-dimensional set of observable characteristics.\footnote{We can interpret the study of heterogeneity both in applications to meta-analysis, where researchers have access to multiple studies, and in applications to treatment effect heterogeneity within a single experiment.} In this paper, we develop an econometric framework and a set of empirical tools to summarize heterogeneous treatment effects with an eye toward both immediate use and future investigation.

As an illustrative example, consider the multifaceted ``Graduation program'' studied experimentally in six countries by \cite{banerjee2015multifaceted}. The program aims to lift individuals out of extreme poverty through income generation and typically combines a large asset transfer, training, savings accounts, and short-term cash transfers. In this setting, heterogeneity is first-order: ex ante, it is unclear which individuals may respond most strongly to the program (for example, by age, relative wealth, or marital status), and how market conditions may shape its effectiveness (for example, through credit access and labor demand).\footnote{While a researcher could explicitly model each of these forces and incorporate them structurally into estimation, the class of plausible models is often very large since different mechanisms might be at play in different places and may be ex-ante unknown.} To summarize such high-dimensional heterogeneity, the researcher must understand when evidence from one context (say widows in Pakistan) can inform another (say job seekers in Peru), and when it cannot. If some contexts are both noisy and substantively different from the rest, it may be preferable to withhold prediction altogether until more data are collected.

Motivated by this observation, our goal is to identify systematic groups of observable characteristics (\textit{archetypes}) that are informative for others under a statistical or economic model, while also detecting the contexts for which such aggregation is not credible given the available data. Rather than forcing conclusions in all environments, we allow the researcher to identify which observations should \textit{not} be pooled, thereby revealing where additional evidence is most valuable (denoted as \textit{basin of ignorance}).


Specifically, researcher observes data from a large number of environments together with individual outcomes and baseline characteristics. Their problem is to aggregate heterogeneity using an interpretable model, while retaining the option to abstain from prediction and recommend additional experimentation in some contexts at a given cost. The framework therefore balances two objectives: using existing evidence to form predictions where credible, and identifying where new data would be most useful. 


Using existing studies, we construct estimators in two steps. First, for each (\textit{small}) group $x$ of the observable individual-level and environmental characteristics (with different age, education, baseline  consumption, site of the experiment, etc.), we form unbiased but possibly noisy estimates of the conditional average effect (CATE) and its variance. Second, we assign each of these units to either an archetype or the basin of ignorance. Assignment to the basin of ignorance incurs a fixed cost/loss.  
The estimated loss for groups comprising an archetype is instead equal to the approximation error of the model used to aggregate evidence across archetypes, estimated by taking its squared prediction error, and subtracting the within sampling variation at $x$.

We justify our approach with three sets of results. We first study the regret as the gap between the researcher’s loss under our estimator and the loss of the oracle rule that observes treatment effects without estimation error, holding fixed the class of admissible models. Under standard moment conditions, we show that this regret vanishes at a $1/\sqrt{N}$-rate in the sample size $N$. The result relies on independent (but not identically distributed) observations together with complexity restrictions on both the model class and the basin of ignorance. Our analysis applies to general model classes with bounded complexity, and we then specialize it to discrete partitions, for which we derive matching minimax upper and lower bounds in the sample size. 

We next allow the researcher to select the model class from the data via sample-splitting procedure, such as the depth of a tree for tree-based partitions. The resulting procedure is compared to an oracle that, absent statistical error, is allowed to use the richest class in the collection to predict across all environments. Our regret bound shows that the selected rule performs nearly as well as the model that minimizes, over all candidate classes, the sum of three terms: a statistical error term, which increases with model complexity; a bias component, which decreases with complexity; and the cost of abstention. A key advantage of our framework relative to standard model-selection results \citep[e.g.][]{mbakop2021model} is that the bias is evaluated only on the subset of types for which the procedure chooses to predict, rather than on the full sample. This clarifies how abstention improves model selection by allowing the researcher to use a simpler and more precise model where it fits well, while withholding prediction in contexts where it performs poorly.

We the turn to the choice of the abstention cost. We give a decision-theoretic interpretation that links this cost to the mean-squared error of a prediction formed after collecting additional data in a follow-up experiment, and, in turn, to the sample size of the follow-up study. This yields transparent benchmark and data-driven calibrations for the cost of abstention through a break-even frontier: the minimum follow-up sample size required for abstention to improve upon the best rule that predicts everywhere with high-probability, identifying when additional experimentation is most valuable.

Finally, we provide exact and approximate optimization procedures where we consider both model classes of discrete partitions, and smooth prediction functions.

We apply our method to the multifaceted Graduation program. We choose a depth-four tree (which finds eleven archetypes) by cross-validation and assigns approximately \(15\%\) of observations to the basin of ignorance through the break-even analysis. The resulting tree reveals substantial heterogeneity: predicted effects range from approximately zero to \(48\%\) of a control-group standard deviation, with effects typically larger for poorer individuals. Several predictive archetypes are shared across countries, showing how the method aggregates evidence across sites. In contrast, the corresponding no-abstention tree primarily splits by country, and creates several small, and higher-variance groups with extreme predictions. Out-of-sample validations and calibrated simulations illustrate large gains in terms of mean-squared error on the predictive set compared to standard procedures such as Empirical Bayes, CART, and generalized random forests that do not account for ignorance.

\paragraph{Related literature} We connect to the literature on meta-analysis and machine-learning methods for treatment effect heterogeneity, both of which are increasingly prevalent in applied work.\footnote{For example, recent meta-analyses tackle topics including deworming, cash transfers, education interventions, the link between democracy and growth, and tests of Allport's contact hypothesis \citep{croke2024meta,angrist2023implementation,crosta2024unconditional,doucouliagos2008democracy,paluck2019contact}. A related empirical literature has also emerged focusing on policy design and targeting \citep{banerjee2025selecting,haushofer2022targeting}.} Across these domains \citep[e.g.][]{meager2022aggregating,spiess2023finding, chernozhukov2018generic, athey2018generalizedrandomforests, venkateswaran2024robustly, bonhomme2015grouped, ishihara2021evidence, menzel2023transfer, kim2026causal, adjaho2022externally, manski2004statistical, athey2021policy, kitagawa2018should}, the objective is typically to estimate heterogeneous effects, or make treatment decisions, for every context in the population of interest. Our departure is to allow the researcher to abstain from prediction, thereby learning where observations should not be pooled and where additional evidence is warranted.

This abstention margin changes the role of the statistical or economic model. As we formalize in Section \ref{sec:model_selection}, assumptions such as sparsity or smoothness \citep[e.g.][]{athey2018generalizedrandomforests, chernozhukov2018generic} need only hold locally on an ex-ante unknown subset of the data, rather than globally as in much of the existing heterogeneity literature. This makes the resulting procedures more robust in practice. Likewise, methods that address statistical noise through power considerations or shrinkage \citep[e.g.][]{spiess2023finding, meager2019understanding} do not accommodate units that remain poorly predicted even \textit{absent} estimation error because of model misspecification. Such units matter not only because they are difficult to predict, but also because they can distort predictions for the remaining units.

We also connect to the robust statistics literature \citep[e.g.][]{huber2011robust,garcia1999robustness}. We complement this literature that typically either posit a robust loss function ex ante, or focus on quantifying robustness through informative robustness metrics \citep[e.g.][]{broderick2020automatic}, here instead by learning the ignorance set from the data. This perspective leads to a different objective: a researcher-facing rule that jointly governs prediction, abstention, and the value of collecting more data and novel (regret) guarantees. 

Our contribution also lies between two adjacent strands of work. Some papers study evidence aggregation for a fixed prediction rule when no further experimentation is possible \citep{deeb2019clustering, bisbee2017local, andrews2022transfer, manski2020toward}. Others use heterogeneity to guide experimental design in the absence of current evidence \citep{gechter2024selecting, olea2024externally}. We instead use existing data both to form predictions where they are credible and to identify where additional experimentation is most valuable. In this sense, the paper also connects to machine-learning work on classification with rejection options \citep{chow1970optimum, cortes2016learning, franc2023optimal} and, more broadly, to \cite{shafer1992dempster}. Unlike that literature, and work on regression with abstention under correct specification/exchangeability \citep{denis2020regression, sokol2024conformalized}, we study the different problem of treatment effect heterogeneity, and solve a joint prediction-and-abstention problem where misspecification and non-exchangeability may naturally occur. This yields a different class of estimators, new regret guarantees (including those for model selection), and a decision-theoretic connection between abstention and experimentation.


\section{Problem description: ignorance-aware predictions} \label{sec:setup}

Consider a setting in which units are partitioned into many observable types \(x \in \mathcal X\). A type may be defined by the cross-product of individual characteristics and study-level characteristics, such as the site or country of an experiment. We allow \(\mathcal X\) to be discrete and high-dimensional, so that the number of types may grow with the overall sample size (in our application $|\mathcal{X}| = 796$). For each type \(x\), the object of interest is a scalar \emph{property} \(\tau(x)\in\mathbb R\), such as a conditional average treatment effect for a given outcome. (Appendix \ref{sec:multiple_properties} presents multivariate $\tau(x)$.) $\tau(x)$ is non-random. 

Researchers observe, for each \(x\in\mathcal X\), noisy estimates
$
\big(\hat\tau(x),\hat\eta(x)^2\big)\sim_{i.n.i.d.}\mathcal D_x,
$ 
independently across \(x\), where \(\mathcal D_x\) may be unknown and heterogeneous, and
\begin{equation} \label{eqn:unbiased1}
\small 
\begin{aligned} 
\mathbb E[\hat\tau(x)] = \tau(x), 
\qquad
\eta(x)^2 := \mathbb E[\hat\eta(x)^2] = \mathbb{V}(\hat\tau(x)).
\end{aligned} 
\end{equation} 
Thus, for each type \(x\), we observe an unbiased estimate of the target \(\tau(x)\) together with an unbiased estimate of its sampling variance, which needs not to be consistent when $N_{\mathcal{X}}$ is high-dimensional. 
We relax independence across \(x\) in Appendix \ref{sec:local}.

\begin{exmp} \label{exmp:ipw}
Consider a randomized experiment with independent treatments \(D_i\in\{0,1\}\), observed outcome
$ 
Y_i=D_iY_i(1)+(1-D_i)Y_i(0).
$ Let
$ 
e(x)=\Pr(D_i=1\mid X_i=x)$, $
n(x)=\big|\{i:X_i=x\}\big|,
$ 
with \(n(x)\ge 2\). For instance, here $x$ may denote individuals with similar baseline consumption, assets, food security, the same site of the study, age, education and gender, with $n(x)$ potentially small. Let
$ 
\tau(x)=\mathbb E\!\left[Y_i(1)-Y_i(0)\mid X_i=x\right],
$
and
$
\tilde Y_i
=
\frac{D_iY_i}{e(X_i)}
-
\frac{(1-D_i)Y_i}{1-e(X_i)}, 
$ 
and let 
\begin{equation}\label{eqn:pseudo_true}
\small 
\begin{aligned} 
\hat\tau(x)
=
\frac{1}{n(x)}
\sum_{i:X_i=x}\tilde Y_i, \qquad \hat\eta(x)^2
=
\frac{1}{n(x)(n(x)-1)}
\sum_{i:X_i=x}\big(\tilde Y_i-\hat\tau(x)\big)^2.
\end{aligned} 
\end{equation}
 Equation \eqref{eqn:unbiased1} holds, and both can be noisy estimates with small $n(x)$. Extensions to continuous covariates can also be accommodated as described in Appendix \ref{sec:matching}.
 \qed 
\end{exmp}

\subsection{Ignorance-aware predictions}

For interpretability and statistical precision, researchers may wish to summarize heterogeneity $\tau(x)$ using a low-complexity prediction rule. Common examples include predictions that group individuals into a finite number of groups via clustering \citep[e.g.][]{chernozhukov2018generic, athey2016recursive, bonhomme2015grouped}, sparse high-dimensional linear models \citep{banerjee2025selecting} or smooth function classes \citep{athey2018generalizedrandomforests, haushofer2022targeting}.  Without imposing restrictions on the complexity of the underlying treatment effects $\tau$, we seek a simpler approximation \(f\) in a candidate class \(\mathcal F\), where \(\mathcal F\) captures economic, communication, or statistical constraints. We start from $\mathcal{F}$ as given, and defer to Section~\ref{sec:model_selection} model selection for $\mathcal{F}$.

A simpler approximation is desirable since it may be easy to communicate and avoid overfitting, but may approximate well effect heterogeneity $\tau(x)$ only for a subset of units $x$. Types $x$ that are poorly approximated within \(\mathcal F\) can distort the choice of \(f\) and increase approximation errors for other units $x$. To address this concern, we introduce a decision problem where researchers can either report a prediction $f(x)$ or abstain. Abstaining at \(x\) defers the prediction until further evidence is collected. 

Formally, let \(\pi(x)\in\{0,1\}\) indicate whether a prediction is reported for type \(x\), and let \(c(x)\ge 0\) denote the opportunity cost of abstaining. Section~\ref{sec:cost_abstention} provides practical recommendations for the choice of $c(x)$ and shows how to interpret $c(x)$ as the corresponding cost of collecting additional data. Define the corresponding loss as 
\begin{equation}\label{eqn:loss}
\small 
\begin{aligned} 
R_c(f,\pi)
=
\sum_{x\in\mathcal X} p(x)
\left[
\left(f(x) - \tau(x)\right)^2\pi(x)+c(x)\bigl(1-\pi(x)\bigr)
\right],
\end{aligned} 
\end{equation}
where \(p(x)\) is the target distribution over types, treated as known by the researcher. (Appendix \ref{app:estimated_p} extends our framework to unknown/estimated $p(x)$.) For given $\pi$, define
$ 
\{x\in\mathcal X:\pi(x)=0\}
$  
as the basin of ignorance, i.e., the set for which we abstain from predictions, and its complement as the prediction region (archetypes). 


Given \(\mathcal F\) and a class of abstention rules \(\Pi\), the \textit{oracle} ignorance-aware prediction problem is
\begin{equation}
\small 
\begin{aligned} \label{defn:generalizable_predictions}
(\pi^\star,f^\star)\in
\arg\min_{\pi\in\Pi, f\in\mathcal F}
R_c(f,\pi).
\end{aligned} 
\end{equation}
Ignorance-aware prediction jointly chooses a simple summary $f \in \mathcal{F}$ of \(\tau\) and the set of types $x$ for which that summary is reliable enough to report. As for restrictions on $\mathcal{F}$, the complexity of $\Pi$ characterizes communication or design constraints for abstention. (It is possible that $\mathcal{F}$ is a function of $\pi$, omitted for notational convenience.)

Existing methods for effect heterogeneity do not allow for abstention, and as such they are nested as the limiting case in which abstention is prohibitively costly. Our formulation instead allows the researcher to withhold prediction when the loss from forcing a low-complexity summary is larger than the opportunity cost of deferral, here capturing the opportunity cost of eliciting further evidence. As we show in Section \ref{sec:theory} this allows researchers to trade-off cost of future data collection,  the bias of $f$ within the prediction set and precision of the chosen prediction.

\begin{exmp}[Supervised clustering] \label{exmp:supervised_clustering}
A leading example summarizes heterogeneity by grouping types into a finite number of archetypes, a common practice in economics \citep{chernozhukov2018generic, athey2016recursive}. Let
$
\alpha:\mathcal X\to\{1,\dots,G\}$, $\alpha\in\mathcal G,
$ 
where \(\alpha(x)=1\) denotes abstention and \(\alpha(x)=g\ge 2\) assigns type \(x\) to archetype \(g\), and \(\mathcal G\) is an admissible class of partition rules. The corresponding prediction rule is a constant prediction within each prediction group $ 
f_\alpha(x)=\mu_{\alpha(x)}$, for $\alpha(x)\ge 2.
$ 

This case is a clustering problem: If \(\tau(x)\) were known (or consistently estimated at each $x$), and \(\mathcal G\) unrestricted, the oracle problem can be solved via dynamic programming, a useful observation recently made by \cite{montielolea_archetype_discovery}, connecting to one-dimensional clustering \citep[e.g.][]{wang2011ckmeans}. 

In applications researchers may also restrict \(\mathcal G\) to interpretable classes, such as tree partitions as in Figure \ref{fig:tree_clustering_combined}(Panel (a)) or clustering over space of covariates \(x \in \mathbb{R}^d\) as in Figure \ref{fig:tree_clustering_combined}(Panel (b)) to both improve interpretability (e.g., to study how baseline consumption, assets and marital status may or may not predict heterogeneous responses) and allow for (many) noisy estimates of $\tau(x)$, see e.g., Section \ref{sec:clustering}. 
\end{exmp}

\begin{exmp}[Linear models (with sparsity).] \label{exmp:economic_example} 
Researchers may posit a linear model 
$ 
f_\theta(x)=q(x)'\theta,
$ 
where \(q(x)\) is a possibly high dimensional vector of individual and site characteristics and $\theta$ is sparse (i.e., $\sum_j 1\{\theta_j \neq 0\} \le s$ for a constant $s$), a requirement often imposed in development applications \citep[e.g.][]{banerjee2025selecting}. 
\end{exmp}

\begin{figure}[!ht]
\centering

\begin{minipage}[t]{0.36\textwidth}
\centering
\vspace{0pt}
{\small (a) Tree}\\[0.4em]

\begin{tikzpicture}[
    grow=down,
    level distance=1.25cm,
    sibling distance=24mm,
    edge from parent/.style={draw,->},
    every node/.style={font=\small},
    decision/.style={
        draw, rounded corners, fill=white,
        minimum width=18mm, minimum height=7mm, align=center
    },
    ign/.style={
        draw, rounded corners, fill=gray!20,
        minimum width=18mm, minimum height=7mm, align=center
    },
    gtwo/.style={
        draw, rounded corners, fill=red!18,
        minimum width=14mm, minimum height=7mm, align=center
    },
    gthree/.style={
        draw, rounded corners, fill=blue!18,
        minimum width=14mm, minimum height=7mm, align=center
    }
]

\node[decision] {split on\\ $x_{j(1)}$}
child {
    node[decision] {split on\\ $x_{j(2)}$}
    child {
        node[ign, yshift=-2mm] {Ignorance}
    }
    child {
        node[gtwo, yshift=2mm] {$g=2$}
    }
}
child {
    node[decision] {split on\\ $x_{j(3)}$}
    child {
        node[ign, yshift=-6mm] {Ignorance}
    }
    child {
        node[gthree, yshift=-1mm] {$g=3$}
    }
};

\end{tikzpicture}
\end{minipage}
\hfill
\begin{minipage}[t]{0.60\textwidth}
\centering
\vspace{0pt}
{\small (b) Clustering on covariate space}\\[0.4em]

\begin{tikzpicture}[scale=0.65, every node/.style={font=\scriptsize}]

    \draw[->] (-0.2,0) -- (8.5,0) node[below] {$x_1$};
    \draw[->] (0,-0.2) -- (0,6.5) node[left] {$x_2$};

    \fill[gray!28] 
        (0,3.8) -- (0,6) -- (2.4,6) -- (3.2,4.8) -- (2.6,3.5) -- cycle;

    \fill[red!18] 
        (0,0) -- (0,3.8) -- (2.6,3.5) -- (3.2,2.2) -- (2.2,0) -- cycle;

    \fill[blue!18] 
        (2.6,3.5) -- (3.2,4.8) -- (5.0,4.5) -- (5.1,2.8) -- (3.2,2.2) -- cycle;

    \fill[gray!28] 
        (2.2,0) -- (3.2,2.2) -- (5.1,2.8) -- (8,2.2) -- (8,0) -- cycle;

    \fill[green!18] 
        (2.4,6) -- (8,6) -- (8,2.2) -- (5.1,2.8) -- (5.0,4.5) -- (3.2,4.8) -- cycle;

    \draw[thick] (0,3.8) -- (2.6,3.5);
    \draw[thick] (2.6,3.5) -- (3.2,4.8);
    \draw[thick] (3.2,4.8) -- (2.4,6);

    \draw[thick] (2.6,3.5) -- (3.2,2.2);
    \draw[thick] (3.2,2.2) -- (2.2,0);

    \draw[thick] (3.2,4.8) -- (5.0,4.5);
    \draw[thick] (5.0,4.5) -- (5.1,2.8);
    \draw[thick] (5.1,2.8) -- (3.2,2.2);

    \draw[thick] (5.1,2.8) -- (8,2.2);

    \draw[thick] (0,0) rectangle (8,6);

    \filldraw[black] (1.2,4.9) circle (2.2pt) node[above left=2pt] {$z_1$};
    \filldraw[black] (1.4,1.5) circle (2.2pt) node[below left=2pt] {$z_2$};
    \filldraw[black] (4.0,3.4) circle (2.2pt) node[below=2pt] {$z_3$};
    \filldraw[black] (5.7,1.1) circle (2.2pt) node[below=2pt] {$z_4$};
    \filldraw[black] (6.0,4.8) circle (2.2pt) node[above right=2pt] {$z_5$};

    \node[gray!70!black, align=center] at (1.3,4.1) {basin of\\ignorance};
    \node[gray!70!black, align=center] at (5.8,1.9) {basin of\\ignorance};

    \node[red!60!black] at (1.5,2.7) {$g=2$};
    \node[blue!60!black] at (4.0,4.2) {$g=3$};
    \node[green!50!black!80] at (6.2,3.7) {$g=5$};

\end{tikzpicture}
\end{minipage}

\caption{Comparison of two ignorance-aware prediction rules. Panel (a) shows a tree-based rule with predictive leaves and ignorance leaves. Panel (b) shows a supervised clustering rule on covariate space, where the basin of ignorance is the union of two cells and the remaining cells form predictive groups.}
\label{fig:tree_clustering_combined}
\end{figure}

\subsection{Estimation} \label{sec:estimation}

For any candidate pair \((f,\pi)\), we estimate the loss function in \eqref{eqn:loss} by
\begin{equation}\label{eqn:R_hat}
\small 
\begin{aligned} 
\widehat R(f,\pi)
=
\sum_{x\in\mathcal X} p(x)
\left[
\left\{\big(f(x)-\hat\tau(x)\big)^2-\hat\eta(x)^2\right\}\pi(x)
+
\hat{c}(x)\bigl(1-\pi(x)\bigr)
\right], 
\end{aligned} 
\end{equation}
where, as for $\hat{\tau}(x)$ and $\hat{\eta}^2(x)$, also $\hat{c}(x)$ is an unbiased but possibly noisy estimator of $c(x)$. 
Our correction by \(\hat\eta(x)^2\) guarantees that
$ 
\mathbb E\!\left[\big(f(x)-\hat\tau(x)\big)^2-\hat\eta(x)^2\right]
=
\big(f(x)-\tau(x)\big)^2,
$ 
so that \(\widehat R(f,\pi)\) is unbiased for \(R_c(f,\pi)\). Intuitively, the decision compares the squared error of the candidate approximation $f(x)$ relative to $\hat{\tau}(x)$ to the within-type variance $\eta(x)^2 = \mathbb{V}(\hat{\tau}(x))$;  types are more likely to be assigned to the basin of ignorance when the prediction error of \(f(x)\) (which depends on the bias of $f(x)$ to approximate $\tau(x)$) is too large to be explained by the variance of $\hat{\tau}(x)$ alone. 

The final estimator chooses a prediction \(\hat{f}\) and an ignorance set
$ 
\{x\in\mathcal X:\hat\pi(x)=0\},
$   
\begin{equation}\label{eqn:erm}
\small 
\begin{aligned} 
(\hat{\pi},\hat f)
\in
\arg\min_{\pi\in\Pi, f \in \mathcal{F}}
\widehat R(f,\pi).
\end{aligned} 
\end{equation}

Data-driven choices for also selecting the complexity of $\mathcal{F}$ are in Section~\ref{sec:model_selection}. 


\section{Statistical guarantees and model selection} \label{sec:theory}

Next, we study the properties of the estimator in terms of the regret, i.e., the distance of the loss under our estimator from  the smallest achievable loss  \citep{manski2004statistical, kitagawa2018should}. We begin with an arbitrary class \(\mathcal F\) of bounded complexity. We then specialize to discrete partitions where we study minimax lower bounds, and finally study model selection over a sequence of classes \(\{\mathcal F_j\}\) of increasing complexity. 

Throughout our analysis, we will impose the following moment conditions.

\begin{ass}[Sampling and moments] \label{ass:general_regret}
For each \(x\in\mathcal X\), researchers observe
$ 
\big(\hat\tau(x),\hat\eta(x)^2,\hat c(x)\big)\sim_{i.n.i.d.}\mathcal D_x,
$ 
independently across \(x\), where \(\mathcal D_x\) may be unknown. The following hold:
\begin{itemize}
    \item[(i)] For all \(x\in\mathcal X\),
    $ 
    \mathbb E[\hat\tau(x)] = \tau(x),
    \mathbb E[\hat\eta(x)^2]  = \mathbb V(\hat\tau(x)),
    \mathbb E[\hat c(x)] = c(x).
    $

    \item[(ii)] For any \(u\in[0,1/2]\), there exists a constant \(0 < M_u<\infty\) such that, for all \(x\in\mathcal X\),
\[
\small 
\begin{aligned} 
\max\Big\{
\sqrt{\mathbb E\big[|\hat\tau(x)-\mathbb{E}[\hat\tau(x)]|^{4-4u}\big]},
\ \sqrt{\mathbb E\big[|\hat\tau(x)-\mathbb{E}[\hat\tau(x)]|^{6}\big]},
\ \mathbb E\big[|\hat\eta(x)^2-\eta(x)^2|^{2-2u}\big],
\end{aligned} 
\]
\[
\small 
\begin{aligned} 
\qquad
\mathbb E\big[|\hat\eta(x)^2-\eta(x)^2|^{3}\big],
\ \mathbb E\big[|\hat c(x)-c(x)|^{2-2u}\big],
\ \mathbb E\big[|\hat c(x)-c(x)|^{3}\big]
\Big\}
\le M_u, 
\end{aligned} 
\]
 with $|\tau(x)| \le B$ bounded by a finite constant $B \in [1,\infty)$. 
    \item[(iii)] There exist a constant $\bar p \in (0,\infty)$ such that, for all \(x\in\mathcal X\),
    $ 
    p(x) \le \frac{\bar p}{|\mathcal X|}.
    $ 
\end{itemize}
\end{ass}

Assumption \ref{ass:general_regret}(i) states that researchers observe, for each \(x\), an unbiased but possibly noisy estimate of \(\tau(x)\), \(\eta(x)^2\), and the abstention cost \(c(x)\). Randomness in the estimator may come from both sampling and treatment assignment, while randomness in \(\hat c(x)\) may reflect uncertainty about the cost of collecting further evidence, see Section \ref{sec:cost_abstention}. These objects do not need to be consistent as \(|\mathcal X|\) grows.  

Assumption \ref{ass:general_regret}(ii) is a mild tail condition for the estimators 
\(\hat\tau(x)\), \(\hat\eta(x)^2\), and \(\hat c(x)\). It requires uniformly bounded sixth moments. Because lower order moments can be bounded with bounds on higher moments, in settings with noisy (inconsistent) $\hat{\tau}(x)$, our leading case, $M_u \le \bar{M}$ for a uniformly bounded constant $\bar{M}$ provided that the estimators have bounded sixth moment; In Example \ref{exmp:ipw}, this condition is attained under standard overlap conditions and sixth moment bound of the potential outcomes.

We state Assumption \ref{ass:general_regret}(ii) also in terms of lower-order moments because in settings where initial estimates are more precise (for example, because many observations are available within each type \(x\), e.g., large $n(x)$ in Example \ref{exmp:ipw}) $M_u$ also captures the precision of each estimator, for arbitrary small $u$. This would reflect in sharper bounds for the regret, which therefore also allows for precise $\hat{\tau}$ when available.  Condition (ii) also requires that the population $\tau(x)$ is uniformly bounded to avoid infinite values.\footnote{The condition that $B \ge 1$ is without loss, since the bound trivially holds when $|\tau(x)| \le 1$.}

Assumption \ref{ass:general_regret}(iii)  states that no single unit receives disproportional large weight.

\paragraph{Example \ref{exmp:ipw} continued.}
Our framework encompasses two possible cases: 
\begin{itemize} 
\item Large $|\mathcal{X}|$, i.e., many imprecise $\hat{\tau}(x)$:
Suppose researchers collect a sample size of size $2 N$ and construct $N$ estimates of $(\hat{\tau}(x), \hat{\eta}(x)^2)$, with two units having (approximately) the same value of $x$ ($n(x) = 2$). 
Suppose that there exist constants \(C<\infty\) and \(\delta\in(0,1)\) such that, almost surely,
$ 
\max\Big\{
\mathbb E\big[|Y_i(1)|^6\mid X_i\big],
\mathbb E\big[|Y_i(0)|^6\mid X_i\big]
\Big\}\le C,
\delta \le e(X_i)\le 1-\delta,
$ 
and that \(c(x)\) is known, so that \(\hat c(x)=c(x)\) for simplicity. Then Assumption \ref{ass:general_regret}(ii) holds with \(M_u\le \bar M\) for all \(u\in[0,1/2]\), where \(\bar M<\infty\) is a finite constant. In this example $N_{\mathcal{X}}:=|\mathcal{X}| = N$. 

\item More precise $\hat{\tau}(x)$: If instead $n(x) \ge \underline{n} \ge 2$ for all $x \in \mathcal{X}$, then we can find a constant $\bar{M}' < \infty$, so that $M_u \le \bar{M}'\underline{n}^{-(1 - u)}$ for any $u\in [0,1/2]$.\footnote{Standard moment bounds for sample means imply
$ 
\mathbb E\big[|\hat\tau(x)-\tau(x)|^4\big]\le \bar{M}\,\underline n^{-2},
\mathbb E\big[|\hat\tau(x)-\tau(x)|^6\big]\le \bar{M}\,\underline n^{-3},
$ 
for a finite constant \(\bar{M}\) depending only on \(C\) and \(\delta\). Hence, for every \(u\in[0,1/2]\),
$ 
\sqrt{\mathbb E\big[|\hat\tau(x)-\tau(x)|^{4-4u}\big]}
\le
\big(\mathbb E[|\hat\tau(x)-\tau(x)|^4]\big)^{(1-u)/2}
\le
\bar{M}\,\underline n^{-(1-u)}.
$ 
Likewise, since \(\hat\eta(x)^2\) is the usual sample-variance estimator divided by \(n(x) \ge \underline n\), and the squared pseudo-outcomes have uniformly bounded third moment, standard bounds give
$ 
\mathbb E\big[|\hat\eta(x)^2-\eta(x)^2|^2\big]\le \bar{M}\,\underline n^{-3},
\mathbb E\big[|\hat\eta(x)^2-\eta(x)^2|^3\big]\le \bar{M}\,\underline n^{-9/2}.
$ Therefore,
$ 
\mathbb E\big[|\hat\eta(x)^2-\eta(x)^2|^{2-2u}\big]
\le
\big(\mathbb E[|\hat\eta(x)^2-\eta(x)^2|^2]\big)^{1-u}
\le
\bar{M}\,\underline n^{-3+3u}
\le
\bar{M}\,\underline n^{-(1-u)}.
$ 
}  In this case $M_u$ is smaller for larger effective sample size for estimating each individual $\hat{\tau}(x)$.
\end{itemize} 

We impose no restrictions  on $\tau(x), \eta(x)^2$ other than being uniformly bounded.
\qed 

\medskip

\begin{ass}[Bounded complexity of \(\Pi\)] \label{ass:bounded_Pi}
Assume that the class of abstention rules \(\Pi\) has VC dimension at most \(v_\Pi<\infty\).\footnote{The VC dimension of a class is the size of the largest set of points it can shatter. It is a standard measure of the capacity of a class \citep{devroye2013probabilistic}.}
\end{ass}

Assumption \ref{ass:bounded_Pi} requires that the complexity of the basin of ignorance, as induced by \(\Pi\), is finite. This rules out highly irregular abstention regions and holds for a broad range of interpretable specifications, including trees as in Figure \ref{fig:tree_clustering_combined}(Panel (a)), clustering as in Figure \ref{fig:tree_clustering_combined}(Panel (b)) (see Section \ref{sec:clustering}), and maximum score rules \citep{zhou2023offline, kitagawa2018should, mbakop2021model}. Assumption \ref{ass:bounded_Pi} ensures that the basin of ignorance is interpretable and avoids overfitting.

\subsection{Regret guarantees for general models with abstention}

The first result characterizes the properties of our estimator relative to the oracle choice of basin of ignorance and estimator in Equation \eqref{defn:generalizable_predictions}. To characterize the complexity of the prediction function class $\mathcal{F}$ (without restrictions on the complexity of the target estimand $\tau$) we introduce the following assumption. 

\begin{ass}[Bounded complexity prediction function-class] \label{ass:entropy_F} Let $\mathcal{F}$ a function class with $\sup_{f \in \mathcal{F}, x \in \mathcal{X}}|f(x)| \le K$ for a constant $K \in (0,\infty)$. 
For any probability measure \(Q\) on \(\mathcal X\), define for two functions $f,g \in \mathcal{F}$
$ 
\|f-g\|_{Q,1}
:=
\int_{\mathcal X} |f(x)-g(x)|\,dQ(x),
$ 
and let \(\mathcal{N}(\varepsilon,\mathcal F,L_1(Q))\) denote the \(\varepsilon\)-covering number of \(\mathcal F\) under \(\|\cdot\|_{Q, 1}\).\footnote{The covering number \(\mathcal{N}(\varepsilon,\mathcal F,L_1(Q))\) is the smallest number of \(L_1(Q)\)-balls of radius \(\varepsilon\) needed to cover \(\mathcal F\); equivalently, it measures how many functions are needed to approximate every element of \(\mathcal F\) up to error \(\varepsilon\). It is a common measure of complexity, see for instance \cite{devroye2013probabilistic}. It is possible to replace conditions on the covering number with weaker conditions on the conditional Rademacher complexity \citep{bartlett2002rademacher}, omitted for expositional convenience.} Assume there exists a constant \(C(\mathcal{F})<\infty\) such that
$ 
\sup_{Q\in\mathcal Q}
\int_0^{2K}
\sqrt{
\log \mathcal{N}\!\left(\varepsilon,\mathcal F,L_1(Q)\right)
}\,d\varepsilon
\le
\sqrt{C(\mathcal{F})},
$ 
where \(\mathcal Q\) denotes the collection of finitely supported probability measures on \(\mathcal X\).
\end{ass}

Assumption \ref{ass:entropy_F} imposes two restrictions on the \textit{prediction} function class \(\mathcal F\) (but, importantly, not on the population function $\tau(x)$). First, the uniform bound rules out pathological specifications with arbitrarily large fitted values, a standard restriction in empirical process arguments \citep{devroye2013probabilistic}.  
The second condition instead restricts the complexity of $\mathcal{F}$ to both control the statistical error of the estimator. 

The covering number is a common measure of complexity \citep{devroye2013probabilistic}. 
Examples satisfying the complexity bound in Assumption \ref{ass:entropy_F} include the partition classes in Example \ref{exmp:supervised_clustering}, the sparse linear classes in Example \ref{exmp:economic_example}, forest-based methods, and low-dimensional parametric families under mild regularity conditions \citep[e.g., Sections 2.6-2.7][]{van1996weak}. We provide examples in Sections \ref{sec:clustering}, \ref{sec:regression_models}. 

In our first theorem we characterize the regret in full generality and then specialize its implication in our leading Examples \ref{exmp:supervised_clustering} and \ref{exmp:economic_example}. 

\begin{thm} \label{thm:regret} Let Assumptions \ref{ass:general_regret}, \ref{ass:bounded_Pi}, \ref{ass:entropy_F} hold. Denote $N_{\mathcal{X}} = |\mathcal{X}|$ the number of types $x$. Then for any $u \in (0,1/2]$
$$
\small 
\begin{aligned} 
\mathbb{E}\left[R_c(\hat{f}, \hat{\pi}) - \min_{\pi \in\Pi, f \in \mathcal{F}} R_c(f, \pi) \right] \le c_0 \sqrt{\tilde{M}_u \frac{(C(\mathcal{F}) + v_\Pi)}{ N_{\mathcal{X}}} }
\end{aligned} 
$$
for a constant $c_0 \le q_0 \bar{p} \max\{1,K\} B^{3/2}$ with universal constant $q_0 < \infty$, and $\tilde{M}_u = \frac{(M_u + M_u^2)}{u^2}$.  
\end{thm} 

\begin{proof} See Appendix \ref{proof:thm:regret}.  
\end{proof}

Theorem \ref{thm:regret} gives a finite-sample regret guarantee for the ignorance-aware estimator relative to the oracle rule that optimally chooses both the prediction function and the basin of ignorance. The result holds uniformly over all data-generating processes satisfying Assumption \ref{ass:general_regret}. In particular, beyond the moment conditions above, the theorem imposes no restriction on \(\tau(x)\) or $\eta(x)$.

The bound depends on a constant \(c_0\) and three main components. The first component decreases at rate \( N_{\mathcal{X}}^{-1/2}\), where we think of $N_{\mathcal{X}}$ as proportional to the overall sample size rate in our leading examples where the potential heterogeneity $|\mathcal{X}|$ is large. For instance, in Example \ref{exmp:ipw} with $n(x) = 2$, $N_{\mathcal{X}}$ corresponds to half the overall sample size.  The second component depends on the moment bound
$ 
\tilde M_u=\frac{M_u+M_u^2}{u^2},
u\in(0,1/2].
$

Because the bound holds for any $u$, different values of $u$ correspond to different terms $\tilde M_u$. For example, taking \(u=1/2\), the regret depends on a uniformly bounded \textit{constant} \(\tilde M_{1/2} < \infty\) that captures a mild bound on the sixth moments of the relevant estimators (see Example \ref{exmp:ipw} continued, first case). This leading case corresponds to the regret converging at rate $N_{\mathcal{X}}^{-1/2}$. When more units are observed within each type $x$, improving the precision of each estimate, $\tilde{M}_u$ also captures gains from more precise estimates. In this case, we can take smaller values of \(u\) (e.g., $u = \varepsilon > 0$ for a small positive $\varepsilon$), so that the bound involves essentially a second moment of the estimators (see Example \ref{exmp:ipw} continued, second case). In this sense, the result is general as it accommodates both settings with many noise estimates for $x$ (high dimensional $|\mathcal{X}|$) as in our leading case and settings with fewer but more precisely estimated types. 

Finally, the complexity term enters through \(C(\mathcal F)+v_\Pi\): \(C(\mathcal F)\) measures the richness of the prediction class, while \(v_\Pi\) measures the complexity of the basin of ignorance.

Unlike standard heterogeneity procedures \citep[e.g.][]{chernozhukov2018generic, athey2018generalizedrandomforests}, where evidence aggregation is imposed globally across all types $x$ through restrictions on the prediction function, our framework uses such restrictions only where the data support reliable prediction. The estimator must therefore jointly learn both a prediction rule and the region in which that rule should not be applied. The proof accordingly requires controlling not only the estimation error in the estimated properties, but also the complexity of the induced classification problem.

  \subsection{Implications for clustering methods} \label{sec:clustering}

Next, we focus on clustering methods, given their large applicability. 

\begin{defn}[Clustering partitions] \label{defn:clustering}
Fix integers \(G\ge 2\) and let \(\mathcal G\) be a class of partition rules
$ 
\alpha:\mathcal X\to\{1,\dots,G\},
$ 
where label \(g = 1\) denotes the basin of ignorance and labels \(g\in\{2,\dots,G\}\) denote predictive groups. For each \(\alpha\in\mathcal G\), define the induced prediction rule
$ 
\pi^\alpha(x)=1\{\alpha(x)\ge 2\},
$ 
and, for any \(\mu=(\mu_2,\dots,\mu_G)\in[-K,K]^{G-1}\), and arbitrary finite constant $K$, define the prediction rule
$ 
f_{\alpha,\mu}(x)=\sum_{g=2}^G \mu_g\,1\{\alpha(x)=g\}.
$ 
We denote by
$ 
\Pi =\{\pi^\alpha:\alpha\in\mathcal G\}$, and
$\mathcal F=\{f_{\alpha,\mu}:\alpha\in\mathcal G,\ \mu\in[-K,K]^{G-1}\},
$ 
the induced classes of abstention rules and prediction rules.

We consider the following class of partitions: 
\begin{itemize}
    \item[(i)] For every \(\alpha\in\mathcal G\) and every \(g\in\{2,\dots,G\}\), either
    $  
    \sum_{x\in\mathcal X}1\{\alpha(x)=g\}=0
    $ 
    or
    $ 
    \sum_{x\in\mathcal X}1\{\alpha(x)=g\}\ge \underline\kappa |\mathcal X|
    $ 
    for some constant \(\underline\kappa \in (0,1]\).

    \item[(ii)] For each \(g\in\{1,\dots,G\}\), the indicator class
    $ 
    \Bigl\{x\mapsto 1\{\alpha(x)=g\}:\alpha\in\mathcal G\Bigr\}
    $ 
    has VC dimension at most \(v<\infty\).
\end{itemize}
\end{defn}

Definition \ref{defn:clustering} imposes two main restrictions on clustering classes. First, each predictive group outside the basin of ignorance must be empty or contain at least a fraction \(\underline\kappa\) of the observable types. This condition is intuitive: a group containing only one or a few different types should not be interpreted as part of an aggregated summary. Second, the indicator classes \(x\mapsto 1\{\alpha(x)=g\}\) must have bounded VC dimension. This controls the geometric complexity of both the predictive groups and the basin of ignorance, since label \(g=1\) corresponds to abstention. It improves interpretability of the clustering and control the statistical error that would otherwise lead to overfitting. The bounded VC complexity is attained both by tree-based partitions and $G$-means clustering over the space of covariates, as we discuss below. 
Finally, we also require that the predictions $\mu$ are uniformly bounded to avoid pathological cases. 

\begin{exmp}[Tree-based classes] \label{exmp:tree_entropy}
Consider a decision rule as in Figure \ref{fig:tree_clustering_combined}(Panel (a)). That is, suppose \(x\in\mathbb R^d\) and the partition rule \(\alpha\) is induced by a tree: each predictive group indicator
$ 
x\mapsto 1\{\alpha(x)=g\}
$ 
belongs to the class of axis-aligned rectangles in \(\mathbb R^d\), whose VC dimension is at most \(2d\). By letting the basin of ignorance to be the union of at most \(G - 1\) leaves as in Figure \ref{fig:tree_clustering_combined}(a), then the class of basin indicators
$ 
x\mapsto 1\{\alpha(x)=1\}
$ 
has VC dimension of order \(dG\log G\), satisfying Definition \ref{defn:clustering}(ii).\footnote{The class of axis-aligned rectangles in \(\mathbb R^d\) has VC dimension at most \(2d\) \citep{blumer1989learnability}. Therefore, if each predictive group corresponds to a single terminal leaf, then for each \(g\ge 2\),
$
\Bigl\{x\mapsto 1\{\alpha(x)=g\}:\alpha\in\mathcal G\Bigr\}
$ 
has VC dimension at most \(2d\).
The basin of ignorance consists of unions of at most \(G-1\) sets from a base class of VC dimension \(2d\) \citep[][Page 220]{devroye2013probabilistic}. By the standard VC bound for unions of \(G-1\) sets from a VC class of dimension \(v\), the resulting class has VC dimension of order \(v G\log G\), where $v = 2d$ \citep[Lemma 3.2.3][]{blumer1989learnability}. } 
\end{exmp}

\begin{exmp}[\(G\)-means on covariates]
\label{exmp:kmeans_entropy}
Consider the decision rule in Figure \ref{fig:tree_clustering_combined} (Panel (b)).
Formally, let \(x\in\mathbb R^d\), and let \(z=(z_2,\dots,z_G)\in(\mathbb R^d)^{G-1}\) be a collection of centroids for the predictive groups.  
Given \(z\), define 
$
j_z(x):=\arg\min_{g\in\{2,\dots,G\}}\|x-z_g\|^2,
$ 
and write
$ 
V_g(z):=\{x:j_z(x)=g\}, g=2,\dots,G.
$ 
For each fixed \(g\), \(V_g(z)\) is an intersection of \(G-2\) halfspaces.\footnote{Formally,
$ 
V_g(z)
=
\bigcap_{h\neq g}
\left\{
x\in\mathbb R^d:
\|x-z_g\|^2\le \|x-z_h\|^2
\right\}.
$} 
Standard VC bounds for intersections of halfspaces \citep{blumer1989learnability} imply that the class
$ 
\Bigl\{x\mapsto 1\{x\in V_g(z)\}: z\in(\mathbb R^d)^{G-1}\Bigr\}
$ 
has VC dimension of order \(dG\log G\). 
To allow for abstention, let \(S\subseteq\{2,\dots,G\}\) denote the collection of cells assigned to the basin of ignorance, and define
\begin{equation} \label{eqn:S}
\small 
\begin{aligned} 
\alpha_{z,S}(x)
=
\begin{cases}
1, & j_z(x)\in S,\\
j_z(x), & j_z(x)\notin S.
\end{cases}
\end{aligned} 
\end{equation} 
Under this definition,
$ 
\{x:\alpha_{z,S}(x)=1\}
=
\bigcup_{g\in S} V_g(z).
$ 
Since the basin of ignorance is the union of at most \(G-1\) such groups, the class of basin of ignorance $1\{\alpha(x) = 1\}$
has VC dimension of order \(dG^2\log^2 G\) by Lemma 3.2.3 in \cite{blumer1989learnability}. This shows that the $G$-means partition over the covariate space satisfies Definition \ref{defn:clustering}(ii). 
\end{exmp}

The examples above illustrate how the VC-dimension of the class assignment captures the complexity of the partition.

\begin{thm}[Regret bound for clustering partitions] \label{thm:clustering_regret}
Let Assumption \ref{ass:general_regret} hold, and suppose \(\mathcal G\) is a class of clustering partitions in the sense of Definition \ref{defn:clustering}. Let
\[
\small 
\begin{aligned} 
(\hat\alpha,\hat\mu)\in
\arg\min_{\alpha\in\mathcal G,\ \mu\in[-K,K]^{G-1}}
\widehat R\!\left(f_{\alpha,\mu},\pi^\alpha\right),
\end{aligned} 
\]
and define
$ 
\hat f:=f_{\hat\alpha,\hat\mu},
\hat\pi:=\pi^{\hat\alpha}.
$ 
Denote $N_{\mathcal{X}} = |\mathcal{X}|$ the number of types $x$.
Then for any \(u\in(0,1/2], G \ge 2\),
\[
\small 
\begin{aligned} 
\mathbb E\!\left[
R_c(\hat f,\hat\pi)
-
\inf_{\alpha\in\mathcal G,\ \mu\in[-K,K]^{G-1}}
R_c\!\left(f_{\alpha,\mu},\pi^\alpha\right)
\right]
\le
c_0 
\sqrt{
\tilde M_u\,
\frac{(v+1)\log G}{\underline\kappa\,N_{\mathcal{X}}}
},
\end{aligned} 
\]
for $c_0 \le \bar{p} \max\{1, K^2\} B^{3/2} q_0$ for a universal constant $q_0 < \infty$ and
 \(\tilde M_u=(M_u+M_u^2)/u^2\).
\end{thm}

\begin{proof}
See Appendix \ref{proof:thm:clustering_regret}.
\end{proof}

Theorem \ref{thm:clustering_regret} shows that our may results hold for discrete partitions, with the rate corresponding to the one in Theorem \ref{thm:regret} for finite $G$. The result also clarifies the value of combining a low-complexity partition with a basin of ignorance. If the number of predictive groups is large (where typically $\underline{\kappa} \asymp 1/G$), or if groups are allowed to become arbitrarily small, the partition can overfit the estimates, with the bound becoming less informative. By restricting attention to a small number of sufficiently large groups, and through complexity restrictions on the basin of ignorance, we collect the types for which evidence cannot be aggregated while avoiding overfitting. 

We formalize this intuition in the following theorem where we provide a lower bound for trees and $G$-clustering presented in the examples above.

\begin{thm}[Lower bound for $G$-means partitions]
\label{thm:gmeans_lower_union_ignorance}
Fix an integer \(G\ge 4\), \(d\ge 1\), and \(K_0\in(0,K]\). 
Fix $\underline{\eta}$ be a finite constant with
$ 
0 < \underline\eta \le \frac{K_0}{32}.  
$ 
Let
\(\mathcal G\) denote either the nearest-centroid partition class in
Example~\ref{exmp:kmeans_entropy} or the class of trees in Example \ref{exmp:tree_entropy}, with 
$ 
0<\underline\kappa\le \frac{1}{4(G-1)}, 
$ 
and $G$ as in Definition \ref{defn:clustering}. In the tree case, each leaf node either corresponds to an individual archetype or to the basin of ignorance.  
Then for some universal constants \(c_0,C_0>0\), for every integer
$ 
N_{\mathcal X}\ge
C_0\max\left\{
(G-1)^2,\,
\frac{\underline\eta^2}{K_0^2}(G-1)^3
\right\},
$ 
there exists a fixed design
$ 
\mathcal X=(x_1,\dots,x_{N_{\mathcal X}})\in (\mathbb R^d)^{N_{\mathcal X}}
$ 
of \(N_{\mathcal X}\) distinct points and $p(x_i) = \frac{1}{N_{\mathcal{X}}}$ such that 
for every measurable estimator \((\widehat\alpha,\widehat\mu) \in \mathcal G \times [-K,K]^{G-1}\), 
there exists a data-generating process $P$ with \((\hat{\tau}(x),\hat{\eta}(x)^2, \hat{c}(x))_{x \in \mathcal{X}} \sim P\) satisfying
Assumption~\ref{ass:general_regret}, with
$ 
\eta(x)^2=\underline\eta^2$ and $c(x)=K_0^2$
for all $x \in\mathcal X$,
such that 
\begin{equation} \label{eqn:lower_bound} 
\small 
\begin{aligned} 
\mathbb E_P\!\left[
R_c\!\left(f_{\widehat\alpha,\widehat\mu},\pi^{\widehat\alpha}\right)
-
\inf_{\alpha\in\mathcal G,\,
\mu\in[-K,K]^{G-1}}
R_c\!\left(f_{\alpha,\mu},\pi^\alpha\right)
\right]
\ge
c_0\,K_0\,\underline\eta
\sqrt{\frac{G-2}{N_{\mathcal X}}}.
\end{aligned} 
\end{equation} 
\end{thm}

\begin{proof} See Appendix \ref{proof:lower_bound}.  
\end{proof}

In the leading case with many noisy estimates $\hat\tau(x)$, Theorem \ref{thm:gmeans_lower_union_ignorance} provides a $N_{\mathcal{X}}^{-1/2}$ minimax rate, matching the upper bound, for both the tree method and the $G$-means clustering method with abstention with fixed $G$. (In addition, the dependence on the variance $\underline\eta$ enters the bound similarly to $\sqrt{M_u}$ in the upper bound.)

The lower bound also depends on the complexity of the clustering (and of the basin of ignorance), which in Examples \ref{exmp:tree_entropy} and \ref{exmp:kmeans_entropy} depends on $G$. Intuitively, a larger number of predictive groups enlarges the set of partitions and as such the flexibility of both the prediction function class and basin of ignorance. The bound therefore clarifies that arbitrary clustering methods with a large number of clusters may lead to overfitting. In contrast, by imposing complexity restrictions on the partition function, we can control statistical error, achieving minimax rates of convergence on the regret.

\subsection{Implications for other common regression models} \label{sec:regression_models}

Our results also apply to function classes beyond partitions. For instance, consider linear prediction rules of the form
$
f_\theta(x)=q(x)^\top\theta.
$ 
Under a uniform bound condition on the regressors and coefficients, the entropy complexity \(\sqrt{C(\mathcal F)}\) in Assumption~\ref{ass:entropy_F} is of order \(O(\sqrt{d})\), where \(d\) denotes the dimension of \(q(x) \in \mathbb{R}^d\), see Appendix Lemma \ref{lem:linear_entropy}. Thus, Theorem~\ref{thm:regret} delivers regret bounds for low-dimensional linear summaries.

In high-dimensional settings, Appendix Theorem~\ref{thm:l0_sparse} shows that sharper bounds are available under sparsity restrictions, as in Example~\ref{exmp:economic_example}. In particular, for Example~\ref{exmp:economic_example} we obtain the analogue of Theorem~\ref{thm:regret} with
$
\sqrt{C(\mathcal F)} \le c_0' \sqrt{s\log(d)},
$ 
for a finite constant \(c_0'<\infty\), where \(s\) denotes the number of nonzero coefficients and \(d\) is the total number of available regressors. This yields the familiar reduction in statistical complexity relative to unrestricted high-dimensional linear models.\footnote{We note that for lasso estimator ($l_1$ instead of $l_0$ sparsity), the Dudley's entropy integral bound scales at rate $\sqrt{d}$. It is possible to replace restrictions on the Dudley's entropy integral in Assumption \ref{ass:entropy_F} with weaker requirements on the conditional Rademacher complexity of the function, which for lasso would instead scale at a rate $\sqrt{\log(d)}$ \citep{bartlett2002rademacher} for fixed $l_1$-sparsity. }

A similar logic applies to random forests. For example, standard covering-number calculations together with Appendix Lemma~\ref{lem:devroye_sum} imply that, for a forest with \(T\) trees, a  crude bound for 
$
\sqrt{C(\mathcal F)}$ is (up to constant terms)  
$\sqrt{T\,C(\mathcal F_{\mathrm{tree}})},
$ 
where \(C(\mathcal F_{\mathrm{tree}})\) denotes the complexity of the underlying regression-tree class.\footnote{Sharper complexity bounds are available by exploiting the convex-hull structure \citep{bartlett2002rademacher}, omitted for brevity.} 

\subsection{Model selection} \label{sec:model_selection}

In many applications, researchers may be interested in choosing the model class $\mathcal{F}$ from the data, such as choosing the depth of a tree-based rule. In this section we show that introducing the possibility of abstention we can improve statistical guarantees for model selection through a notion of \textit{ignorance-free} approximation error, i.e., an approximation error only within the predictive set. Let
$$ 
\small 
\begin{aligned} 
\mathcal F_1, \mathcal F_2, \cdots, \mathcal F_J, \qquad \Pi_1, \cdots, \Pi_J
\end{aligned} 
$$ 
be a sequence of function classes (of possibly increasing flexibility), with corresponding function classes for prediction sets as $\Pi_j$ that (in full generality) may or may not vary with $j$. For each prediction class \(\mathcal F_j\), let \(C(\mathcal F_j)\) denote the corresponding complexity bound in Assumption \ref{ass:entropy_F}.  
In Example \ref{exmp:supervised_clustering} this may depend on the  number of predictive groups, and in Example \ref{exmp:economic_example}, this may depend on the number of non-zero coefficients. 

For \(\widehat R(f,\pi)\) as defined in \eqref{eqn:R_hat}, let \((\hat \pi_j,\hat f_j)\) the ignorance-aware estimator obtained by restricting the prediction function to the class \(\mathcal F_j\), and $\pi_j^\star, f_j^\star$ their oracle counterpart, 
\[
\small 
\begin{aligned} 
(\hat \pi_j,\hat f_j)
\in
\arg\min_{\pi\in\Pi_j,\ f\in\mathcal F_j} \widehat R(f,\pi), \qquad (\pi_j^\star,f_j^\star)\in\arg\min_{\pi\in\Pi_j,\ f\in\mathcal F_j}R_c(f,\pi). 
\end{aligned} 
\]

To select the class complexity \(j\), we use an independent out-of-sample evaluation. Specifically, suppose that for each type \(x\) we observe independent validation estimates
\[
\small 
\begin{aligned} 
\big(\hat\tau_{oos}(x),\hat\eta_{oos}(x)^2,\hat c_{oos}(x)\big),
\end{aligned} 
\]
independent of the training estimates \(\big(\hat\tau(x),\hat\eta(x)^2,\hat c(x)\big)\). For the IPW estimator of Example \ref{exmp:ipw}, it is possible to construct such estimators when four independent units for each type $x$ are available to researcher; see Remark \ref{sec:sample_splitting}. We estimate 
\[
\small 
\begin{aligned} 
\widehat R_{oos}(f,\pi)
=
\sum_{x\in\mathcal X} p(x)
\left[
\left\{\bigl(f(x)-\hat\tau_{oos}(x)\bigr)^2-\hat\eta_{oos}(x)^2\right\}\pi(x)
+
\hat c_{oos}(x)\bigl(1-\pi(x)\bigr)
\right].
\end{aligned} 
\]
We then select the model class by out-of-sample loss minimization:
\[
\small 
\begin{aligned} 
\hat j
\in
\arg\min_{1\le j\le J}\widehat R_{oos}(\hat f_j,\hat\pi_j),
\qquad
(\hat\pi,\hat f):=(\hat\pi_{\hat j},\hat f_{\hat j}).
\end{aligned} 
\]

\paragraph{Theoretical guarantees} Our goal is to study the trade-off between abstention and approximation error by comparing our procedure to an oracle that observes \(\tau(x)\) and is allowed to use the \textit{richest} prediction class. Specifically, we study  
\[
\small 
\begin{aligned} 
\mathcal R_c
:=
\mathbb E\!\Big[
R_c(\hat f,\hat \pi)
-
\min_{f\in \bar{\mathcal{F}}}
\sum_{x\in\mathcal X} p(x)\bigl(f(x)-\tau(x)\bigr)^2
\Big],\qquad \bar{\mathcal{F}} = \cup_{j=1}^J \mathcal{F}_j
\end{aligned} 
\]
that is, the performance gap of the ignorance-aware estimator relative to the best no-abstention predictor in the most flexible class defined by the union of all feasible function classes. This benchmark makes explicit the extent to which abstention can balance statistical error and approximation error for using a simpler prediction rule.

To characterize this comparison, for any abstention rule \(\pi\), define
\[
\small 
\begin{aligned} 
L_{\mathcal{F}_j}^{\mathrm{loc}}(\pi)
:=
\min_{f\in\mathcal F_j}
\sum_{x\in\mathcal X} p(x)\bigl(f(x)-\tau(x)\bigr)^2\pi(x),
\qquad
A_c(\pi)
:=
\sum_{x\in\mathcal X} p(x)c(x)\bigl(1-\pi(x)\bigr),
\end{aligned} 
\]
where \(L_j^{\mathrm{loc}}(\pi)\) is the smallest approximation error achievable by class \(\mathcal F_j\) \textit{local} to the prediction region \(\{x:\pi(x)=1\}\), and \(A_c(\pi)\) is the corresponding cost of abstention on the complement. We also define
\[
\small 
\begin{aligned} 
\Delta_j^{\mathrm{loc}}(\pi)
:=
L_{\mathcal{F}_j}^{\mathrm{loc}}(\pi)-L_{\bar{\mathcal{F}}}^{\mathrm{loc}}(\pi),
\end{aligned} 
\]
which measures the approximation error of the smaller class \(\mathcal F_j\) relative to the richest class \(\cup_{j=1}^J \mathcal F_j\), localized to the region $\{x: \pi(x) = 1\}$.

\begin{thm}[Model selection] \label{thm:model_selection}
Consider independent training sample \((\hat\tau(x),\hat\eta(x)^2,\hat c(x))_{x\in\mathcal X}\) and validation sample \((\hat\tau_{oos}(x),\hat\eta_{oos}(x)^2,\hat c_{oos}(x))_{x\in\mathcal X}\), both satisfying Assumption \ref{ass:general_regret}.
Let each $\mathcal{F}_j$ satisfying Assumption \ref{ass:entropy_F} with corresponding complexity $C(\mathcal{F}_j)$, and each \(\Pi_j\) satisfying Assumption \ref{ass:bounded_Pi} with VC-complexity $v_{\Pi_j}$. 

Then for any $u \in (0,1/2]$, for $N_{\mathcal{X}} = |\mathcal{X}|$ denoting the number of types,  
$$
\small 
\begin{aligned} 
\mathcal{R}_c \le \underbrace{c_0 J^{1/2} \sqrt{\frac{M_0 + M_0^2}{N_{\mathcal{X}}}}}_{\text{model selection}} + \inf_{1 \le j \le J} \Big\{\underbrace{ c_0   \sqrt{\frac{\tilde{M}_u(C(\mathcal{F}_j) + v_{\Pi_j})}{N_{\mathcal{X}}}}}_{\text{Estimation noise} } + \underbrace{\Delta_j^{\mathrm{loc}}(\pi_j^\star)}_{\text{approximation error on predictive set}} + \underbrace{A_c(\pi_j^\star)}_{\text{abstention cost}} \Big\}
\end{aligned} 
$$
for $c_0 \le q_0 B^{3/2} \bar{p} \max\{1, K\}$ for a universal constant $q_0 < \infty$ and $\tilde{M}_u = \frac{M_u + M_u^2}{u^2}$. 
\end{thm} 

\begin{proof} See Appendix \ref{proof:thm:model_selection}. 
\end{proof} 

Theorem \ref{thm:model_selection} compares the loss of the ignorance-aware model-selection procedure to the smallest achievable prediction loss of an oracle that is forced to predict everywhere using the richest class \(\bar{\mathcal{F}}\). The result is finite-sample and holds for any number of candidate classes \(J\) and any number of observable types \(N_{\mathcal{X}}\).

The bound contains four components. The first is a model-selection term of order \(\sqrt{J/N_{\mathcal{X}}}\), which is the cost of comparing \(J\) candidate classes. With many types $x$, and a finite number of models, this term is of order $N_{\mathcal{X}}^{-1/2}$.\footnote{From an inspection of the proof \ref{proof:thm:model_selection}, Step 5, it is possible to replace the second moment bound with concentration inequalities with exponential tail bounds \citep{devroye2013probabilistic} for sub-gaussian estimators. In this case, the dependence on $J^{1/2}$ can be improved to $\sqrt{\log(J)}$.}  The component $M_0 < \infty$ is a finite moment bound from Assumption \ref{ass:general_regret}(ii), with $u = 0$. The second is the statistical error within class \(j\), governed by the complexity of the prediction class \(\mathcal F_j\) and the complexity of the abstention class \(\Pi_j\) as in Theorem \ref{thm:regret}. Intuitively, simpler function classes lead to a smaller variance.

The third component is the \textit{ignorance-free} approximation error \(\Delta_j^{\mathrm{loc}}(\pi_j^\star)\), which measures how much is lost by using class \(j\) rather than the richest class \(\bar{\mathcal{F}}\), but only on the subset of types for which class \(j\) chooses to predict. This innovation shows how abstention may improve error on the predictive set. The fourth is the abstention cost \(A_c(\pi_j^\star)\), which captures the loss from withholding predictions.

This decomposition differs from standard model-selection problems \citep[e.g.][]{mbakop2021model}, where approximation error is evaluated globally (this would correspond here to the special case \(\Delta_j(\pi = 1)\)). Here, abstention allows us to compare classes through their approximation error only on the subset of types where prediction is actually made, while paying an explicit cost of abstention.

\begin{rem}[Sample splitting] \label{sec:sample_splitting} 
When at least four observations are available within each type \(x\), researchers can use half of the observations to construct the in-sample estimates and the remaining half to construct the out-of-sample estimates. In this case, the sample splitting is performed separately within each type.

An alternative strategy we recommend is to split the types \(x\) at random and assign different types to training and validation set. Formally, we can introduce sampling indicators \(S_x\in\{0,1\}\), where \(S_x=0\) if type \(x\) is assigned to the main sample and \(S_x=1\) if it is assigned to the auxiliary sample, with $P(S_x = 1) = p$ for constant $p$. Because our loss function averages over types \(x\), by integrating over the randomness induced by \(S_x\) we preserve unbiasedness of the empirical criterion. The corresponding guarantees continue to hold following similarly to Appendix Theorem \ref{app:estimated_p}. \qed 
\end{rem}

\section{Practical implementations}

In this section we discuss practical choices of the abstention cost and optimization.

\subsection{Cost of abstention}
\label{sec:cost_abstention}


As a first step, we interpret the cost of abstention as the continuation loss incurred when additional evidence is collected to help us providing explicit calibrations.

\paragraph{Continuation cost interpretation} Fix a prediction rule \(f\) and an abstention rule \(\pi\), and suppose that for a given researcher's report $(f,\pi)$, whenever \(\pi(x)=0\), an audience does not form a prediction \(f(x)\) but instead collects new data  to form a prediction \(f^{new}(x)\) (possibly by combining existing with new evidence). When instead \(\pi(x)=1\), the audience forms a prediction using \(f(x)\) without collecting more data. Therefore, given a researcher's report $(f,\pi)$, the audience forms a prediction $\pi(x) f(x) + (1 -\pi(x))f^{new}(x)$. Define the expected loss
\[
\small 
\begin{aligned}
\mathcal L(f,\pi)
:=
\sum_{x\in\mathcal X} p(x)\,
\mathbb E_{f^{new}}
\left[
\Big(
\tau(x)-\pi(x)f(x)-(1-\pi(x))f^{new}(x)
\Big)^2
\right], 
\end{aligned} 
\]
where the expectation is taken over the $f^{new}$ generated by the (future) audience action. 

\begin{prop}[Continuation-loss interpretation of \(c(x)\)] \label{prop:continuation_loss}
Fix \(f\) and \(\pi\). 
If
$ 
c(x)=\mathbb E_{f^{new}}\!\left[\bigl(f^{new}(x)-\tau(x)\bigr)^2\right],
$ 
then
$ 
\mathcal L(f,\pi)=R_c(f,\pi).
$ 
\end{prop}

 Proposition \ref{prop:continuation_loss} provides an interpretation of $c(x)$ as the mean-squared error of a  prediction formed after collecting more evidence (without having to restrict $f^{new}$).\footnote{The proof is immediate: because $\pi(x) \in \{0,1\}$, 
$
\mathcal{L}(f,\pi) = \sum_x p(x) \left\{ (\tau(x) - f(x))^2 \pi(x) + (1 - \pi(x))\mathbb{E}[(\tau(x) - f^{new}(x))^2]\right\}, 
$ where $c(x)$ is defined such that  $\mathbb{E}[(\tau(x) - f^{new}(x))^2] = c(x)$. }

\paragraph{Calibration} The proposition suggests a natural way to calibrate \(c(x)\). Recall that \(\eta^2(x)\) is the variance of the current estimator \(\hat\tau(x)\) for type \(x\). Suppose that for observations in the basin of ignorance the audience collects $m$ additional unbiased signals $\hat{\tau}_i(x)$ of $\tau(x)$ and construct an estimator as $\hat{\tau}^{new}(x) = \frac{1}{m+1} \sum_{i=0}^m \hat{\tau}_i(x)$, with $\hat{\tau}_0(x)$ corresponding to the estimated effect in the existing study. Therefore \(m=1\) corresponds to combining the existing evidence with a follow-up study that has the same sample size for type \(x\) as the current study, while larger values of \(m\) correspond to proportionally more type-\(x\) observations in the follow-up study. Using $f^{new} = \hat{\tau}^{new}$ would lead to a mean-squared error \(\eta^2(x)/(m+1)\). Any new prediction rule \(f^{new}(x)\) that is allowed to combine the new and existing evidence more efficiently should perform at least as well. It is therefore natural to view
$$
\small 
\begin{aligned} 
c_m(x) = \frac{\eta^2(x)}{m + 1}, \qquad \hat c_m(x)=\frac{\hat\eta^2(x)}{m +1}
\end{aligned} 
$$ 
as an upper bound on \(c(x)\) and $\hat{c}_m$ as the corresponding estimator available using existing data (we extend Theorem \ref{thm:model_selection} when using a cost upper bound in \ref{sec:upper_c}).

\paragraph{Simple benchmark choice of $m$} 
A natural benchmark is \(m=1\), with $c(x) = \eta(x)^2/2$ which corresponds to combining the current evidence with a follow-up study that has the same effective sample size for type \(x\) as the original experiment.

More generally, for given constraint $b$ on the size of the next study, researchers can estimate each policy $\hat{\pi}^{(m)}$ as a function of $m$ and report the largest $m$, and corresponding estimators, so that the budget is attained, i.e., $\sum_x m (1 - \hat{\pi}^{(m)}(x)) \le b$.\footnote{In this case, although the choice of $m$ is data-adaptive, one can directly extend our results in Section \ref{sec:theory} by providing regret bounds for a finite grid of of values $m \in \mathcal{M}$ that mimic those in Section \ref{sec:theory} up to an additional term that scales at rate $\log(|\mathcal{M}|)$ due to the enlargement of the function class of policies, $\pi^m \in \Pi, m \in \mathcal{M}$ each satisfying the budget constraint. We omit this for brevity only. One can also consider $\sum_x \kappa(x) m (1 - \hat{\pi}^{(m)}(x)) \le b$ for a given type-specific known cost $\kappa(x)$.}

\subsubsection{Data-adaptive choice via break-even analysis} \label{app:breakeven_other}

When the sample size in a future study is unknown, we recommend a data-adaptive choice of $m$ via a break-even analysis. Specifically, given a grid of plausible values of $m$, we compute the smallest value of $m$ for which the expected cost of future data collection on the selected basin of ignorance is below a benchmark loss $\omega_m$, with probability at least $95\%$. 
For a given training sample $\mathcal T$ used to construct $\hat{\pi}, \hat{f}$ as described below, we let $\omega_m(x)$ be any uniformly bounded function measurable with respect to $\mathcal{T}$, and defer an explicit recommendation in Remark~\ref{rem:omega_m}.

\paragraph{Estimation} First, for each candidate $m\in\mathcal M$, using the training sample estimate 
$$  
\small 
\begin{aligned} 
\widehat R^{(m)}(f,\pi)
&=
\sum_{x\in\mathcal X} p(x)
\left[
\left\{\bigl(f(x)-\hat\tau(x)\bigr)^2-\hat\eta(x)^2\right\}\pi(x)
+
\hat c_m(x)\bigl(1-\pi(x)\bigr)
\right],
\end{aligned} 
$$ 
and $(\hat\pi^{(m)},\hat f^{(m)})
\in
\arg\min_{\pi\in\Pi,\ f\in\mathcal F}
\widehat R^{(m)}(f,\pi)$. 

Using sample splitting as in Section~\ref{sec:model_selection}, we then take an independent validation sample. Let $\hat\eta_{oos}^2(x)$ denote the out-of-sample estimate of $\eta^2(x)$. Let 
$$  
\small 
\begin{aligned} 
\Delta_\eta(m)
&=
\sum_{x\in\mathcal X}
p(x)
\left\{
\frac{\eta^2(x)}{m+1}
-
\omega_m(x)
\right\}
\bigl(1-\hat\pi^{(m)}(x)\bigr),
\qquad  m \in \mathcal{M}, 
\end{aligned} 
$$ 
and $\widehat\Delta_\eta(m) = \sum_{x\in\mathcal X}
p(x)
\left\{
\frac{\hat\eta_{oos}^2(x)}{m+1}
-
\omega_m(x)
\right\}
\bigl(1-\hat\pi^{(m)}(x)\bigr)$ its out-of-sample estimate. 

Negative values of $\Delta_\eta(m)$ indicate that, under $\hat\pi^{(m)}$, future data collection on the basin of ignorance would yield an expected loss below the target benchmark $\omega_m$. 

\paragraph{Inference}
For any $m\in\mathcal M$, let $N_{\mathcal X}:=|\mathcal X|$ and define
\begin{equation*}
\small 
\begin{aligned} 
q_m^2
=
\mathbb V\left(
\sqrt{N_{\mathcal X}}
\left\{
\widehat\Delta_\eta(m)-\Delta_\eta(m)
\right\}
\mid \mathcal T
\right).
\end{aligned} 
\end{equation*}
Let $Q_m^2$ be a variance upper bound estimate satisfying
$ 
P(Q_m^2\ge q_m^2)\rightarrow 1
$ 
for any $m\in\mathcal M$. Formal expressions for $Q_m^2$ are in Remark~\ref{sec:inference_detailsa}. 
 
Let $\mathcal M=\{m_1,\ldots,m_{|\mathcal M|}\}$ so that $m_1<\cdots<m_{|\mathcal M|}$. For any $j\le |\mathcal M|$, define
\begin{equation} \label{eqn:S_la}
\small 
\begin{aligned} 
U_j
=
\max_{l\ge j}
\left\{
\widehat\Delta_\eta(m_l)
+
\Phi^{-1}(1-\alpha)
\frac{Q_{m_l}}{\sqrt{N_{\mathcal X}}}
\right\},
\end{aligned} 
\end{equation}
where $\Phi(\cdot)$ is the CDF of a standard normal random variable and $\alpha\in(0,1/2)$. The maximization over $l\ge j$ imposes the natural convention that, once a sufficiently large value of $m$ breaks even, larger values of $m$ should also be considered acceptable.

We choose
\begin{equation*}
\small 
\begin{aligned} 
\hat j^\star
=
\min\{j:U_j<0\},
\qquad
\hat m^\star
=
m_{\hat j^\star},
\end{aligned} 
\end{equation*}
with the convention that if $U_j\ge 0$ for all $j$, then $\hat m^\star=\infty$. Thus, $\hat m^\star$ is the smallest value of $m$ for which the ordered upper confidence bound is below zero.

\begin{thm}[Inference for break-even analysis] \label{thm:breakevena}
Let Assumptions~\ref{ass:general_regret} and~\ref{ass:entropy_F} hold for the in-sample and out-of-sample estimates, and assume that the two samples are independent. Suppose that $q_m^2>\underline l$ almost surely for all $m\in\mathcal M$, for some constant $\underline l>0$, and that $|\mathcal M|<\infty$. Take any $Q_m^2$ such that
$ 
\lim_{N_{\mathcal X}\rightarrow\infty}
P(Q_m^2\ge q_m^2)=1
$ 
for all $m\in\mathcal M$. Then for any $\omega_m(x)$ uniformly bounded and measurable with respect to $\mathcal{T}$, 
$ 
\limsup_{N_{\mathcal X}\rightarrow\infty}
P\left(
\hat m^\star<\infty,\,
\Delta_\eta(\hat m^\star)\ge 0
\right)
\le \alpha.
$ 
\end{thm}

\begin{proof}
See Appendix~\ref{proof:thm:breakevena}.
\end{proof}

Theorem~\ref{thm:breakevena} shows that the reported break-even value $\hat m^\star$ controls the probability of selecting a finite $m$ for which the expected loss from future data collection on the selected basin of ignorance is not below the benchmark $\omega_m$. The lower bound on the asymptotic variance rules out degenerate cases. (Also, it is possible to use a similar strategy for other objectives in the break-even analysis, see Appendix \ref{app:breakeven_otherb}.)

\begin{rem}[Recommended $\omega_m$] \label{rem:omega_m}
We recommend choosing $\omega_m$ by asking how much a future unbiased signal would reduce the variance component of the no-abstention rule under a simple Bayesian updating rule (where the focus on the variance of the no-abstention rule provides us with a simple lower bound of its error). 

Formally, let $\hat f^{(0)}$ denote the estimator obtained by forcing no abstention. Conditional on the fitted no-abstention rule, define
$ 
V_0(x)
=
\mathbb V\left(\hat f^{(0)}(x)\right), 
$ 
and $\hat{V}_0(\cdot)$ denote an estimate using in-sample observations. 
Our recommended choice is 
$$
\omega_m(x) = \frac{(2 - w_m)}{w_m} \hat{V}_0(x), \qquad w_m = \frac{V_m(\mathcal{T})}{V_m(\mathcal{T}) + E_m(\mathcal{T})}
$$
where $V_m(\mathcal{T}) = \frac{\sum_x (1 - \hat{\pi}^{(m)}(x)) p(x) \hat{V}_0(x)}{\sum_x p(x) (1 - \hat{\pi}^{(m)}(x))}$ and $E_m(\mathcal{T}) = \frac{\sum_x (1 - \hat{\pi}^{(m)}(x)) p(x) \hat{\eta}(x)^2/(m+1)}{\sum_x p(x) (1 - \hat{\pi}^{(m)}(x))}$, both estimated using in sample observations, so that Theorem \ref{thm:breakevena} holds.\footnote{One can also truncate $\omega_m$ at a sufficiently large value to guarantee uniformly bounded $\omega_m$.} 

Here is why: suppose that for each $x$ in the basin of ignorance a future study produces an independent unbiased signal $\tilde{\tau}_m$ with variance $\eta(x)^2/(m+1)$. A natural updating estimator is $(1-w_m)\hat f^{(0)}(x)+w_m\tilde\tau_m(x)
$,  where 
where $w_m\in[0,1]$ is a shrinkage weight. 
It follows that the excess variance from the updated signal is \textit{proportional} to 
$\frac{\eta(x)^2}{m+1} - \omega_m(x)$, where $\omega_m(x) = \frac{(2-w_m)}{w_m}V_0(x)$. When $w_m = 1$ this corresponds to simply comparing to the variance of $\hat{f}^{(0)}$, while smaller $w_m$ favor abstention more. Our proposed choice mimics a naive (uniform) Bayesian update and provides a \textit{lower bound} on the performance of collecting new data compared to optimal Bayesian updating.  \qed 
\end{rem}

\begin{rem}[Choice of $Q_m$]
\label{sec:inference_detailsa}

We are left to derive an expression for a uniform bound $Q_m^2$. 
Let $a_m(x)=p(x)(1-\hat{\pi}^{(m)}(x))$. Conditional on the training sample, the only randomness in $\widehat\Delta_\eta(m)$ comes from the out-of-sample estimates $\hat{\eta}_{oos}^2(x)$. A natural approach is to take
$ 
Q_m^2
=
\frac{N_{\mathcal X}}{(m+1)^2}
\sum_x a_m(x)^2
\mathbb E\left[
\left(\hat{\eta}_{oos}^2(x)-t_m\right)^2
\mid \mathcal T
\right],
$ 
where
$ 
t_m
=
\frac{
\sum_x p(x)(1-\hat{\pi}^{(m)}(x))\mathbb{E}[\hat{\eta}_{oos}^2(x)|\mathcal{T}]
}{
\sum_x p(x)(1-\hat{\pi}^{(m)}(x))
}
$.  
By replacing the conditional expectation with the empirical analogue, this approach provides a consistent estimate of a valid upper bound on the variance, where the relevant moments are consistently estimated by the law of large numbers.\footnote{Specifically, the reason why this is a valid bound is that
$ 
\mathbb V(\hat{\eta}_{oos}^2(x)\mid \mathcal T)
\le
\mathbb E\left[
\left(\hat{\eta}_{oos}^2(x)-t\right)^2
\mid \mathcal T
\right]
$ 
for any constant $t$ that is fixed conditional on the training sample. We therefore pick $t=\hat t_m$, the average out-of-sample variance estimate on the selected basin of ignorance. } \qed
\end{rem}


\subsection{Optimization routines} \label{sec:optimization}

In this section, we discuss practical optimization routines for several function classes.

\paragraph{Tree based rules}
A simple and interpretable rule are tree based rules
following Example \ref{exmp:tree_entropy}, with at most \(G-1\) predictive groups and depth 
$ 
L:=\log_2(G-1)\in\mathbb N,
$ 
so that it has exactly \(G-1\) terminal nodes. The basin of ignorance is allowed to be the union of any subset of these leaves.\footnote{As for common tree partitions \citep{athey2018generalizedrandomforests}, each terminal node classified as a predictive group must contain at least \(\underline\kappa |\mathcal X|\) types, for a user-specified \(\underline\kappa\in(0,1]\); otherwise the node is assigned to the basin of ignorance. This restriction rules out trivial partition that would define an achetype a group with only one or few different types.} 

This formulation corresponds to growing a tree of depth \(L\) and, at each terminal branch, deciding whether the corresponding cell should be labeled as a predictive archetype or assigned to the basin of ignorance, as illustrated in Figure \ref{fig:tree_clustering_combined}(Panel (a)). As in the literature on policy learning without abstention \citep{zhou2023offline}, the optimization problem can be solved exactly by recursion with appropriate modifications.

The exact solution is described in Algorithm \ref{alg:alg3} and in online Appendix Figure \ref{fig:tree_recursion}. Starting from the root, the algorithm searches over all admissible splits (choice of variable and threshold), evaluates the reward generated by the resulting left and right children, and then calls itself recursively on each child until depth \(L\) is reached. At the terminal stage, each leaf is compared against the basin-of-ignorance option.\footnote{Whenever $L > \log_2(G - 1)$, so that the basin of ignorance can be the union of more than $G - 1$ groups, we provide a greedy pruning algorithm in Appendix \ref{app:algorithm} and formal regret bounds.}

\begin{prop}[Exact solution and complexity guarantees for trees] \label{prop:complexity}
Let \(x\in\mathbb R^d\). Consider an axis-aligned tree partition as in Example \ref{exmp:tree_entropy} with with depth
$ 
L:=\log_2(G-1)\in\mathbb N_+
$. Denote the class of such rules by \(\mathcal P_L\).

Let \(\widehat R(f,\pi)\) denote the empirical criterion in \eqref{eqn:R_hat}. Then Algorithm \ref{alg:alg3}  with $\lambda = 0$ returns a pair \((\hat f,\hat\pi)\) satisfying
$ 
(\hat f,\hat\pi)\in
\arg\min_{(f,\pi)\in\mathcal P_L}
\widehat R(f,\pi).
$ 
In addition, the computational complexity of the algorithm is
$ 
O\!\left((2 d|\mathcal X|)^{L+1}\right),
$ 
so the algorithm is polynomial in \((d,|\mathcal X|)\) for fixed \(L\), and quasi-polynomial in \((d,|\mathcal X|,G)\).\footnote{An algorithm is quasi-polynomial in \(n\) if its worst-case complexity is \(O(n^{(\log n)^{O(1)}})\).}
\end{prop}

\begin{proof} See Appendix \ref{proof:prop:complexity}.  
\end{proof} 

Proposition \ref{prop:complexity} shows that the complexity of Algorithm \ref{alg:alg3} is quasi-polynomial in the joint number of groups, covariates and number of observations. In our application with $|\mathcal X| = 796$, when $G = 9$ the tree takes  a few seconds to run, and for $G=17$ it takes a couple of minutes on a personal computer (Appendix Figure \ref{fig:runtime}).\footnote{An alternative approach is to study clustering in which an unconstrained clustering solution is first computed in a low-dimensional space and then projected onto an interpretable tree structure \citep{moshkovitz2020explainable}. However, these have different objectives and their optimal value can be far from optimum \citep{moshkovitz2020explainable} making these methods not applicable here.}

\begin{rem}[Tuning parameters] \label{rem:lambda} Algorithm \ref{alg:alg3} introduces a regularization parameter $\lambda$ that penalizes the variance of the tree estimator, which in practice we recommend to set $\lambda = 1$. This penalty is not needed for the regret guarantees, but it can improve finite-sample performance, where $\lambda = 1$ captures the finite sample variance of the estimator.\footnote{Since $\widehat{V}_{\mathcal{A}} = O_p(N_{\mathcal{X}}^{-1})$ the regularization parameter does not affect rates in the regret bounds. However, to see why $\lambda = 1$ correctly adjust  for the finite sample variance fix a leaf \(A\) and let $V_A = \sum_{x \in A} p(x)^2 \eta(x)^2/(\sum_{x \in A} p(x))$. Conditional on the leaf, 
$ 
\mathbb E\!\left[\sum_{x\in A}p(x)\{(\hat\tau(x)-\hat\mu_A)^2-\hat\eta(x)^2\}\right]
=
\sum_{x\in A}p(x)(\tau(x)-\mu_A^\star)^2-V_A,
$ 
whereas 
$ 
\mathbb E\!\left[\sum_{x\in A}p(x)(\tau(x)-\hat\mu_A)^2\right]
=
\sum_{x\in A}p(x)(\tau(x)-\mu_A^\star)^2+V_A.
$ 
Thus adding \(2V_A\), corresponding to \(\lambda=1\), captures the variance of the prediction. }  The tree requires $\underline{\kappa}$ to avoid overly-small partitions; in our application even with $\underline{\kappa} = 0.5\%$ or $1\%$ we do not find overly-small partitions when combined with the $\lambda$ penalization and abstention option.
 \qed 
\end{rem}

 
\paragraph{G-means clustering}  Following Example \ref{exmp:kmeans_entropy}, a more general partition rule consists of a $G$-means clustering through the intersection of \(G-1\) halfspaces. The basin of ignorance is obtained from the union of $G$ groups, see e.g., Figure \ref{fig:tree_clustering_combined}(Panel (b)). 

Whenever we cluster on baseline covariates $x \in \mathbb{R}^d, d > 1$, Appendix \ref{app:clustering} provides two algorithms, an exact solution via mixed-integer quadratically constrained programming that can be solved using off-the-shelf software and a smooth approximation via gradient descent. We recommend to implement the exact solution with a hard stopping time and compare the objective to the smooth relaxation to choose.

In the special one-dimensional case with $x \in \mathbb{R}$, the clustering problem admits an exact dynamic-programming formulation \citep[][]{wang2011ckmeans}, with polynomial complexity, a useful observation recently made by \cite{montielolea_archetype_discovery}. See Appendix Algorithm \ref{alg:onedim_clustering}. This can be adopted when either $\tau(x)$ is known or precisely estimated at each $x$ (e.g., when the number of observations for each $x$ is large), in which case one can cluster on $\tau(x)$ directly at the expense of lower interpretability, or when researchers may wish to cluster only using a one-dimensional proxy for heterogeneity. 

\begin{algorithm}[!ht]
\footnotesize
\caption{Exact ignorance-aware tree search}
\label{alg:alg3}
\begin{algorithmic}[1]
\Require Node \(\mathcal S\subseteq\mathcal X\), depth \(\ell\), depth \(L\), minimum leaf size \(\underline\kappa |\mathcal X|\), variance-regularization $\lambda$

\Function{LeafLoss}{$\mathcal{A}$}
    \State Let $V_{\mathcal A}
    :=
    \frac{\sum_{x\in\mathcal A}p(x)^2\hat\eta(x)^2}
    {\sum_{x \in \mathcal{A}} p(x)}$. Compute the archetype and ignorance loss on \(\mathcal A\):
    \[
    \widehat L^{\text{arch}}(\mathcal A)
    :=
   \sum_{x\in\mathcal A} p(x)\Big\{(\hat{\tau}_{\mathcal{A}} - \hat\tau(x))^2-\hat\eta(x)^2\Big\} + 2 \lambda \widehat V_{\mathcal A}, \qquad \hat{\tau}_{\mathcal{A}} := \frac{\sum_{x \in \mathcal{A}} p(x) \hat{\tau}(x)}{\sum_{x \in \mathcal{A}} p(x)}, \qquad  \widehat L^{\text{ign}}(\mathcal A) = \sum_{x \in \mathcal{A}} p(x) \hat{c}(x)
    \]
    \If{\(|\mathcal A|<\underline\kappa |\mathcal X|\)}
        Set \(\widehat L^{\text{arch}}(\mathcal A)=+\infty\)
    \EndIf
    \If{\(\widehat L^{\text{arch}}(\mathcal A)\le \widehat L^{\text{ign}}(\mathcal A)\)}
        \Return archetype assignment on \(\mathcal A\), loss \(\widehat L^{\text{arch}}(\mathcal A)\)
    \Else
        \quad \Return ignorance assignment on \(\mathcal A\), loss \(\widehat L^{\text{ign}}(\mathcal A)\)
    \EndIf
\EndFunction

\Function{TreeSearch}{$\{\mathcal S, \ell\}$}
    \For{each admissible split \((j,t)\) on node \(\mathcal S\)}
        \State Define left and right children
        $ 
        \mathcal S_L(j,t):=\{x\in\mathcal S: x_j\le t\},
        \mathcal S_R(j,t):=\{x\in\mathcal S: x_j> t\}
        $ 
        \If{\(\ell=L\)}
            \State \((a_L,\;E_L)\gets\Call{LeafLoss}{\mathcal S_L(j,t)}\),  \((a_R,\;E_R)\gets\Call{LeafLoss}{\mathcal S_R(j,t)}\)
        \Else
            \State \((a_L,\;E_L)\gets\Call{TreeSearch}{\mathcal S_L(j,t),\ell+1}\), \((a_R,\;E_R)\gets\Call{TreeSearch}{\mathcal S_R(j,t),\ell+1}\)
        \EndIf
        \State Set total loss \(E(j,t):=E_L+E_R\)
        and store corresponding assignments \(A(j,t):=(a_L,a_R)\)
    \EndFor
    \State Let \((j^\star,t^\star)\in\arg\min_{(j,t)} E(j,t)\)
    \State \Return split \((j^\star,t^\star)\), assignments \(A(j^\star,t^\star)\), loss \(E(j^\star,t^\star)\)
\EndFunction

\State \Return \Call{TreeSearch}{$\mathcal{X},1$}
\end{algorithmic}
\end{algorithm}

\paragraph{Smooth regression problems}
It is also possible to generalize the procedure to smooth prediction models,
such as linear or ridge regressions. Let \(f_h(x)\) denote a continuously
differentiable prediction rule indexed by parameters \(h\). Let
\(\pi(x)\in\{0,1\}\) denote the decision to issue a prediction for type \(x\),
with \(\pi(x)=1\) corresponding to prediction and \(\pi(x)=0\) corresponding
to assignment to the basin of ignorance. 
To obtain a differentiable criterion, let \(s_\theta(x)\in\mathbb R\) be a
continuously differentiable score indexed by parameters \(\theta\). The hard
prediction rule is
$ 
\pi_\theta(x)
=
\mathbf 1\left\{s_\theta(x)\ge 0\right\}.
$ 
As for smoothed maximum score principle \citep{horowitz1992smoothed}, we  approximate this hard rule by
$ 
\pi_{\theta,\kappa}(x)
=
K\left(\frac{s_\theta(x)}{\kappa}\right),
$ 
where \(K\) is a smooth cumulative distribution function, such as the standard
normal or logistic cumulative density function and $\kappa$ is typically a small bandwidth parameter (e.g., $\kappa = 0.1$). The corresponding smooth objective is
\begin{equation}
\small 
\begin{aligned} 
\hat R_{\theta,h,\kappa}
=
\sum_x p(x) \pi_{\theta,\kappa}(x)
\left[
\Big(f_h(x)-\hat{\tau}(x)\Big)^2
-
\hat{\eta}^2(x)
-
\hat c(x)
\right], \qquad \pi_{\theta,\kappa}(x):= K\left(\frac{s_\theta(x)}{\kappa}\right). 
\end{aligned} 
\end{equation}
This criterion can be optimized over \((\theta,h)\) using gradient-based
methods, which takes few seconds to run on a personal laptot for standard sized economic datasets, with guarantees to reach a local optima (in practice we recommend using multiple starting points). In addition, the objective directly allows to include regularization in the final objective, as for ridge-type regression. After optimization, the final binary prediction rule is obtained by
thresholding the score 
$ 
\hat\pi(x)
=
\mathbf 1\left\{s_{\hat\theta}(x)\ge 0\right\}, 
$ 
with one more estimation round for $f_h$ under the discrete assignment rule.

 \section{Empirical application and numerical studies}\label{sec:empirics}

In this section, we illustrate the properties of our method by re-analyzing the six experimental evaluations of a multifaceted antipoverty (``Graduation'') program, first described in \cite{banerjee2015multifaceted}. The core intervention consists of providing a bundle of asset transfer, consumption support, training, and access to financial and health services.  The specific implementation was adjusted to each of the six local contexts (Ethiopia, Ghana, Honduras, India, Pakistan, and Peru). The goal is to give poor households the tools to generate a sustained improvement in living standards. Across all six pilot experiments, researchers enrolled 10,495 households spanning more than 500 villages. 
The randomization was conducted at the individual (household) level across the three countries, and at the village and household level in three countries, with household-level randomization within treated villages.\footnote{We abstract from within-village spillovers, consistent with the spillover
checks reported by \citet[Supplementary Text 4]{banerjee2015multifaceted}.
Under this no-spillover interpretation, the IPW pseudo-outcomes in
Example~\ref{exmp:ipw}, constructed using the relevant design probabilities, are
unbiased for household-level direct effects. In sites with village-level assignment,
these probabilities incorporate both village-level treatment assignment and
household-level assignment within treated villages. For simplicity, we approximate
the design as if observations were independent conditional on these design
probabilities; extensions to locally dependent observations are discussed in
Appendix~\ref{sec:local}.}
\cite{banerjee2015multifaceted} conclude that this ``big push'' program has large impacts after pooling across experimental sites, with positive effects on total consumption, an index measuring food security and an index measuring assets.  We focus on these three outcomes one year after the intervention. Following verbatim \cite{banerjee2015multifaceted}, we standardize the outcomes by the country-specific control group mean and standard deviation to make these comparable across countries. 
We use as covariates $x$ the country (experiment), baseline outcomes (total consumption, the food security and asset index measured at baseline) and the total amount of individual loan measured at baseline.  

\paragraph{Estimation} We estimate $\hat{\tau}$ via IPW as in Example \ref{exmp:ipw}: we run Algorithm \ref{alg:ipw_matching} to construct types $x$ with in median ten observations each within each country via non-parametric matching. We estimate $\hat{\tau}(x)$ as the sample mean of each group and $\hat{\eta}(x)^2$ as the sample variance appropriate divided by $n(x)$. 
 In total, we have $|\mathcal{X}| = 796$. 

The archetype structure is the same across different outcomes, whereas the predictions are different for each outcome (see Appendix \ref{sec:multiple_properties}). 
We select a depth-four tree, with the depth four minimizing out-of-sample loss compared to depth-three tree with cross-validation (uniformly over $m \in \{1, \cdots, 7\}$). For both cases, we use Algorithm \ref{alg:alg3} with $\lambda = 1$ to capture the variance of the tree estimator (Remark \ref{rem:lambda}), and a minimum leaf size equal to five unit types $x$ for an archetype (i.e., in median fifty observations, $\approx 0.5\%$ of the sample) and ten types $x$ for the basin of ignorance to avoid overfitting.  

\paragraph{Cost of ignorance via breakeven analysis} Following Section~\ref{sec:cost_abstention}, we set the cost of ignorance to
$ 
\hat c_m(x)=\frac{\hat\eta(x)^2}{m+1},
$ 
where $m$ indexes the effective size of a follow-up experiment relative to the current evidence. We select $m= 3$ through a data-driven choice with our breakeven analysis (Figure \ref{fig:breakeven}). The breakeven analysis identifies the minimum sample size at which we can conclude, with $95\%$ confidence, that the expected loss from additional data collection improves upon the no-abstention variance benchmark
(Section \ref{app:breakeven_other} and Remark \ref{rem:omega_m}), with sample splitting under our recommended choice in Remark \ref{sec:sample_splitting}.   
The selected $m = 3$ corresponds to approximately 15\% units in the basin of ignorance and about 1.5-fold in-sample prediction improvement on the predictive set over the no-ignorance tree, see online Appendix Figure \ref{fig:estimated_tree}.

\paragraph{Estimation results}
 Figure~\ref{fig:ignorance_tree} reports the estimated ignorance-aware tree for \(m=3\), with effects reported both for the average of the standardized consumption, food-security, and asset outcomes, and separately for each outcome. The estimated tree has eleven predictive archetypes and five terminal regions assigned to the basin of ignorance (corresponding in total to \(14.6\%\) of observations). The first split is on baseline loan amount. Subsequent splits are primarily on baseline food security, consumption, assets, and loan amount. Country indicators enter only at lower levels of the tree, for Ethiopia and India. Thus, the rule first pools observations according to baseline economic conditions; it introduces country-specific predictive structure only where these splits provide a better summary of treatment-effect heterogeneity.

The estimated archetypes reveal substantial heterogeneity. Predicted effects range from near zero to \(48\%\) of a control-group standard deviation. Several high-effect archetypes are driven especially by asset accumulation. The tree also identifies large groups with significant but smaller effects, between \(6\%\)--\(14\%\) (more than half of the sample), and two archetypes with effects close to zero ($\approx 8\%$ of the overall sample).

The left panel of Figure~\ref{fig:country_composition} further shows how the estimated ignorance-aware archetypes aggregate evidence across countries. The moderate effect archetypes, appear in every country and account for large shares of Ghana, Honduras, and Peru. Other archetypes are more concentrated: the \(25\%\)-effect archetype is concentrated in Ethiopia, while the \(48\%\)-effect archetype is concentrated in India. The basin of ignorance is present in every country, but its size varies across sites, with larger shares in India and Ethiopia.

The right panel of Figure~\ref{fig:country_composition} reports the share of individuals assigned to the same group (either archetype or basin of ignorance) between two countries relative to the size of each country. Large overlaps indicate that the method often assigns units from different countries to the same archetype groups. 

To understand the structure of the different archetypes in terms of poverty level, 
Figure~\ref{fig:predictions_quartile} summarizes the ignorance-aware predictions by country and by quartiles of the baseline poverty score. The poverty score is the first principal component of baseline assets, consumption, and food security, after standardizing each variable by the within-site baseline control-group mean and standard deviation. Higher values of the score correspond to \textit{poorer} individuals. Countries are ordered by baseline consumption of the experimental population, from lowest to highest. 

The figure illustrates three broad patterns. First, predicted effects are largest in India and Ethiopia, the two countries with the lowest baseline consumption. Second, outside these two countries, predicted effects tend to increase for poorer individuals. In Peru, predicted effects rise from \(6\%\) in the two lowest poverty-score quartiles to \(11\%\) and \(20\%\) in the two highest quartiles, and in Pakistan from \(9\%\)--\(10\%\) to \(21\%\). Third, the procedure recommends more caution in higher-poverty regions: the ignorance share reaches \(43\%\) for the poorest quartile in Peru and \(30\%\) for the poorest quartile in Pakistan, and is also large for India and Ethiopia.
These are individuals where multiple mechanisms may be consistent with the observed evidence and where further experimentation would be especially valuable.

\paragraph{Comparison with the same non-ignorance tree}
The comparison with the tree without ignorance in Figure~\ref{fig:tree_noignorance} illustrates the role of the abstention option. Once every observation must be assigned to a predictive archetype, the tree without ignorance primarily organizes the sample by country. The first splits isolate Ghana, Pakistan, India, and Ethiopia, leaving a large residual archetype that pools the remaining countries and has an index effect of only \(2\%\). Only after these country splits the tree uses baseline economic variables to form additional subgroups. This structure suggests that, without the possibility of abstention, the procedure mostly learns country-specific averages rather than a more portable structure. Figure~\ref{fig:predictions_quartile} confirms this intuition and shows that the no-ignorance tree produces predictions that on average are essentially stable within each country as a function of the poverty score. 

At the same time, for some units, the no-ignorance tree also creates several small leaves with extreme predictions. For instance, it includes leaves with index effects of \(64\%\), \(63\%\), and \(-35\%\), as well as component effects such as an asset effect of \(249\%\), a consumption effect of \(-103\%\), and an asset effect of \(-96\%\). These small leaves are also associated with large standard errors. For instance, two groups have standard errors of $14.5\%$ and the average standard errors per archetype is about $5\%$. By contrast, the ignorance-aware tree sets aside several terminal regions rather than forcing them into predictive archetypes, and present standard errors that do not exceed $4.5\%$, with an average standard error per archetype equal to 2.17\%, about half the no-abstetion tree.

\begin{figure}[!ht]
\centering 
\includegraphics[scale = 0.3]{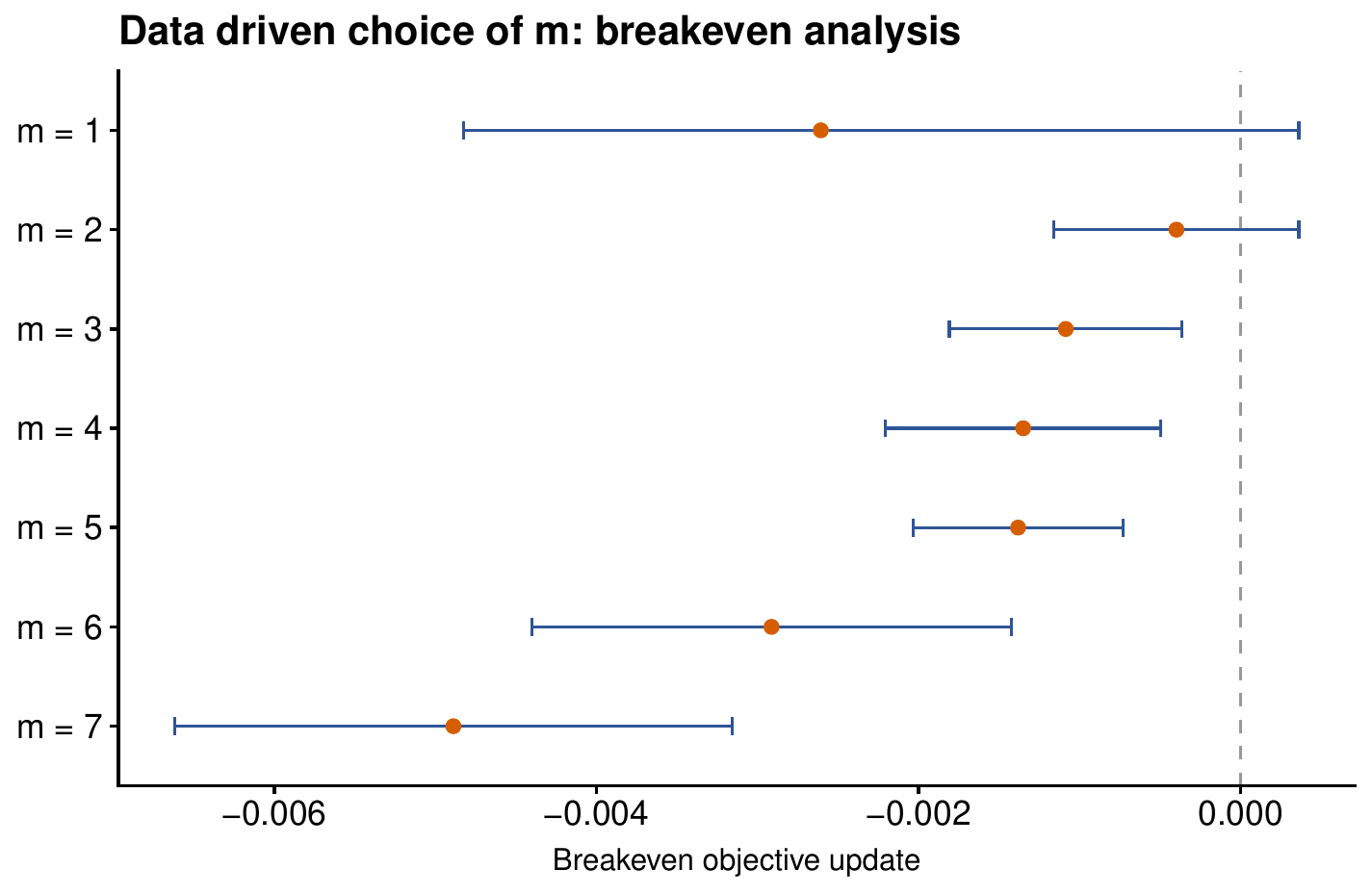}
\caption{\textbf{Data-driven choice of \(m\) via break-even analysis.} We choose \(m\) as the smallest value whose confidence-interval upper bound is strictly below zero. The figure suggests \(m=3\) as the smallest value for a statistically significant precision improvement after collecting more data. Specifically, the figure reports, for each value of \(m\), the average break-even objective on the out-of-sample abstention set, with two-sided \(90\%\) confidence intervals, equivalently one-sided \(95\%\) confidence intervals, as described in Theorem~\ref{thm:breakevena}. The comparison evaluates the loss improvement between collecting more data, as described in Remark~\ref{rem:omega_m}. Confidence intervals are constructed out of sample via three-fold cross-validation with sample splitting generated as recommended in Remark~\ref{sec:sample_splitting}, and confidence interval adjusted for uniform coverage as in Equation \eqref{eqn:S_la}.}

\label{fig:breakeven} 
\end{figure}

\begin{figure}[!ht]
\centering
\includegraphics[scale = 0.4]{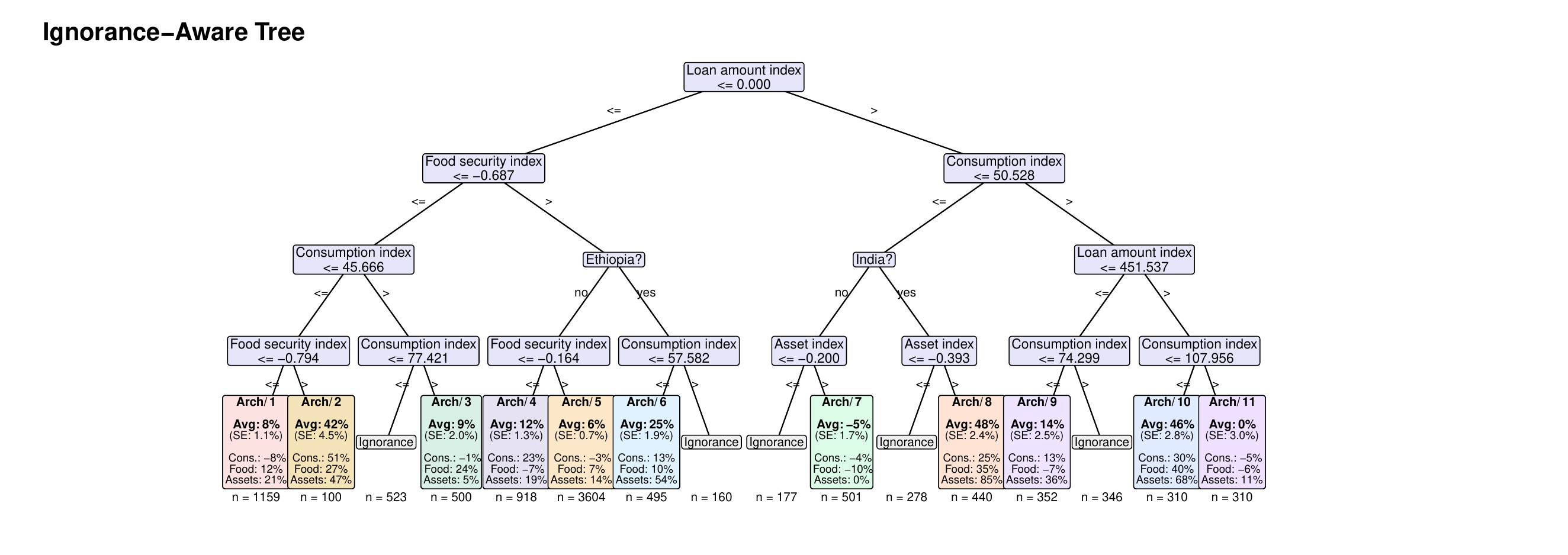}
\caption{\textbf{Full-sample ignorance-aware tree.}
The figure reports the estimated ignorance-aware tree for \(m=3\) and at most sixteen archetypes. Terminal nodes are either assigned to the basin of ignorance or to predictive archetypes. For each predictive archetype, the figure reports the predicted effect on the index outcome, defined as the average of the standardized consumption, food-security, and asset outcomes, as well as the predicted effects on each component outcome separately. Effects are reported as percentages of a control-group standard deviation. The value \(n\) below each terminal node denotes the number of underlying observations assigned to that leaf, computed as the number of types \(x\) in the leaf times the number of units assigned to each type. Treatment effects are measured in standard deviation measures.}
\label{fig:ignorance_tree} 
\end{figure}

\begin{figure}[!ht]
\centering
\includegraphics[scale = 0.4]{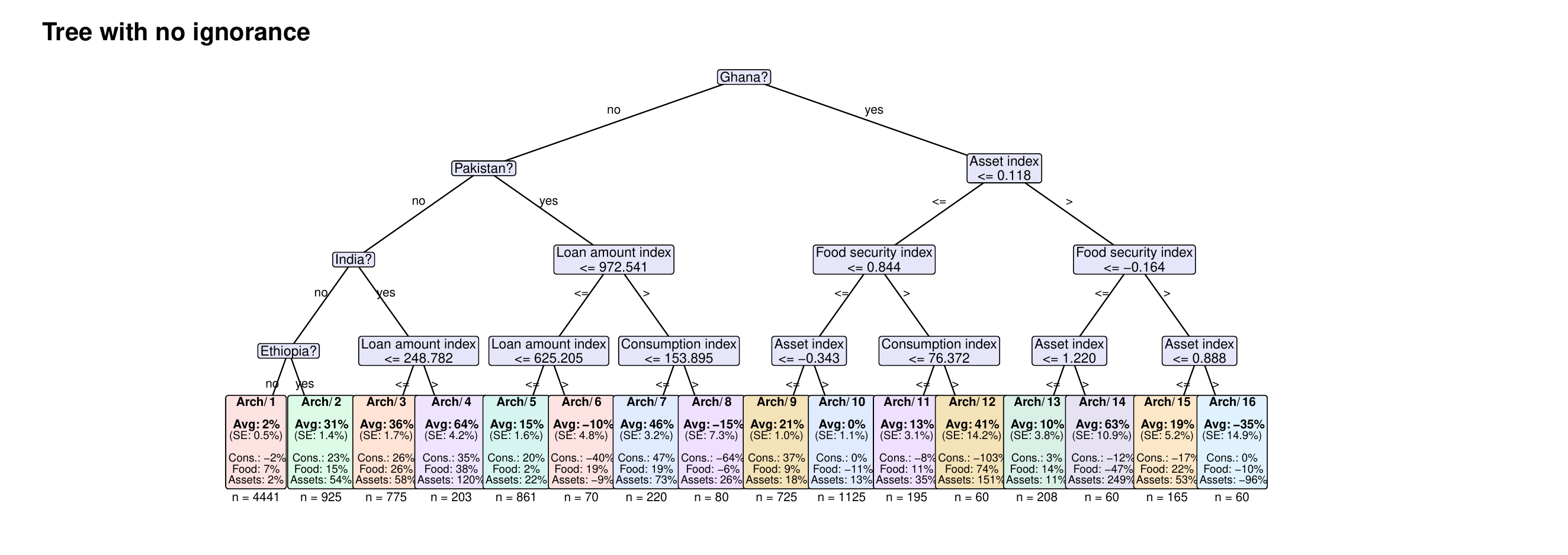}
\caption{\textbf{Full-sample tree without ignorance.}
The figure reports the estimated tree without allowing for ignorance, with at most sixteen archetypes. For each archetype, the figure reports the predicted effect on the index outcome, defined as the average of the standardized consumption, food-security, and asset outcomes, as well as the predicted effects on each component outcome separately. Effects are reported as percentages of a control-group standard deviation. The value \(n\) below each terminal node denotes the number of underlying observations assigned to that leaf, computed as the number of types \(x\) in the leaf times the number of observations within each type. Effects are measured in standard deviation measures.}
\label{fig:tree_noignorance} 
\end{figure}

\begin{figure}[!ht]
\centering
\includegraphics[scale = 0.35]{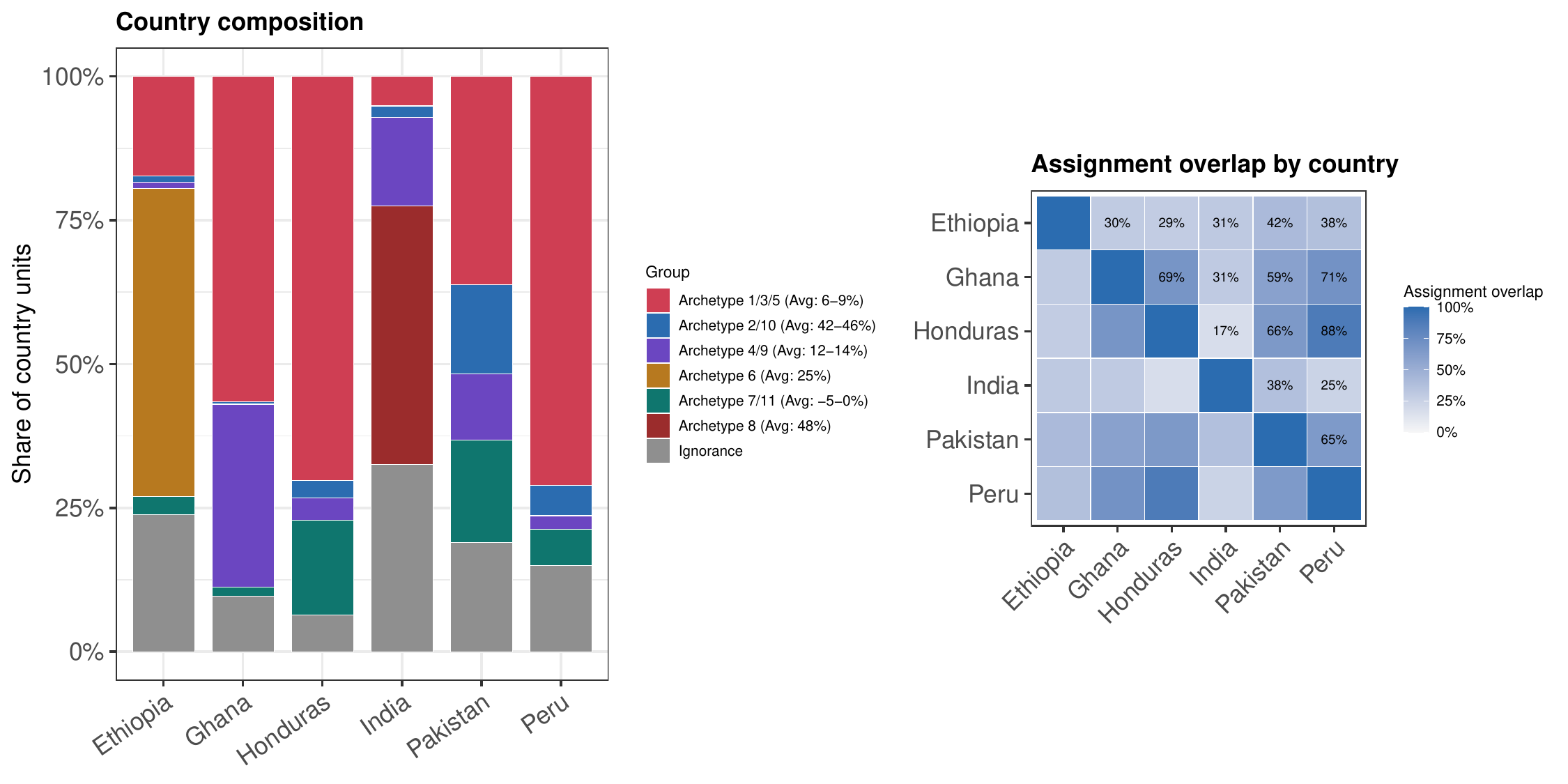}
\caption{\textbf{Country composition of archetypes and assignment overlap.}
The left panel reports, for each country, the share of underlying observations assigned to each predictive archetype or to the basin of ignorance under the ignorance-aware tree in Figure~\ref{fig:ignorance_tree}. For readability, predictive archetypes with similar index effects are grouped in the legend. The right panel reports the pairwise assignment overlap between countries, defined for countries $c$ and $d$ as
$ 
\mathrm{Overlap}(c,d)
=
\sum_{g \in \mathcal{G}}
\min\left\{ s_{cg}, s_{dg} \right\},
s_{cg}
=
\frac{\sum_{i:\, C_i=c,\, G_i=g} w_i}
{\sum_{i:\, C_i=c} w_i},
$ 
where $G_i$ is the tree assignment group of type $i$, $w_i$ is the number of matched individuals represented by type $i$, and $\mathcal{G}$ includes the predictive archetype groups and the basin of ignorance. Darker cells indicate larger overlap in the estimated archetype assignments. The figure shows that several archetypes contain observations from multiple countries. Effects are measured in standard deviation measures.}
\label{fig:country_composition} 
\end{figure}

\begin{figure}[!ht]
\centering 
\includegraphics[scale = 0.5]{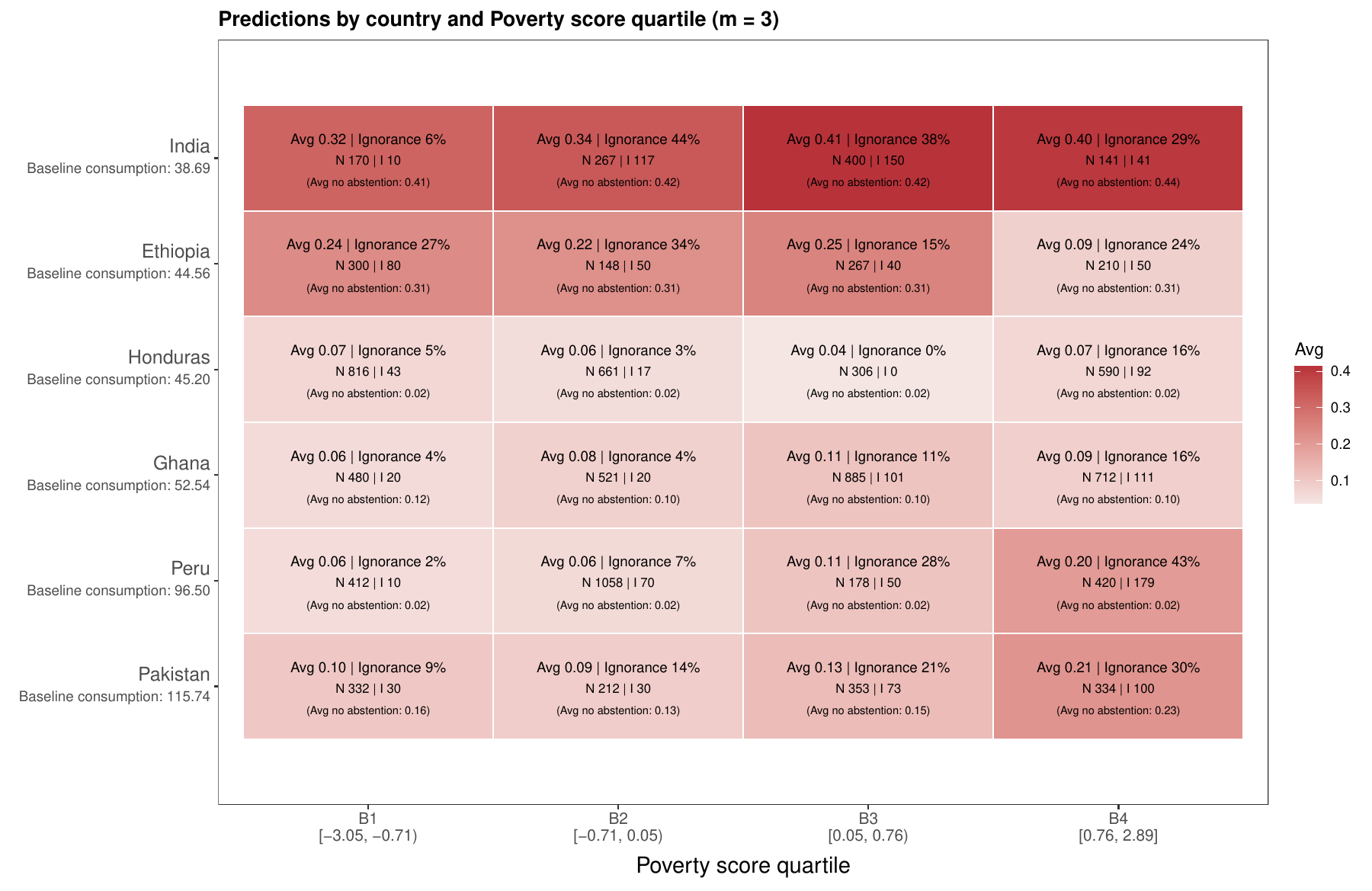}
\caption{\textbf{Ignorance-aware predictions by country and baseline poverty-score quartile.}
The figure reports average predicted effects on the index outcome from the ignorance-aware tree, separately by country and by quartile of the baseline poverty score. The poverty score is constructed as the first principal component of baseline assets, consumption, and food security, after standardizing each variable by the within-site control-group mean and standard deviation. Higher values correspond to poorer individuals. Each cell also reports the share assigned to the basin of ignorance. The labels \(N\) and \(I\) denote, respectively, the number of observations in the cell and the number assigned to ignorance. Treatment effects are measured in standard deviation measures. In parenthesis the prediction of the tree that does not allow for abstention, averaging over all units in a given cell.}
\label{fig:predictions_quartile} 
\end{figure}


To further compare our procedure to the corresponding no-abstention tree, Figure~\ref{fig:comparison_simple_tree} reports three out-of-sample comparisons each computed using three-fold cross validation. The left panel reports the difference in the estimated squared-bias proxy between the depth-4 no-abstention tree and the depth-4 ignorance-aware tree at \(m=3\). Positive values therefore indicate lower squared bias for the ignorance-aware tree. The bias differences are small and not statistically distinguishable from zero in Q1, Q2, and Q4. The main exception is Q3, where the no-abstention tree has a significantly larger squared-bias proxy, of about \(0.02\) (in square-root units, $\sqrt{0.02} \approx 0.14$).

The middle panel reports the ratio of average standard errors on the prediction set, with values above one indicating lower prediction variance for the ignorance-aware tree. In Q1, the standard error of the no-abstention tree is 2.5-times larger than that of the ignorance-aware tree, more than 1.5 times larger for Q2, and smaller but significant and positive for Q3. 
Thus, even with a potentially small abstention set on these units, gains in precisions can be substantial. 

The Q4 result highlights the trade-off induced by abstention. For the highest-poverty
quartile, the standard-error ratio is below one, so the ignorance-aware tree has higher
average standard error among the types on which it continues to predict. This occurs
because a large share of Q4 observations is assigned to the basin of ignorance, reducing
the effective sample size of the retained prediction set. At the same time, the right
panel shows that the no-abstention tree performs poorly on the types that our procedure
abstains from predicting. Thus, the cost of higher variance on the retained Q4 prediction
set is paired with avoiding predictions in precisely the region where the no-abstention
benchmark has high held-out error.

Finally, in online Appendix  \ref{sec:oos_comparisons} (Figure \ref{fig:comparison_grf}) we also present out-of-sample comparisons on the same predictive set with generalized random forest and Bayes meta-regression procedures. We find significant improvements across all quartiles of poverty score, illustrating benefits of the method for targeting heterogeneity with abstention.

\begin{figure}
\centering
\includegraphics[scale = 0.3]{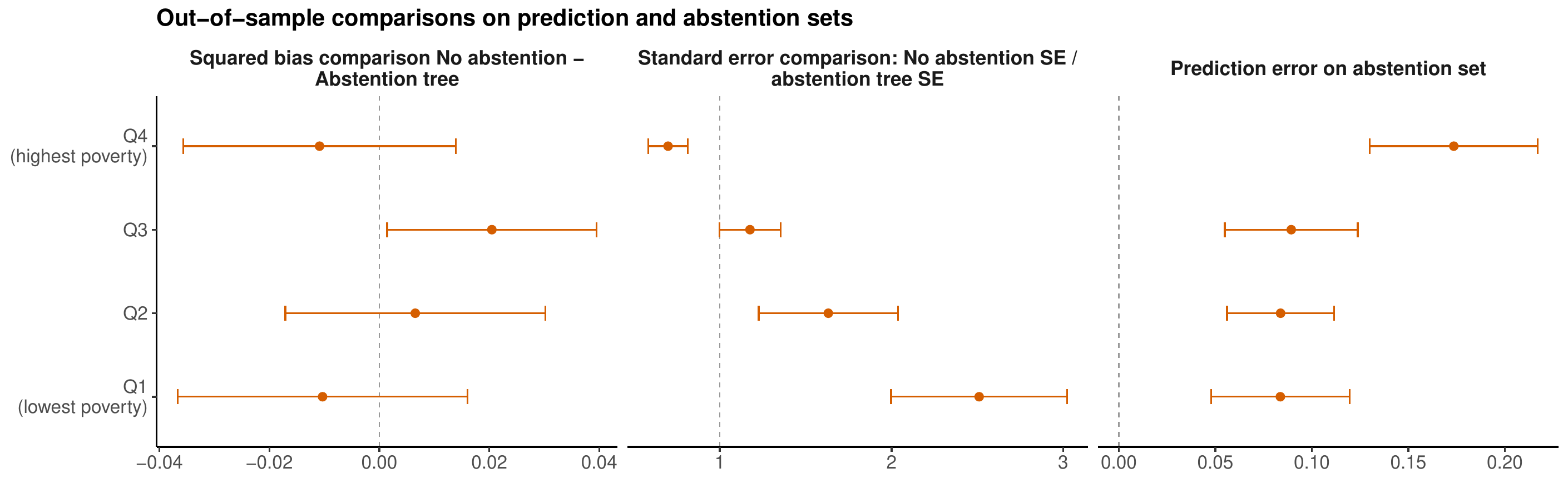}
\caption{\textbf{Out-of-sample comparisons between the depth-4 no-abstention tree and the depth-4 ignorance-aware tree at \(m=3\), shown by baseline poverty-score quartile.} The left panel reports the out-of-sample difference in the squared-bias proxy; positive values indicate lower squared bias for the ignorance-aware tree. This is computed as the out-of-sample sum over the predictive set of $(\hat{f}_0(x) - \hat{\tau}_{oos}(x))^2 - (\hat{f}(x) - \hat{\tau}_{oos}(x))^2 -  (\mathbb{V}(\hat{f}_0(x))) - (\mathbb{V}(\hat{f}(x)))$, with $\mathbb{V}(\cdot)$ as the variance estimate and $\hat{f}_0$ the no-abstention prediction. The middle panel reports the ratio of average standard errors on the prediction set, with values above one indicating lower prediction variance for the ignorance-aware tree. The right panel reports the no-abstention tree's held-out prediction error on the \(m=3\) abstention set, aggregated within each quartile and normalized by the overall abstention-set size. Horizontal bars are \(90\%\) confidence intervals. }
\label{fig:comparison_simple_tree}
\end{figure}

\paragraph{Calibrated numerical study on the prediction set.}
To complement the empirical comparisons above, we report a calibrated Monte Carlo exercise where a small set of types with arbitrary heterogeneity can contaminate predictions. The exercise is deliberately simpler than the main empirical analysis. We focus on the average of the standardized consumption, food-security, and asset outcomes, and use the depth-two calibration tree in Figure~\ref{fig:trees}.
 This tree is estimated from the empirical data.\footnote{Specifically, we compute IPW pseudo-outcomes directly at the empirical support points, use the active covariates displayed in the figure---an indicator for Peru, baseline log consumption, and the baseline asset index---and estimate a depth-two ignorance-aware tree with at most four archetypes, a minimum terminal-node size of twenty observations, five candidate split points per variable, and a constant calibration cost \(c(x)=2.5\). This calibration step is not used as an additional empirical result; it is only a transparent way to define a simulation design with a small region of misspecification.} The calibration tree has two terminal regions assigned to the basin of ignorance. In the Monte Carlo design, however, only the smaller Peru region is treated as truly non-predictive. For all types in the predictive leaves, and for the larger non-Peru ignorance region, we set the treatment effect equal to the corresponding stable archetype prediction from the calibration tree. Thus, the larger non-Peru ignorance region is folded into the stable component of the design. For the small Peru region, which represents approximately \(4\%\) of the sample, we instead draw treatment effects \(\tau(x)\) from a Cauchy distribution with scale parameter \(s\in[0.1,3]\). The parameter \(s\) therefore controls the severity of the heterogeneous contamination. Conditional on the realized \(\tau(x)\), outcomes are drawn as homoskedastic Gaussian with variance one, and treatment assignments are drawn from a Bernoulli design. The resulting simulated IPW signal is an unbiased estimate of \(\tau(x)\); conditional on the realized treatment-effect function, the relevant sampling moments are finite even though the cross-type distribution of effects in the contaminated Peru region is heavy-tailed.

\begin{figure}[!ht]
\centering
\includegraphics[scale = 0.4]{./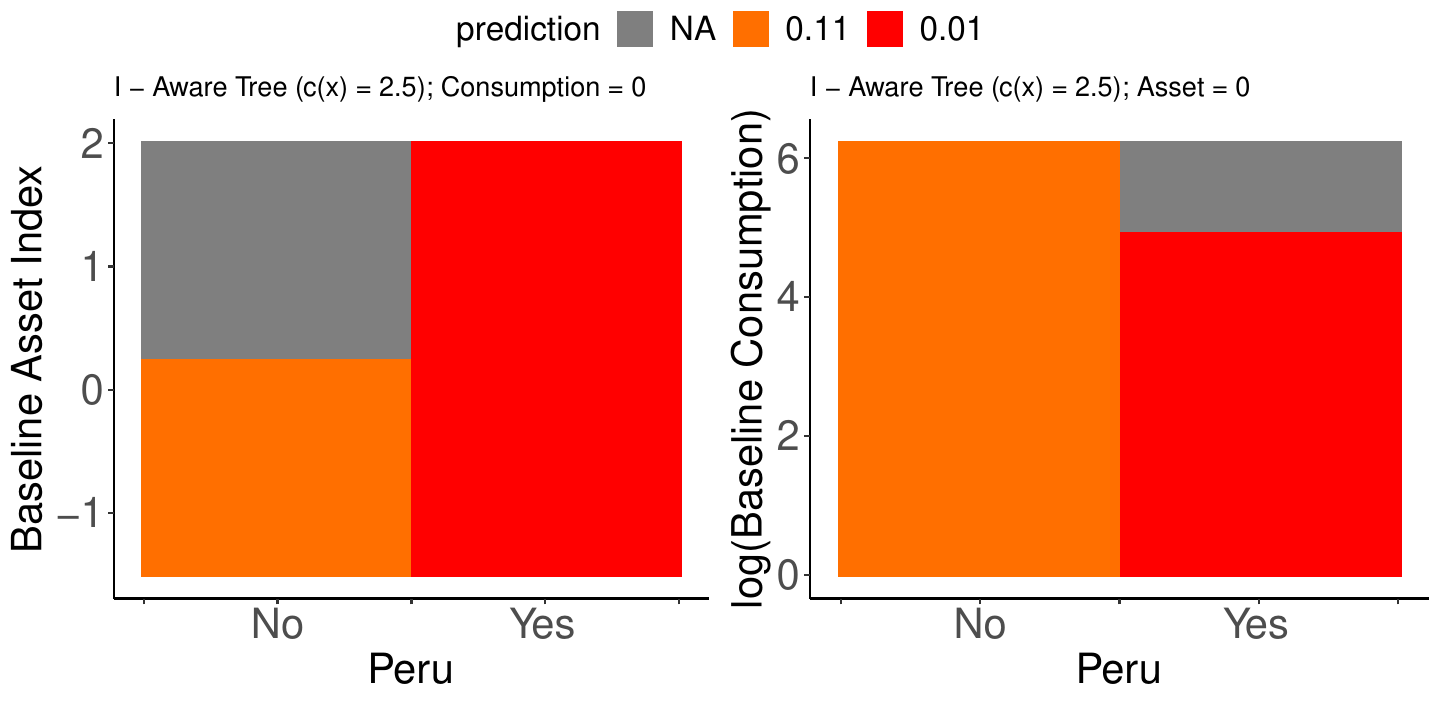}
\caption{\textbf{Depth-two calibration tree used in the numerical study.}
The figure reports the simplified ignorance-aware tree used only to define the Monte Carlo data-generating process. The tree is estimated on empirical covariates and IPW pseudo-outcomes for the index outcome, using the active covariates shown in the figure: an indicator for Peru, baseline log consumption, and the baseline asset index. The calibration uses a depth-two tree with at most four archetypes, a minimum terminal-node size of twenty observations, five candidate split points per variable, and constant abstention cost \(c_{\mathrm{cal}}=2.5\). The calibration tree is estimated using the individual-level IPW signal
\(\hat\phi(X_i)=\widetilde Y_i\), where \(\widetilde Y_i\) is the outcome reweighted by the inverse empirical treatment probability within country; \(\hat\eta(X_i)^2\) is estimated using a linear regression with Lasso and cross-validation. Gray regions denote the calibration basin of ignorance. In the simulation design, only the small Peru ignorance region is assigned arbitrary Cauchy heterogeneity; the larger non-Peru ignorance region is pooled with the stable component of the design outside Peru. Treatment effects are measured in standard deviation measures.}
\label{fig:trees}
\end{figure}

\begin{figure}[!ht]
\centering
\includegraphics[scale = 0.35]{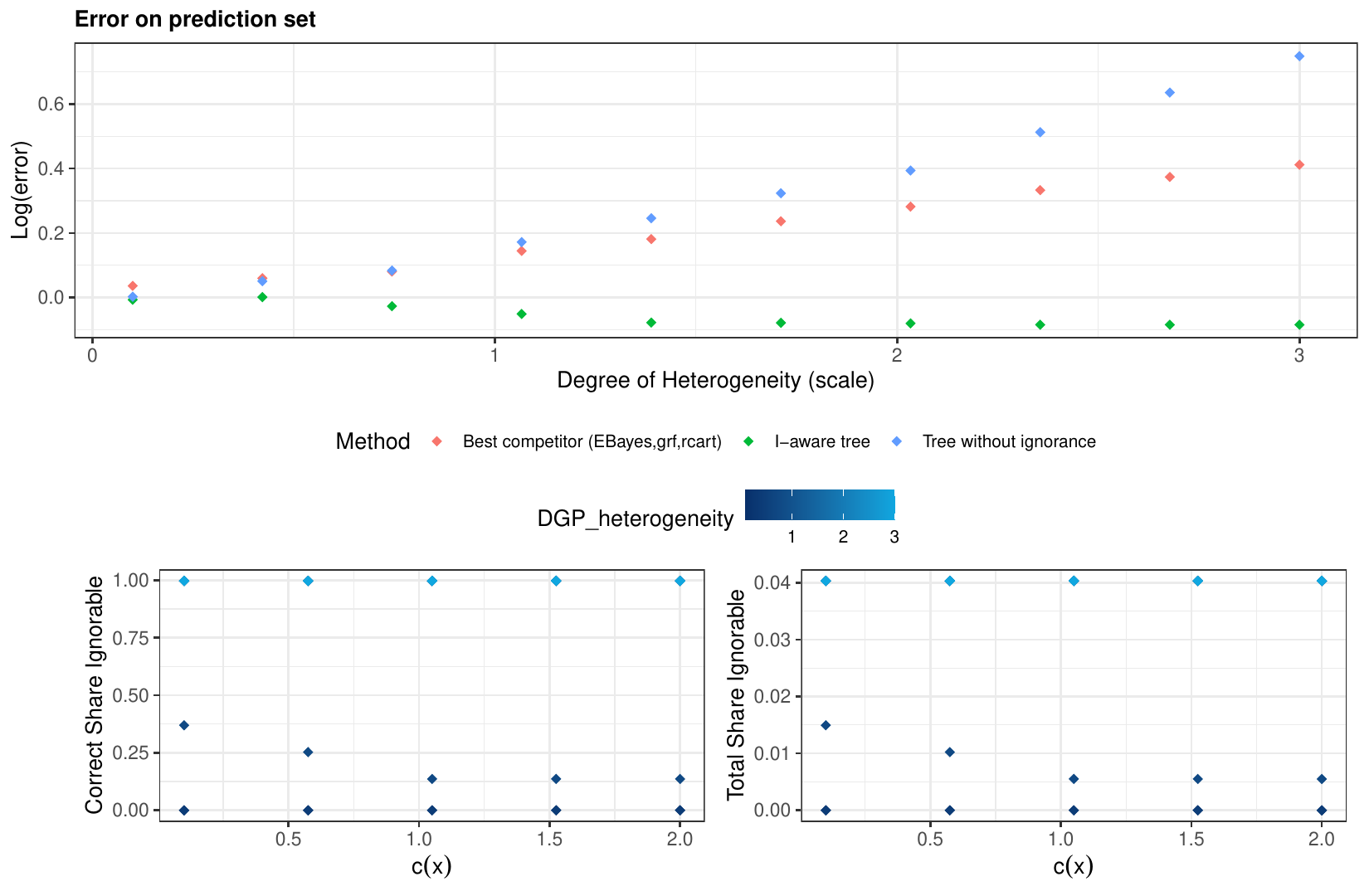}

\caption{\textbf{Calibrated numerical study.}
The top panel reports the median, over one hundred replications, of mean squared prediction error in logarithmic scale on the stable prediction set of the data-generating process. The plotted series compare the best benchmark among empirical Bayes, generalized random forests, and CART; the no-abstention tree, reported as a reference; and the worst performance of the ignorance-aware tree over \(c\le 2\). Errors are normalized by the error of the baseline depth-two tree at the smallest heterogeneity scale. The bottom-left panel reports the share of truly contaminated types correctly assigned to the basin of ignorance, and the bottom-right panel reports the total share of observations assigned to the basin of ignorance. In all panels, the \(x\)-axis is the scale parameter of the Cauchy distribution from which treatment effects are drawn in the contaminated Peru region.}
\label{figure:error}
\end{figure}

For each value of the Cauchy scale \(s\), we generate one hundred Monte Carlo replications. In each replication, we re-estimate the ignorance-aware tree using the simulated \(\hat\tau(x)\) and \(\hat\eta(x)^2\), with a constant cost \(c(x)=c\) varied over
$ 
c\in\{0.1,0.5,1,1.5,2\}.
$ 
We compare the resulting to a depth-two CART tree, generalized random forests with default tuning parameters, and empirical Bayes procedures centered around the estimated CART tree. For empirical Bayes, we report versions that use either the true homoskedastic variance from the simulation design or an estimated variance. The complete summary is provided in Appendix Table \ref{tab:calibrated_simulation_design}. 

Figure~\ref{figure:error} reports the results. The top panel plots the mean squared prediction error, in logarithmic scale, on the stable prediction set of the data-generating process. The figure reports the best-performing benchmark among CART, generalized random forests, and empirical Bayes, the no-abstention version of the tree as a reference, and the worst performance of the ignorance-aware tree over \(c\le 2\). Each statistic is normalized by the error of the baseline depth-two tree at the smallest heterogeneity scale and summarized by its median over the one hundred replications. When the Cauchy scale is small, the ignorance-aware tree performs similarly to the best benchmark. Once the scale is at least \(0.5\), however, allowing for abstention substantially reduces prediction error on the stable prediction set: the error is up to about \(50\%\) lower than that of the best benchmark and about \(80\%\) lower than that of the no-abstention tree.

The bottom panels of Figure~\ref{figure:error} show why. As heterogeneity in the contaminated Peru region increases, the ignorance-aware tree assigns that region to the basin of ignorance with high probability, while keeping the total size of the basin of ignorance close to the true contaminated share. When heterogeneity is small and the cost of abstention is also small, the estimated rule behaves similarly to a standard prediction tree, as expected. The exercise therefore illustrates that even a small amount of arbitrary heterogeneity can distort predictions on regions where aggregation is otherwise valid, and that explicitly learning a basin of ignorance can protect prediction accuracy.

Appendix Figure \ref{fig:comparison_k_means} presents comparisons as we consider one-dimensional $K$-means clustering with potentially many (ten or twenty) small groups to study whether these may ``absorb'' the basin of ignorance; also in this case we observe large improvement of our method on the predictive set.

 \section{Discussion and some practical lessons}\label{sec:discussion}

The growing availability of experiments across different environments (and with heterogeneous individuals) has motivated a large literature on effect heterogeneity and meta-analysis. Estimators in this literature typically aim to learn treatment effects by pooling information across individuals through, e.g., shrinkage or sparsity restrictions. This paper instead focuses on the task of learning when (and how) information from different individuals can be pooled together and when it cannot. To that end, we provide a framework to study evidence aggregation and introduce a class of prediction functions that jointly estimate when and how to form predictions across different observable characteristics and environments. We give the researcher the option to admit ignorance at a given (opportunity) cost. We provide a decision-theoretic foundation of this problem, derive strong finite sample regret guarantees, asymptotic theory for inference and discuss numerical properties of the procedure. An application analyzing a multifaceted program by \cite{banerjee2015multifaceted} illustrates our approach. 

The results of the paper provide practical guidance for an applied researcher interested in treatment effect heterogeneity within a single study and meta-analysis across studies. We study a regime where researchers do not have strong priors on (i) which covariates matter and, most importantly, (ii) when and whether the set of models posed by the researchers is predictive of treatment effects observed in the data. 
Therefore, our method can be used both to inform where to collect further evidence and to detect anomalies in the data, which can be relevant for model discovery. Our method applies beyond looking at environment-by-agent characteristic heterogeneity in the sense that one can interpret the environment much more broadly. For instance,
it also provides a vocabulary to study heterogeneity in research teams, methods, or implementation. 

We leave the reader with many open directions for future work. First, implementing our method may often require harmonizing both outcomes and covariates across studies, and we need better methods to make those comparable. Second, if we seek to learn about mechanisms, rather than predictions, the variables predictive of heterogeneity might not be main drivers of economic phenomena but predictive proxies, raising questions of how to search for mechanisms.  
Third, there are likely deeper implications of our method for how to design future experiments: once we learn the basin of ignorance, there may be ways to prioritize where to run the next experiment. This raises the question of how to combine our method in a dynamic research process, where researchers may sequentially collect data, while leveraging techniques for site selection similar to \cite{olea2024externally},  \cite{gechter2024selecting}.


  \bibliography{bibliography}
\bibliographystyle{chicago}

\counterwithin{figure}{section}
\counterwithin{table}{section}
\counterwithin{rem}{section}        
\counterwithin{algorithm}{section}
\numberwithin{equation}{section}
\appendix 

\newpage 

\setcounter{page}{1}

\section{Extensions} 

We present several extensions, whose proofs are included in online Appendix \ref{app:more_proofs_extensions}. 
\subsection{Estimated $p(x)$} \label{app:estimated_p}

In this section, we consider settings in which \(p(x)\) is estimated from the data. Formally, for each type \(x\in\mathcal X\), let \(S_x\in \mathbb{R}_+\) denote a random sampling count. We interpret randomness in $S_x$ from a sampling based perspective as in \cite{abadie2020sampling} and \cite{viviano2024policy}. We use weights 
$ 
\hat p(x) \propto S_x 
$ (see Equation \eqref{eqn:estimator})
proportional to the target count $S_x$. 
A leading example is when the researcher targets a super-population distribution \(p(x)=P(X=x)\) in a population larger than the observed sample, but only (randomly) sample units of each type \(x\). We formalize this framework below.  

\begin{ass}[Sampling] \label{ass:sampling2}
Suppose that the following hold:
(i) For random variables \(S_x\), denote \(N_S:=\sum_{x\in\mathcal X}\mathbb E[S_x]\). Researchers observe \(S_x\), satisfying
    $
    \mathbb E[S_x] = N_S p(x)
    $. In addition, let \(S_x\le \bar p\) for a finite constant $\bar{p} < \infty$ almost surely and for all $x$, $\mathbb{E}[S_x] > \gamma$ for a constant $\gamma > 0$. 
(ii) The sample is representative:
    $
    \mathbb E[\hat\tau(x)\mid S_x]=\tau(x), 
    \mathbb E[\hat\eta(x)^2\mid S_x]=\mathbb V(\hat\tau(x)\mid S_x), 
    \mathbb E[\hat c(x)\mid S_x]=c(x)
    $ 
    almost surely.
(iii) Independent sampling:
    $
    \bigl(S_x, \hat\tau(x),\hat\eta(x)^2,\hat c(x)\bigr)_{x\in\mathcal X}
    $ 
    is independent across \(x\).
(iv) There exists a finite constant \(M \in [1,\infty)\) such that, for all \(x\in\mathcal X\),
\[
\small 
\begin{aligned} 
\max\Big\{
\mathbb E\big[|\hat\tau(x)-\tau(x)|^3\mid S_x\big],\ 
\mathbb E\big[|\hat\tau(x)^2-\mathbb E[\hat\tau(x)^2\mid S_x]|^3\mid S_x\big],
\end{aligned} 
\]
\[
\small 
\begin{aligned} 
\qquad\qquad
\mathbb E\big[|\hat\eta(x)^2-\mathbb E[\hat\eta(x)^2\mid S_x]|^3\mid S_x\big],\ 
\mathbb E\big[|\hat c(x)-c(x)|^3\mid S_x\big], c(x)^3
\Big\}
\le M
\end{aligned} 
\]
almost surely, with \(|\tau(x)|\le B\) for a finite constant \(B<\infty\).
\end{ass} 

Condition (i) formalizes the sampling framework, where \(S_x\) denotes the empirical count for type \(x\). Here, (i) imposes that sampling is ``unbiased'', i.e., $S_x$ is in expectation proportional to the target density $p(x)$; we also impose a standard sampling overlap condition that requires that the probability of sampling each individual type $x$ in the population of interest is bounded away from zero. Finally, counts are uniformly bounded across types, which matches the leading case of interest in this paper where many types and few observations per type are sampled, as in our applications.

Condition (ii) states that, sampling does not introduce bias on the target estimators. Condition (iii) imposes the same independence structure as in the main text, now for the augmented array including the sampling counts. Condition (iv) requires a standard third moment bound conditional on $S_x$, attained by sub-gaussian or sub-exponential random variables (for simplicity we omit the sharper bounds as a function of the second moment for the sake of brevity); for the IPW estimator below it requires at least two units for each $x$.  

\begin{exmp}[Inverse probability weights with sampling] \label{exmp:ipw2}
Consider a randomized experiment with binary treatment \(D_i\in\{0,1\}\), individual sampling indicator \(R_i\in\{0,1\}\), and observed outcome
$ 
Y_i=D_iY_i(1)+(1-D_i)Y_i(0).
$ 
Let
$ 
e(x)=\Pr(D_i=1\mid X_i=x,\ R_i=1),
S_x=\sum_i R_i\,1\{X_i=x\},
$ 
with \(S_x\ge 2\). 
The target property is the conditional average treatment effect
$ 
\tau(x)=\mathbb E\!\left[Y_i(1)-Y_i(0)\mid X_i=x\right].
$ 
Define
$ 
\tilde Y_i
=
\frac{D_iY_i}{e(X_i)}
-
\frac{(1-D_i)Y_i}{1-e(X_i)},
$ 
and let
$ 
\hat\tau(x)
=
\frac{1}{S_x}
\sum_{i:X_i=x} R_i \tilde Y_i, 
\hat\eta(x)^2
=
\frac{1}{S_x(S_x-1)}
\sum_{i:X_i=x} R_i \bigl(\tilde Y_i-\hat\tau(x)\bigr)^2, 
\hat p(x) \propto S_x.
$
Here condition (ii) in Assumption \ref{ass:sampling2} requires that the sampling indicators be independent of potential outcomes, a standard representativeness condition in experiments.
\end{exmp}

The corresponding population criterion remains unchanged, and is given by \(R_c\) in Equation \eqref{eqn:loss}. We estimate the target loss function as 
\begin{equation} \label{eqn:estimator} 
\footnotesize 
\begin{aligned} 
\widehat{R}_S(f,\pi) = \frac{1}{N_S} \sum_{x\in\mathcal X} S_x \left[\Big(f(x)-\hat\tau(x)\Big)^2 \pi(x) - \hat\eta(x)^2 \pi(x) + \hat c(x)(1-\pi(x))\right].
\end{aligned} 
\end{equation} 
Because the multiplicative factor \(1/N_S\) does not affect the minimizer, we estimate \((\hat\pi,\hat f)\) as 
$ 
(\hat{\pi}_S, \hat{f}_S) = \mathrm{arg} \min_{f \in \mathcal{F}, \pi \in \Pi} N_S \widehat{R}_S(f,\pi).  
$

\begin{lem}[Unbiasedness under estimated sampling weights]
\label{lem:estimated_p_unbiased}
Let Assumption \ref{ass:sampling2} hold.
Then, for every \(f\in\mathcal F\) and \(\pi\in\Pi\),
$ 
\mathbb E[\widehat R_S(f,\pi)] = R_c(f,\pi).
$ 
\end{lem}

\begin{thm}[Regret under estimated target weights] \label{thm:regret_sampling}
Let Assumptions \ref{ass:sampling2}, \ref{ass:bounded_Pi}, and \ref{ass:entropy_F} hold. Then by letting $N_{\mathcal{X}}:=|\mathcal{X}|$, 
$ 
\mathbb{E}\left[
R_c(\hat{f}_S,\hat{\pi}_S)
-
\min_{\pi\in\Pi,\ f\in\mathcal F} R_c(f,\pi)
\right]
\le
c_0
\sqrt{
M\,\frac{C(\mathcal F)+v_\Pi}{\gamma^2 N_{\mathcal{X}} }
},
$ 
for a finite constant \(c_0 \le q_0 \bar{p} \max\{(K + B)^2, 1\}\) for a universal costant $q_0 < \infty$. 
\end{thm}

\subsection{Sparse linear models}

\begin{defn}[High-dimensional linear models with exact \(\ell_0\) sparsity] \label{defn:l0_sparse}
Let \(q(x)\in\mathbb R^d\) be a high-dimensional vector of observed features and consider the class (for $s\ge 1$)
$ 
\mathcal F_{\ell_0}
=
\{f_\theta(x)=q(x)'\theta:\theta\in\Theta_0\},
\Theta_0
=
\{\theta\in\mathbb R^d:\|\theta\|_0\le s,\ \|\theta\|_2\le R\}.
$ 
\end{defn}

Definition \ref{defn:l0_sparse} introduces linear model with sparsity constraints, often of interest in applied work \citep[e.g.][]{banerjee2025selecting, venkateswaran2024robustly}. We impose that the parameters are uniformly bounded (and also $q(x)$), to avoid degenerate predictions. 

\begin{thm}[Sparse high-dimensional models] \label{thm:l0_sparse}
Let Assumptions \ref{ass:general_regret} and \ref{ass:bounded_Pi} hold, and consider the class \(\mathcal F_{\ell_0}\) in Definition \ref{defn:l0_sparse}. Suppose that
$ 
\sup_{x\in\mathcal X}\|q(x)\|_2\le U
$ 
for some finite constant \(U\). Then for any \(u\in(0,1/2]\), letting $N_{\mathcal{X}}:=|\mathcal{X}|$, \\  
$ 
\mathbb E\!\left[
R_c(\hat f,\hat\pi)
-
\inf_{\pi\in\Pi,\ f\in\mathcal F_{\ell_0}} R_c(f,\pi)
\right]
\le
c_0 
\sqrt{
\frac{\tilde M_u\Big(\,s\log\!\left(ed/s\right)+v_\Pi\Big)}
{N_{\mathcal{X}}}
}, 
$ 
for $c_0 \le q_0 \bar{p} \max\{1,RU\}^2 B^{3/2}$ for a universal constant $q_0 < \infty$, and $\tilde{M}_u = \frac{M_u + M_u^2}{u^2}$. 
\end{thm}

\subsection{Regret guarantees when using an upper bound $\bar{c}$ on $c$} \label{sec:upper_c}

In this section we illustrate how our theoretical guarantees extend in settings where the estimated cost is biased upward relative to the true cost. We provide regret guarantees with respect to the oracle evaluated at the \textit{true} cost. 

\begin{prop}[Conservative abstention costs] \label{prop:conservative_cost_model_selection}
Suppose that $\mathbb{E}[\hat{c}(x)] = \bar c(x)\ge c(x)$ for all \(x\in\mathcal X\). Consider the model-selection procedure in Section \ref{sec:model_selection}. 
Assume that the training and validation samples satisfy the same conditions as in Theorem \ref{thm:model_selection}, with \(c\) replaced by \(\bar c\). Then for any \(u\in(0,1/2]\), with $N_{\mathcal{X}} := |\mathcal{X}|$, \\ 
$
\footnotesize 
\mathbb E\!\left[
R_c(\hat f,\hat\pi)
-
\min_{f\in\bar{\mathcal F}}
\sum_{x\in\mathcal X} p(x)\bigl(f(x)-\tau(x)\bigr)^2
\right] \le$  \\ $
\qquad \qquad c_0  \,\sqrt{\frac{J(M_0 + M_0^2)}{N_{\mathcal{X}}}}
+
\inf_{1\le j\le J}
\left\{
c_0
\sqrt{\tilde M_u\,\frac{C(\mathcal F_j)+v_{\Pi_j}}{N_{\mathcal{X}}}}
+
\Delta_{j}^{\mathrm{loc}}(\pi_{j,\bar c}^\star)
+
A_{\bar c}(\pi_{j,\bar c}^\star)
\right\}, 
$ \\
where $\pi_{j,\bar{c}}^\star \in \mathrm{arg} \min_{\pi \in \Pi_j} \min_{f \in\mathcal{F}_j} R_{\bar{c}}(f,\pi)$
and $c_0 \le q_0 B^{3/2} \bar{p} \max\{1,K\}$ for a universal constant $q_0 < \infty$, $\tilde{M}_u = \frac{M_u + M_u^2}{u^2}$. 
\end{prop}

Proposition \ref{prop:conservative_cost_model_selection} shows that our guarantees continue to hold when we use an upper bound on the true cost. 

\subsection{Local dependence} \label{sec:local}

Next, we extend Theorem \ref{thm:regret} to the case where observations are locally dependent. 

\begin{thm} \label{thm:regret_local} Let Assumptions \ref{ass:general_regret}, \ref{ass:bounded_Pi}, \ref{ass:entropy_F} hold, where $(\hat{\tau}(x), \hat{\eta}(x)^2, \hat{c}(x)) \sim \mathcal{D}_x$ and  are not necessarily independent, but rather form a local dependency graph with maximum degree $L$ in the sense of Definition 3.1 in \cite{ross2011fundamentals}. Then for any $u \in (0,1/2]$, letting $|\mathcal{X}| := N_{\mathcal{X}}$, 
$ 
\mathbb{E}\left[R_c(\hat{f}, \hat{\pi}) - \min_{\pi \in\Pi, f \in \mathcal{F}} R_c(f, \pi) \right] \le c_0 (L + 1) \sqrt{\tilde{M}_u \frac{(C(\mathcal{F}) + v_\Pi)}{ N_{\mathcal{X}}} }
$ 
for a constant $c_0 \le q_0 \bar{p} \max\{1,K\} B^{3/2}$ for a universal constant $q_0 < \infty$, and $\tilde{M}_u = \frac{(M_u + M_u^2)}{u^2}$.  
\end{thm}

\subsection{Break-even analysis on prediction set accuracy} \label{app:breakeven_otherb}

An alternative break-even analysis is to compare the loss on the predictive set against the loss on the same estimated predictive set from the prediction function with no ignorance. The analysis is formalized in online Appendix \ref{app:more_more_breakeven}.

\subsection{Additional algorithms}

\subsubsection{Construction of pseudo-true outcomes} \label{sec:matching}

Algorithm \ref{alg:ipw_matching} constructs type-level treatment effect estimates when covariates are continuous by combining inverse-probability weighting with nonparametric matching. For each unit \(i\), it first forms the standard IPW pseudo-outcome
$ 
\tilde Y_i=\frac{D_iY_i}{e_i}-\frac{(1-D_i)Y_i}{1-e_i},
$ 
where \(e_i\) denotes the treatment propensity score (known from the design). These pseudo-outcomes are unbiased for the conditional treatment effect at the unit's covariate value. The algorithm then groups units into small local neighborhoods based on proximity in covariate space. Starting from the set of unassigned observations, it selects a unit and matches it to its \(\nu\) nearest neighbors in Euclidean distance, including all exact ties when several observations share the same covariate profile. 

Each matched group defines a local support point for the subsequent analysis. Its representative covariate value is taken to be the coordinate-wise median of the covariates of the units in the group. The algorithm targets the local group-average effect, which we denote as $\tau(\bar{x})$. Here, $\bar{x}$ simply serves as the representative covariate label for the group (note that our theory does not impose restrictions on covariates). We construct the estimator $\hat{\tau}(\bar{x})$ as the sample average of the pseudo-outcomes within the group, and estimate the corresponding sampling variance $\hat{\eta}(\bar{x})^2$ by the sample variance of the pseudo-outcomes divided by the group size (which is approximately unbiased under the assumption that true treatment effects are roughly constant within the tight local matching neighborhood).

 \begin{algorithm}[!ht]
\footnotesize
\caption{IPW pseudo-outcomes with continuous covariates and variance estimation via matching}
\label{alg:ipw_matching}
\begin{algorithmic}[1]
\Require Matching size \(\nu\); observed data \(\{(Y_i,D_i,X_i)\}_{i=1}^n\); propensity scores \(e_i\) for each unit \(i\)
\For{each \(i\in\{1,\ldots,n\}\)}
    \State Construct the IPW pseudo-outcome
    $ 
    \tilde Y_i
    :=
    \frac{D_iY_i}{e_i}
    -
    \frac{(1-D_i)Y_i}{1-e_i}.
    $ 
\EndFor
\State Initialize the set of unassigned indices \(\mathcal R:=\{1,\ldots,n\}\)
\While{\(\mathcal R\neq\varnothing\)}
    \State Select one index \(i\in\mathcal R\)
    \State Form a group \(\mathcal U\subseteq \mathcal R\) consisting of unit \(i\) and the \(\nu\) closest units to \(i\) in Euclidean distance based on the covariates \(X\)
    \State If more than \(\nu\) units are at exactly zero distance from \(i\), include all such units in \(\mathcal U\)
    \State Remove the indices in \(\mathcal U\) from \(\mathcal R\)
    \State Define the representative covariate value \(\bar x\) as the coordinate-wise median of \(\{X_j\}_{j\in\mathcal U}\)
    \State Define the group-level effect estimate
    $
    \hat\tau(\bar x)
    :=
    \frac{1}{|\mathcal U|}
    \sum_{j\in\mathcal U}\tilde Y_j
    $ 
    \State Define the corresponding variance estimate
    $ 
    \hat\eta(\bar x)^2
    :=
    \frac{1}{|\mathcal U|(|\mathcal U|-1)}
    \sum_{j\in\mathcal U}\bigl(\tilde Y_j-\hat\tau(\bar x)\bigr)^2
    $ 
\EndWhile
\State \Return \(\hat\tau(\bar x)\) and \(\hat\eta(\bar x)^2\) for all matched groups \(\bar x\)
\end{algorithmic}
\end{algorithm}

\subsubsection{Greedy tree optimization with $L > \log_2(G- 1)$}\label{app:algorithm}

Whenever \(L>\log_2(G-1)\), so that a depth-\(L\) tree may generate more than \(G-1\) predictive leaves, we use the greedy pruning procedure described in Algorithm \ref{alg:alg4}. This algorithm is relevant only when we want to introduce additional flexibility to construct the basin of ignorance. The algorithm runs Algorithm \ref{alg:alg3} on the unrestricted depth-\(L\) tree and then, holding the resulting partition fixed, greedily reassigns leaves to the basin of ignorance until the constraint on the number of predictive groups is satisfied; Algorithm \ref{alg:alg4} provides an approximate solution to the constrained problem.

To quantify the quality of this approximation, the algorithm also reports the difference in empirical loss between the unrestricted solution returned by Algorithm \ref{alg:alg3} and the final pruned tree. By construction, this quantity is nonnegative and provides an observable upper bound on the optimization error induced by the greedy pruning.

\begin{algorithm}[!ht]
\footnotesize
\caption{Ignorance-aware tree with greedy pruning}
\label{alg:alg4}
\begin{algorithmic}[1]
\Require Maximum number of predictive groups \(G-1\), tree depth \(L\), grid size \(S\), minimum leaf size \(\underline\kappa |\mathcal X|\)

\State Run Algorithm \ref{alg:alg3} with depth \(L\), grid size \(S\), and root node \(\mathcal X\)
\State Let \((\tilde \alpha,E^\star)\) denote the resulting depth-\(L\) tree partition and its loss
\State Let \(P\) denote the number of terminal leaves in \(\tilde \alpha\)

\For{each leaf \(\ell\in\{1,\dots,P\}\)}
    \State Define the leaf \(\mathcal X_\ell:=\{x\in\mathcal X:\tilde\alpha(x)=\ell\}\)
    \State Compute the best archetype loss and ignorance loss on leaf \(\ell\):
    \[
    \widehat L^{\mathrm{arch}}(\ell)
    :=
    \min_{\mu\in\mathbb R}
    \sum_{x\in\mathcal X_\ell}
    p(x)\Big\{(\mu-\hat\tau(x))^2-\hat\eta(x)^2\Big\}, \qquad \widehat L^{\mathrm{ign}}(\ell)
    :=
    \sum_{x\in\mathcal X_\ell} p(x)\hat c(x)
    \]
    \State Define the leaf score
    $ 
    S_\ell:=\widehat L^{\mathrm{arch}}(\ell)-\widehat L^{\mathrm{ign}}(\ell)
    $ 
\EndFor

\State Initialize all leaves with \(S_\ell\ge 0\) as ignorance leaves
\State Among leaves with \(S_\ell<0\), retain the \(G-1\) leaves with the smallest scores \(S_\ell\) (or all such leaves if there are fewer than \(G-1\))
\State Assign all remaining leaves to the basin of ignorance
\State Relabel the retained predictive leaves as groups \(2,\dots,\tilde G\), with \(\tilde G\le G\), and denote the resulting partition by \(\hat\alpha^t\)

\State Construct the corresponding prediction rule \(\hat f_{\hat\alpha^t}\) and abstention rule \(\pi^{\hat\alpha^t}\)
\State Compute the final loss
$ 
\hat E:=\widehat R(\hat f_{\hat\alpha^t},\pi^{\hat\alpha^t})
$ 
\State Define the optimization gap
$ 
\varepsilon:=\hat E-E^\star
$ 
\State \Return \(\hat\alpha^t\) and \(\varepsilon\)
\end{algorithmic}
\end{algorithm}

\begin{prop}[Approximate tree search] \label{cor:error}
Let $\mathcal{P}_{L,G}$ denote the class of depth-$L$ trees with at most $G-1$ predictive leaves. 
Let Assumption \ref{ass:general_regret} hold, and let \((\hat f^t,\hat\pi^t)\) denote the prediction and abstention rule induced by the partition \(\hat\alpha^t\) returned by Algorithm \ref{alg:alg4}.  Then, for any \(u\in(0,1/2]\), with probability at least $1- \delta$, letting $N_{\mathcal{X}} :=|\mathcal{X}|$, 
$ 
R_c(\hat f^t,\hat\pi^t)
-
\inf_{(f,\pi) \in \mathcal{P}_{L,G}}
R_c\!\left(f,\pi\right)
\le
\frac{c_0}{\delta} 
\sqrt{
\tilde M_u\,
\frac{d G \log^2 G}{ \underline\kappa N_{\mathcal{X}}}
}
+\varepsilon,
$ 
for a constant $c_0 \le \bar{p} \max\{1, K^2\} B^{3/2} q_0$ for a universal constant $q_0 < \infty$ and
 \(\tilde M_u=(M_u+M_u^2)/u^2\). Whenever \(L=\log_2(G-1)\) and Algorithm \ref{alg:alg4} coincides with the exact search in Algorithm \ref{alg:alg3}, we have \(\varepsilon=0\).
\end{prop}

\subsubsection{Optimization with $G$-means clustering} \label{app:clustering}

\paragraph{One-dimensional clustering} Algorithm \ref{alg:onedim_clustering} describes exact $G$-means clustering in one dimension. The algorithm exploits the fact that, after ordering the score \(x\) increasingly, any interval partition on the real line induces contiguous groups. A feasible ignorance-aware rule therefore partitions \(\{1,\dots,|\mathcal X|\}\) into consecutive blocks, each assigned either to a predictive archetype or to the basin of ignorance. For any candidate block \(\{a,\dots,b\}\), the algorithm computes the corresponding predictive and ignorance costs, denoted \(C^{\mathrm{pred}}(a,b)\) and \(C^{\mathrm{ign}}(a,b)\). The dynamic program then compares these two options and recursively combines them across all possible endpoints \(b\ge a\) through the Bellman recursion, selecting the partition and assignments that minimize the total empirical loss. The basin of ignorance corresponds to the complement of the set of predictive archetypes. 
\begin{algorithm}[!ht]
\footnotesize
\caption{Exact one-dimensional interval partition (with $p(x) = 1/N_{\mathcal{X}}$)}
\label{alg:onedim_clustering}
\begin{algorithmic}[1]
\Require $N$ ordered scores \(x_{(1)}\le \cdots \le x_{(N)}\), where \(x_{(i)}\) is any out-of-sample scalar predictor of the conditional effect for type \(i\); in-sample estimates \(\hat\tau_{(i)}\), variance estimates \(\hat\eta_{(i)}^2\), abstention costs \(\hat c_{(i)}\), maximum number of predictive groups \(G-1\), minimum group size \(\underline\kappa N\)

\State Precompute cumulative sums
$ 
S_\tau(j):=\sum_{i=1}^j \hat\tau_{(i)},
S_q(j):=\sum_{i=1}^j \big(\hat\tau_{(i)}^2-\hat\eta_{(i)}^2\big),
S_c(j):=\sum_{i=1}^j \hat c_{(i)}.
$ 

\For{all intervals \(1\le a\le b\le N\)}
    \State Set $ T(a,b):=S_\tau(b)-S_\tau(a-1),
        Q(a,b):=S_q(b)-S_q(a-1)$ and 
    \[
    C^{\mathrm{ign}}(a,b):=S_c(b)-S_c(a-1), \qquad  C^{\mathrm{pred}}(a,b):=\left[Q(a,b)-\frac{T(a,b)^2}{b-a+1}\right]
    \]
    \If{\(b-a+1<  \underline\kappa N\)}
        \State Set \(C^{\mathrm{pred}}(a,b):=+\infty\)
    \EndIf
\EndFor

\State Initialize
$
V(N+1,g):=0 \quad \text{for all } g\in\{0,\dots,G-1\},
V(i,0):=C^{\mathrm{ign}}(i,N) \quad \text{for } i=1,\dots,N.
$ 

\For{\(g=1,\dots,G-1\)}
    \For{\(i=N,\dots,1\)}
        \State Compute
        \[
        V(i,g)
        :=
        \min_{b\ge i}
        \left\{
        C^{\mathrm{ign}}(i,b)+V(b+1,g),\;
        C^{\mathrm{pred}}(i,b)+V(b+1,g-1)
        \right\}.
        \]
        \State Store the minimizing endpoint \(b^\star(i,g)\) and whether block \(\{i,\dots,b^\star(i,g)\}\) is assigned to a predictive group or to ignorance
    \EndFor
\EndFor

\State Recover the optimal partition by iteratively following the stored minimizers \(b^\star(i,g)\), starting from \((i,g)=(1,G-1)\), until all units $i$ have been assigned.
\end{algorithmic}
\end{algorithm}

\paragraph{Multi-dimensional clustering} Algorithm \ref{alg:miqp_gmeans} presents the exact mixed-integer quadratically constrained formulation that can be solved off-the-shelf for multi-dimensional clustering. 
Online Algorithm \ref{alg:gmeans_abstention} describes the approximate (faster) clustering algorithm via gradient descent using a smooth approximation to exact clustering.

\begin{algorithm}[!ht]
\footnotesize
\caption{Exact \(G\)-means clustering with cell-level abstention via MIQCP}
\label{alg:miqp_gmeans}
\begin{algorithmic}[1]
\Require Types \(x_1,\dots,x_{|\mathcal X|}\in\mathbb R^d\), estimates \(\hat\tau_i:=\hat\tau(x_i)\), variance estimates \(\hat\eta_i^2:=\hat\eta(x_i)^2\), abstention costs \(\hat c_i:=\hat c(x_i)\), weights \(p_i:=p(x_i)\), number of Voronoi cells \(G-1\), minimum predictive-cell size \(\underline\kappa |\mathcal X|\), compact centroid set \(\mathcal Z\subset\mathbb R^d\)

\State Introduce binary Voronoi-assignment variables \(y_{ig}\in\{0,1\}\) for \(i=1,\dots,|\mathcal X|\), \(g=2,\dots,G\), where \(y_{ig}=1\) means that type \(x_i\) belongs to Voronoi cell \(g\)

\State Introduce binary cell-status variables \(\delta_g\in\{0,1\}\) for \(g=2,\dots,G\), where \(\delta_g=1\) means that cell \(g\) is predictive and \(\delta_g=0\) means that the entire cell \(g\) belongs to the basin of ignorance

\State Introduce auxiliary binary variables \(r_{ig}\in\{0,1\}\), where
$ 
r_{ig}=y_{ig}\delta_g,
$ 
so that \(r_{ig}=1\) if and only if \(x_i\) is assigned to cell \(g\) and cell \(g\) is predictive

\State Introduce continuous centroid variables \(z_g\in\mathcal Z\), predictive means \(\mu_g\in[-K,K]\), and auxiliary variables \(v_{ig}\in[-K,K]\), where
$ 
v_{ig}=r_{ig}\mu_g.
$ 

\State Formulate the objective
$ 
\min
\sum_{i=1}^{|\mathcal X|}
\sum_{g=2}^G
p_i
\left[
\mu_g v_{ig}
-2\hat\tau_i v_{ig}
+
(\hat\tau_i^2-\hat\eta_i^2)r_{ig}
+
\hat c_i\bigl(y_{ig}-r_{ig}\bigr)
\right].
$ 
\State Add assignment constraints:
$ 
\sum_{g=2}^G y_{ig}=1,
\qquad i=1,\dots,|\mathcal X|.
$ 

\State Add cell-level abstention and predictive-cell size constraints:
$ 
\sum_{i=1}^{|\mathcal X|} y_{ig}
\ge
\underline\kappa |\mathcal X|\,\delta_g,
\qquad g=2,\dots,G.
$ 
Thus, if cell \(g\) is predictive, it must contain at least \(\underline\kappa |\mathcal X|\) types; if \(\delta_g=0\), the entire cell is assigned to the basin of ignorance.

\State Linearize \(r_{ig}=y_{ig}\delta_g\) by imposing
$ 
r_{ig}\le y_{ig},
\qquad
r_{ig}\le \delta_g,
\qquad
r_{ig}\ge y_{ig}+\delta_g-1,
\qquad
i=1,\dots,|\mathcal X|,\quad g=2,\dots,G.
$ 

\State Linearize \(v_{ig}=r_{ig}\mu_g\) using the bound \(\mu_g\in[-K,K]\):
$ 
-Kr_{ig}\le v_{ig}\le Kr_{ig},
$ 
$ 
\mu_g-K(1-r_{ig})\le v_{ig}\le \mu_g+K(1-r_{ig}),
\quad 
i=1,\dots,|\mathcal X|,\quad g=2,\dots,G.
$ 

\State Add Voronoi-consistency constraints:
$
\|x_i-z_g\|^2
\le
\|x_i-z_h\|^2
+
M_i(1-y_{ig}),
i=1,\dots,|\mathcal X|,\quad g,h\in\{2,\dots,G\},\quad g\neq h,
$ 
where
$ 
M_i
\ge
\sup_{z,z'\in\mathcal Z}
\left\{
\|x_i-z\|^2-\|x_i-z'\|^2
\right\}.
$ 

\State Return the partition
\[
\hat\alpha(x_i)
=
\begin{cases}
1, & \text{if } y_{ig}=1 \text{ for some } g \text{ with } \delta_g=0,\\
g, & \text{if } y_{ig}=1 \text{ and } \delta_g=1,
\end{cases}
\]
together with the predictive means \(\hat\mu_g\) for all predictive cells \(g\) with \(\delta_g=1\).
\end{algorithmic}
\end{algorithm}

\subsection{Multiple properties} \label{sec:multiple_properties}

Next, suppose that $\tau(x) \in \mathbb{R}^Q$ for $Q > 1$. In the presence of multivariate properties, we may consider assuming that the archetypical structure (groups) are the same for each property, whereas the predictions can be different. There is a conceptual advantage of considering all of the outcomes simultaneously: configurations are not clustered together as an archetype unless they exhibit similar patterns across different dimensions. Our results directly extend to this case via Theorem \ref{thm:regret_local} by treating the outcome identity as an additional discrete ``covariate type'' (e.g., stacked predictions for consumption, assets, and food security). In this stacked representation, the estimates for different outcomes within the same original type $x$ are locally dependent, which is exactly the structure Theorem \ref{thm:regret_local} accommodates.


\section{Proofs} 

In this section we present the main proofs. We will be using auxiliary lemmas, whose proofs are included in the online additional Appendix \ref{sec:auxiliary_lemmas}. 

\subsection{Proof of Theorem \ref{thm:regret}} \label{proof:thm:regret}

Let
$ 
(\pi^\star,f^\star)\in\arg\min_{\pi\in\Pi, f\in\mathcal F} R_c(f,\pi).
$ 
Since \((\hat f,\hat\pi)\) minimizes \(\widehat R(f,\pi)\) over \((\mathcal F, \Pi)\),
$ 
\widehat R(\hat f,\hat\pi)\le \widehat R(f^\star,\pi^\star).
$ 
Therefore, 
\begin{equation} \label{eqn:basic_in}
\small 
\begin{aligned} 
R_c(\hat f,\hat\pi)-R_c(f^\star,\pi^\star)
\le
\bigl(R_c-\widehat R\bigr)(\hat f,\hat\pi)
+
\bigl(\widehat R-R_c\bigr)(f^\star,\pi^\star)
\le
2\sup_{\pi\in\Pi, f \in \mathcal{F}}\bigl|(\widehat R-R_c)(f,\pi)\bigr|.
\end{aligned} 
\end{equation} 
Therefore it suffices to bound
$ 
\mathbb E\left[\sup_{\pi\in\Pi, f \in \mathcal{F}}\bigl|(\widehat R-R_c)(f,\pi)\bigr|\right].
$ 
We will index units by $i \in \{1,\cdots, N_{\mathcal{X}}\}$ and write
$$  
\small 
\begin{aligned} 
p_i:=p(x_i),
\tau_i:=\tau(x_i),
\hat\tau_i:=\hat\tau(x_i),
\hat\eta_i^2:=\hat\eta(x_i)^2,
\hat c_i:=\hat c(x_i),  c_i:=c(x_i), \eta_i^2:=\eta(x_i)^2
\end{aligned} 
$$ 

\paragraph{Step 1: basic bounds} We state three preliminary bounds below.  \\ 
\textit{Basic bound 1:}
 By Assumption \ref{ass:general_regret}(i), it follows that 
$
\mathbb{E}[\widehat{R}(f,\pi)] = R_c(f,\pi), 
$
which implies that we can write 
$$
\footnotesize 
\begin{aligned} 
& \widehat{R}(f,\pi) - R_c(f,\pi) = \widehat{R}(f,\pi) - \mathbb{E}[\widehat{R}(f,\pi)] = \\ 
& \sum_i p_i \left\{\Big(\Big(f(x_i) - \hat{\tau}_i\Big)^2 - \mathbb{E}\Big[\Big(f(x_i) - \hat{\tau}_i\Big)^2 \Big]\Big) \pi(x_i) - \Big(\hat{\eta}(x_i)^2 - \eta(x_i)^2\Big) \pi(x_i)  + \Big(\hat{c}(x_i) - c(x_i)\Big)(1 - \pi(x_i)) \right\}. 
\end{aligned} 
$$ 
By expanding the squares, we can equivalently write 
$$
\footnotesize 
\begin{aligned} 
& \widehat{R}(f,\pi) - R_c(f,\pi) = \\ 
& \sum_i p_i \left\{\Big(\hat{\tau}_i^2 - \mathbb{E}[\hat{\tau}_i^2]\Big) \pi(x_i) - 2 f(x_i)(\hat{\tau}_i - \mathbb{E}[\hat{\tau}_i]) \pi(x_i) - \Big(\hat{\eta}(x_i)^2 - \eta(x_i)^2\Big) \pi(x_i)  + \Big(\hat{c}(x_i) - c(x_i)\Big)(1 - \pi(x_i)) \right\}. 
\end{aligned} 
$$ 
We therefore bound using the triangular inequality
$$
\footnotesize 
\begin{aligned} 
 \mathbb{E}\left[\sup_{\pi \in \Pi, f \in \mathcal{F}} | \widehat{R}(f,\pi) - R_c(f,\pi)| \right] \le &  
\underbrace{\mathbb{E}\left[\sup_{\pi \in \Pi}\Big|\sum_i p_i \Big(\hat{\tau}_i^2 - \mathbb{E}[\hat{\tau}_i^2]\Big) \pi(x_i)\Big|\right]}_{T_1} + \underbrace{2 \mathbb{E}\left[\sup_{\pi \in \Pi, f \in \mathcal{F}}\Big|\sum_i p_i f(x_i)(\hat{\tau}_i - \mathbb{E}[\hat{\tau}_i]) \pi(x_i)\Big|\right]}_{T_2} \\ 
&+ \underbrace{\mathbb{E}\left[\sup_{\pi \in \Pi}\Big|\sum_i p_i \Big(\hat{\eta}_i^2 - \mathbb{E}[\hat{\eta}_i^2]\Big) \pi(x_i)\Big|\right]}_{T_3} 
 + \underbrace{\mathbb{E}\left[\sup_{\pi \in \Pi}\Big|\sum_i p_i \Big(\hat{c}_i - \mathbb{E}[\hat{c}_i]\Big) (1 - \pi(x_i))\Big|\right]}_{T_4}
\end{aligned} 
$$
\textit{Basic bound 2:} By Theorem 2.6.7 in \cite{van1996weak}\footnote{Or equivalently, Lemma \ref{lem:boundnumber} applied to a single function class $\mathcal{F}_1 = \Pi$ in the lemma with envolope $|\pi(x)| \le 1$ and $\mathcal{F}_2 = \{f(x) = 1\}$ the class of constant functions.}, combined with the VC-bound in Assumption \ref{ass:bounded_Pi}, for any distribution $Q$ on $\{x_1, \cdots, x_n\}$, we can write $\int_0^2 \sqrt{\log(\mathcal{N}(\varepsilon,\Pi, L_1(Q)))} d\varepsilon \le c_0 \sqrt{v_\Pi}$ for a universal constant $c_0 < \infty$.
\\ \textit{Basic bound 3:} The moment bound in Lemma \ref{lem:centered_square} implies that $\max\{\mathbb{E}[|\hat{\tau}_i^2 - \mathbb{E}[\hat{\tau}_i^2]|^{2 - 2u}], \mathbb{E}[|\hat{\tau}_i^2 - \mathbb{E}[\hat{\tau}_i^2]|^{3}]\} \le c B^3 (M_u + M_u^2)$ for any $u \in [0,1/2]$ for a universal constant $c < \infty$.  Also, denote $\lambda_i = N_{\mathcal{X}} p_i/\bar{p}$ and note that by Assumption \ref{ass:general_regret}(iii), $\lambda_i \le 1$. Finally, because $\max\{\mathbb{E}[|\hat{\tau}_i - \mathbb{E}[\hat{\tau}_i]|^{2 - 2u}], \mathbb{E}[|\hat{\tau}_i - \mathbb{E}[\hat{\tau}_i]|^3]\} \le \max\{\sqrt{\mathbb{E}[|\hat{\tau}_i - \mathbb{E}[\hat{\tau}_i]|^{4 - 4u}]}, \sqrt{\mathbb{E}[|\hat{\tau}_i - \mathbb{E}[\hat{\tau}_i]|^6]}\}$ Assumption \ref{ass:general_regret} directly implies that $\max\{\mathbb{E}[|\hat{\tau}_i - \mathbb{E}[\hat{\tau}_i]|^{2 - 2u}], \mathbb{E}[|\hat{\tau}_i - \mathbb{E}[\hat{\tau}_i]|^3]\} \le M_u$. 
\paragraph{Step 2: Symmetrization} Define $\sigma_i$ i.i.d. Rademacher random variables with $P(\sigma_i = 1) = P(\sigma_i = -1) = 1/2$. 
We now invoke Lemma \ref{lem:symmetrization}, where we note that each summand in $T_1, T_2, T_3, T_4$ is centered around zero. Therefore, we write 
\begin{equation} \label{eqn:bounds_start}
\footnotesize 
\begin{aligned} 
T_1 & \le 2 \bar{p} \mathbb{E}\left[\sup_{\pi \in \Pi}\Big|\frac{1}{N_{\mathcal{X}}}\sum_i \lambda_i \sigma_i (\hat{\tau}_i^2 - \mathbb{E}[\hat{\tau}_i^2]) \pi(x_i)\Big|\right], \quad 
T_2  \le 4 \bar{p} \mathbb{E}\left[\sup_{\pi \in \Pi, f \in \mathcal{F}}\Big|\frac{1}{N_{\mathcal{X}}} \sum_i \lambda_i \sigma_i f(x_i)(\hat{\tau}_i - \tau_i)  \pi(x_i)\Big|\right]  \\ 
T_3 & \le 2 \bar{p} \mathbb{E}\left[\sup_{\pi \in \Pi}\Big|\frac{1}{N_{\mathcal{X}}} \sum_i \sigma_i \lambda_i (\hat{\eta}_i^2 - \eta_i^2)  \pi(x_i)\Big|\right], \quad  
T_4  \le 2 \bar{p} \mathbb{E}\left[\sup_{\pi \in \Pi}\Big|\frac{1}{N_{\mathcal{X}}} \sum_i \sigma_i \lambda_i  (\hat{c}_i - c_i)  (1 - \pi(x_i))\Big|\right]
\end{aligned} 
\end{equation} 
\paragraph{Step 3: Bounds on $T_1,T_3,T_4$} 
By Lemma \ref{lem:kita}, where we use $\lambda_i(\hat{\tau}_i^2 - \mathbb{E}[\hat{\tau}_i^2]), \lambda_i (\hat{\eta}_i^2 - \eta_i^2)$ in lieu of $\Omega_i$ in the statement of Lemma \ref{lem:kita}, and Basic Bound 3, we have 
$
T_1\le c_0 B^{3/2} \bar{p} \sqrt{\frac{(M_u + M_u^2) v_\Pi}{N_{\mathcal{X}} u^2}},  T_3 \le  c_0 \bar{p} \sqrt{\frac{(M_u + M_u^2) v_\Pi}{N_{\mathcal{X}} u^2}}
$
for a universal constant $c_0 < \infty$. 
Similarly, by Lemma \ref{lem:vc_sum}, the VC dimension of $1 - \pi, \pi \in \Pi$ equals the VC dimension of $\Pi$. Therefore, we can invoke Lemma \ref{lem:kita} with $\lambda_i (\hat{c}_i - c_i)$ in lieu of $\Omega_i$, which implies $T_4 \le c_0' \bar{p} \sqrt{\frac{M_u v_\Pi}{N_{\mathcal{X}} u^2}}$, with a universal constant $c_0' < \infty$. 

\paragraph{Step 4: Bound for $T_2$} Recall from Assumption \ref{ass:entropy_F} $|f(x)| \le K, K \in (0,\infty)$. With an abuse of notation denote $K_+ = \max\{K,1\}$. 

For term $T_2$, we write 
$$
\footnotesize 
\begin{aligned} 
T_2 \le 4 \bar{p} K \mathbb{E}\left[\sup_{\pi \in \Pi, f \in \mathcal{F}}\Big|\frac{1}{N_{\mathcal{X}}} \sum_i \lambda_i \sigma_i \frac{f(x_i)}{K}(\hat{\tau}_i - \tau_i)  \pi(x_i)\Big|\right] = 4 \bar{p} K \mathbb{E}\left[\sup_{\pi \in \Pi, \tilde{f} \in \tilde{\mathcal{F}}}\Big|\frac{1}{N_{\mathcal{X}}} \sum_i \lambda_i \sigma_i \tilde{f}(x_i) (\hat{\tau}_i - \tau_i)  \pi(x_i)\Big|\right]
\end{aligned} 
$$ 
where we multiply and divide the expression in Equation \eqref{eqn:bounds_start} by $K$ and where $\tilde{\mathcal{F}} = \{\frac{f}{K}: f \in \mathcal{F}\}$ where in the last equality we simply divided $f$ by its envelope $K$. Define
$
\mathcal H:=\{x\mapsto \tilde{f}(x)\pi(x): \tilde{f}\in \tilde{\mathcal{F}} ,\ \pi\in\Pi\}, 
$ 
Since \(|\tilde{f}(x)|\le 1\) under Assumption \ref{ass:entropy_F} ($\tilde{f} := f/K$) and \(0\le \pi(x)\le 1\), every \(h\in\mathcal H\) satisfies
$ 
\sup_{x\in\mathcal X}|h(x)|\le 1.
$ 
By Lemma \ref{lem:cover_product}, for any probability measure \(Q\) on \(\mathcal X\),
$ 
\mathcal{N}\!\left(\varepsilon,\mathcal H,L_1(Q)\right)
\le
\mathcal{N}\!\left(\frac{\varepsilon}{2},\tilde{\mathcal F},L_1(Q)\right)
\,
\mathcal{N}\!\left(\frac{\varepsilon}{2},\Pi,L_1(Q)\right) \le \mathcal{N}\!\left(\frac{K \varepsilon}{2}, \mathcal F,L_1(Q)\right)
\,
\mathcal{N}\!\left(\frac{\varepsilon}{2},\Pi,L_1(Q)\right), 
$ 
where in the last inequality we replaced $\tilde{\mathcal{F}}$ with $\mathcal{F}$ and adjusted the radius accordingly. 
Taking logarithms and using \(\sqrt{a+b}\le \sqrt a+\sqrt b\), we obtain
$
\int_0^{2}\sqrt{\log \mathcal{N}(\varepsilon,\mathcal H,L_1(Q))}\,d\varepsilon
\le
\int_0^{2}\sqrt{\log \mathcal{N}\!\left(\frac{K \varepsilon}{2},\mathcal F,L_1(Q)\right)}\,d\varepsilon
+
\int_0^{2}\sqrt{\log \mathcal{N}\!\left(\frac{\varepsilon}{2},\Pi,L_1(Q)\right)}\,d\varepsilon.
$
By the changes of variables \(t=K \varepsilon/2\) and \(s=\varepsilon/2\),
$ 
\int_0^{2}\sqrt{\log \mathcal{N}(\varepsilon,\mathcal H,L_1(Q))}\,d\varepsilon
\le
\frac{2}{K} \int_0^{K}\sqrt{\log \mathcal{N}(t,\mathcal F,L_1(Q))}\,dt
+
2 \int_0^{1}\sqrt{\log \mathcal{N}(s,\Pi,L_1(Q))}\,ds.
$ 

The first term in the right-hand side is bounded by $\frac{2}{K} \sqrt{C(\mathcal{F})}$ by Assumption \ref{ass:entropy_F} and the second term is bounded by $2c_0' \sqrt{v_\Pi}$ by the Basic Bound 2 for a universal constant $c_0' < \infty$. Therefore we can invoke Lemma \ref{lem:kita}, where we use $(\hat{\tau}_i - \tau_i)\lambda_i$ in lieu of $\Omega_i$ whose relevant moments are bounded as described in Basic Bound 3 above, we can write (after appropriately collecting the term that depend on the envelope $K$)
$
T_2 \le c_0''  \bar{p}  \sqrt{\frac{M_u (C(\mathcal{F}) +  K^2 v_\Pi)}{u^2 N_{\mathcal{X}}}}
$
for a universal constant $c_0'' < \infty$.  

\paragraph{Step 5: Final step} Collecting the terms, we obtain for a universal constant $q_0 < \infty$
\begin{equation} \label{eqn:final_result}
\footnotesize 
\begin{aligned} 
\mathbb E\left[\sup_{\pi\in\Pi, f \in \mathcal{F}}\bigl|(\widehat R-R_c)(f,\pi)\bigr|\right] \le q_0 K_+ B^{3/2} \bar{p} \sqrt{\frac{(M_u + M_u^2) (C(\mathcal{F}) + v_\Pi)}{N_{\mathcal{X}} u^2}}
\end{aligned} 
\end{equation} 
 
The proof completes from Equation \eqref{eqn:basic_in}. 

\subsection{Proof of Theorem \ref{thm:clustering_regret}} \label{proof:thm:clustering_regret}

To prove the result, we use the main result in Equation \eqref{eqn:final_result} in the proof of Theorem \ref{thm:regret}. 
Given Definition \ref{defn:clustering}, consider $f \in \mathcal{F}\equiv \left\{x \mapsto \sum_{g  = 2}^G \mu_g 1\{\alpha(x) = g\}: \alpha \in \mathcal{G}, \mu_g \in [-K, K] \right\}, K \in (0,\infty)$ and a prediction decision rule $\pi \in \Pi$ with $v_\Pi \le v$, where $\Pi = \{\pi: \pi(x) = 1\{\alpha(x) > 1\}, \alpha \in \mathcal{G}\}$. 
In particular, following verbatim Equation \eqref{eqn:basic_in} we can write 
\begin{equation} \label{eqn:helper_g}
\footnotesize 
\begin{aligned} 
R_c(\hat f,\hat\pi)-\inf_{\alpha \in \mathcal{G}, \mu \in [-K,K]^{G-1}} R_c(f_{\alpha, \mu},\pi^\alpha)
& \le 2\sup_{\alpha \in \mathcal{G}, \mu \in [-K,K]^{G-1}}\bigl|(\widehat R-R_c)(f_{\alpha,\mu},\pi^\alpha)\bigr| \le 2\sup_{f \in \mathcal{F},\pi \in \Pi}\bigl|(\widehat R-R_c)(f,\pi)\bigr|
\end{aligned} 
\end{equation} 
where in the last step we use the fact that $\left\{(f_{\alpha,\mu}, \pi^\alpha), \alpha \in \mathcal{G}, \mu \in [-K,K]^{G-1} \right\} \subseteq \left\{(f, \pi), f \in \mathcal{F}, \pi \in \Pi \right\}$  with $\mathcal{F}, \Pi$ appropriately defined as above. 
We verify the assumptions of Theorem \ref{thm:regret} for the induced \(\Pi\) and \(\mathcal F\), and then invoke Equation \eqref{eqn:final_result}.

\paragraph{Step 1: definitions and basic observations}
First, by Definition \ref{defn:clustering}(ii), by Lemma \ref{lem:vc_sum}, the class
$
\Bigl\{x\mapsto 1\{\alpha(x)> 1\}:\alpha\in\mathcal G\Bigr\} =  \Bigl\{x\mapsto 1 - 1\{\alpha(x) = 1\}:\alpha\in\mathcal G\Bigr\}
$ 
has VC dimension at most \(v\). Next, by Definition \ref{defn:clustering}(i), each nonempty predictive group \(g\in\{2,\dots,G\}\) contains at least \(\underline\kappa |\mathcal X|\) types. Therefore, the number of nonempty predictive groups is at most
$
m_\kappa =\min\left\{G-1, \lfloor \frac{1}{\underline\kappa} \rfloor \right\}.
$  For each \(g\in\{2,\dots,G\}\), define
$
\mathcal I_g
:=
\Bigl\{x\mapsto 1\{\alpha(x)=g\}:\alpha\in\mathcal G\Bigr\}.
$ 
By Definition \ref{defn:clustering}(ii), each \(\mathcal I_g\) has VC dimension at most \(v\). Hence, by Theorem 2.6.7 in \cite{van1996weak}, there exist universal constants \(c_0,c_1<\infty\) such that, for every probability measure \(Q\) on \(\mathcal X\),
$
\log \mathcal N(\delta,\mathcal I_g,L_1(Q))
\le
c_0 (v+1)\log\!\left(\frac{c_1}{\delta}\right)$, for all $\delta\in(0,1).$

Similarly, let
$
\mathcal C:=\{x\mapsto a:\ a\in[-K,K]\}
$
denote the class of constant functions. By the definition of covering number, we can break the interval $[-K, K]$ in $1 + 2K/\varepsilon$ many intervals each of length $\varepsilon$, so that
$
\mathcal N(\varepsilon,\mathcal C,L_1(Q))\le 1+\frac{2K}{\varepsilon}.
$
Therefore, once we define 
$
\mathcal H_g:=\mathcal I_g\cdot \mathcal C
=
\Bigl\{x\mapsto \iota(x)c:\ \iota\in\mathcal I_g,\ c\in\mathcal C\Bigr\}, 
$  and using Lemma \ref{lem:cover_product}, and taking logarithms, for universal constants $c_0 > 0, c_1 > 0$
\begin{equation} \label{eqn:bb_cover}
\footnotesize 
\begin{aligned} 
\log \mathcal N(\varepsilon,\mathcal H_g,L_1(Q))
& \le\log \mathcal N(\frac{\varepsilon}{2K},\mathcal I_g,L_1(Q)) + \log \mathcal N(\frac{\varepsilon}{2},\mathcal C,L_1(Q)) \le 
\log\!\Big(1+\frac{4K}{\varepsilon}\Big)
+
c_0(v+1)\log\!\left(\frac{2c_1K}{\varepsilon}\right).
\end{aligned} 
\end{equation} 

\paragraph{Step 2: class decomposition}
Fix a subset \(S\subseteq\{2,\dots,G\}\) with \(|S|=s\le m_\kappa\), and define
$
\mathcal F_S
:=
\Big\{
x\mapsto \sum_{g\in S}\mu_g\,1\{\alpha(x)=g\}
:\ \alpha\in\mathcal G,\ \mu_g\in[-K,K]
\Big\} \subseteq 
\Big\{
\sum_{g\in S} h_g:\ h_g\in\mathcal H_g,\ g\in S
\Big\}.
$ 
Since the indicators \(1\{\alpha(x)=g\}\) are mutually exclusive across \(g\), every \(f\in\mathcal F_S\) is uniformly bounded by \(K\).
By Lemma \ref{lem:devroye_sum}, with $|S| = s \ge 1$
$ 
\mathcal N(\varepsilon,\mathcal F_S,L_1(Q))
\le
\prod_{g \in S} \mathcal N\!\left(\frac{\varepsilon}{s},\mathcal H_g,L_1(Q)\right).
$ 
Using the bound in Equation \eqref{eqn:bb_cover} for each \(\mathcal H_g\), and a universal constant $c_0 < \infty$
\begin{equation} \label{eqn:bbbb}
\footnotesize 
\begin{aligned} 
\log \mathcal N(\varepsilon,\mathcal F_S,L_1(Q))
\le
s\log\!\Big(1+\frac{4Ks}{\varepsilon}\Big)
+
s c_0 (v+1)\log\!\left(\frac{2c_1Ks}{\varepsilon}\right).
\end{aligned} 
\end{equation} 
For \(s=0\), \(\mathcal F_S\) contains only the zero function, so its covering number is one and its logarithm equals zero. 
\paragraph{Step 3: covering number of the union class}
We now pass from \(\mathcal F_S\) to the full class of predictions \(\mathcal F\). Since every \(f\in\mathcal F\) is supported on at most \(m_\kappa\) predictive groups,
$
\mathcal F
\subseteq
\bigcup_{\substack{S\subseteq\{2,\dots,G\}\\ |S|\le m_\kappa}}
\mathcal F_S.
$ 
The number of such subsets is bounded by
$
\sum_{s=0}^{m_\kappa}\binom{G-1}{s}\le G^{m_\kappa}.
$ 
By Lemma \ref{lem:cover_union},
$ 
\mathcal N(\varepsilon,\mathcal F,L_1(Q))
\le
\sum_{\substack{S\subseteq\{2,\dots,G\}\\ |S|\le m_\kappa}}
\mathcal N(\varepsilon,\mathcal F_S,L_1(Q)). 
$ 
Taking logarithms, this implies the following 
$ 
\log \mathcal N(\varepsilon,\mathcal F,L_1(Q))
\le
m_\kappa\log G
+
\sup_{\substack{S\subseteq\{2,\dots,G\}\\ |S|\le m_\kappa}}
\log \mathcal N(\varepsilon,\mathcal F_S,L_1(Q)).
$ 
Hence for universal constants $c_0 <\infty, c_1 < \infty$, by Equation \eqref{eqn:bbbb}, 
$ 
\log \mathcal N(\varepsilon,\mathcal F,L_1(Q))
\le
m_\kappa\log G
+
m_\kappa\log\!\Big(1+\frac{4Km_\kappa}{\varepsilon}\Big)
+
m_\kappa c_0 (v+1)\log\!\left(\frac{2c_1Km_\kappa}{\varepsilon}\right).
$ 
\paragraph{Step 4: Dudley's entropy integral bound}
Using \(\sqrt{a+b+c}\le \sqrt a+\sqrt b+\sqrt c\), together with the change of variables \(t=\varepsilon/(2K)\), we obtain
$ 
\int_0^{2K}\sqrt{\log \mathcal N(\varepsilon,\mathcal F,L_1(Q))}\,d\varepsilon
\le
2K\sqrt{m_\kappa\log G}
+
2K\sqrt{m_\kappa}\int_0^1 \sqrt{\log\!\Big(1+\frac{2m_\kappa}{t}\Big)}\,dt
+
2K\sqrt{m_\kappa c_0(v+1)}
\int_0^1 \sqrt{\log\!\Big(\frac{c_2m_\kappa}{t}\Big)}\,dt,
$ 
for a universal constant \(c_2<\infty\). The two integrals on the right-hand side are bounded by a universal constant times \(\sqrt{\log(m_\kappa+1)}\). Therefore, for a finite constant $c' < \infty$ 
$ 
\int_0^{2K}\sqrt{\log \mathcal N(\varepsilon,\mathcal F,L_1(Q))}\,d\varepsilon
\le
2K\sqrt{m_\kappa\log G}
+
c' K\sqrt{m_\kappa(v+1)\log(m_\kappa+1)}.
$ 
Since \(m_\kappa\le G, G \ge 2\), we have \(\log(m_\kappa+1)\le \log(2G)\), and therefore for a finite universal constant $c_0' < \infty$ 
$ 
\sup_{Q\in\mathcal Q}
\int_0^{2K}\sqrt{\log \mathcal N(\varepsilon,\mathcal F,L_1(Q))}\,d\varepsilon
\le
c_0' K\sqrt{m_\kappa(v+1)\log G}
\le
c_0' K\sqrt{\frac{(v+1)\log G}{\underline\kappa}}.
$ 
Thus, we can write 
$
C(\mathcal F)\le (c_0')^2 K^2\frac{(v+1)\log G}{\underline\kappa}.
$ 
\paragraph{Step 5: final step} Collecting our results, and using Equation \eqref{eqn:final_result}, we obtain
$$  
\small 
\begin{aligned} 
\mathbb{E}\left[\sup_{f \in \mathcal{F},\pi \in \Pi}\bigl|(\widehat R-R_c)(f,\pi)\bigr|\right] & \le C_1 \bar{p} B^{3/2} \max\{K,1\}
\sqrt{
\tilde M_u\,
\frac{C(\mathcal F)+v}{|\mathcal X|}
} \\
&\le C_1' \bar{p} B^{3/2} \max\{K,1\}
\sqrt{
\tilde M_u\,
\frac{K^2(v+1)\log G/\underline\kappa + v}{|\mathcal X|}
}, 
\end{aligned} 
$$ 
for universal constants \(C_1, C_1'<\infty\), where in the second inequality we used $
C(\mathcal F)\le (c_0')^2 K^2\frac{(v+1)\log G}{\underline\kappa}.
$  from Step 4. The proof completes from Equation \eqref{eqn:helper_g}.  

\subsection{Proof of Theorem \ref{thm:gmeans_lower_union_ignorance}}
\label{proof:lower_bound}

Write \(N:=N_{\mathcal X}, M:=\frac G2-1\). We first prove the theorem for even \(G\ge 4\), and then
indicate the modification for odd \(G\ge 5\) at the end. Throughout the proof, label
\(1\) denotes ignorance. For any rule \((\alpha,\mu)\), with \(\alpha:\mathcal X_N\to\{1,\dots,G\}\) and
\(\mu\in[-K,K]^{G-1}\), define
\begin{equation}\label{eq:proof-risk-def}
\footnotesize 
\begin{aligned} 
R_\theta(\alpha,\mu)
:=
\frac1N\sum_{x\in\mathcal X_N}
\left[
(\mu_{\alpha(x)}-\tau_\theta(x))^2\,\mathbf 1\{\alpha(x)\ge 2\}
+
K_0^2\,\mathbf 1\{\alpha(x)=1\}
\right], 
\end{aligned} 
\end{equation}
where we index $\tau_\theta$ by parameters $\theta$ defining the DGP as we discuss in Step 2 below (without loss, let $\mu_1 = 0$). We will be using lemmas  in the online Appendix \ref{sec:auxiliary_lower_bound}. 
\paragraph{Step 1.} Let
$ 
\mathcal X_N=\{x_1,\dots,x_N\},
x_i=(i,0,\ldots,0)\in\mathbb R^d,
p(x_i)=\frac1N.
$ 
Because all coordinates of $x$ are zero except for the first one, we directly work with \(x_i = i\) with its first coordinate \(i\). For either choice of \(\mathcal G\) in the theorem (Examples \ref{exmp:tree_entropy} or \ref{exmp:kmeans_entropy}), the restriction of any rule
\(\alpha\in\mathcal G\) to \(\mathcal X_N\) has the following one-dimensional structure:
the nonempty predictive groups are pairwise disjoint consecutive intervals, and their number is at most \(G-1\), with minimum size for the predictive group $\underline{\kappa} |\mathcal{X}|$ (from Definition \ref{defn:clustering}). 
The benchmark partitions in Step 3 below are realizable by both classes in Examples \ref{exmp:tree_entropy} and \ref{exmp:kmeans_entropy} in this one-dimensional setting. (This is because for the tree class we can use splits on the first coordinate so to create leaves that correspond to consecutive intervals, where each leaf node by construction can either be a single archetype or part of the basin of ignorance; for the nearest-centroid class we can choose one-dimensional centroids to generate partitions of consecutive intervals.) 

We fix integers \(a\ge 1\), \(b\ge 1\) and a real-valued \(\Delta \ge 0\) which we choose in Step 4 below, and define $L:= N-((G-1)a+Mb). 
$ We
partition \(\mathcal X_N\) into disjoint consecutive blocks
$ 
A_2,\ H_1,\ A_3,\ A_4,\ H_2,\ A_5,\ \dots,\ A_{2M},\ H_M,\ A_{2M+1},\ A_G,
$ 
with
$ 
|A_g|=a \quad (g=2,\dots,G-1),
|A_G|=a+L,
|H_m|=b \quad (m=1,\dots,M).
$ 
For every \(m=1,\dots,M\), the block \(H_m\) lies immediately
between the two blocks \(A_{2m}\) and \(A_{2m+1}\), which we refer to as anchor blocks, and both anchors adjacent to $H_m$ have
cardinality \(a\).  

\paragraph{Step 2: data-generating process.}
For each \(\theta=(\theta_1,\dots,\theta_M)\in\{-1,+1\}^M\), set
\begin{equation} \label{eqn:tau_theta}
\footnotesize 
\begin{aligned} 
\tau_\theta(x) = \sum_{g =2}^G (-1)^g K_0 1\{x \in A_g\} + \sum_{m=1}^M \theta_m \Delta 1\{x \in H_m\}
\end{aligned} 
\end{equation} 
The observations are generated by
$ 
\hat\tau(x_i)=\tau_\theta(x_i)+\xi_i,
\xi_i\overset{\mathrm{i.i.d.}}{\sim}N(0,\underline\eta^2),
\hat\eta(x_i)^2=\underline\eta^2,
\hat c(x_i)=c(x_i)=K_0^2.
$ 
For each \(\theta\in\{-1,+1\}^M\), this defines a law \(P_\theta\) over the observed estimates. We prove that this DGP satisfies Assumption \ref{ass:general_regret} at the end of the proof in Step 11. 

\paragraph{Step 3: benchmark rule.}

For each \(\theta\in\{-1,+1\}^M\), define
\(\alpha^\theta:\mathcal X_N\to\{2,\dots,G\}\) by (for $g \in \{2, \cdots, G\}, m \in \{1, \cdots, M\}$)
$$
\footnotesize 
\begin{aligned}
\alpha^\theta(x) = \sum_{g=2}^G g 1\{x \in A_g\} + \sum_{m=1}^M \Big\{2m 1\{x \in H_m\} 1\{\theta_m = 1\} + (2m +1) 1\{x \in H_m \} 1\{\theta_m = -1\}\Big\}
\end{aligned} 
$$
Thus each \(H_m\) is assigned to \(A_{2m}\) if \(\theta_m=+1\), and to
\(A_{2m+1}\) if \(\theta_m=-1\). This forms a partition of $x_i$ with consecutive predictive groups defined as follows:
$$ 
\footnotesize 
\begin{aligned} 
C_g^\theta:=\{x\in\mathcal X_N:\alpha^\theta(x)=g\}
=
\begin{cases}
A_{2m}\cup H_m, & g=2m,\ \theta_m=+1,\quad m=1,\dots,M,\\[3pt]
A_{2m}, & g=2m,\ \theta_m=-1,\quad m=1,\dots,M,\\[3pt]
A_{2m+1}, & g=2m+1,\ \theta_m=+1,\quad m=1,\dots,M,\\[3pt]
H_m\cup A_{2m+1}, & g=2m+1,\ \theta_m=-1,\quad m=1,\dots,M,\\[3pt]
A_G, & g=G.
\end{cases}
\end{aligned} 
$$ 
In Step 4 below we prove that $\alpha^\theta \in \mathcal{G}$ after appropriately choosing the constants. 

\paragraph{Step 4: choice of constants and benchmark loss.}

Fix
$
0 < \underline\eta \le \frac{K_0}{32},
b:=\Bigl\lfloor \frac{N}{128(G-1)}\Bigr\rfloor, 
\Delta:=\frac{\underline\eta}{8\sqrt b}, 
a:=\Bigl\lfloor \frac{N-Mb}{G-1}\Bigr\rfloor. 
$ We establish their properties below. 

\begin{lem}\label{lem:bound_basic_constants}
Suppose the sample-size condition in Theorem~\ref{thm:gmeans_lower_union_ignorance} holds
with \(C_0 \ge 128\). Then
$ 
a\ge \underline\kappa N,
a\ge 32b, 
0\le \Delta\le \frac{K_0}{8}, 
b\ge \max\left\{1,\frac{N}{256(G-1)}\right\},
a\ge \max\left\{\frac{N}{2(G-1)},\,4(G-1)\right\},
 $ 
and
\begin{equation}\label{eq:proof-MbDelta-vs-aK0}
\small 
\begin{aligned} 
\frac{MbK_0\Delta}{2N}\le \frac{aK_0^2}{32N}, \qquad \frac1N\frac{ab}{a+b}(K_0-\Delta)^2 \le \frac{bK_0^2}{N}
\end{aligned} 
\end{equation}
In addition, \(0\le L<G-1\), where we define $L$ a constant $L:= N-((G-1)a+Mb)$. 
\end{lem}
\begin{proof} See Appendix \ref{proof:lem:bound_basic_constant}. 
\end{proof}

Next, we show that $\alpha^\theta \in \mathcal{G}$ and construct a benchmark rule. 
Define
$$
\footnotesize  
\begin{aligned} 
\mu_g^\theta
:=
\frac1{|C_g^\theta|}\sum_{x\in C_g^\theta}\tau_\theta(x).
\end{aligned} 
$$ 
Since \(|\tau_\theta(x)|\le K_0\le K\), \(\mu_g^\theta\in[-K,K]\).  In addition, 
Because each \(C_g^\theta\) contains \(A_g\), Lemma~\ref{lem:bound_basic_constants} gives
$ 
|C_g^\theta|\ge |A_g|\ge a\ge\underline\kappa N,  g=2,\dots,G.
$ 
Hence, by Step 1, 
$ 
\alpha^\theta\in\mathcal G,
$ for the class $\mathcal{G}$ considered in Theorem \ref{thm:gmeans_lower_union_ignorance} under the minimum size requirement in Definition \ref{defn:clustering} 
and therefore
\begin{equation}\label{eq:proof-benchmark-admissible}
\footnotesize 
\begin{aligned} 
\inf_{\alpha\in\mathcal G,\,
\mu\in[-K,K]^{G-1}}
R_\theta(\alpha,\mu)
\le
R_\theta(\alpha^\theta,\mu^\theta).
\end{aligned} 
\end{equation}

We also record the the corresponding loss on $A_g$. If \(C_g^\theta=A_g\), then the benchmark
loss on \(A_g\) is zero. If \(C_g^\theta=A_g\cup H_m\), then \(|A_g|=a\) and
$ 
\mu_g^\theta=(-1)^g\frac{aK_0+b\Delta}{a+b}.
$ 
Thus, for \(x\in A_g\),
$ 
|\mu_g^\theta-\tau_\theta(x)|
=
\frac{b(K_0-\Delta)}{a+b}
\le \frac{b}{a}K_0
\le \frac{K_0}{32},
$ 
where the last inequality uses Lemma~\ref{lem:bound_basic_constants}. Hence
\begin{equation}\label{eq:proof-benchmark-anchor}
\footnotesize 
\begin{aligned} 
\sup_{g=2,\dots,G}\sup_{x\in A_g}
(\mu_g^\theta-\tau_\theta(x))^2
\le
\frac{K_0^2}{1024}.
\end{aligned} 
\end{equation}

We conclude this step with two additional definitions that will be used below. First, note that the following holds: 
\begin{equation}\label{eq:proof-two-point-ls}
\footnotesize 
\begin{aligned} 
\inf_{t\in\mathbb R}\{n(t-u)^2+p(t-v)^2\}
=
\frac{np}{n+p}(u-v)^2,
\end{aligned} 
\end{equation}
valid for \(n,p\ge0\), \(n+p>0\). Also, define, for \(k\in\{0,\dots,b\}\) and \(u\in[0,k]\),
\begin{equation} \label{eqn:Phi_k}
\footnotesize 
\begin{aligned} 
\Phi_k(u)
:=
\frac{a(b-k)}{a+b-k}(K_0-\Delta)^2
-
\frac{ab}{a+b}(K_0-\Delta)^2
+
\frac{au}{a+u}(K_0+\Delta)^2
+
(k-u)K_0^2.
\end{aligned}
\end{equation} 
\begin{lem}[Algebraic bound]\label{lem:phi-bound}
For every \(k\in\{0,\dots,b\}\) and every \(u\in[0,k]\),
$ 
\Phi_k(u)\ge kK_0\Delta, 
$ 
under the  bounds derived in Lemma \ref{lem:bound_basic_constants}. 
\end{lem}

\begin{proof} See Appendix \ref{proof:lem:phi-bound}.  
\end{proof}

\paragraph{Step 5: canonicalization.}

 The next lemma shows that, without increasing the loss, we can simplify any rule on a fixed anchor block \(A_g\). We first state the lemma formally, and then provide its interpretation below. 

\begin{lem}[Canonicalization]\label{lem:single-anchor-canonicalization}
Let \((\alpha,\mu)\) satisfy
\(\alpha:\mathcal X_N\to\{1,\dots,G\}\), \(\mu\in[-K,K]^{G-1}\), and suppose that,
for each \(h=2,\dots,G\), the set
$ 
\{x\in\mathcal X_N:\alpha(x)=h\}
$ 
is either empty or a consecutive interval in the induced order on \(\mathcal X_N\). Fix an
anchor block \(A=A_g\), on which \(\tau_\theta(x)= (-1)^g K_0\) for all \(x\in A\). Define
$ 
\mathcal U_\alpha(A)
:=
\{1:\exists x\in A \text{ such that }\alpha(x)=1\}
\cup
\{h\in\{2,\dots,G\}:\{x:\alpha(x)=h\}\cap A\neq\varnothing\},
$ 
and, for \(u\in\mathcal U_\alpha(A)\),
 $ 
\lambda_u = K_0^2 1\{u = 1\} + (\mu_u - (-1)^g K_0)^2 1\{u \in \{2, \cdots, G\}\}$. 
Let
$ 
u^\star\in\arg\min_{u\in\mathcal U_\alpha(A)}\lambda_u  
$ (with any deterministic tie-breaking rule), 
and define
$ 
\widetilde\mu:=\mu,
\widetilde\alpha(x) = \alpha(x) 1\{x \not \in A\} + u^\star 1\{x \in A\}.
$ 
Then, for each \(h=2,\dots,G\), the set
$ 
\{x\in\mathcal X_N:\widetilde\alpha(x)=h\}
$ 
is either empty or a consecutive interval in the induced order on \(\mathcal X_N\), and
\[
\footnotesize 
\begin{aligned} 
\left|\{h\in\{2,\dots,G\}:\{x:\widetilde\alpha(x)=h\}\neq\varnothing\}\right|
\le
\left|\{h\in\{2,\dots,G\}:\{x:\alpha(x)=h\}\neq\varnothing\}\right|.
\end{aligned} 
\]
Moreover, for every \(x\notin A\),
$ 
\widetilde\alpha(x)=\alpha(x)
\quad\text{and}\quad
\widetilde\mu_{\widetilde\alpha(x)}=\mu_{\alpha(x)}
\quad\text{whenever } \alpha(x)\ge2,
$ 
and, for every \(x\in A\),
$ 
\sum_{h=2}^G
(\widetilde\mu_h-(-1)^gK_0)^2\,1\{\widetilde\alpha(x)=h\}
+
K_0^2\,1\{\widetilde\alpha(x)=1\}
=
\min_{u\in\mathcal U_\alpha(A)}\lambda_u
$ 
with
$ 
\min_{u\in\mathcal U_\alpha(A)}\lambda_u
\le
\sum_{h=2}^G
(\mu_h-(-1)^gK_0)^2\,1\{\alpha(x)=h\}
+
K_0^2\,1\{\alpha(x)=1\}.
$ 
Consequently,
$ 
R_\theta(\alpha,\mu)
\ge
R_\theta(\widetilde\alpha,\widetilde\mu).
$ 
\end{lem}

The lemma states that the nonempty predictive cells of \(\widetilde\alpha\) are pairwise disjoint
consecutive intervals, their number is no larger than the number of nonempty predictive
cells of \(\alpha\), and
$ 
R_\theta(\alpha,\mu)\ge R_\theta(\widetilde\alpha,\widetilde\mu).
$ 
Moreover, outside \(A\), every point keeps the same group and the same attached predictive
mean; on \(A\), every point is predicted so to attain  
\(\min_{u\in\mathcal U_\alpha(A)}\lambda_u\).

\begin{proof} See Appendix  \ref{proof:lem:single-anchor-canonicalization}. 
\end{proof} 

We next apply Lemma \ref{lem:single-anchor-canonicalization} recursively. Specifically, fix any admissible rule \((\alpha,\mu) \in \mathcal{G} \times [-K,K]^{G-1}\). As we consider $x_i = i$, any one-dimensional partition $\alpha \in \mathcal{G}$ is such that the nonempty predictive cells of
\(\alpha\) are pairwise disjoint consecutive intervals, and their number is at most
\(G-1\).  Applying Lemma \ref{lem:single-anchor-canonicalization} successively to
\(A_2,A_3,\dots,A_G\), we obtain a rule
\((\widetilde\alpha,\widetilde\mu)\) such that
$ 
R_\theta(\alpha,\mu)\ge R_\theta(\widetilde\alpha,\widetilde\mu).
$ 
Let
$ 
D_1<\cdots<D_{J_{\widetilde\alpha}}
$ 
denote the nonempty sets
$ 
\{x\in\mathcal X_N:\widetilde\alpha(x)=h\}, h\in\{2,\dots,G\},
$ 
ordered from left to right. Then by Lemma \ref{lem:single-anchor-canonicalization} the number of non empty sets satisfies \(J_{\widetilde\alpha}\le G-1\), each \(D_j\) is a
consecutive interval, and every anchor block \(A_g\), \(g=2,\dots,G\), is either entirely
ignored or satisfies \(A_g\subseteq D_j\) for some \(j\). Now re-optimize the means on the fixed intervals. Namely denote 
\[
\footnotesize 
\begin{aligned} 
\nu_{j,\theta}^\star
:=
\frac1{|D_j|}\sum_{x\in D_j}\tau_\theta(x),
\qquad j=1,\dots,J_{\widetilde\alpha}.
\end{aligned} 
\]
By construction of $\nu^\star$,  
$ 
R_\theta(\widetilde\alpha,\widetilde\mu)
\ge
R_\theta(\widetilde\alpha,\nu_\theta^\star).
$
Therefore, we can write  from Lemma \ref{lem:single-anchor-canonicalization}, 
\begin{equation}\label{eq:proof-alpha-tilde-final}
\footnotesize 
\begin{aligned} 
R_\theta(\alpha,\mu)
\ge R_\theta(\widetilde\alpha,\widetilde\mu) \ge 
R_\theta(\widetilde\alpha,\nu_\theta^\star).
\end{aligned} 
\end{equation}
We will therefore provide a lower bound for $(\widetilde\alpha,\nu_\theta^\star)$, which will immediately imply a lower bound for $(\alpha,\mu)$ (note that \(\widetilde\alpha\) does not need to satisfy the minimum-size constraint for the \textit{lower} bound to hold).
For later use, define
\[
\footnotesize 
\begin{aligned} 
L_{\widetilde\alpha}(x) := K_0^2 1\{\widetilde\alpha(x)=1\} + (\nu_{j,\theta}^\star-\tau_\theta(x))^2 1\{x \in D_j\}
\qquad
L_{\mathrm{bench}}(x):=(\mu^\theta_{\alpha^\theta(x)}-\tau_\theta(x))^2.
\end{aligned} 
\]
Then
\begin{equation}\label{eq:proof-loss-difference-pointwise}
\footnotesize 
\begin{aligned} 
R_\theta(\widetilde\alpha,\nu_\theta^\star)-R_\theta(\alpha^\theta,\mu^\theta)
=
\frac1N\sum_{x\in\mathcal X_N}
\bigl(L_{\widetilde\alpha}(x)-L_{\mathrm{bench}}(x)\bigr).
\end{aligned} 
\end{equation}

\paragraph{Step 6: lower bound outside the good structural property.}

Using the intervals \(D_1,\dots,D_{J_{\widetilde\alpha}}\), define the ``good event set'' as 
\[
\footnotesize 
\begin{aligned} 
\widetilde\alpha\in\mathcal B_{\mathrm{good}}
\quad\Longleftrightarrow\quad
\left\{
\begin{array}{l}
\forall g\in\{2,\dots,G\},\ \exists j\in\{1,\dots,J_{\widetilde\alpha}\}
\text{ such that }A_g\subseteq D_j,\\[3pt]
\forall j\in\{1,\dots,J_{\widetilde\alpha}\},\
|\{g\in\{2,\dots,G\}:A_g\subseteq D_j\}|\le 1.
\end{array}
\right.
\end{aligned} 
\]
 We bound the error outside the good event set. 
\begin{lem}\label{lem:structural_property}
If \(\widetilde\alpha\notin\mathcal B_{\mathrm{good}}\), then
\begin{equation}\label{eq:proof-bad-structural}
\footnotesize 
\begin{aligned} 
R_\theta(\widetilde\alpha,\nu_\theta^\star)-R_\theta(\alpha^\theta,\mu^\theta)
\ge
\frac{aK_0^2}{32N}.
\end{aligned} 
\end{equation}
\end{lem}

\begin{proof} See Appendix \ref{proof:lem:structural_property} 
\end{proof}

\paragraph{Step 7: geometry under the good structural property.}

\begin{lem}[Geometry under the good structural property]\label{lem:good_geometry}
Suppose \(\widetilde\alpha\in\mathcal B_{\mathrm{good}}\). Then
\(J_{\widetilde\alpha}=G-1\), and every predictive interval \(D_j\) contains exactly one
anchor block $A_g$. For every \(m=1,\dots,M\), let \(D_m^L\) and \(D_m^R\) denote the unique
predictive intervals satisfying
$ 
A_{2m}\subseteq D_m^L,
A_{2m+1}\subseteq D_m^R.
$ 
Then \(D_m^L\neq D_m^R\), and there exist \(L_m,R_m\subseteq H_m\) such that
$ 
D_m^L=A_{2m}\cup L_m,
D_m^R=R_m\cup A_{2m+1},
L_m\cap R_m=\varnothing,
$ 
and \(\widetilde\alpha(x)=1\) for every \(x\in H_m\setminus(L_m\cup R_m)\). If
\(H_m=\{x_{q_m},\dots,x_{q_m+b-1}\}\), then for some
\(l_m,r_m,s_m\in\{0,\dots,b\}\),
$ 
l_m+r_m+s_m=b,
$, 
$ 
L_m=\{x_{q_m},\dots,x_{q_m+l_m-1}\},
R_m=\{x_{q_m+b-r_m},\dots,x_{q_m+b-1}\},
$ 
with the convention that these sets are empty when \(l_m=0\) or \(r_m=0\), and
$ 
|L_m|=l_m, |R_m|=r_m, 
|H_m\setminus(L_m\cup R_m)|=s_m.
$ 
\end{lem}

\begin{proof}  See Appendix \ref{proof:lem:good_geometry}. 
\end{proof}

\paragraph{Step 8: local lower bound under the good structural property.}

\begin{lem}[Local lower bound]\label{lem:local_good_property}
Suppose \(\widetilde\alpha\in\mathcal B_{\mathrm{good}}\). For each \(m=1,\dots,M\), let
\(D_m^L,D_m^R,L_m,R_m,l_m,r_m,s_m\) be as in Lemma~\ref{lem:good_geometry}, and define
$ 
k_m
:= (r_m + s_m) 1\{\theta_m = 1\} + (l_m + s_m) 1\{\theta_m = -1\}$. 
Then
\begin{equation}\label{eq:proof-sum-local}
\footnotesize 
\begin{aligned} 
R_\theta(\widetilde\alpha,\nu_\theta^\star)-R_\theta(\alpha^\theta,\mu^\theta)
\ge
\frac{K_0\Delta}{N}\sum_{m=1}^M k_m.
\end{aligned} 
\end{equation}
\end{lem}

\begin{proof} See Appendix \ref{proof:lem:local_good_property}.  
\end{proof}

\paragraph{Step 9: reduction to Hamming loss.}

\begin{lem}[Reduction to Hamming loss]\label{lem:hamming_reduction}
Fix any admissible \((\alpha,\mu)\) and construct
\((\widetilde\alpha,\widetilde\mu)\) and \(D_1,\dots,D_{J_{\widetilde\alpha}}\) as in
Step~5. For each \(m=1,\dots,M\), define
$$  
\footnotesize 
\begin{aligned} 
q_m(\widetilde\alpha)
:=
\begin{cases}
|H_m\cap D_j|, & \text{if there exists }j\text{ such that }A_{2m}\subseteq D_j,\\[3pt]
0, & \text{otherwise,}
\end{cases}, \qquad \widehat\theta_m
:=
\begin{cases}
+1, & q_m(\widetilde\alpha)\ge b/2,\\[3pt]
-1, & q_m(\widetilde\alpha)< b/2.
\end{cases}. 
\end{aligned} 
$$ 
where $\widehat \theta_m$ is a measurable function of $(\alpha,\mu)$ by construction in Step 5. 
Then, for \(\widehat\theta=(\widehat\theta_1,\dots,\widehat\theta_M)\),
\begin{equation}\label{eq:proof-pointwise-final}
\footnotesize 
\begin{aligned} 
R_\theta(\alpha,\mu)-R_\theta(\alpha^\theta,\mu^\theta)
\ge
\frac{bK_0\Delta}{2N}\,d_H(\widehat\theta,\theta).
\end{aligned} 
\end{equation}
\end{lem}

\begin{proof} See Appendix \ref{proof:lem:hamming_reduction} 
\end{proof}

\paragraph{Step 10: minimax reduction and Assouad bound.}

Fix any
estimator \((\widehat\alpha,\widehat\mu)\) measurable with respect to $(\hat{\tau}(x),\hat{\eta}(x)^2,\hat{c}(x))_{x\in \mathcal{X}}$, with values in
\(\mathcal G \times[-K,K]^{G-1}\).
Apply
Lemma~\ref{lem:hamming_reduction} to the estimator output. For every
\(\theta\in\{-1,+1\}^M\),
$ 
R_\theta(\widehat\alpha,\widehat\mu)-R_\theta(\alpha^\theta,\mu^\theta)
\ge
\frac{bK_0\Delta}{2N}d_H(\widehat\theta,\theta).
$ 
Taking expectations, suprema, and then infima over estimators on both sides ($\widehat{\theta}$ is a deterministic function of $\widehat \alpha, \widehat \mu$)\footnote{From Lemma \ref{lem:hamming_reduction} $\widehat \theta$ is a deterministic function of $\widehat \alpha, \widehat \mu$ which we obtain from the canonicalization Step 5 here applied to $\widehat \alpha, \widehat \mu$.}
\begin{equation}\label{eq:proof-step8-reduction}
\footnotesize 
\begin{aligned} 
\inf_{(\widehat\alpha,\widehat\mu)}
\sup_{\theta\in\{-1,+1\}^M}
\mathbb E_{P_\theta}\!\left[
R_\theta(\widehat\alpha,\widehat\mu)-R_\theta(\alpha^\theta,\mu^\theta)
\right]
\ge
\frac{bK_0\Delta}{2N}
\inf_{\widetilde\theta}
\sup_{\theta\in\{-1,+1\}^M}
\mathbb E_{P_\theta}\!\left[d_H(\widetilde\theta,\theta)\right],
\end{aligned} 
\end{equation}
where the infimum on the right is over all measurable
\(\widetilde\theta:\mathbb R^N\to\{-1,+1\}^M\).

For \(m=1,\dots,M\), let \(\theta^{(m)}\) be obtained from \(\theta\) by flipping the
\(m\)-th coordinate. The laws \(P_\theta\) and \(P_{\theta^{(m)}}\) differ only on the
\(b\) observations in \(H_m\), where the mean changes by \(2\Delta\). Hence, by denoting $\mathrm{KL}$ the KL-divergence, a direct calculation gives us
$ 
\mathrm{KL}(P_\theta,P_{\theta^{(m)}})
=
b\frac{(2\Delta)^2}{2\underline\eta^2}
=
\frac{2b\Delta^2}{\underline\eta^2}
=
\frac1{32}.
$ 
By Lemma~\ref{lem:assouad}, with \(m=M\) and \(\kappa=1/32\),
$ 
\inf_{\widetilde\theta}
\sup_{\theta\in\{-1,+1\}^M}
\mathbb E_{P_\theta}[d_H(\widetilde\theta,\theta)]
\ge
\frac M2\left(1-\sqrt{\frac{1/32}{2}}\right)
=
\frac{7M}{16}.
$ 
Substituting into \eqref{eq:proof-step8-reduction},
\begin{equation}\label{eq:proof-before-rate}
\footnotesize 
\begin{aligned} 
\inf_{(\widehat\alpha,\widehat\mu)}
\sup_{\theta\in\{-1,+1\}^M}
\mathbb E_{P_\theta}\!\left[
R_\theta(\widehat\alpha,\widehat\mu)-R_\theta(\alpha^\theta,\mu^\theta)
\right]
\ge
\frac{7}{32}\frac{MbK_0\Delta}{N}.
\end{aligned} 
\end{equation}

Since \(\Delta=\underline\eta/(8\sqrt b)\), \(M=(G-2)/2\), and
\(b\ge N/[256(G-1)]\),
$ 
\frac{MbK_0\Delta}{N}
=
\frac{K_0\underline\eta}{8}\frac{M\sqrt b}{N}
\ge
\frac{K_0\underline\eta}{256}\frac{G-2}{\sqrt{N(G-1)}}.
$ In addition, 
since \(G\ge4\),
$ 
\frac{G-2}{\sqrt{G-1}}
\ge
\frac1{\sqrt2}\sqrt{G-2}.
$ 
Therefore, for a universal \(c>0\),
\begin{equation}\label{eq:proof-factor-rate}
\footnotesize 
\begin{aligned} 
\frac{MbK_0\Delta}{N}
\ge
cK_0\underline\eta\sqrt{\frac{G-2}{N}}.
\end{aligned} 
\end{equation}
Combining \eqref{eq:proof-before-rate} and \eqref{eq:proof-factor-rate},
$ 
\inf_{(\widehat\alpha,\widehat\mu)}
\sup_{\theta\in\{-1,+1\}^M}
\mathbb E_{P_\theta}\!\left[
R_\theta(\widehat\alpha,\widehat\mu)-R_\theta(\alpha^\theta,\mu^\theta)
\right]
\ge
c_0K_0\underline\eta\sqrt{\frac{G-2}{N}}.
$ 

By \eqref{eq:proof-benchmark-admissible}, 
$ 
\inf_{(\widehat\alpha,\widehat\mu)}
\sup_{\theta\in\{-1,+1\}^M}
\mathbb E_{P_\theta}\!\left[
R_\theta(\widehat\alpha,\widehat\mu)
-
\inf_{\alpha\in\mathcal G,\,
\mu\in[-K,K]^{G-1}}
R_\theta(\alpha,\mu)
\right]
\ge
c_0K_0\underline\eta\sqrt{\frac{G-2}{N}}.
$ 
Therefore, there exists a $\theta^\star \in \{-1,1\}^M$ so that for $P = P_{\theta^\star}$ the lower bound in Equation \eqref{eqn:lower_bound} is attained, since under the chosen design and $P_\theta$ $R_c(f_{\alpha,\mu},\pi^{\alpha}) = R_\theta(\alpha,\mu)$. 
\paragraph{Step 11: \(P_\theta\) belongs to the model class in the theorem.}
We want to show that for each
\(\theta\in\{-1,+1\}^M\), the law \(P_\theta\) constructed in Step~2 belongs to the
model class in Assumption~\ref{ass:general_regret}. 
The random variables are independent across \(i\), because the noises
\(\xi_i\) are independent and the coordinates
\(\widehat\eta(x_i)^2\) and \(\widehat c(x_i)\) are deterministic. Also Assumption~\ref{ass:general_regret}(i) follows directly by construction. We next verify Assumption~\ref{ass:general_regret}(ii). Let \(Z\sim N(0,1)\). For
\(u\in[0,1/2]\),
$ 
4-4u\in[2,4],
$ 
and hence
$ 
\sqrt{
\mathbb E_{P_\theta}
\left[
\left|
\widehat\tau(x_i)
-
\mathbb E_{P_\theta}[\widehat\tau(x_i)]
\right|^{4-4u}
\right]
}
=
\sqrt{\mathbb E|Z|^{4-4u}}\,
\underline\eta^{\,2-2u}
<\infty.
$ 
Also,
$ 
\sqrt{
\mathbb E_{P_\theta}
\left[
\left|
\widehat\tau(x_i)
-
\mathbb E_{P_\theta}[\widehat\tau(x_i)]
\right|^{6}
\right]
}
=
\sqrt{\mathbb E|Z|^6}\,\underline\eta^3
=
\sqrt{15}\,\underline\eta^3
<\infty.
$ 
Furthermore, because \(\widehat\eta(x_i)^2=\eta(x_i)^2=\underline\eta^2\),
$ 
\mathbb E_{P_\theta}
\left[
\left|
\widehat\eta(x_i)^2-\eta(x_i)^2
\right|^{2-2u}
\right]
=
0$,
$\mathbb E_{P_\theta}
\left[
\left|
\widehat\eta(x_i)^2-\eta(x_i)^2
\right|^{3}
\right]
=
0.
$ 
Likewise, because \(\widehat c(x_i)=c(x_i)=K_0^2\) deterministically,
$ 
\mathbb E_{P_\theta}
\left[
\left|
\widehat c(x_i)-c(x_i)
\right|^{2-2u}
\right]
=
0,
\mathbb E_{P_\theta}
\left[
\left|
\widehat c(x_i)-c(x_i)
\right|^{3}
\right]
=
0.
$ 
Finally, by construction,
$ 
|\tau_\theta(x_i)|
\le
\max\{K_0,\Delta\}
\le
K_0
\le
K
<\infty,
$ 
where we used \(\Delta\le K_0\) (so we can take $B  = \max\{1,K\}$ in Assumption \ref{ass:general_regret}(ii)).  
Consequently, for each \(u\in[0,1/2]\), Assumption~\ref{ass:general_regret}(ii) holds
with the finite constant
$ 
M_u
:=
\max\left\{
\sqrt{\mathbb E|Z|^{4-4u}}\,
\underline\eta^{\,2-2u},
\sqrt{15}\,\underline\eta^3
\right\}, 
$ 
where $\underline{\eta} < \infty$ by construction (therefore implying that we can find a finite constant $M_u < \infty$ for any $u$). 
Finally, Assumption~\ref{ass:general_regret}(iii) follows from the choice
$ 
p(x_i)=\frac1N
=
\frac1{|\mathcal X_N|}.
$ 
Therefore Assumption~\ref{ass:general_regret}(iii) holds with \(\bar p=1\).

\paragraph{Step 12: modification for odd \(G\ge5\).}

For odd \(G\ge5\), set \(M=\lfloor(G-2)/2\rfloor=(G-3)/2\). Then
\(2M+1=G-2\) and \(2M+2=G-1\). The partition in Step~1 becomes
$ 
A_2,\ H_1,\ A_3,\ A_4,\ H_2,\ A_5,\ \dots,\ A_{2M},\ H_M,\ A_{2M+1},\ A_{2M+2},\dots,A_G,
$ 
so the leftover anchors are \(A_{G-1}\) and \(A_G\). All definitions and arguments on the
coded blocks \(B_m=A_{2m}\cup H_m\cup A_{2m+1}\) are unchanged. In
Lemma~\ref{lem:structural_property} and Lemma~\ref{lem:local_good_property}, the only change is that the final summation over leftover anchors is over
\(\{A_{2M+2},\dots,A_G\}=\{A_{G-1},A_G\}\) rather than just \(A_G\); these anchors have
zero benchmark loss when singleton under \(\alpha^\theta\) and nonnegative loss under
\((\widetilde\alpha,\nu_\theta^\star)\), so the same inequalities hold.

The Hamming cube dimension is now \(M=(G-3)/2\), so Step 10 yields the same lower
bound with \(G-3\) in place of \(G-2\). Since \(G\ge5\),
$ 
G-3\ge \frac{G-2}{3}.
$ 
Absorbing the factor \(1/\sqrt3\) into the universal constant gives the same rate
\(c_0K_0\underline\eta\sqrt{(G-2)/N}\). This completes the proof for all \(G\ge4\).

\subsection{Proof of Theorem \ref{thm:model_selection}} \label{proof:thm:model_selection} 

\paragraph{Step 1 and 2} For each \(j\in\{1,\dots,J\}\),
$
V_j:=R_c(\hat f_j,\hat\pi_j),
\quad
\widehat V_j:=\widehat R_{oos}(\hat f_j,\hat\pi_j),
\quad
O_j:=R_c(f_j^\star,\pi_j^\star), \quad 
R_{all}^\star
:=
\min_{f\in\bar{\mathcal F}}
\sum_{x\in\mathcal X} p(x)\bigl(f(x)-\tau(x)\bigr)^2.
$ 
As the first step, fix any \(j\in\{1,\dots,J\}\). Since \(\hat j\) minimizes the validation criterion,
$
\widehat V_{\hat j}\le \widehat V_j
$, which implies  
$$  
\footnotesize 
\begin{aligned} 
V_{\hat j}-R_{all}^\star
& =
(V_{\hat j}-\widehat V_{\hat j})
+
(\widehat V_{\hat j}-\widehat V_j)
+
(\widehat V_j-V_j)
+
(V_j-O_j)
+
(O_j-R_{all}^\star)
\\ & \le
(V_{\hat j}-\widehat V_{\hat j})
+
(\widehat V_j-V_j)
+
(V_j-O_j)
+
(O_j-R_{all}^\star) \le
2\max_{1\le \ell\le J} |\widehat V_\ell-V_\ell|
+
(V_j-O_j)
+
(O_j-R_{all}^\star).
\end{aligned} 
$$ 
Taking expectations and infimum over \(j\) (since the inequality holds for any $j$),
\begin{equation} \label{eqn:modelsel_decomp}
\footnotesize 
\begin{aligned} 
\mathcal R_c
\le
2\,\mathbb E\!\left[\max_{1\le \ell\le J} |\widehat V_\ell-V_\ell|\right]
+
\inf_{1\le j\le J}
\left\{
\mathbb E[V_j-O_j]+(O_j-R_{all}^\star)
\right\}.
\end{aligned} 
\end{equation}

\paragraph{Step 3: Approximation error bound} We first bound the approximation term \(O_j-R_{all}^\star\). By definition,
$
O_j
=
R_c(f_j^\star,\pi_j^\star)
=
L_{\mathcal F_j}^{\mathrm{loc}}(\pi_j^\star)+A_c(\pi_j^\star).
$ 
Also, since \(\pi_j^\star(x)\in\{0,1\}\),
$$  
\footnotesize 
\begin{aligned} 
L_{\bar{\mathcal F}}^{\mathrm{loc}}(\pi_j^\star)
=
\min_{f\in\bar{\mathcal F}}
\sum_{x\in\mathcal X} p(x)\bigl(f(x)-\tau(x)\bigr)^2\pi_j^\star(x)
\le
\min_{f\in\bar{\mathcal F}}
\sum_{x\in\mathcal X} p(x)\bigl(f(x)-\tau(x)\bigr)^2
=
R_{all}^\star.
\end{aligned} 
$$ 
Therefore
$ 
O_j-R_{all}^\star
\le
L_{\mathcal F_j}^{\mathrm{loc}}(\pi_j^\star)
-
L_{\bar{\mathcal F}}^{\mathrm{loc}}(\pi_j^\star)
+
A_c(\pi_j^\star)
=
\Delta_j^{\mathrm{loc}}(\pi_j^\star)+A_c(\pi_j^\star).
$ 

\paragraph{Step 4: Estimation error bound} Next, for the estimation error term, Theorem \ref{thm:regret} applied to the training sample with class pair \((\mathcal F_j,\Pi_j)\) yields
$ 
\mathbb E[V_j-O_j]
=
\mathbb E\!\left[
R_c(\hat f_j,\hat\pi_j)-\min_{\pi\in\Pi_j,\ f\in\mathcal F_j}R_c(f,\pi)
\right]
\le
c_0\sqrt{\tilde M_u\,\frac{C(\mathcal F_j)+v_{\Pi_j}}{|\mathcal X|}},
$ 
with $c_0$ as bounded in Theorem \ref{thm:regret}. 

\paragraph{Step 5: model selection bound} It remains to control the model-selection term
$ 
\mathbb E\!\left[\max_{1\le \ell\le J} |\widehat V_\ell-V_\ell|\right].
$ 
To do so, we use the law of iterated expectations where we first condition on the training sample. 
Conditional on the training sample, each candidate \((\hat f_\ell,\hat\pi_\ell)\) is fixed, and the randomness in \(\widehat V_\ell\) comes only from the validation sample. Writing
$
\widehat V_\ell-V_\ell
=
\sum_{i=1}^{|\mathcal X|} p(x_i) \xi_{\ell,i},
$ 
where
$$  
\footnotesize 
\begin{aligned} 
\xi_{\ell,i}
:=
\Big(
\big[(\hat f_\ell(x_i)-\hat\tau_{oos}(x_i))^2-\hat\eta_{oos}(x_i)^2-(\hat f_\ell(x_i)-\tau(x_i))^2\big]\hat\pi_\ell(x_i)
+
(\hat c_{oos}(x_i)-c(x_i))(1-\hat\pi_\ell(x_i))
\Big),
\end{aligned} 
$$ 
we note that \(\mathbb E[\xi_{\ell,i}\mid (\hat f_j,\hat\pi_j)_{j=1}^J ]=0\) by Assumption \ref{ass:general_regret}(i) (which also applies to the out-of-sample estimates), and independence of the in-sample and out-of-sample estimates. By Lemma \ref{lem:centered_square}, together with Assumption \ref{ass:general_regret}(ii) (both applied to the out-of-sample estimates), combined with the uniform bounded $K$ on $f \in \mathcal{F}$, there exists a universal constant \(c_1<\infty\) such that\footnote{
To see this, condition on \((\hat f_j,\hat\pi_j)_{j=1}^J\). Then
\(\hat f_\ell(x_i)\) and \(\hat\pi_\ell(x_i)\) are fixed, with
\(|\hat f_\ell(x_i)|\le K\) and \(\hat\pi_\ell(x_i)\in\{0,1\}\). Expanding the square gives
$ 
(\hat f_\ell(x_i)-\hat\tau_{oos}(x_i))^2
-\hat\eta_{oos}(x_i)^2
-(\hat f_\ell(x_i)-\tau(x_i))^2
=
\big(\hat\tau_{oos}(x_i)^2-\mathbb E[\hat\tau_{oos}(x_i)^2]\big)
-2\hat f_\ell(x_i)\big(\hat\tau_{oos}(x_i)-\tau(x_i)\big)
-\big(\hat\eta_{oos}(x_i)^2-\eta(x_i)^2\big),
$ 
since
\(\mathbb E[\hat\tau_{oos}(x_i)^2]=\tau(x_i)^2+\eta(x_i)^2\).
Using \(\hat\pi_\ell(x_i)\in\{0,1\}\), \((1-\hat\pi_\ell(x_i))\in\{0,1\}\), and
\((a_1+\cdots+a_4)^2\le 4(a_1^2+\cdots+a_4^2)\), we obtain
\[
|\xi_{\ell,i}|^2
\le c_0\Big(
\big|\hat\tau_{oos}(x_i)^2-\mathbb E[\hat\tau_{oos}(x_i)^2]\big|^2
+
K^2|\hat\tau_{oos}(x_i)-\tau(x_i)|^2
+
|\hat\eta_{oos}(x_i)^2-\eta(x_i)^2|^2
+
|\hat c_{oos}(x_i)-c(x_i)|^2
\Big)
\]
for a universal constant \(c_0<\infty\). By Lemma \ref{lem:centered_square} with
\(u=0\), the first term has expectation bounded by
\(cB^3(M_0+M_0^2)\). By Assumption \ref{ass:general_regret}(ii), the remaining terms have expectations bounded by constants times
\(K^2M_0\), \(M_0\), and \(M_0\), respectively. Since \(M_0\le M_0+M_0^2\), this yields
$ 
\mathbb E\big[|\xi_{\ell,i}|^2\mid(\hat f_j,\hat\pi_j)_{j=1}^J\big]
\le
c_1(1+B^3+K^2)(M_0+M_0^2),
$ 
for a universal constant \(c_1<\infty\). 
}, 
$ 
\mathbb E\big[|\xi_{\ell,i}|^2 \mid (\hat f_j,\hat\pi_j)_{j=1}^J \big]
\le
c_1 (1 + B^3 +  K^2)  (M_0 + M_0^2).
$ 
Hence,  using independence of the validation sample across $i$ 
$
\mathbb E\big[|\widehat V_\ell-V_\ell|^2 \mid (\hat f_j,\hat\pi_j)_{j=1}^J  \big] = \sum_{i=1}^{N_{\mathcal{X}}} p(x_i)^2 \mathbb E\big[|\xi_{\ell,i}|^2 \mid (\hat f_j,\hat\pi_j)_{j=1}^J \big]
\le
c_1(1 + B^3 + K^2)  \frac{(M_0 + M_0^2) \bar{p} }{|\mathcal X|} 
$
Using the basic inequality 
$ 
\max_{1\le \ell\le J} |a_\ell|
\le
\left(\sum_{\ell=1}^J |a_\ell|^2\right)^{1/2},
$  and
Jensen's inequality  
$
\mathbb E\!\left[\max_{1\le \ell\le J} |\widehat V_\ell-V_\ell| \mid (\hat f_j,\hat\pi_j)_{j=1}^J  \right] \le
\left(
\sum_{\ell=1}^J
\mathbb E\big[|\widehat V_\ell-V_\ell|^2 \mid (\hat f_j,\hat\pi_j)_{j=1}^J \big]
\right)^{1/2}$ \\ 
$\le
\sqrt{\bar{p} c_1 (1 + B^3 + K^2)} J^{1/2} \sqrt{\frac{M_0 + M_0^2}{N_{\mathcal{X}}}} \  
$ 
for a universal constant $c_1 < \infty$.

\subsection{Proof of Theorem \ref{thm:breakevena}} \label{proof:thm:breakevena}

Before deriving the theorem, we introduce some notation. Let 
$ 
\xi_x^{(m)}
=
(\frac{\hat{\eta}_{oos}^2(x)}{m+1} - \omega_m(x)) (1 - \hat\pi^{(m)}(x))
$ 
and similarly denote 
$ 
q_m^2 := |\mathcal X|\sum_{x\in\mathcal X} p(x)^2
\mathbb{V}\!\left(\xi_x^{(m)}\mid \mathcal{T}\right)  
$ 
Note that because $p(x)^2$ is of order $1/N_{\mathcal{X}}^2$, we expect $q_m^2 = O(1)$. 
We can write 
$ 
\widehat{\Delta}_\eta(m) = \sum_x p(x) \xi_x^{(m)}, \Delta_\eta(m)
:= \sum_x p(x) \Big\{\frac{\eta(x)^2}{m+1} - \omega_m(x)\Big\}(1 - \hat\pi^{(m)}(x))
.
$ 
Note that because $\mathbb{E}[\hat{\Delta}_\eta(m) | \mathcal{T}] = \Delta_\eta(m)$. To establish asymptotic normality, we first derive the following lemma. 

\begin{lem}[Marginal asymptotic normality of the break-even statistic] \label{prop:clt_break_evena} Let Assumptions \ref{ass:general_regret}, \ref{ass:entropy_F} hold.
Take any $m \in \mathcal{M}$ and uniformly bounded $\omega_m(x)$ which are both $\mathcal{T}$-measurable. Let $q_m^2 > \underline{l}$ for a constant $\underline{l} > 0$ almost surely.  Then
$ 
\frac{\sqrt{|\mathcal X|} \Big(\widehat\Delta_\eta(m)-\mathbb{E}[\widehat{\Delta}_\eta(m) | \mathcal{T}]\Big)}{q_m} 
\overset{d}{\longrightarrow}N(0,1).
$ 
\end{lem}

\begin{proof}[Proof of Lemma \ref{prop:clt_break_even}]
We will write $\xi_x := \xi_x^{(m)}$ whenever clear from the context. 
By Assumption \ref{ass:general_regret} (independence), and independence of the training and out-of-sample estimates, the random variables 
$ 
p(x)\Big(\xi_x-\mathbb E[\xi_x\mid \mathcal{T}]\Big),
$ 
are conditionally independent and centered given $\mathcal{T}$. Therefore, we establish the central limit theorem using Lyapunov central limit theorem conditional on $\mathcal{T}$. 
We first verify a uniform third-moment bound. Since $\omega_m(x)$ is uniformly bounded by assumption, by Assumption \ref{ass:general_regret}(ii), there exists a finite constant $c_0 < \infty$, so that 
$ 
\mathbb E\Big[\big|\xi_x-\mathbb E[\xi_x\mid \mathcal{T}]\big|^3
\mid \mathcal{T}\Big]
\le
c_1 < \infty. 
$ 
By Assumption \ref{ass:general_regret}(iii),
$
\sum_{x\in\mathcal X} p(x)^3
\le
\max_{x\in\mathcal X}p(x)\sum_{x\in\mathcal X}p(x)^2
\le
\frac{\bar p}{|\mathcal X|}\cdot \frac{\bar p}{|\mathcal X|}
=
\frac{\bar p^2}{|\mathcal X|^2}.
$
Therefore
$ 
\sum_{x\in\mathcal X}
\mathbb E\Big[
\big|p(x)\big(\xi_x-\mathbb E[\xi_x\mid \mathcal{T}]\big)\big|^3
\mid \mathcal{T}
\Big]
\le
c_1
\sum_{x\in\mathcal X} p(x)^3
\le
c_1\bar p^2 |\mathcal X|^{-2}.
$ 
Since \(q_m^2 > \underline{l}\) for a positive constant $\underline{l} > 0$ by assumption, it follows that
$
\frac{
|\mathcal{X}|^{3/2} \sum_{x\in\mathcal X}
\mathbb E\Big[
\big|p(x)\big(\xi_x-\mathbb E[\xi_x\mid \mathcal{T}]\big)\big|^3
\mid \mathcal{T}
\Big]
}{
q_m^{3}
}
\longrightarrow 0.
$ 
Thus the Lyapunov condition holds conditionally on $\mathcal{T}$. Therefore, it follows that as we define $Z_{|\mathcal{X}|} = \frac{\sqrt{|\mathcal X|} \Big(\widehat\Delta_\eta(m)-\mathbb{E}[\widehat{\Delta}_\eta(m) | \mathcal{T}]\Big)}{q_m}$, we have $P(Z_{|\mathcal{X}|} \le t|\mathcal{T}) \rightarrow \Phi(t)$ where $\Phi(t)$ is the Gaussian CDF. Integrating over \(\mathcal T\) over both sides and applying the dominated convergence theorem gives the
unconditional convergence stated in the lemma.
\end{proof}

\begin{lem} \label{prop:breakeven_fixed_ma}
Let Assumptions \ref{ass:general_regret}, \ref{ass:entropy_F} hold, and let \(\alpha\in(0,1/2)\). Let \(Q_m^2\) be any (possibly random) bound such that
$ 
\lim_{N_{\mathcal{X}} \rightarrow \infty} P\!\left(q_m^2 \le Q_m^2\right) = 1,
$ and $q_m^2 > \underline{l} > 0$
for any $m \in \mathcal{M}$, with $|\mathcal{M}| < \infty$.  Let $\omega_m(x)$ be uniformly bounded and $\mathcal{T}$-measurable. 
Then for any $m \in \mathcal{M}$ which is $\mathcal{T}$-measurable, 
$ 
\liminf_{|\mathcal X|\to\infty}
\Pr\!\left(
\Delta_\eta(m)
\le
\widehat\Delta_\eta(m)+\frac{Q_m\,\Phi^{-1}(1-\alpha)}{\sqrt{|\mathcal X|}}
\right)
\ge
1-\alpha.
$ 
\end{lem}

\begin{proof}
Write
$
z_{1-\alpha}:=\Phi^{-1}(1-\alpha)>0.
$ 
First note that $\mathbb{E}[\widehat{\Delta}_\eta(m) | \mathcal{T}] = \Delta_\eta(m)$ by sample splitting, with $\widehat{\Delta}_\eta(m)$ defined in Equation \eqref{eqn:delta_hat_f}. By Lemma \ref{prop:clt_break_even},
$ 
\frac{\sqrt{|\mathcal X|}\bigl(\Delta_\eta(m)-\widehat\Delta_\eta(m)\bigr)}{q_m}
\overset{d}{\longrightarrow} N(0,1),
$ 
for any $m$ which is $\mathcal{T}$-measurable. 
Let
$ 
A_{N_{\mathcal{X}}}:=\{Q_m \ge q_m\}, 
$ 
and $A_{N_{\mathcal{X}}}^c$ its complement. Similarly, denote \\ $B_{N_{\mathcal{X}}} := \left\{\Delta_\eta(m) \le \widehat\Delta_\eta(m)+\frac{q_mz_{1-\alpha}}{\sqrt{|\mathcal X|}} \right\}$ and $B_{N_{\mathcal{X}}}^c$ its complement. 
The assumption
$ 
P\!\left(Q_{m'}^2 \ge q_{m'}^2\right)\to 1
$ 
for any $m' \in \mathcal{M}$, and the fact that $|\mathcal{M}|$ implies that 
$P\!\left(\cup_{m' \in \mathcal{M}}\{Q_{m'}^2 < q_{m'}^2\}\right) \le \sum_{m' \in \mathcal{M}} P\!\left(\{Q_{m'}^2 < q_{m'}^2\}\right)\to 0$, which 
in turn implies 
$ 
P(A_{N_{\mathcal{X}}}) \to 1.
$ 
Define 
$
E_{N_{\mathcal{X}}}
:=
\left\{
\Delta_\eta(m)
\le
\widehat\Delta_\eta(m)+\frac{Q_m z_{1-\alpha}}{\sqrt{|\mathcal X|}}
\right\}.
$ 
It follows 
$ 
E_{N_{\mathcal{X}}}
\supseteq
B_{N_{\mathcal{X}}}
\cap
A_{N_{\mathcal{X}}}.
$ 
Therefore, because $P(E) \ge P(B \cap A) = P(B) - P(B \cap A^c) \ge P(B) - P(A^c)$ and 
taking \(\liminf\) on both sides gives
$ 
\liminf_{|\mathcal X|\to\infty}\Pr(E_{N_{\mathcal{X}}})
\ge
\liminf_{|\mathcal X|\to\infty}
\Pr\!\left(
B_{N_{\mathcal{X}}}\right) - \limsup_{N_{\mathcal{X}}} P(A_{N_{\mathcal{X}}}^c) =
1-\alpha,
$ 
where the first term converges to \(1-\alpha\) by asymptotic normality, and the second term converges to \(0\) by assumption.  
\end{proof}

\paragraph{Proof of Theorem \ref{thm:breakevena}}  Write the ordered grid as
$
\mathcal M=\{m_1<\cdots<m_{|\mathcal M|}\},
$ 
and define
$ 
T_j:=\widehat\Delta_\eta(m_j)+\Phi^{-1}(1-\alpha)\frac{Q_{m_j}}{\sqrt{|\mathcal X|}},
U_j:=\max_{l\ge j} T_l.
$ 
By construction, \(U_j\) is weakly decreasing in \(j\), and
$ 
\hat j^\star=\min\{j:U_j<0\},
\hat m^\star=m_{\hat j^\star},
$ 
with the convention \(\hat j^\star=\infty, m_{\infty} = \infty\) if the set is empty. Let
$ 
\mathcal J_0:=\{j\in\{1,\dots,|\mathcal M|\}:\Delta(m_j)\ge 0\}.
$ 
If \(\mathcal J_0=\varnothing\), then \(\Delta_\eta(m_j)<0\) for every \(j\), and the result is immediate. Suppose therefore that \(\mathcal J_0\neq\varnothing\), and define
$ 
j_0:=\max \mathcal J_0.
$ 
Thus \(j_0\) is the last grid point for which the population loss difference is nonnegative. Note that $\mathcal{J}_0$ and so $j_0$ is $\mathcal{T}$-measurable. Consider the event
$ 
\{\hat m^\star<\infty,\ \Delta_\eta(\hat m^\star)\ge 0\}.
$ 
If this event occurs, then necessarily \(\hat j^\star\in \mathcal J_0\), and hence \(\hat j^\star\le j_0\). Since \(U_{\hat j^\star}<0\) by definition of \(\hat j^\star\), we have
$ 
T_l<0 \text{ for all } l\ge \hat j^\star.
$ 
In particular, because \(j_0\ge \hat j^\star\),
$ 
T_{j_0}<0 \Rightarrow \widehat\Delta(m_{j_0})+\Phi^{-1}(1-\alpha)\frac{Q_{m_{j_0}}}{\sqrt{|\mathcal X|}}<0. 
$
Since \(\Delta_\eta(m_{j_0})\ge 0\), it follows that
$ 
\{\hat m^\star<\infty,\ \Delta_\eta(\hat m^\star)\ge 0\}
\subseteq
\left\{
\Delta_\eta(m_{j_0})
>
\widehat\Delta_\eta(m_{j_0})+\Phi^{-1}(1-\alpha)\frac{Q_{m_{j_0}}}{\sqrt{|\mathcal X|}}
\right\}.
$ 
Therefore, 
$ 
\Pr\!\left(
\hat m^\star<\infty,\ \Delta_\eta(\hat m^\star)\ge 0
\right)
\le
\Pr\!\left(
\Delta_\eta(m_{j_0})
>
\widehat\Delta_\eta(m_{j_0})+\Phi^{-1}(1-\alpha)\frac{Q_{m_{j_0}}}{\sqrt{|\mathcal X|}}
\right).
$ 
Lemma \ref{prop:breakeven_fixed_ma} applied at \(m=m_{j_0}\) (where $m_{j_0}$ is measurable with respect to $\mathcal{T}$ since $j_0$ is $\mathcal{T}$-measurable), yields 
$ 
\limsup_{|\mathcal X|\to\infty}
\Pr\!\left(
\Delta_\eta(m_{j_0})
>
\widehat\Delta_\eta(m_{j_0})+\Phi^{-1}(1-\alpha)\frac{Q_{m_{j_0}}}{\sqrt{|\mathcal X|}}
\right)
\le
\alpha.
$

\subsection{Proof of Proposition \ref{prop:complexity}} \label{proof:prop:complexity}

\paragraph{Exactness.}
We prove the claim by induction on the tree depth \(L\).

For the base case \(L=1\), the tree consists of a single split and two terminal leaves.
Algorithm~\ref{alg:alg3} enumerates all admissible coordinate-threshold splits and all
admissible assignments of the two leaves to prediction or to the basin of ignorance.
For each such configuration, it evaluates the empirical criterion
\(\widehat R(f,\pi)\) exactly. Hence, when \(L=1\), the algorithm minimizes \(\widehat R(f,\pi)\) exactly over
all admissible trees of depth \(1\).

Now suppose the claim holds for trees of depth \(\ell-1\), and consider a tree of depth
\(\ell\). Any such tree is obtained by choosing a root split and then attaching a left
subtree and a right subtree, each of depth \(\ell-1\). Once the root split is
fixed, the empirical criterion decomposes as the sum of the contributions from the
observations sent to the left and right child nodes. Therefore, conditional on the root
split, minimizing \(\widehat R(f,\pi)\) over trees of depth \(\ell\) reduces to two
independent minimization problems, one on the left node and one on the right node. Because the problem on the left and right node consist of solving the optimal tree of depth $\ell - 1$, 
by the induction hypothesis, the recursive calls in Algorithm~\ref{alg:alg3} solve
these two subproblems exactly. It follows that, for every admissible root split, the
algorithm computes the exact minimizer among all trees of depth \(\ell\) having that
root split. Since the algorithm enumerates all admissible root splits, it returns a global minimizer.

\paragraph{Complexity.}  
We follow the same recursive argument. For \(L=1\), the
algorithm considers all admissible one-split trees. There are \(d\) features and at most
\(|\mathcal X|\) relevant thresholds for each feature, so the number of candidate splits
is of order \(d|\mathcal X|\). Evaluating all candidate splits at the root has total cost
\(O(d|\mathcal X|^2)\), since computing the corresponding left and right leaf losses over
all candidate splits can be done by summing over at most \(|\mathcal X|\) observations for
each split. Hence the total complexity when \(L=1\) is
$ 
O(d|\mathcal X|^2).
$ Now suppose that, for depth \(\ell-1\), the algorithm has complexity
$ 
O\!\left(2^{\ell-2}d^{\ell-1}|\mathcal X|^{\ell}\right).
$ 
Consider depth \(\ell\). The algorithm first enumerates all admissible root splits. The
number of such choices is of order \(d|\mathcal X|\), while the total local cost of
evaluating the splits at the root is \(O(d|\mathcal X|^2)\). For each root split, the
algorithm makes one recursive call on the left child node and one recursive call on the
right child node. Each such recursive call has depth at most \(\ell-1\), so by the
induction hypothesis its cost is bounded by
$ 
O\!\left(2^{\ell-2}d^{\ell-1}|\mathcal X|^{\ell}\right).
$ 
Therefore the total complexity at depth \(\ell\) is bounded by
$ 
O(d|\mathcal X|^2)
+
O(d|\mathcal X|)\cdot
2\,
O\!\left(2^{\ell-2}d^{\ell-1}|\mathcal X|^{\ell}\right).
$ 
Solving this recursion gives a complexity of order 
$ 
O\!\left(2^{\ell-1}d^\ell|\mathcal X|^{\ell+1}\right).
$ 
Applying this with \(\ell=L\) gives the complexity bound
$ 
O\!\left(2^{L-1}d^L|\mathcal X|^{L+1}\right) \le O\!\left(2^{L+1} d^{L+1}|\mathcal X|^{L+1}\right) .
$ 
In particular, for fixed \(L\) the algorithm is polynomial in \((d,|\mathcal X|)\),
and since \(L=\log_2(G-1)\), it is quasi-polynomial in \((d,|\mathcal X|,G)\).

\newpage

\section{Additional tables, figures and comparison}

\subsection{Out-of-sample comparisons with standard benchmarks} \label{sec:oos_comparisons}

We compare our method with standard approaches for estimating treatment-effect heterogeneity, including generalized random forest, CART, and Bayes procedures (with CART and linear regression models) without abstention option in our application. 
Let $\hat\tau_{ih}$ denote the estimated effect for type $i$ and outcome $h$, and let $\hat\eta_{ih}^2$ denote its estimated sampling variance. First, for generalized random forests, we fit a separate default regression forest for each outcome $h$, using the covariate vector $x_i$ to predict $\hat\tau_{ih}$. Second, for CART, we estimate a pooled-site regression tree separately for each outcome using the original covariates augmented with site dummies, with precision weights $1/\hat\eta_{ih}^2$. Third, we consider a CART-centered empirical-Bayes estimator that assumes
$ 
\hat\tau_{ih}\mid \theta_{ih}\sim N(\theta_{ih},\hat\eta_{ih}^2),  
\theta_{ih}\mid x_i,s(i)\sim N\!\bigl(\hat t_h(x_i),\sigma_{s(i)h}^2\bigr),
$ 
where $s(i)$ indexes the site of type $i$. For each training fold and outcome $h$, we estimate the site-specific prior variance $\sigma_{sh}^2$ by maximizing the Gaussian marginal likelihood and $\hat t$ is the CART mean estimate.
The resulting posterior mean is
$ 
\hat\theta^{EB}_{ih}
=
\hat t_h(x_i)+w_{ih}\bigl(\hat\tau_{ih}-\hat t_h(x_i)\bigr),
w_{ih}
=
\frac{\hat\sigma_{s(i)h}^2}{\hat\sigma_{s(i)h}^2+\hat\eta_{ih}^2}, 
$ whose expected loss can be readily computed out-of-sample as a  function of the out-of-sample loss of $\hat t$, variance of $\hat{\tau}_{ih}$ and shrinkage factor $w_{ih}$. Finally, we include an empirical-Bayes ridge benchmark with covariates that includes the site dummies, individual level characteristics and their interactions. We estimate the ridge penalty on the coefficient fold by fold by empirical Bayes under a Gaussian prior and likelihood.\footnote{Specifically, we assume an heteroskedastic Gaussian sampling model
$ 
y_i \mid \alpha,\beta \sim N(\alpha+z_i^\top\beta,\,\hat{\eta}_i^2),
\beta \sim N(0,\tau^2 I),
$ 
with the intercept left unpenalized. Writing the ridge penalty as $1/\tau^{2}$, we choose it by minimizing the negative log marginal likelihood. 
We then fit a separate ridge predictor for each outcome $h$, predict out of sample outcome by outcome, and average the resulting variance-corrected losses across outcomes.} Figure~\ref{fig:comparison_grf} reports the benchmark loss minus the loss of the ignorance-aware tree on the prediction set, separately by baseline poverty-score quartile. Positive values therefore indicate that the ignorance-aware tree has lower held-out loss than the benchmark. The estimated loss differences are positive and significant across all quartiles (with only one exception of higest poverty score for ridge, where effects are positive but noisy). Intuitively, GRF, CART, and empirical Bayes are useful methods for estimating heterogeneous effects, but they are designed to force predictions for all types and use a different objective criterion than the one we introduce for the archetype discovery problem with abstention.

Figure \ref{fig:comparison_k_means} report the comparison with $k$-means using ten and twenty clusters estimated via dynamic programming. The figure illustrate large MSE of clustering with many clusters due to potential overfitting of a large number of clusters, with MSE about 35 times larger than our method. In addition, standard clustering without abstention may misclassify units in the basin of ignorance, as reported in the right-side panel; the panel counts the share of pairs assigned to the same cluster where only one of the two element is not in the basin oignorance. It shows that several units are assigned to archetypes even if they should be classified as part of the basin of ignorance.

\begin{figure}[!ht]
\centering
\includegraphics[scale = 0.35]{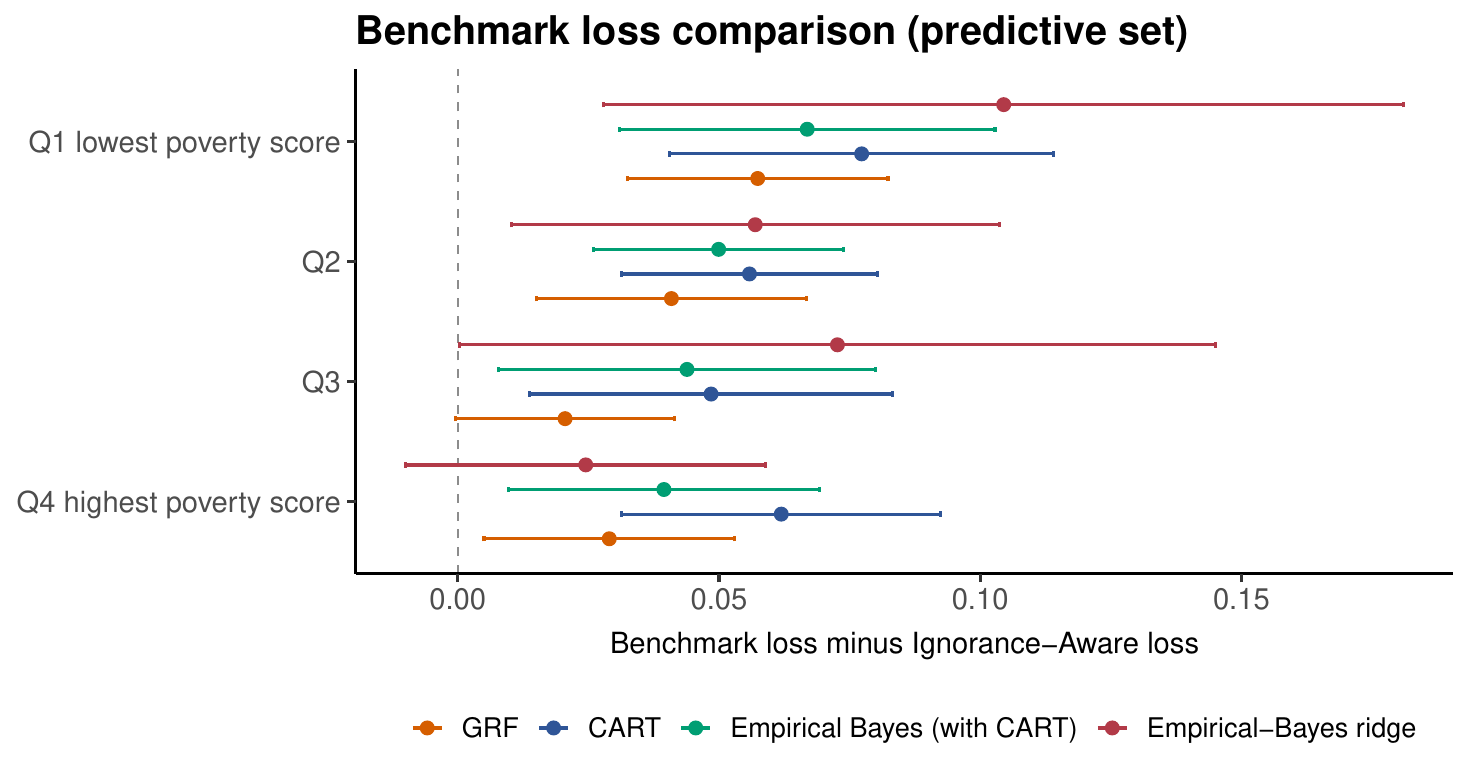}
\caption{\textbf{Out-of-sample comparisons for the depth-4 ignorance-aware tree at \(m=3\), shown separately by baseline poverty-score quartile.} Each point reports the mean held-out loss difference on the prediction set between a benchmark method and the ignorance-aware tree. The benchmarks are \(\text{GRF},\text{CART},\text{Empirical Bayes with CART}, \text{Empirical Bayes Ridge}\). Positive values indicate that the ignorance-aware tree has lower predictive-set loss than the benchmark. Horizontal bars are \(90\%\) confidence intervals.}
\label{fig:comparison_grf}
\end{figure}

 \subsection{Additional figures and tables}

\begin{figure}[!ht]
\centering 
\includegraphics[scale = 0.5]{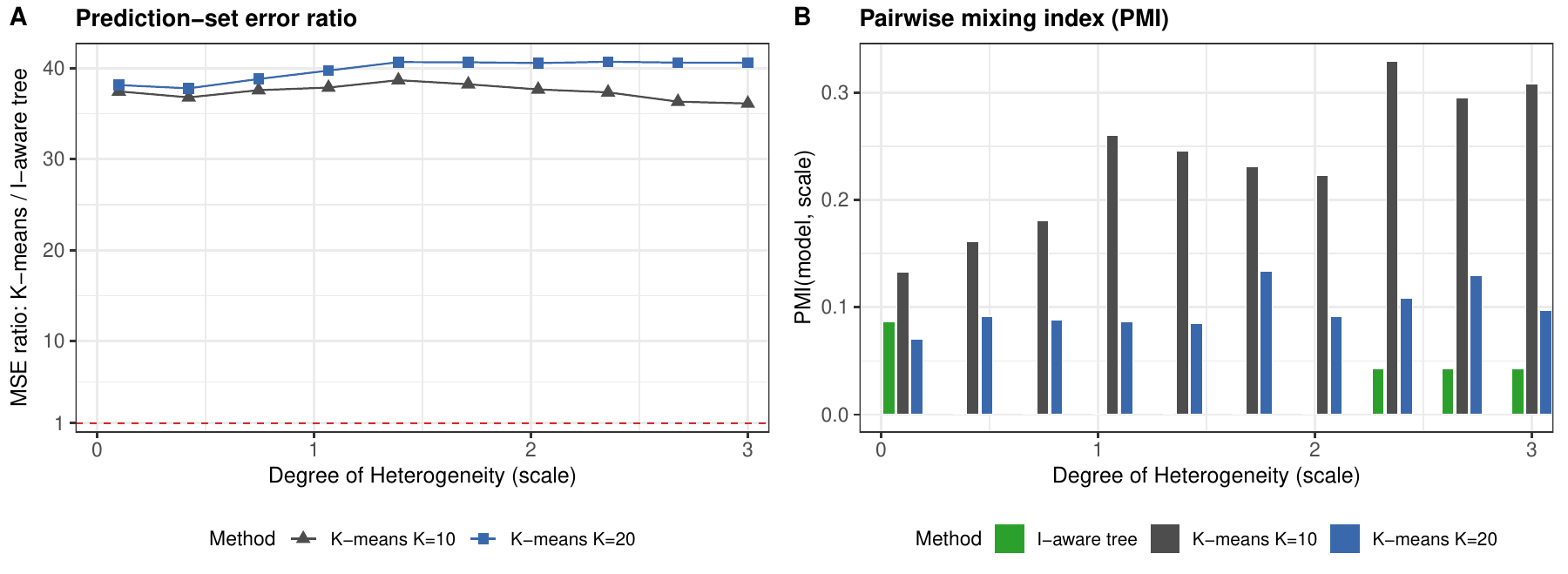}
\caption{\textbf{Comparison to k-means clustering without abstention with ten and twenty clusters}. Comparison with $k$-means clustering on the estimated CATE estimated exactly via dynamic programming with ten or twenty clusters in calibrated simulations. The pairwise mixing index counts  the number of pairs across clusters which are assigned to the same clusters of which only one of the units in the pair is (under the DGP) in the true basin of ignorance. It is computed formally as $\sum_c (\#\{\text{ignorance unit in } c\}\#\{\text{non-ignorance unit in } c\}/I_{tot}N_{tot}$ where $I_{tot}$ is the overall number of units in the basin of ignorance and $N_{tot}$ is the overall number of units outside the basin.} \label{fig:comparison_k_means}
\end{figure} 

\begin{figure}[!ht]
\centering
\includegraphics[scale = 0.4]{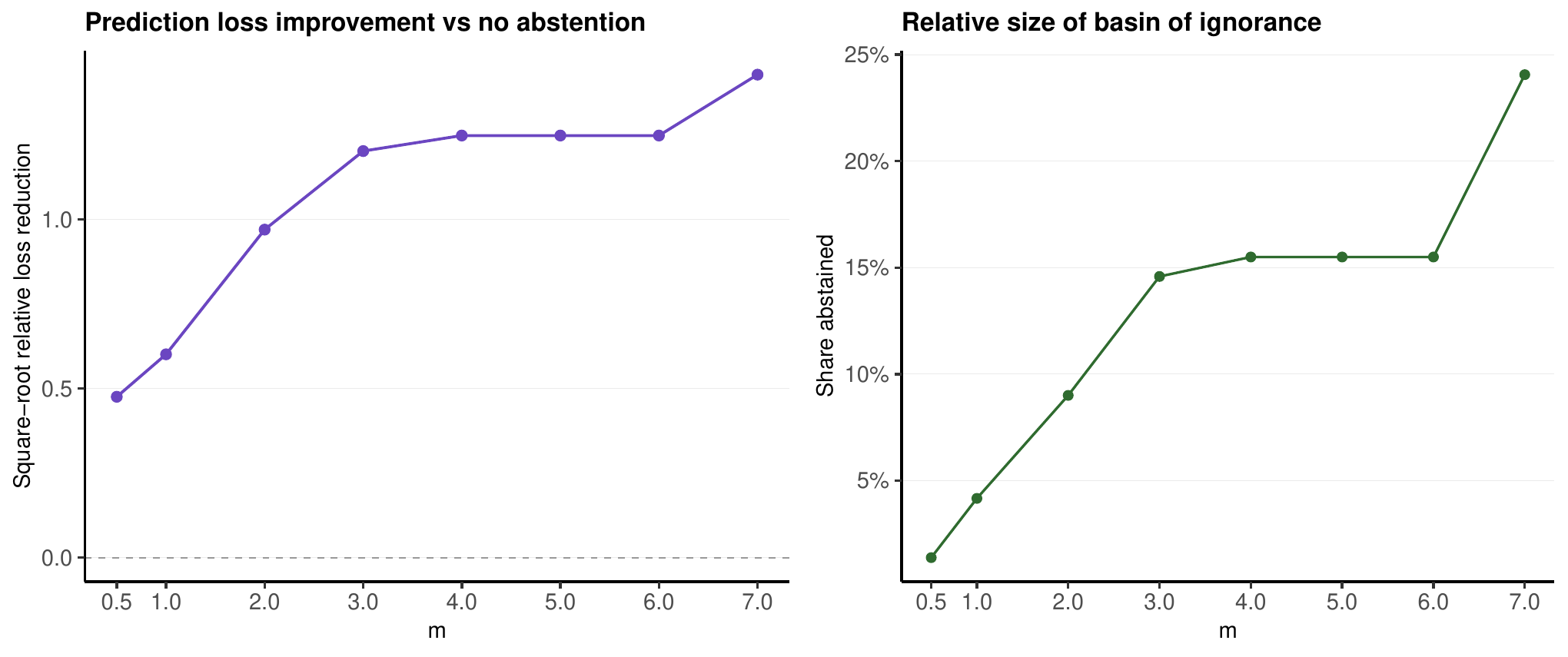}
\caption{\textbf{In-sample prediction-loss improvement relative to no abstention and size of the basin of ignorance.}
The left panel reports the prediction-loss improvement of the depth-four abstaining tree relative to the global no-abstention tree, separately for each value of the abstention-cost parameter \(m\). The plotted statistic is  
$ 
\sqrt{\frac{\widehat L_{\mathrm{no\ abstain,full}}-\widehat L_m}
{\widehat L_{\mathrm{no\ abstain,full}}}},
$ 
where \(\widehat L_m\) denotes the prediction loss of the abstaining tree at value \(m\), and \(\widehat L_{\mathrm{no\ abstain,full}}\) denotes the prediction loss of the no-abstention benchmark. Positive values therefore indicate lower prediction loss for the abstaining tree. The right panel reports the relative size of the basin of ignorance at each \(m\), weighting each estimated type by the number of matched units used to construct the corresponding type-level estimate \(\hat\tau(x)\).}
\label{fig:estimated_tree}
\end{figure}


\begin{figure}[!ht]
\centering 
\includegraphics[scale = 0.4]{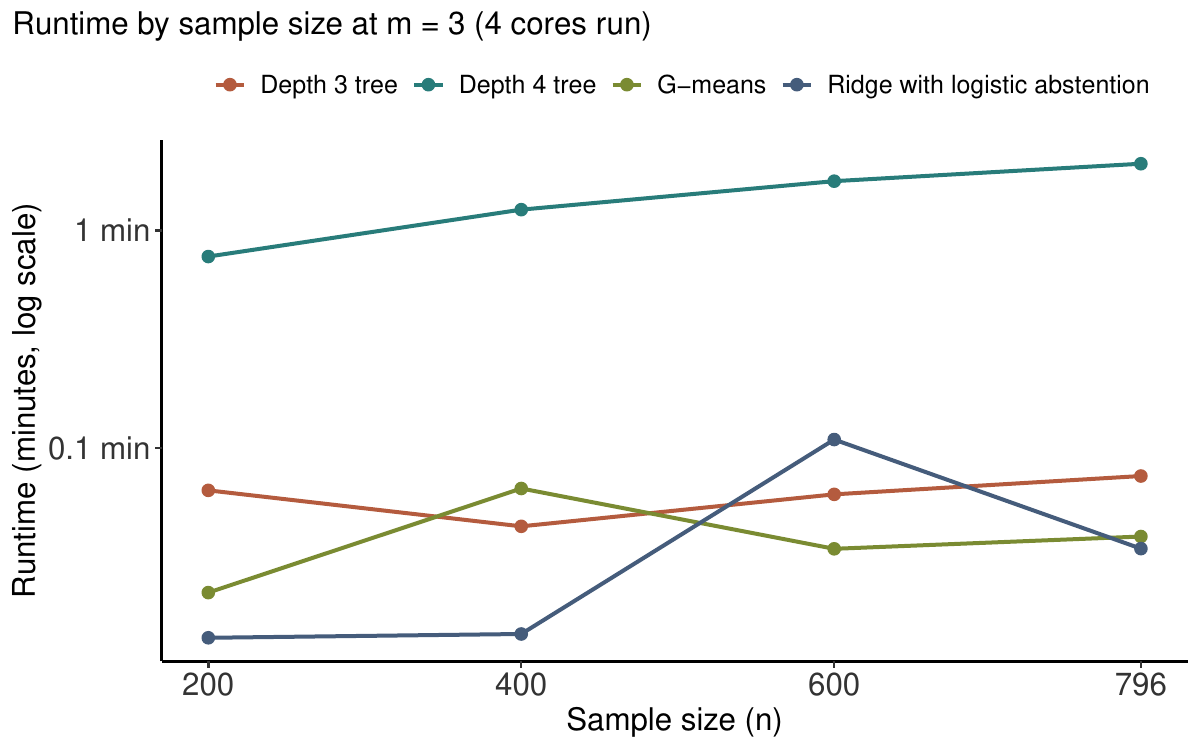}
\caption{Runtime results for $m=3$ were generated on a Mac Studio (Mac14,14) running macOS 26.3.1, with an Apple M2 Ultra processor, 24 CPU cores (16 performance and 8 efficiency), 192 GB unified memory, and arm64 architecture. The method only uses 4 of the 24 CPU cores available. 
Depth 3 and Depth 4 denote the exact tree search computed under Algorithm \ref{alg:alg3}, with depths 3 and 4 corresponding to 8 and 16 terminal groups, respectively. G-means clustering denotes the approximate G-means procedure with 16 groups and a maximum of 200 gradient iterations as in Algorithm \ref{alg:gmeans_abstention}. Ridge with logistic abstention denotes the differentiable abstention-ridge estimator with site-by-covariate interaction features, and a maximum of 200 gradient iterations.}
\label{fig:runtime} 
\end{figure}

\begin{figure}[!ht]
\centering
\scalebox{0.7}{\begin{tikzpicture}[
  node distance=7mm and 10mm,
  >=Latex,
  every node/.style={font=\small},
  block/.style={draw, rounded corners, align=center, minimum width=40mm, minimum height=7mm},
  decision/.style={draw, diamond, aspect=2, align=center, inner sep=1pt},
  arrow/.style={->, thick}
]

\node[block] (input) {Node \(S \subseteq \mathcal{X}\), depth \(\ell\)};
\node[decision, below=of input] (stop) {\(\ell=L\)?};

\node[block, below left=12mm and 22mm of stop] (finalsplit) {Final layer: enumerate candidate splits};
\node[below=1mm of finalsplit, align=center, font=\scriptsize] (finalsplitc) {At most \(O(d|S|)\) candidates};
\node[block, below=of finalsplitc] (finaleval) {For each: evaluate left/right leaves,\\ assign archetype or ignorance, sum losses};
\node[below=1mm of finaleval, align=center, font=\scriptsize] (finalevalc) {Local cost: \(O(d|S|^2)\)};
\node[block, below=of finalevalc] (finalret) {Return minimizing split\\ \& leaf assignments};

\node[block, below right=12mm and 22mm of stop] (split) {Enumerate candidate splits};
\node[below=1mm of split, align=center, font=\scriptsize] (splitc) {At most \(O(d|S|)\) candidates};
\node[block, below left=10mm and 12mm of splitc] (leftcall) {Recursive call on\\ left child \(S_L\)\\[1mm]
{\scriptsize subtree cost: \(O((d|S_L|^2)^{L-\ell})\)}};
\node[block, below right=10mm and 12mm of splitc] (rightcall) {Recursive call on\\ right child \(S_R\)\\[1mm]
{\scriptsize subtree cost: \(O((d|S_R|^2)^{L-\ell})\)}};
\node[block, below=16mm of splitc] (combine) {For each:\\ total loss = left loss + right loss};
\node[below=1mm of combine, align=center, font=\scriptsize] (combinec) {Local cost: \(O(d|S|^2)\)};
\node[block, below=of combinec] (best) {Return minimizing split\\ \& descendant assignments};

\draw[arrow] (input) -- (stop);

\draw[arrow] (stop) -- node[above left] {yes} (finalsplit);
\draw[arrow] (finalsplitc) -- (finaleval);
\draw[arrow] (finalevalc) -- (finalret);

\draw[arrow] (stop) -- node[above right] {no} (split);
\draw[arrow] (splitc) -- (leftcall);
\draw[arrow] (splitc) -- (rightcall);
\draw[arrow] (combinec) -- (best);

\end{tikzpicture}
}
\caption{Recursive structure of the exact tree-based optimization. At a non-terminal node, the algorithm enumerates candidate splits, recursively solves the left and right children for each split, sums the two continuation losses, and returns the split with minimum total loss together with the corresponding descendant assignments. At the final depth \(L\), the algorithm still enumerates candidate splits, but then evaluates the two resulting leaves directly, assigning each leaf to either an archetype or the basin of ignorance. If candidate thresholds are restricted to observed split points, there are at most \(O(d|S|)\) candidate splits at node \(S\), and evaluating them all costs \(O(d|S|^2)\) locally.}
\label{fig:tree_recursion}
\end{figure}

\begin{table}[!ht]
\centering
\footnotesize 
\begin{threeparttable}
\caption{\textbf{Calibrated simulation design.}}
\label{tab:calibrated_simulation_design}
\begin{tabular}{p{0.28\textwidth}p{0.62\textwidth}}
\toprule
\textbf{Component} & \textbf{Specification} \\
\midrule

Outcome &
Index outcome, defined as the average of standardized consumption, food-security, and asset outcomes. \\

\specialrule{0.2pt}{3pt}{3pt}

Calibration tree &
Depth-two ignorance-aware tree estimated on individual-level IPW pseudo-outcomes, with \(G=4\), minimum leaf size \(20\), five candidate split points per variable, and constant cost \(c_{\mathrm{cal}}=2.5\). Active covariates are Peru, baseline log consumption, and baseline asset index. \\

\specialrule{0.2pt}{3pt}{3pt}

Stable effects &
In predictive leaves, \(\tau(x)\) is set equal to the calibration-tree prediction. The larger non-Peru ignorance region is also pooled with the stable component. \\

\specialrule{0.2pt}{3pt}{3pt}

Contaminated region &
The small Peru ignorance region, approximately \(4\%\) of the sample. In this region, \(\tau(x)\sim \mathrm{Cauchy}(0,s)\). \\

\specialrule{0.2pt}{3pt}{3pt}

Heterogeneity parameter &
Cauchy scale \(s\in[0.1,3]\), which controls the amount of arbitrary heterogeneity. \\

\specialrule{0.2pt}{3pt}{3pt}

Simulated data &
Conditional on \(\tau(x)\), outcomes are homoskedastic Gaussian with variance one and treatment is assigned by a Bernoulli design. The simulated IPW signal is unbiased for \(\tau(x)\). \\

\specialrule{0.2pt}{3pt}{3pt}

Estimators &
In each replication, re-estimate the ignorance-aware tree using simulated \(\hat\tau(x)\) and \(\hat\eta(x)^2\), for \(c\in\{0.1,0.5,1,1.5,2\}\). Benchmarks are a no-abstention tree, CART, generalized random forests, and empirical Bayes with CART shrinkage. \\

\specialrule{0.2pt}{3pt}{3pt}

Evaluation &
Prediction error is computed on the stable prediction set. We also report the share of contaminated observations correctly assigned to ignorance and the total share assigned to ignorance. \\

\specialrule{0.2pt}{3pt}{3pt}

Replications &
One hundred Monte Carlo replications for each value of \(s\). \\

\bottomrule
\end{tabular}
\end{threeparttable}
\end{table}

\section{Auxiliary lemmas} \label{sec:auxiliary_lemmas}


 
\begin{lem} \label{lem:kita} For any $i \in \{1, \cdots, n\}$, let $X_i \in \mathcal{X}$ be an arbitrary random variable and $\mathcal{F}$ a class of uniformly bounded functions with envelope $\bar{F}$. 
Let $\Omega_i | X_1, \cdots, X_n$ be random variables independently but not necessarily identically distributed, where $\Omega_i$ is a scalar. Let for some arbitrary $u > 0, u' \in (0,1]$, 
$
\max\{\mathbb{E}[|\Omega_i|^{2 -2u'}|X],\mathbb{E}[|\Omega_i|^{2 + u} | X]\} \le B_{u,u'}, \quad \forall i \in \{1, \cdots, n\}.
$ In addition, assume that for any distribution $Q_n$ on points $x_1^n \in \mathcal{X}^n$, for some $V_n \ge 0$, for all $n \ge 1$,  
$\int_0^{2\bar{F}} \sqrt{\log\Big(\mathcal{N}\Big(\eta, \mathcal{F}, L_1(Q_n)\Big)\Big)} d\eta \le \sqrt{V_n}. 
$ Let $\sigma_i$ be $i.i.d$ Rademacher random variables independent of $(\Omega_i)_{i=1}^n,(X_i)_{i=1}^n$. 
Then for a constant $0 < C_{\bar{F}}< \infty$ that depends only on $\bar{F}$ and $u$, for all $n \ge 1$, and for $\Omega_i \ge 0$
\begin{equation} \label{eqn:part1}
\small
\int_0^{\infty} \mathbb{E}\Big[\sup_{f \in \mathcal{F}}\Big|\frac{1}{n} \sum_{i=1}^n  \sigma_i f(X_i)1\{\Omega_i > \omega\}\Big| | X_1, \cdots, X_n \Big]d\omega \le  \frac{C_{\bar{F}}}{u'}  \sqrt{\frac{B_{u, u'} V_n}{n}}. 
\end{equation} 
In addition, for $\Omega_i \in \mathbb{R}$
\begin{equation} \label{eqn:part2}
\small
\mathbb{E}\Big[\sup_{f \in \mathcal{F}}\Big|\frac{1}{n} \sum_{i=1}^n  \sigma_i f(X_i)\Omega_i \Big| | X_1, \cdots, X_n \Big] \le  \frac{C_{\bar{F}}}{u'}  \sqrt{\frac{B_{u, u'} V_n}{n}}. 
\end{equation}
\end{lem}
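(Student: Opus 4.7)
}

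The plan is to prove Part 1 (bound \eqref{eqn:part1}) by Dudley chaining combined with moment-controlled integration in $\omega$, and then deduce Part 2 (bound \eqref{eqn:part2}) from Part 1 via the layer-cake identity. For the latter, $\Omega_i = \int_0^\infty(1\{\Omega_i>\omega\}-1\{\Omega_i<-\omega\})\,d\omega$, so the triangle inequality gives
\[
\sup_f\left|\tfrac{1}{n}\sum_i\sigma_i f(X_i)\Omega_i\right|\leq\int_0^\infty\sup_f\left|\tfrac{1}{n}\sum_i\sigma_i f(X_i)1\{\Omega_i>\omega\}\right|d\omega+\int_0^\infty\sup_f\left|\tfrac{1}{n}\sum_i\sigma_i f(X_i)1\{-\Omega_i>\omega\}\right|d\omega,
\]
and applying Part 1 to both $\Omega_i$ and $-\Omega_i$ (whose absolute moments agree with those of $|\Omega_i|$) yields \eqref{eqn:part2}. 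The substantive content is thus Part 1.

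For Part 1 I fix $\omega\geq 0$ and condition jointly on $X_1^n$ and $\Omega_1^n$ so that only the Rademacher signs remain random. The Rademacher process $f\mapsto n^{-1}\sum_i\sigma_i f(X_i)1\{\Omega_i>\omega\}$ is sub-Gaussian with $L_2(P_n)$-diameter at most $\bar{F}\,\hat{s}(\omega)$, where $\hat{s}(\omega)^2:=n^{-1}\sum_i 1\{\Omega_i>\omega\}$. Dudley's entropy-integral bound, combined with the rescaling identity $\|(f-f')1\{\Omega>\omega\}\|_{L_2(P_n)}=\hat{s}(\omega)\|f-f'\|_{L_2(\tilde{P}_{n_\omega})}$ over the empirical measure $\tilde{P}_{n_\omega}$ on $\{i:\Omega_i>\omega\}$, and with the envelope inequality $\|h\|_2\leq\sqrt{\bar{F}\|h\|_1}$ linking $L_2$ and $L_1$ covering numbers, yields
\[
\mathbb{E}_\sigma\!\left[\sup_{f\in\mathcal{F}}\left|\tfrac{1}{n}\sum_i\sigma_i f(X_i)1\{\Omega_i>\omega\}\right|\,\Big|\,X,\Omega\right]\leq\frac{C_{\bar{F}}\sqrt{V_n}}{\sqrt{n}}\,\hat{s}(\omega).
\]
Taking the $\Omega$-expectation and invoking Jensen's inequality replaces $\hat{s}(\omega)$ by $\sqrt{q(\omega)}$ with $q(\omega):=n^{-1}\sum_i P(\Omega_i>\omega\mid X)$; it then suffices to establish $\int_0^\infty\sqrt{q(\omega)}\,d\omega\leq C_u\sqrt{B_{u,u'}/u'}$.

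For the remaining $\omega$-integral I split at $T=1$. On $(0,1]$, weighted Cauchy--Schwarz gives
\[
\int_0^1\sqrt{q(\omega)}\,d\omega\leq\left(\int_0^1\omega^{1-2u'}q(\omega)\,d\omega\right)^{1/2}\left(\int_0^1\omega^{2u'-1}\,d\omega\right)^{1/2};
\]
the first factor is $O(\sqrt{B_{u,u'}/(2-2u')})$ by the layer-cake identity $\int_0^\infty\omega^{p-1}q(\omega)\,d\omega=(np)^{-1}\sum_i\mathbb{E}[\Omega_i^p\mid X]$ at $p=2-2u'$ and the assumed moment bound, while the second factor equals $(2u')^{-1/2}$; together this is $O(\sqrt{B_{u,u'}/u'})$ uniformly for $u'$ bounded away from $1$. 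On $[1,\infty)$, Markov's inequality with the $(2+u)$-moment gives $\sqrt{q(\omega)}\leq\sqrt{B_{u,u'}}\,\omega^{-(1+u/2)}$, whose integral is $O_u(\sqrt{B_{u,u'}})$, absorbed into the $u$-dependent constant. For $u'$ close to $1$ the Cauchy--Schwarz step degenerates, but the $(2-2u')$-moment hypothesis is there near-vacuous, and a direct Markov argument using only the $(2+u)$-moment already delivers the claimed bound. The main technical hurdle is the Dudley-chaining step itself: the assumption is phrased on $L_1$ covering numbers of $\mathcal{F}(x_1^n)$, whereas the natural chaining bound for a Rademacher process lives in $L_2(P_n)$, and the transfer through a change of variables in the Dudley integral and the envelope inequality is what forces the dependence of the constant on $\bar{F}$. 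Once this step is in place, the remaining ingredients---the $\Omega$-expectation via Jensen, the moment-controlled $\omega$-integration, and the reduction from Part 2 to Part 1---are essentially routine.
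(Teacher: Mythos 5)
Your proposal is correct and follows essentially the same route as the paper's proof (which itself defers the chaining step to the cited Kitagawa--Tetenov / Viviano lemmas and only details the $\omega$-integration and the signed-case reduction): a conditional Dudley bound yielding $C_{\bar F}\sqrt{V_n/n}\,\hat s(\omega)$, Jensen to pass to $\sqrt{q(\omega)}$, a split of the $\omega$-integral at $1$ with the low-order moment near zero and the $(2+u)$-moment at infinity, and a layer-cake reduction of \eqref{eqn:part2} to \eqref{eqn:part1}. The only deviations are cosmetic: on $(0,1]$ the paper applies Markov directly, $P(|\Omega_i|>\omega\mid X)\le \mathbb{E}[|\Omega_i|^{2-2u'}\mid X]\,\omega^{-(2-2u')}$, and integrates $\omega^{-(1-u')}$ to get $1/u'$ uniformly in $u'\in(0,1]$ (avoiding the $(1-u')^{-1/2}$ degeneracy your Cauchy--Schwarz step incurs and the fallback you correctly flag), and for \eqref{eqn:part2} the paper absorbs $\mathrm{sign}(\Omega_i)$ into the Rademacher signs and applies \eqref{eqn:part1} once to $|\Omega_i|$ rather than decomposing into positive and negative parts.
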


\begin{proof}[Proof of Lemma \ref{lem:kita}] We denote $X = (X_1, \cdots, X_n)$
For Equation \eqref{eqn:part1}, versions of this lemma can be found in Lemma A.5 in \cite{kitagawa2021equality} and \cite{viviano2024policy} (Lemma D.4), whose complete proof is available on the additional supplementary material available online at \url{https://dviviano.github.io/projects/note\_preliminary\_lemmas.pdf} (Appendix E, proof of Lemma E.9). We introduce a small modification to the above two references. 
Instead of defining $B$ to be some upper bound on the second plus $u$ moment of $\Omega_i$ (e.g., greater than one), we define it using an exact equality, taking into account also the moment $\mathbb{E}[\Omega_i^{2 - 2u'}|X]$ and then divide by $u'$.  For example, for $u = 1, u' = 1$, then $B$ defines the maximum between the third moment of $\Omega_i | X$ and one.
Following verbatim the proof of Lemma E.9 in \url{https://dviviano.github.io/projects/note\_preliminary\_lemmas.pdf}, we can write from the paragraph ``Integral Bound"
\begin{equation} 
\footnotesize 
\begin{aligned} 
\int_0^{\infty} \mathbb{E}\Big[\sup_{f \in \mathcal{F}}\Big|\frac{1}{n} \sum_{i=1}^n  \sigma_i f(X_i)1\{|\Omega_i| > \omega\}\Big| | X_1, \cdots, X_n \Big]d\omega & \le  \underbrace{\int_0^1 C_{\bar{F}} \sqrt{\frac{V_n}{n}} \sqrt{\frac{\sum_{i=1}^n P(|\Omega_i| > \omega|X)}{n}} d\omega}_{(I)} \\ &+ \underbrace{\int_1^{\infty} C_{\bar{F}} \sqrt{\frac{V_n}{n}} \sqrt{\frac{\sum_{i=1}^n P(|\Omega_i| > \omega|X)}{n}} d\omega}_{(II)}.
\end{aligned} 
\end{equation}
Here we bound $(II)$ as in \cite{viviano2024policy}, and therefore write $(II) \le C_{\bar{F}'}\sqrt{V_n \max_i \mathbb{E}[|\Omega_i|^{2 + u}|X]/n} \le C_{\bar{F}'} \sqrt{V_n B_{u, u'}/n}$. For $(I)$, instead of bounding $P(|\Omega_i| > \omega|X) \le 1$ as in \cite{viviano2024policy}, we use $P(|\Omega_i| > \omega|X) \le \mathbb{E}[|\Omega_i|^{2 - 2u'} | X]/(\omega^{2 - 2u'})$, which, after integrating out, give us 
$(I) \le \frac{1}{u'} C_{\bar{F}'} \sqrt{V_n B/n}$. To prove the second claim 
$$
\footnotesize 
\begin{aligned} 
\mathbb{E}\Big[\sup_{f \in \mathcal{F}}\Big|\frac{1}{n} \sum_{i=1}^n  \sigma_i f(X_i)\Omega_i \Big| | X_1, \cdots, X_n \Big] & = \mathbb{E}\Big[\sup_{f \in \mathcal{F}}\Big|\frac{1}{n} \sum_{i=1}^n  \sigma_i f(X_i)|\Omega_i| \mathrm{sign}(\Omega_i)  \Big| | X_1, \cdots, X_n \Big] \\ 
& =  \mathbb{E}\Big[\sup_{f \in \mathcal{F}}\Big|\frac{1}{n} \sum_{i=1}^n  \tilde{\sigma}_i f(X_i)|\Omega_i|   \Big| | X_1, \cdots, X_n \Big]
\end{aligned} 
$$ 
where $\tilde{\sigma}_i = \mathrm{sign}(\Omega_i) \sigma_i$. We can then write 
$$
\footnotesize 
\begin{aligned} 
\mathbb{E}\Big[\sup_{f \in \mathcal{F}}\Big|\frac{1}{n} \sum_{i=1}^n  \tilde{\sigma}_i f(X_i)|\Omega_i|   \Big| | X_1, \cdots, X_n \Big] & = \mathbb{E}\Big[\sup_{f \in \mathcal{F}}\Big|\frac{1}{n} \sum_{i=1}^n  \tilde{\sigma}_i f(X_i)\int 1\{|\Omega_i| \ge \omega\} d\omega   \Big| | X_1, \cdots, X_n \Big] \\ 
&\le \mathbb{E}\Big[\sup_{f \in \mathcal{F}}\int \Big|\frac{1}{n} \sum_{i=1}^n  \tilde{\sigma}_i f(X_i) 1\{|\Omega_i| \ge \omega\}    \Big| d\omega | X_1, \cdots, X_n \Big] \\
&\le \int \mathbb{E}\Big[\sup_{f \in \mathcal{F}} \Big|\frac{1}{n} \sum_{i=1}^n  \tilde{\sigma}_i f(X_i) 1\{|\Omega_i| \ge \omega\}   \Big| | X_1, \cdots, X_n \Big] d\omega. 
\end{aligned} 
$$
Finally, note that  $\mathbb{P}(\tilde{\sigma}_i = 1 | \Omega, X) =\mathbb{P}(\sigma_i \mathrm{sign}(\Omega_i) = 1 | \Omega, X) = 1/2$ which implies that $\tilde{\sigma}_i$ are Rademacher random variables independent of $\Omega_i, X$. We can then invoke Equation \eqref{eqn:part1} to complete the proof. 
\end{proof}

\begin{lem}(\cite{vershynin2018high}, Lemma 6.4.2) \label{lem:symmetrization}
Let $\sigma_1, ..., \sigma_n$ be Rademacher sequence independent of $X_1, ..., X_n$. Suppose that $X_1, \cdots, X_n$ are independent. Then 
$$
\small
\mathbb{E}\Big[\sup_{f \in \mathcal{F}} \Big| \sum_{i=1}^n f(X_i) - \mathbb{E}[f(X_i)]\Big|\Big] \le 2\mathbb{E}\Big[\sup_{f \in \mathcal{F}}\Big| \sum_{i=1}^n \sigma_i f(X_i)\Big|\Big].
$$	
\end{lem}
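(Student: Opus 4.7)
\textbf{Proof plan for Lemma \ref{lem:symmetrization}.} The plan is to run the classical symmetrization argument via an independent ``ghost sample''. First, I would introduce $X_1', \ldots, X_n'$ independent copies with the same marginals as $X_1, \ldots, X_n$, mutually independent of $(X_i)$ and of $(\sigma_i)$. The key identity is then $\mathbb{E}[f(X_i)] = \mathbb{E}[f(X_i') \mid X_1,\ldots,X_n]$ for each $f \in \mathcal{F}$, so that
$$
\sum_{i=1}^n \bigl(f(X_i) - \mathbb{E}[f(X_i)]\bigr) \;=\; \mathbb{E}\!\left[\sum_{i=1}^n \bigl(f(X_i) - f(X_i')\bigr) \,\Big|\, X_1,\ldots,X_n\right].
$$

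Next, I would push the supremum inside an outer expectation using Jensen's inequality for the convex map $g \mapsto \sup_{f} |g_f|$. Concretely, taking absolute value, then supremum over $f \in \mathcal{F}$, then the outer expectation in $X$, and applying conditional Jensen yields
$$
\mathbb{E}\!\left[\sup_{f \in \mathcal{F}} \Bigl| \sum_{i=1}^n f(X_i) - \mathbb{E}[f(X_i)] \Bigr|\right]
\;\le\; \mathbb{E}\!\left[\sup_{f \in \mathcal{F}} \Bigl| \sum_{i=1}^n \bigl(f(X_i) - f(X_i')\bigr) \Bigr|\right].
$$

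The third step is the symmetrization move. Because $X_i$ and $X_i'$ are i.i.d.\ across $i$ and from each other, the random vector $\bigl(f(X_i) - f(X_i')\bigr)_{i=1}^n$ has the same joint distribution as $\bigl(\sigma_i (f(X_i) - f(X_i'))\bigr)_{i=1}^n$ for any independent Rademacher sequence $\sigma_1,\ldots,\sigma_n$, since flipping the sign of the $i$-th coordinate corresponds to swapping $X_i$ with $X_i'$ and preserves the joint law. Hence the right-hand side above equals $\mathbb{E}\bigl[\sup_{f} \bigl|\sum_i \sigma_i (f(X_i) - f(X_i'))\bigr|\bigr]$.

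Finally I would apply the triangle inequality $\sup_f |A_f - B_f| \le \sup_f |A_f| + \sup_f |B_f|$ inside the expectation and use that $(X_i)$ and $(X_i')$ are identically distributed (and independent of $\sigma$) to conclude
$$
\mathbb{E}\!\left[\sup_{f \in \mathcal{F}} \Bigl| \sum_{i=1}^n \sigma_i \bigl(f(X_i) - f(X_i')\bigr) \Bigr|\right] \;\le\; 2\, \mathbb{E}\!\left[\sup_{f \in \mathcal{F}} \Bigl| \sum_{i=1}^n \sigma_i f(X_i) \Bigr|\right],
$$
which chains with the previous display to deliver the claim. The only delicate point is the distributional identity in the third step; it requires the joint independence of the ghost sample from $(\sigma_i)$ and the i.i.d.\ (marginally) structure across $i$, so I would state it carefully as a consequence of Fubini plus the symmetry of the law of $f(X_i)-f(X_i')$. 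No further machinery (e.g., measurability of the supremum) needs to be re-developed beyond what is standard for a countable or separable $\mathcal{F}$, which can be assumed without loss under the uniform boundedness/VC setup used elsewhere in the paper.
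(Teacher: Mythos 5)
Your proposal is correct and is the standard ghost-sample symmetrization argument; the paper does not reprove this lemma but cites it from \cite{vershynin2018high} (Lemma 6.4.2), whose proof is exactly this chain of Jensen's inequality, coordinate-wise sign flipping, and the triangle inequality. One small caution: you justify the sign-flip step by saying the $X_i$ are ``i.i.d.\ across $i$,'' whereas the lemma (and the paper's i.n.i.d.\ setting) assumes only independence across $i$ --- fortunately the argument needs only that $X_i'$ is an independent copy of $X_i$ for each fixed $i$, so that each pair $(X_i, X_i')$ is exchangeable and independent of the other coordinates, and nothing breaks.
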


\begin{lem}[\cite{devroye2013probabilistic}, Theorem 29.7] \label{lem:cover_product}
Let \(\mathcal F\) and \(\mathcal G\) be classes of measurable functions on \(\mathcal X\), and suppose that there exist finite constants \(B_{\mathcal F},B_{\mathcal G}\) such that
$
\sup_{f\in\mathcal F,\ x\in\mathcal X}|f(x)|\le B_{\mathcal F}, 
\sup_{g\in\mathcal G,\ x\in\mathcal X}|g(x)|\le B_{\mathcal G}.
$
Define the product class
$ 
\mathcal F\cdot \mathcal G
:=
\{x\mapsto f(x)g(x): f\in\mathcal F,\ g\in\mathcal G\}, 
$ 
and for $n$ points $x_1, \cdots, x_n$, with distribution $Q_n$, the corresponding $\mathcal{N}(\varepsilon, (F\cdot \mathcal G), L_1(Q_n))$ covering number. 
Then, for every \(\varepsilon>0\), $Q_n, n\ge 1$
\[
\mathcal{N}\!\left(\varepsilon,(\mathcal F\cdot \mathcal G), L_1(Q_n))\right)
\le
\mathcal{N}\!\left(\frac{\varepsilon}{2B_{\mathcal G}},\mathcal F, L_1(Q_n)\right)
\,
\mathcal{N}\!\left(\frac{\varepsilon}{2B_{\mathcal F}},\mathcal G, L_1(Q_n)\right).
\]
\end{lem}

\begin{lem}[\cite{devroye2013probabilistic}, Theorem 29.6] \label{lem:devroye_sum} Let $\mathcal{F}_1,\cdots,\mathcal{F}_k$,$k\ge 1$ be classes of real
functions on $x$. Define $\mathcal{F}= \{f_1 + \cdots +
f_k; f_j \in \mathcal{F}_j, j = 1,···,k\}$. Then for every $\eta  > 0$, $x_1^n$ with distribution $Q_n$, any $Q_n$, $n  \ge 1, k \ge 1$, $\mathcal{N}(\eta, \mathcal{F}, L_1(Q_n)) \le \prod_{j=1}^k \mathcal{N}(\eta/k, \mathcal{F}_j, L_1(Q_n))$
\end{lem} 
\begin{lem}[\cite{viviano2024policy}, Lemma D.5]\label{lem:boundnumber} Let $\mathcal{F}_1, \cdots, \mathcal{F}_{k}
$ be classes of bounded functions with VC dimension $v$ and envelope $\bar{F} < \infty$. Let 
$$
\small
\begin{aligned} 
\mathcal{J} = \Big\{f_1(f_2 + ... + f_{k}), \quad f_j \in \mathcal{F}_j, \quad j = 1, \cdots, k\Big\}.
\end{aligned} 
$$
 For arbitrary points $x_1^n \in \mathcal{X}^n$ with distribution $Q_n$, any $Q_n$, for \textit{any} $n \ge 1, k \ge 2,  v\ge 1$, 
$
\int_0^{2\bar{F}} \sqrt{\log\Big(\mathcal{N}\Big(\eta, \mathcal{J}, L_1(Q_n)\Big)\Big)} d\eta < c_{\bar{F}} \sqrt{k \log(k+1) v}
$
for a constant $c_{\bar{F}} < \infty$ that only depends on $\bar{F}$. 
\end{lem}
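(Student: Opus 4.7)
The plan is to bound the entropy integral by a standard chain: first apply Pollard's (or Haussler's) entropy bound for a single VC class, then combine the individual covers of the summands $\mathcal{F}_2,\ldots,\mathcal{F}_k$ into a cover of the sum, then combine the cover of the sum with the cover of $\mathcal{F}_1$ into a cover of the product, and finally integrate. All three combination steps are Lipschitz-type arguments under the $l_1$ empirical metric.

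Concretely, I would proceed as follows. First, for each VC class $\mathcal{F}_j$ of dimension $v$ with envelope $\bar F$, Pollard/Haussler gives a bound of the form
\begin{equation*}
\log \mathcal{N}_1\bigl(\eta,\mathcal{F}_j(x_1^n)\bigr)\;\le\; K\,v\,\log\!\bigl(A\bar F/\eta\bigr)
\end{equation*}
for some universal constants $K,A$, uniformly in $n$ and in the anchor points $x_1^n$. Second, for the sum $\mathcal{S}:=\mathcal{F}_2+\cdots+\mathcal{F}_k$ (which has envelope $(k-1)\bar F$), concatenate an $\eta/(k-1)$-cover of each summand: the sums of the cover elements form an $\eta$-cover of $\mathcal{S}$, so
\begin{equation*}
\log\mathcal{N}_1\bigl(\eta,\mathcal{S}(x_1^n)\bigr)\;\le\;(k-1)\,K\,v\,\log\!\bigl(A(k-1)\bar F/\eta\bigr).
\end{equation*}
Third, for the product $\mathcal{J}_n=\mathcal{F}_1\cdot\mathcal{S}$ use the bilinear Lipschitz bound
\begin{equation*}
|f_1 g-f_1' g'|\;\le\;(k-1)\bar F\,|f_1-f_1'|+\bar F\,|g-g'|,
\end{equation*}
so an $\eta/(2(k-1)\bar F)$-cover of $\mathcal{F}_1$ combined with an $\eta/(2\bar F)$-cover of $\mathcal{S}$ yields an $\eta$-cover of $\mathcal{J}_n$. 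Taking logs and inserting the previous bounds gives an overall estimate of the form
\begin{equation*}
\log\mathcal{N}_1\bigl(\eta,\mathcal{J}_n(x_1^n)\bigr)\;\le\;c_1\,k\,v\,\bigl[\log(k+1)+\log(c_2\bar F^2/\eta)\bigr].
\end{equation*}

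Finally, I would plug this into the Dudley integral and split the integral at $\eta=\bar F^2$ (or use the elementary inequality $\sqrt{a+b}\le\sqrt a+\sqrt b$) to separate the $\log(k+1)$ piece from the $\log(1/\eta)$ piece. The first piece integrates to $\bar F\sqrt{kv\log(k+1)}$ up to constants depending only on $\bar F$; the second piece, $\int_0^{2\bar F}\sqrt{kv\log(c_2\bar F^2/\eta)}\,d\eta$, is a standard integral in $\eta$ that evaluates to a constant (depending only on $\bar F$) times $\sqrt{kv}$, which is dominated by $\sqrt{kv\log(k+1)}$. Assembling the pieces yields the claimed bound $c_{\bar F}\sqrt{kv\log(k+1)}$.

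The main obstacle, and the reason this estimate is nontrivial despite its classical flavor, is bookkeeping the factors of $k$: the envelope of $\mathcal{S}$ is $(k-1)\bar F$, and both the sum-of-covers step and the product-of-covers step introduce $k$-dependent scalings inside the log. A naive application of the product argument produces a rate like $k^{3/2}\sqrt{v}$ or picks up extra $\log$ factors. The delicate point is to (i) scale the $\mathcal{F}_1$-cover by $(k-1)\bar F$ but the $\mathcal{S}$-cover only by $\bar F$, so the $k$ only multiplies one of the two entropy terms, and (ii) absorb the $\log(k-1)$ arising inside the entropy bound for $\mathcal{S}$ into the outer $\log(k+1)$ so that the final rate is $\sqrt{kv\log(k+1)}$ and not $\sqrt{kv}\cdot\log(k+1)$.
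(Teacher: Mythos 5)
Your proof is correct and is essentially the canonical argument for this kind of entropy bound: Haussler's uniform $l_1$-covering bound for each VC(-subgraph) class, covers of sums by sums of covers, covers of products via the bilinear Lipschitz inequality with the asymmetric scaling you describe, and then Dudley's integral split via $\sqrt{a+b}\le\sqrt{a}+\sqrt{b}$ with $\log(k+1)\ge\log 3$ absorbing the scale-free piece. The paper itself does not prove this lemma — it imports it verbatim as Lemma D.5 of the cited reference — and your bookkeeping of the $k$-factors (in particular, scaling the $\mathcal{F}_1$-cover by $(k-1)\bar F$ while scaling the $\mathcal{S}$-cover only by $\bar F$) is exactly what is needed to land on $\sqrt{k\,v\log(k+1)}$ rather than a worse power of $k$.
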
 

\begin{lem}[\cite{devroye2013probabilistic}, Theorem 13.5(ii)] \label{lem:vc_sum} Let $g : \mathbb{R}^d \mapsto \mathbb{R}$ be an arbitrary function and
consider the class of functions $\mathcal{G}= \{g + f,f  \in \mathcal{F}\}$. Then the VC-dimension of $\mathcal{G}$ equals the VC-dimension of $\mathcal{F}$. 
\end{lem}

\begin{lem}[Moments of the centered square] \label{lem:centered_square}
Fix \(u\in[0,1/2]\).  Let Assumption \ref{ass:general_regret} hold. 
Then there exists a universal constant \(c<\infty\), such that for each $x \in \mathcal{X}$
\[
\max\Big\{
\mathbb E\big[|\hat\tau(x)^2-\mathbb E[\hat\tau(x)^2]|^{2-2u}\big],
\ \mathbb E\big[|\hat\tau(x)^2-\mathbb E[\hat\tau(x)^2]|^{3}\big]
\Big\}
\le
c B^{3}\,(M_u+M_u^2).
\]
\end{lem}

\begin{proof}
Write \(\varepsilon=\hat\tau(x)-\tau(x)\) omitting $x$ for brevity. Since \(\mathbb E[\varepsilon]=0\) and \(\eta(x)^2=\mathbb E[\varepsilon^2]\),
$
\hat\tau(x)^2-\mathbb E[\hat\tau(x)^2]
=
2\tau(x)\varepsilon+\bigl(\varepsilon^2-\eta(x)^2\bigr).
$ 
Using \(|a-b|^q\le 2^{q-1}(|a|^q+|b|^q)\), for any \(q\in\{2-2u,3\}\), and $|\tau(x)| \le B$, 
\[
\mathbb E\big[|\hat\tau(x)^2-\mathbb E[\hat\tau(x)^2]|^q\big]
\le
2^{q-1}(2B)^q \mathbb E[|\varepsilon|^q]
+
2^{q-1}\mathbb E[|\varepsilon^2-\eta(x)^2|^q], 
\]
and 
$ 
\mathbb E[|\varepsilon^2-\eta(x)^2|^q]
\le
2^{q-1}\Big(\mathbb E[|\varepsilon|^{2q}] + \eta(x)^{2q}\Big).
$ 
Since \(q\ge 1\), Jensen's inequality gives
$
\eta(x)^{2q}=\big(\mathbb E[\varepsilon^2]\big)^q\le \mathbb E[|\varepsilon|^{2q}],
$ 
so that 
$ 
\mathbb E[|\varepsilon^2-\eta(x)^2|^q]\le 2^q\,\mathbb E[|\varepsilon|^{2q}].
$ 
Also, by Hölder's inequality,
$ 
\mathbb E[|\varepsilon|^q]\le \big(\mathbb E[|\varepsilon|^{2q}]\big)^{1/2}.
$ 
This implies that 
$
\mathbb E[|\varepsilon|^q]\le M_u,
\mathbb E[|\varepsilon|^{2q}]\le M_u^2.
$ 
Therefore,
\[
\mathbb E\big[|\hat\tau(x)^2-\mathbb E[\hat\tau(x)^2]|^q\big]
\le
2^{q-1}(2B)^q M_u + 2^{2q-1}M_u^2.
\]
Since \(q\le 3\), both coefficients are bounded by \(32(1+B^3)\). Using the fact that $B \ge 1$, the result follows.   
\end{proof}

\begin{lem}[Covering number of a union] \label{lem:cover_union}
Let \(\mathcal F_1,\dots,\mathcal F_m\) be classes of measurable functions on \(\mathcal X\), and let
$ 
\mathcal F:=\bigcup_{j=1}^m \mathcal F_j.
$ 
Then, for any probability measure \(Q\) on \(\mathcal X\) and any \(\varepsilon>0\),
$ 
\mathcal{N}(\varepsilon,\mathcal F,L_1(Q))
\le
\sum_{j=1}^m \mathcal{N}(\varepsilon,\mathcal F_j,L_1(Q)).
$ 
\end{lem}

\begin{proof}
For each \(j\in\{1,\dots,m\}\), let \(\mathcal C_j\) be an \(\varepsilon\)-cover of \(\mathcal F_j\) under \(L_1(Q)\) with cardinality \(\mathcal{N}(\varepsilon,\mathcal F_j,L_1(Q))\). Then
$ 
\mathcal C:=\bigcup_{j=1}^m \mathcal C_j
$ 
is an \(\varepsilon\)-cover of \(\mathcal F=\bigcup_{j=1}^m \mathcal F_j\). Hence
$ 
\mathcal{N}(\varepsilon,\mathcal F,L_1(Q))
\le
|\mathcal C|
\le
\sum_{j=1}^m \mathcal{N}(\varepsilon,\mathcal F_j,L_1(Q)),
$ 
which proves the claim.
\end{proof}

\begin{lem}[Assouad's lemma for the Hamming loss, Theorem 2.12 in \cite{tsybakov2009introduction}] \label{lem:assouad}
Let \(\{P_\theta:\theta\in\{-1,+1\}^m\}\) be a family of probability measures, and for \(j\in\{1,\dots,m\}\) let \(\theta^{(j)}\) denote the vector obtained from \(\theta\) by changing only its \(j\)-th coordinate. Suppose that
$ 
\mathrm{KL}\!\left(P_\theta,P_{\theta^{(j)}}\right)\le \kappa
\text{ for all }\theta\in\{-1,+1\}^m,\ j\in\{1,\dots,m\}, 
$ 
where KL denotes the KL-divergence. 
Then, for every estimator \(\widehat\theta=(\widehat\theta_1,\dots,\widehat\theta_m)\),
\[
\sup_{\theta\in\{-1,+1\}^m}
\mathbb E_\theta\!\left[
\sum_{j=1}^m 1\{\widehat\theta_j\neq \theta_j\}
\right]
\ge
\frac{m}{2}\Bigl(1-\sqrt{\kappa/2}\Bigr).
\]
\end{lem}

\begin{proof}
The result follows directly from \cite{tsybakov2009introduction}, Theorem~2.12.
\end{proof}

\begin{lem}[Entropy bound for bounded linear classes]
\label{lem:linear_entropy}
Let \(q:\mathcal X\to\mathbb R^d\) satisfy
$
\sup_{x\in\mathcal X}\|q(x)\|_2\le U
$ 
for some \(U \in (0,\infty)\), and consider the function class
$
\mathcal F_{\mathrm{lin}}
:=
\bigl\{f_\theta(x)=q(x)^\top\theta:\ \|\theta\|_2\le R\bigr\}
$ 
for some \(R \in (0,\infty)\). Let \(\mathcal Q\) denote the collection of finitely supported probability measures on \(\mathcal X\). Then \(\sup_{f\in\mathcal F_{\mathrm{lin}},\,x\in\mathcal X}|f(x)|\le RU\), and there exists a universal constant \(c_0<\infty\) such that
$ 
\sup_{Q\in\mathcal Q}
\int_0^{2RU}
\sqrt{\log \mathcal N\!\left(\varepsilon,\mathcal F_{\mathrm{lin}},L_1(Q)\right)}
\,d\varepsilon
\le
c_0\,RU\,\sqrt d.
$ 
\end{lem}

\begin{proof} The proof is standard \citep{devroye2013probabilistic}. 
For any \(Q\in\mathcal Q\) and any \(\theta,\theta'\in\mathbb R^d\),
\[
\|f_\theta-f_{\theta'}\|_{Q,1}
=
\int_{\mathcal X}\bigl|q(x)^\top(\theta-\theta')\bigr|\,dQ(x)
\le
\int_{\mathcal X}\|q(x)\|_2\,\|\theta-\theta'\|_2\,dQ(x)
\le
U\|\theta-\theta'\|_2.
\]
Hence an \((\varepsilon/U)\)-cover of the Euclidean ball
$ 
B_2^d(R):=\{\theta\in\mathbb R^d:\|\theta\|_2\le R\}
$
induces an \(\varepsilon\)-cover of \(\mathcal F_{\mathrm{lin}}\) under \(L_1(Q)\), so
$
\mathcal N\!\left(\varepsilon,\mathcal F_{\mathrm{lin}},L_1(Q)\right)
\le
\mathcal N\!\left(\varepsilon/U,B_2^d(R),\|\cdot\|_2\right).
$ 
By the standard covering bound for the Euclidean ball,
$ 
\mathcal N\!\left(\delta,B_2^d(R),\|\cdot\|_2\right)
\le
\left(1+\frac{2R}{\delta}\right)^d,
$ 
and therefore
$ 
\log \mathcal N\!\left(\varepsilon,\mathcal F_{\mathrm{lin}},L_1(Q)\right)
\le
d\log\!\left(1+\frac{2RU}{\varepsilon}\right).
$ 
It follows that
$ 
\int_0^{2RU}
\sqrt{\log \mathcal N\!\left(\varepsilon,\mathcal F_{\mathrm{lin}},L_1(Q)\right)}
\,d\varepsilon
\le
\sqrt d
\int_0^{2RU}
\sqrt{\log\!\left(1+\frac{2RU}{\varepsilon}\right)}
\,d\varepsilon.
$ 
With the change of variables \(t=\varepsilon/(2RU)\), the right-hand side becomes
$ 
2RU\sqrt d
\int_0^1 \sqrt{\log(1+t^{-1})}\,dt,
$ 
and the last integral is a finite universal constant. This proves the claim.
\end{proof}

\subsection{Proofs of auxiliary lemmas for Theorem \ref{thm:gmeans_lower_union_ignorance}} \label{sec:auxiliary_lower_bound}

Throughout, we will use the notation introduced in the proof of Theorem \ref{thm:gmeans_lower_union_ignorance}, Appendix \ref{proof:lower_bound}.

\subsubsection{Proof of Lemma \ref{lem:bound_basic_constants}} \label{proof:lem:bound_basic_constant}
Since \(N\ge 128(G-1)^2\) and \(G\ge4\), \(N/[128(G-1)]\ge G-1\ge3\).
Thus \(b \ge \max\{1, N/[256(G-1)]\}\). Since 
$ 
Mb\le \frac{N}{256},
$ 
$ 
a\ge \frac{N-Mb}{G-1}-1
\ge
\frac{255}{256}\frac{N}{G-1}-1.
$ 
Therefore it follows 
\(a\ge N/[2(G-1)]\). Since \(\underline\kappa\le 1/[4(G-1)]\) by assumption in Theorem \ref{thm:gmeans_lower_union_ignorance},  
\(a\ge \underline\kappa N\). Moreover,
$ 
a\ge \frac{N}{2(G-1)} \ge 
4(G-1).
$ 
Also, 
$ 
a\ge \frac{N}{2(G-1)}
=
64\frac{N}{128(G-1)}
\ge 32b.
$ 
Finally, since \(b\ge1\) and \(\underline\eta\le K_0/32\),
$ 
\Delta=\frac{\underline\eta}{8\sqrt b}\le \frac{\underline\eta}{8}
\le \frac{K_0}{8}.
$ 

Next, we prove the first inequality in \eqref{eq:proof-MbDelta-vs-aK0}. This inequality is equivalent to
$ 
16Mb\Delta\le aK_0.
$ 
Since \(\Delta=\underline\eta/(8\sqrt b)\), it is enough to prove
$ 
2M\underline\eta\sqrt b\le aK_0.
$ 
Using \(2M\le G-1\), \(b\le N/[128(G-1)]\), and
\(a\ge N/[2(G-1)]\), we have
$ 
2M\underline\eta\sqrt b
\le
\underline\eta\sqrt{\frac{N(G-1)}{128}}, 
aK_0\ge \frac{NK_0}{2(G-1)}.
$ 
Hence \(2M\underline\eta\sqrt b\le aK_0\) follows if
$ 
\underline\eta\sqrt{\frac{N(G-1)}{128}}
\le
\frac{NK_0}{2(G-1)}.
$ 
After squaring and simplifying, this condition is
$ 
N\ge
\frac{1}{32}\frac{\underline\eta^2}{K_0^2}(G-1)^3.
$ 
This is implied by the sample-size condition in Theorem \ref{thm:gmeans_lower_union_ignorance}. 

Next, we prove the second inequality in \eqref{eq:proof-MbDelta-vs-aK0}. Note that because $0 \le \Delta \le K_0/8$, $(K_0- \Delta)^2 \le K_0^2$; in addition, it follows immediately that $ab/(a+b) \le b$ for $a \ge 0$. 
The last condition on $L$ follows since $0 \le L = N - Mb - (G-1)\lfloor (N - Mb)/(G-1)\rfloor < G-1$.

\subsubsection{Proof of Lemma \ref{lem:phi-bound}} \label{proof:lem:phi-bound}
For fixed \(k\), \(u\mapsto\Phi_k(u)\) is concave on \([0,k]\), since
\(u\mapsto au/(a+u)\) is concave and the other terms are affine in \(u\). Thus
$ 
\Phi_k(u)\ge \min\{\Phi_k(0),\Phi_k(k)\}.
$ 
At \(u=0\),
$ 
\Phi_k(0)
=
kK_0^2
-
\left(
\frac{ab}{a+b}
-
\frac{a(b-k)}{a+b-k}
\right)(K_0-\Delta)^2,
$ 
and
$ 
\frac{ab}{a+b}-\frac{a(b-k)}{a+b-k}
=
\frac{a^2k}{(a+b)(a+b-k)}
\le k.
$ 
Hence \(\Phi_k(0)\ge k(2K_0\Delta-\Delta^2)\ge kK_0\Delta\), using
\(\Delta\le K_0/8\).
Next note that 
\(k \mapsto \Phi_k(k)\), is concave on \([0,b]\), and \(\Phi_{0}(0)=0\). Hence
$ 
\Phi_k(k)\ge \frac{k}{b}\Phi_{b}(b).
$ 
A direct calculation gives \(\Phi_b(b)=4abK_0\Delta/(a+b)\). Since
\(a\ge 32b\) by Lemma~\ref{lem:bound_basic_constants},
$ 
\frac{4ab}{a+b}\ge  3b.
$ 
Thus \(\Phi_k(k)\ge3kK_0\Delta\). Combining the endpoint bounds proves the claim.

\subsubsection{Proof of Lemma \ref{lem:single-anchor-canonicalization}} \label{proof:lem:single-anchor-canonicalization}
Fix \(h\in\{2,\dots,G\}\) (therefore $h \neq 1$ does not include the abstention group). We first show that
$ 
\{x\in\mathcal X_N:\widetilde\alpha(x)=h\}
$ 
is either empty or a consecutive interval.

If \(u^\star=1\), then by construction 
$ 
\{x:\widetilde\alpha(x)=h\}
=
\{x:\alpha(x)=h\}\setminus A.
$ 
The set \(\{x:\alpha(x)=h\}\) is a consecutive interval by assumption. Removing the
consecutive block \(A\) from it can produce a non-consecutive set only if
$ 
A\subseteq \{x:\alpha(x)=h\}.
$ 
But in that case every point in \(A\) has label \(h\), so
$ 
\mathcal U_\alpha(A)=\{h\}.
$ 
This contradicts \(u^\star=1\), since $h > 1$. Hence
\(\{x:\widetilde\alpha(x)=h\}\) is either empty or consecutive.

Now suppose \(u^\star=h^\star\in\{2,\dots,G\}\). If \(h=h^\star\), then
$ 
\{x:\widetilde\alpha(x)=h^\star\}
=
\{x:\alpha(x)=h^\star\}\cup A.
$ 
Since \(h^\star\in\mathcal U_\alpha(A)\), we have
$ 
\{x:\alpha(x)=h^\star\}\cap A\neq\varnothing.
$ 
Thus \(\{x:\alpha(x)=h^\star\}\) and \(A\) are two consecutive intervals with nonempty
intersection, so their union is consecutive.

If instead \(h\neq h^\star\), then
$ 
\{x:\widetilde\alpha(x)=h\}
=
\{x:\alpha(x)=h\}\setminus A.
$ 
As above, this set can fail to be consecutive only if
$ 
A\subseteq \{x:\alpha(x)=h\}.
$ 
But then every point in \(A\) has label \(h\), so
$ 
\mathcal U_\alpha(A)=\{h\},
$ 
which would force \(u^\star=h\), contradicting \(h\neq h^\star\). Therefore
\(\{x:\widetilde\alpha(x)=h\}\) is either empty or consecutive for every
\(h\in\{2,\dots,G\}\).

The sets \(\{x:\widetilde\alpha(x)=h\}\), \(h=2,\dots,G\), are pairwise disjoint by construction  of
\(\widetilde\alpha\). Moreover, the number of nonempty sets among them
cannot increase. If \(u^\star=1\), each set is obtained from the corresponding original
set by removing points in \(A\). If \(u^\star=h^\star\ge2\), then the label \(h^\star\)
was already used on \(A\), because
$ 
\{x:\alpha(x)=h^\star\}\cap A\neq\varnothing,
$ 
and all other labels only lose points in \(A\). Hence no new nonempty predictive label is
created. It remains to prove the loss comparison. For every \(x\notin A\),
 the contribution of \(x\) to \(R_\theta(\alpha,\mu)\) and to
\(R_\theta(\widetilde\alpha,\widetilde\mu)\) is the same because $\widetilde{\alpha}(x) = \alpha(x)$ and $\widetilde{\mu} = \mu$.

For every \(x\in A\), we have \(\tau_\theta(x)= (-1)^g K_0\). Also,
\(\alpha(x)\in\mathcal U_\alpha(A)\). Therefore the original contribution of \(x\) under $\alpha,\mu$ to the loss is
\[
\sum_{h=2}^G
(\mu_h- (-1)^g K_0)^2\,1\{\alpha(x)=h\}
+
K_0^2\,1\{\alpha(x)=1\}
=
\lambda_{\alpha(x)}.
\]
The new contribution of \(x\) under $\widetilde{\alpha}, \widetilde{\mu}$ is
$ 
\sum_{h=2}^G
(\widetilde\mu_h-(-1)^gK_0)^2\,1\{\widetilde\alpha(x)=h\}
+
K_0^2\,1\{\widetilde\alpha(x)=1\}
=
\lambda_{u^\star}.
$ 
By definition of \(u^\star\),
$ 
\lambda_{u^\star}
=
\min_{u\in\mathcal U_\alpha(A)}\lambda_u
\le
\lambda_{\alpha(x)}.
$ 
Thus the pointwise contribution weakly decreases for every \(x\in A\), and is unchanged
for every \(x\notin A\). Summing over \(x\in\mathcal X_N\) gives
$ 
R_\theta(\widetilde\alpha,\widetilde\mu)
\le
R_\theta(\alpha,\mu).
$ 
This proves the claim.

\subsubsection{Proof of Lemma \ref{lem:structural_property} } \label{proof:lem:structural_property}
\emph{Part 1: defect count and construction of selected anchors.}
The goal of this part is to quantify how \(\widetilde\alpha\) fails to satisfy
\(\mathcal B_{\mathrm{good}}\). We first need to introduce some notation. 

For each \(j\in\{1,\dots,J_{\widetilde\alpha}\}\), define
$ 
t_j:=\left|\{g\in\{2,\dots,G\}:A_g\subseteq D_j\}\right|.
$ 
Let
$ 
I
:=
\left|
\left\{
g\in\{2,\dots,G\}:
\widetilde\alpha(x)=1\text{ for all }x\in A_g
\right\}
\right|,
$ 
and define
$ 
J_0:=|\{j:t_j=0\}|,
J_{\mathrm{anch}}:=|\{j:t_j\ge 1\}|.
$ 
Since every anchor block is either entirely ignored or entirely contained in exactly one
predictive interval, we have
$ 
G-1
=
I+\sum_{j=1}^{J_{\widetilde\alpha}}t_j
=
I+J_{\mathrm{anch}}
+
\sum_{j=1}^{J_{\widetilde\alpha}}(t_j-1)_+.
$ 
Define
$ 
\delta
:=
I+\sum_{j=1}^{J_{\widetilde\alpha}}(t_j-1)_+.
$ 
Then
$ 
\delta=G-1-J_{\mathrm{anch}}.
$ 
Since \(J_{\widetilde\alpha}=J_{\mathrm{anch}}+J_0\le G-1\), it follows that
$ 
J_0\le \delta.
$ 
Moreover, by the definition of \(\mathcal B_{\mathrm{good}}\),
$ 
\widetilde\alpha\notin\mathcal B_{\mathrm{good}}
\Longleftrightarrow
\delta\ge 1.
$ 
Thus, under the maintained assumption \(\widetilde\alpha\notin\mathcal B_{\mathrm{good}}\),
we have
$ 
\delta\ge 1.
$ 

Define
$ 
\mathcal I_{\mathrm{ign}}
:=
\left\{
g\in\{2,\dots,G\}:
\widetilde\alpha(x)=1\text{ for all }x\in A_g
\right\}.
$ 
For each \(j\in\{1,\dots,J_{\widetilde\alpha}\}\) such that \(t_j\ge 2\), choose a subset
$ 
\mathcal S_j
\subseteq
\{g\in\{2,\dots,G\}:A_g\subseteq D_j\}
$ 
with
$ 
|\mathcal S_j|=t_j-1.
$ 
Such a choice is possible because
$ 
\left|\{g\in\{2,\dots,G\}:A_g\subseteq D_j\}\right|=t_j.
$ 
Now define
$ 
\mathcal S
:=
\{A_g:g\in\mathcal I_{\mathrm{ign}}\}
\cup
\bigcup_{\{j:t_j\ge2\}}\{A_g:g\in\mathcal S_j\}.
$ 
Thus \(\mathcal S\subseteq\{A_2,\dots,A_G\}\), every ignored anchor block is included in
\(\mathcal S\), and for every interval \(D_j\) with \(t_j\ge2\), exactly \(t_j-1\) of the
anchor blocks contained in \(D_j\) are included in \(\mathcal S\). Hence
$ 
|\mathcal S|
=
I+\sum_{j=1}^{J_{\widetilde\alpha}}(t_j-1)_+
=
\delta.
$ 

\medskip
\noindent
\emph{Part 2: lower bound on the excess loss of each selected anchor.}
The goal of this part is to prove the following inequality: 
\begin{equation}\label{eq:proof-selected-anchor}
\frac1N\sum_{x\in A_g}
\bigl(L_{\widetilde\alpha}(x)-L_{\mathrm{bench}}(x)\bigr)
\ge
\frac{aK_0^2}{8N},
\qquad A_g\in\mathcal S.
\end{equation}

Fix \(A_g\in\mathcal S\). There are two cases. First suppose that \(A_g\) is ignored by \(\widetilde\alpha\). Then
\(L_{\widetilde\alpha}(x)=K_0^2\) for every \(x\in A_g\). By
\eqref{eq:proof-benchmark-anchor},
$ 
L_{\mathrm{bench}}(x)\le \frac{K_0^2}{1024},
x\in A_g.
$ 
Since \(|A_g|\ge a\), we get
$ 
\frac1N\sum_{x\in A_g}
\bigl(L_{\widetilde\alpha}(x)-L_{\mathrm{bench}}(x)\bigr)
\ge
\frac{aK_0^2}{N}\left(1-\frac1{1024}\right)
>
\frac{aK_0^2}{8N}.
$ 
Now suppose that \(A_g\subseteq D_j\), where \(D_j\) contains at least two anchor blocks.
Write the anchor blocks contained in \(D_j\) as
$ 
A_h,A_{h+1},\dots,A_v,
h<v.
$ 
Since \(D_j\) is a consecutive interval and each anchor block is either entirely ignored or entirely
contained in one predictive interval, the anchor indices contained in \(D_j\) are
consecutive. Therefore
$ 
t_j=v-h+1\ge2.
$ 
By definition,
$ 
\nu_{j,\theta}^\star
=
\frac1{|D_j|}\sum_{z\in D_j}\tau_\theta(z).
$ 
On the anchor blocks contained in \(D_j\), because the signs of $\tau_\theta(x)$ alternate,
$ 
\left|
\sum_{z\in D_j\cap\bigcup_{r=h}^v A_r}\tau_\theta(z)
\right|
\le
(a+L)K_0
\le
(a+G-2)K_0,
$ 
where the last inequality uses \(L\le G-2\) from Lemma \ref{lem:bound_basic_constants}.

Next, define 
\[
\mathcal H(D_j):=\{m\in\{1,\dots,M\}:H_m\cap D_j\neq\varnothing\}.
\]
We claim that \(|\mathcal H(D_j)|\le t_j\). To show this note that if \(H_m\cap D_j\neq\varnothing\),
then, because \(D_j\) is a consecutive interval, \(H_m\) can only be immediately to the left of
\(A_h\), between two anchors contained in \(D_j\), or immediately to the right of \(A_v\).
Formally, one of the following three conditions holds:
$ 
2m+1=h,
h\le 2m<2m+1\le v,
2m=v.
$ 
Define
$ 
r:\mathcal H(D_j)\to\{h,\dots,v\}
$ 
by
$$ 
\small 
\begin{aligned} 
r(m) := h 1\{2 m + 1 = h\} + (2m + 1) 1\{h \le 2m < v\} + v 1\{2 m = v\}
\end{aligned} 
$$ 
This map is injective, and hence
$ 
|\mathcal H(D_j)|\le |\{h,\dots,v\}|=v-h+1=t_j.
$ 
Since \(|\tau_\theta(z)|=\Delta\) for every \(z\in H_m\), it follows that
$ 
\left|
\sum_{z\in D_j\setminus\bigcup_{r=h}^v A_r}\tau_\theta(z)
\right|
\le
t_jb\Delta.
$ 
Also,
$ 
|D_j|
\ge
\left|\bigcup_{r=h}^v A_r\right|
=
\sum_{r=h}^v |A_r|
\ge
(v-h+1)a
=
t_ja.
$ 
Combining the preceding displays and using \(t_j\ge2\), we obtain
$ 
|\nu_{j,\theta}^\star|
\le
\frac{(a+G-2)K_0+t_jb\Delta}{t_ja}
\le
\frac{a+G-2}{2a}K_0+\frac{b\Delta}{a}.
$ 
Using Lemma~\ref{lem:bound_basic_constants}, it follows that
$ 
|\nu_{j,\theta}^\star|\le \frac{161}{256}K_0.
$ 
Thus, for every \(x\in A_g\), since \(\tau_\theta(x)\in\{-K_0,K_0\}\),
$ 
L_{\widetilde\alpha}(x)
=
(\nu_{j,\theta}^\star-\tau_\theta(x))^2
\ge
\left(K_0-|\nu_{j,\theta}^\star|\right)^2
\ge
\left(\frac{95}{256}K_0\right)^2.
$ 
Combining this with \eqref{eq:proof-benchmark-anchor} gives
$ 
L_{\widetilde\alpha}(x)-L_{\mathrm{bench}}(x)
\ge
\left(\frac{95^2}{256^2}-\frac1{1024}\right)K_0^2
>
\frac18 K_0^2.
$ 
Since \(|A_g|\ge a\), this proves \eqref{eq:proof-selected-anchor}.

\medskip
\noindent
\emph{Part 3: anchorless intervals and the set \(\mathcal M_0\).}
The goal of this part is to control the number of blocks $H_m$ intersected by predictive
intervals $D_j$ that contain no anchor block $A_g$.
Specifically, define 
\[
\mathcal M_0
:=
\left\{
m\in\{1,\dots,M\}:
\exists j\in\{1,\dots,J_{\widetilde\alpha}\}
\text{ such that }D_j\cap H_m\neq\varnothing
\text{ and }t_j=0
\right\}.
\]
We first show that every predictive interval with \(t_j=0\) can intersect at most one
coded block \(H_m\), and if it intersects \(H_m\), then it is contained in \(H_m\). Suppose
\(t_j=0\) and \(D_j\cap H_m\neq\varnothing\). If \(D_j\not\subseteq H_m\), then, since
\(D_j\) is consecutive and \(H_m\) lies immediately between \(A_{2m}\) and
\(A_{2m+1}\), the interval \(D_j\) must intersect either \(A_{2m}\) or \(A_{2m+1}\). Let
this anchor be \(A_g\), with \(g\in\{2m,2m+1\}\). By 
Lemma~\ref{lem:single-anchor-canonicalization}, \(A_g\) is either entirely ignored or
entirely contained in one predictive interval. Since \(D_j\cap A_g\neq\varnothing\), the
block \(A_g\) is not entirely ignored; hence it must be entirely contained in a predictive
interval. Because predictive intervals are disjoint and \(D_j\) intersects \(A_g\), this
predictive interval must be \(D_j\). Therefore \(A_g\subseteq D_j\), contradicting
\(t_j=0\). Hence \(D_j\subseteq H_m\).

For each \(m\in\mathcal M_0\), choose one index
\(j(m)\in\{1,\dots,J_{\widetilde\alpha}\}\) such that
$ 
D_{j(m)}\cap H_m\neq\varnothing,
t_{j(m)}=0.
$ 
By the previous paragraph,
$ 
D_{j(m)}\subseteq H_m.
$ 
Since the blocks \(H_1,\dots,H_M\) are pairwise disjoint, if \(m\neq m'\), then
\(D_{j(m)}\neq D_{j(m')}\). Therefore \(m\mapsto j(m)\) is an injection from
\(\mathcal M_0\) into the set \(\{j:t_j=0\}\), whose cardinality is \(J_0\). Thus
\begin{equation}\label{eq:proof-M0}
|\mathcal M_0|\le J_0\le\delta, 
\end{equation}
where the second inequality follows from Part 1 of this proof.

\medskip
\noindent
\emph{Part 4: nonnegative excess on clean coded blocks.}
The goal of this part is to show that if \(m\notin\mathcal M_0\), then the local excess on
\(B_m:=A_{2m}\cup H_m\cup A_{2m+1}\) is nonnegative. For \(m=1,\dots,M\), define
$ 
\mathcal E_m
:=
\frac1N\sum_{x\in B_m}
\bigl(L_{\widetilde\alpha}(x)-L_{\mathrm{bench}}(x)\bigr)
$. 
We claim that
\begin{equation}\label{eq:proof-clean-block}
m\notin\mathcal M_0
\quad\Longrightarrow\quad
\mathcal E_m\ge0.
\end{equation}

Fix \(m\notin\mathcal M_0\). First suppose that \(A_{2m}\) and \(A_{2m+1}\) lie in the
same predictive interval. Since that interval is consecutive, it contains all $x$ in \(B_m\).
Using \eqref{eq:proof-two-point-ls}\footnote{Using
$ 
a(t-K_0)^2+a(t+K_0)^2=2aK_0^2+2at^2,
$ 
we can reduce the three-term least-squares problem to the two-point identity
\eqref{eq:proof-two-point-ls}. Namely,
$ 
 \frac1N\inf_{t\in\mathbb R}
\left\{
a(t-K_0)^2+a(t+K_0)^2+b(t-\theta_m\Delta)^2
\right\}  =
\frac1N\left[
2aK_0^2+
\inf_{t\in\mathbb R}
\left\{
2a(t-0)^2+b(t-\theta_m\Delta)^2
\right\}
\right] =
\frac1N\left(
2aK_0^2+\frac{2ab}{2a+b}\Delta^2
\right)
\ge
\frac{2aK_0^2}{N}. 
$},
$$  
\small 
\begin{aligned} 
\frac1N\inf_{t\in\mathbb R}
\left\{
a(t-K_0)^2+a(t+K_0)^2+b(t-\theta_m\Delta)^2
\right\}
=
\frac1N\left(2aK_0^2+\frac{2ab}{2a+b}\Delta^2\right)
\ge
\frac{2aK_0^2}{N}.
\end{aligned} 
$$ 
Combining this with \eqref{eq:proof-MbDelta-vs-aK0} and \(a\ge32b\) (Lemma \ref{lem:bound_basic_constants}) gives
$ 
\mathcal E_m
\ge
\frac{2aK_0^2}{N}-\frac{bK_0^2}{N}
\ge0.
$ 

Now suppose that \(A_{2m}\) and \(A_{2m+1}\) do not lie in the same predictive interval.
Since \(m\notin\mathcal M_0\), every predictive point of \(H_m\) belongs to a predictive
interval containing either \(A_{2m}\) or \(A_{2m+1}\); all remaining points of \(H_m\) are
ignored. Let \(\bar s_m^\theta = 2m 1\{\theta_m = -1\}  + (2m + 1) 1\{\theta_m = 1\}\). Define \(k_m\) as the number of points in
\(H_m\) either ignored or assigned to a predictive interval containing
\(A_{\bar s_m^\theta}\), and define \(u_m\) as the number of points in \(H_m\) assigned to
a predictive interval containing \(A_{\bar s_m^\theta}\). Then
$ 
0\le u_m\le k_m\le b.
$ 
From Equation \eqref{eq:proof-two-point-ls}, it follows that
$ 
\mathcal E_m\ge \frac1N\Phi_{k_m}(u_m), 
$ 
with $\Phi$ denoted in Equation \eqref{eqn:Phi_k}. 
In addition, by Lemma~\ref{lem:phi-bound},
$ 
\Phi_{k_m}(u_m)\ge0.
$ 
Therefore \(\mathcal E_m\ge0\), proving \eqref{eq:proof-clean-block}.

\medskip
\noindent

\emph{Part 5: global summation.}
The goal of this final part is to combine the bounds obtained above. Since \(G\) is even and \(M=G/2-1\), the sets
$ 
B_1,\ldots,B_M,A_G
$ 
are pairwise disjoint and their union is \(\mathcal X_N\), where
$ 
B_m:=A_{2m}\cup H_m\cup A_{2m+1}.
$ 
For each \(m\in\{1,\ldots,M\}\), define
$ 
\mathcal S_m:=\mathcal S\cap\{A_{2m},A_{2m+1}\},
\zeta_m:=|\mathcal S_m|,
$ 
and
$ 
\mathcal M_1:=\{m\in\{1,\ldots,M\}:\zeta_m\ge1\}.
$ 
Also define
\[
\mathcal E_m
:=
\frac1N\sum_{x\in B_m}
\bigl(L_{\widetilde\alpha}(x)-L_{\mathrm{bench}}(x)\bigr),
\qquad
\mathcal E_G
:=
\frac1N\sum_{x\in A_G}
\bigl(L_{\widetilde\alpha}(x)-L_{\mathrm{bench}}(x)\bigr).
\]

We first record two lower bounds that will be used below. First, for any
anchor block \(A_g\subseteq\{A_2,\ldots,A_{G-1}\}\) that is not selected in
\(\mathcal S\), we have
\begin{equation}\label{eq:proof-unselected-anchor-bound}
\frac1N\sum_{x\in A_g}
\bigl(L_{\widetilde\alpha}(x)-L_{\mathrm{bench}}(x)\bigr)
\ge
-\frac{aK_0^2}{1024N}.
\end{equation}
Indeed, \(L_{\widetilde\alpha}(x)\ge0\) for all \(x\), while
\eqref{eq:proof-benchmark-anchor} gives
$ 
L_{\mathrm{bench}}(x)\le \frac{K_0^2}{1024},
\qquad x\in A_g.
$ 
Since \(|A_g|=a\) for \(g=2,\ldots,G-1\), 
\eqref{eq:proof-unselected-anchor-bound} follows.

Second, for every \textit{bridge} block \(H_m\), we have
\begin{equation}\label{eq:proof-bridge-bound}
\frac1N\sum_{x\in H_m}
\bigl(L_{\widetilde\alpha}(x)-L_{\mathrm{bench}}(x)\bigr)
\ge
-\frac{bK_0^2}{N}.
\end{equation}
To see this, note again that \(L_{\widetilde\alpha}(x)\ge0\). Under the benchmark
partition \(\alpha^\theta\), the block \(H_m\) is pooled with \(A_{2m}\) when
\(\theta_m=1\), and with \(A_{2m+1}\) when \(\theta_m=-1\). The total benchmark
least-squares loss on the corresponding pooled anchor-and-bridge block is
$ 
\frac{ab}{a+b}(K_0-\Delta)^2.
$ 
Therefore, the benchmark contribution from \(H_m\) alone is no larger than this total
quantity. By \eqref{eq:proof-MbDelta-vs-aK0},
$ 
\frac1N\frac{ab}{a+b}(K_0-\Delta)^2
\le
\frac{bK_0^2}{N},
$ 
which proves \eqref{eq:proof-bridge-bound}.

We now lower-bound \(\mathcal E_m\) block by block. If \(m\in\mathcal M_1\), then
\(\zeta_m\in\{1,2\}\). For each selected anchor in \(\mathcal S_m\),
\eqref{eq:proof-selected-anchor} gives a contribution at least
\(aK_0^2/(8N)\). For each non-selected anchor in
\(\{A_{2m},A_{2m+1}\}\setminus\mathcal S_m\), we use
\eqref{eq:proof-unselected-anchor-bound}. Finally, for \(H_m\), we use
\eqref{eq:proof-bridge-bound}. Hence, for every \(m\in\mathcal M_1\),
\begin{equation}\label{eq:proof-Em-M1-fixed}
\mathcal E_m
\ge
\frac{\zeta_m aK_0^2}{8N}
-
\frac{(2-\zeta_m)aK_0^2}{1024N}
-
\frac{bK_0^2}{N}.
\end{equation}

If \(m\notin\mathcal M_1\), then \(\zeta_m=0\). When also \(m\notin\mathcal M_0\),
\eqref{eq:proof-clean-block} gives \(\mathcal E_m\ge0\). When instead
\(m\in\mathcal M_0\), we use the bounds
\eqref{eq:proof-unselected-anchor-bound} for both anchors and
\eqref{eq:proof-bridge-bound} for \(H_m\). Thus, for every \(m\notin\mathcal M_1\),
\begin{equation}\label{eq:proof-Em-not-M1-fixed}
\mathcal E_m
\ge
-
\left(
\frac{2a}{1024}+b
\right)
\frac{K_0^2}{N}\,
\mathbf 1\{m\in\mathcal M_0\}.
\end{equation}

For the final anchor block \(A_G\), note that the benchmark assigns \(A_G\) to its own
group and \(\tau_\theta(x)\) is constant on \(A_G\). Hence
\(L_{\mathrm{bench}}(x)=0\) for all \(x\in A_G\). Therefore \(\mathcal E_G\ge0\) if
\(A_G\notin\mathcal S\), while \eqref{eq:proof-selected-anchor} gives a lower bound
of \(aK_0^2/(8N)\) if \(A_G\in\mathcal S\). Consequently,
\begin{equation}\label{eq:proof-EG-fixed}
\mathcal E_G
\ge
\frac{aK_0^2}{8N}\mathbf 1\{A_G\in\mathcal S\}.
\end{equation}

Combining \eqref{eq:proof-loss-difference-pointwise},
\eqref{eq:proof-Em-M1-fixed}, \eqref{eq:proof-Em-not-M1-fixed}, and
\eqref{eq:proof-EG-fixed}, we obtain
\[
\small 
\begin{aligned}
&
R_\theta(\widetilde\alpha,\nu_\theta^\star)
-
R_\theta(\alpha^\theta,\mu^\theta)
=
\sum_{m=1}^M\mathcal E_m+\mathcal E_G
\\
&\ge
\frac{aK_0^2}{8N}
\left(
\sum_{m\in\mathcal M_1}\zeta_m
+
\mathbf 1\{A_G\in\mathcal S\}
\right)
-
\frac{aK_0^2}{1024N}
\sum_{m\in\mathcal M_1}(2-\zeta_m)
-
\frac{bK_0^2}{N}|\mathcal M_1|
-
\left(\frac{2a}{1024}+b\right)
\frac{K_0^2}{N}
|\mathcal M_0\cap\mathcal M_1^c|.
\end{aligned}
\]
By Part 1,
$ 
|\mathcal S|
=
\sum_{m\in\mathcal M_1}\zeta_m
+
\mathbf 1\{A_G\in\mathcal S\}
=
\delta.
$ 
Moreover,
$ 
|\mathcal M_1|
\le
\sum_{m\in\mathcal M_1}\zeta_m
\le
\delta,
|\mathcal M_0\cap\mathcal M_1^c|
\le
|\mathcal M_0|
\le
\delta
$ 
by \eqref{eq:proof-M0}. Also, since \(\zeta_m\in\{1,2\}\) for
\(m\in\mathcal M_1\),
$ 
\sum_{m\in\mathcal M_1}(2-\zeta_m)
\le
|\mathcal M_1|
\le
\delta.
$ 
Therefore,
\[
\begin{aligned}
R_\theta(\widetilde\alpha,\nu_\theta^\star)
-
R_\theta(\alpha^\theta,\mu^\theta)
&\ge
\delta
\left(
\frac a8
-
\frac{3a}{1024}
-
2b
\right)
\frac{K_0^2}{N}.
\end{aligned}
\]
Using \(a\ge32b\) from Lemma~\ref{lem:bound_basic_constants}, we have
$ 
2b\le \frac a{16}, 
$ which implies 
$ 
\frac a8-\frac{3a}{1024}-2b
\ge
\frac a{32}.
$ 
It follows that
\[
R_\theta(\widetilde\alpha,\nu_\theta^\star)
-
R_\theta(\alpha^\theta,\mu^\theta)
\ge
\delta\,\frac{aK_0^2}{32N}.
\]
Finally, since \(\delta\ge1\) from Part 1 under the maintained assumption
\(\widetilde\alpha\notin\mathcal B_{\mathrm{good}}\), we conclude that
$ 
R_\theta(\widetilde\alpha,\nu_\theta^\star)
-
R_\theta(\alpha^\theta,\mu^\theta)
\ge
\frac{aK_0^2}{32N},
$ 
which proves \eqref{eq:proof-bad-structural}.

\subsubsection{Proof of Lemma \ref{lem:good_geometry}} \label{proof:lem:good_geometry} 
Since \(\widetilde\alpha\in\mathcal B_{\mathrm{good}}\), every anchor \(A_g\) is contained
in some \(D_j\), and no \(D_j\) contains more than one anchor. Thus the \(G-1\) anchor
blocks \(A_2,\dots,A_G\) lie in \(G-1\) distinct predictive intervals. Since
\(J_{\widetilde\alpha}\le G-1\), \(J_{\widetilde\alpha}=G-1\), and every \(D_j\) contains
exactly one anchor.

Fix \(m\). The intervals \(D_m^L\) and \(D_m^R\) are distinct because they contain distinct
anchors. We show \(D_m^L\subseteq A_{2m}\cup H_m\). If \(m=1\), \(A_2\) is the leftmost
block. If \(m\ge2\), a point of \(D_m^L\) strictly to the left of \(A_{2m}\), together
with consecutiveness and \(A_{2m}\subseteq D_m^L\), would force
\(D_m^L\cap A_{2m-1}\neq\varnothing\), contradicting that \(D_m^L\) contains exactly one
anchor. Similarly, a point of \(D_m^L\) strictly to the right of \(H_m\) would force
\(D_m^L\cap A_{2m+1}\neq\varnothing\). Hence \(D_m^L\subseteq A_{2m}\cup H_m\). Therefore
\(L_m:=D_m^L\cap H_m\) is an initial segment of \(H_m\), and
\(D_m^L=A_{2m}\cup L_m\).

The argument for \(D_m^R\) is symmetric. Since \(m\le M\) and \(M=G/2-1\),
\(2m+2\le G\). A point of \(D_m^R\) strictly to the left of \(H_m\) would force
intersection with \(A_{2m}\), while a point strictly to the right of \(A_{2m+1}\) would
force intersection with \(A_{2m+2}\). Both contradict the one-anchor property. Hence
\(D_m^R\subseteq H_m\cup A_{2m+1}\), so \(R_m:=D_m^R\cap H_m\) is a terminal segment of
\(H_m\), and \(D_m^R=R_m\cup A_{2m+1}\).

No third predictive interval can meet \(B_m:=A_{2m}\cup H_m\cup A_{2m+1}\). Indeed, if
\(D_j\cap B_m\neq\varnothing\) and \(D_j\neq D_m^L,D_m^R\), then
\(D_j\cap A_{2m}=D_j\cap A_{2m+1}=\varnothing\), so \(D_j\cap B_m\subseteq H_m\). If
\(D_j\) contained any anchor to the left of \(A_{2m}\), consecutiveness would force
intersection with \(A_{2m}\); if it contained any anchor to the right of \(A_{2m+1}\),
consecutiveness would force intersection with \(A_{2m+1}\). Thus \(D_j\) would contain no
anchor, contradicting the first paragraph.

Therefore the only predictive intervals meeting \(B_m\) are \(D_m^L\) and \(D_m^R\).
Since they are disjoint, \(L_m\cap R_m=\varnothing\), and every point of
\(H_m\setminus(L_m\cup R_m)\) is ignored. The representation of \(L_m\) and \(R_m\) as an
initial and terminal segment of \(H_m=\{x_{q_m},\dots,x_{q_m+b-1}\}\), and the identity
\(l_m+r_m+s_m=b\), follow directly from the definitions.
 
\subsubsection{Proof of Lemma \ref{lem:local_good_property}} \label{proof:lem:local_good_property} 
For \(m=1,\dots,M\), set \(B_m:=A_{2m}\cup H_m\cup A_{2m+1}\) and denote
$ 
\ell_m(\widetilde\alpha,\nu_\theta^\star)
:=
\frac1N\sum_{x\in B_m}L_{\widetilde\alpha}(x),
\ell_m(\alpha^\theta,\mu^\theta)
:=
\frac1N\sum_{x\in B_m}L_{\mathrm{bench}}(x).
$ 
Define
$ 
u_m:= r_m 1\{\theta_m = 1\} + l_m 1\{\theta_m = -1\}$ 
Then \(0\le u_m\le k_m\le b\), \(k_m-u_m=s_m\). Also, by construction,
$ 
b-k_m= l_m 1\{\theta_m = 1\} + r_m 1\{\theta_m = -1\}$. 
Using Lemma~\ref{lem:good_geometry} and \eqref{eq:proof-two-point-ls}, we can write
\begin{equation}\label{eq:proof-local-exact}
\ell_m(\widetilde\alpha,\nu_\theta^\star)
=
\frac1N\left[
\frac{a(b-k_m)}{a+b-k_m}(K_0-\Delta)^2
+
\frac{au_m}{a+u_m}(K_0+\Delta)^2
+
(k_m-u_m)K_0^2
\right].
\end{equation}
Equation \eqref{eq:proof-local-exact} follows from the fact that if \(\theta_m=+1\), the first term corresponds to \(A_{2m}\cup L_m\) and the
second to \(R_m\cup A_{2m+1}\); if \(\theta_m=-1\), the roles of \(l_m\) and \(r_m\) are
reversed. The ignored portion has size \(s_m=k_m-u_m\).

Under the benchmark rule, all \(x\in H_m\) have label
\(2m\) if \(\theta_m=+1\), and \(2m+1\) if \(\theta_m=-1\). The
other adjacent anchor to $H_m$ which is not pooled with $H_m$ ($A_{2m}$ if $\theta_m = -1$ and $A_{2m + 1}$ if $\theta_m = 1)$ contributes zero loss by construction. Hence
\begin{equation}\label{eq:proof-local-benchmark}
\ell_m(\alpha^\theta,\mu^\theta)
=
\frac1N\frac{ab}{a+b}(K_0-\Delta)^2.
\end{equation}
Subtracting \eqref{eq:proof-local-benchmark} from \eqref{eq:proof-local-exact} gives
$ 
\ell_m(\widetilde\alpha,\nu_\theta^\star)-\ell_m(\alpha^\theta,\mu^\theta)
=
\frac1N\Phi_{k_m}(u_m).
$ 
By Lemma~\ref{lem:phi-bound},
$ 
\ell_m(\widetilde\alpha,\nu_\theta^\star)-\ell_m(\alpha^\theta,\mu^\theta)
\ge
\frac{K_0\Delta}{N}k_m.
$ 
Summing over \(m\), and using that \(B_1,\dots,B_M,A_G\) form a disjoint partition of
\(\mathcal X_N\), the contribution on \(A_G\) is nonnegative because the benchmark loss on
\(A_G\) is zero and \(L_{\widetilde\alpha}\ge0\). This proves \eqref{eq:proof-sum-local}.

\subsubsection{Proof of Lemma \ref{lem:hamming_reduction}} \label{proof:lem:hamming_reduction}
If \(\widetilde\alpha\in\mathcal B_{\mathrm{good}}\), then Lemma~\ref{lem:good_geometry}
gives \(q_m(\widetilde\alpha)=|L_m|=l_m\). If \(\theta_m=+1\) and
\(\widehat\theta_m\neq\theta_m\), then \(l_m<b/2\), so
\(k_m=r_m+s_m=b-l_m>b/2\). If \(\theta_m=-1\) and
\(\widehat\theta_m\neq\theta_m\), then \(l_m\ge b/2\), so
\(k_m=l_m+s_m\ge b/2\). Hence, for \(\widetilde\alpha\in\mathcal B_{\mathrm{good}}\),
$ 
k_m\ge \frac b2\,\mathbf 1\{\widehat\theta_m\neq\theta_m\}.
$ 
Summing and applying Lemma~\ref{lem:local_good_property},
$ 
R_\theta(\widetilde\alpha,\nu_\theta^\star)-R_\theta(\alpha^\theta,\mu^\theta)
\ge
\frac{bK_0\Delta}{2N}d_H(\widehat\theta,\theta)$  if $\widetilde\alpha\in\mathcal B_{\mathrm{good}}$.
If \(\widetilde\alpha\notin\mathcal B_{\mathrm{good}}\), Lemma~\ref{lem:structural_property}
gives
$ 
R_\theta(\widetilde\alpha,\nu_\theta^\star)-R_\theta(\alpha^\theta,\mu^\theta)
\ge
\frac{aK_0^2}{32N}.
$ 
Combining the two expressions we have 
$$  
\small  
\begin{aligned}
R_\theta(\widetilde\alpha,\nu_\theta^\star)-R_\theta(\alpha^\theta,\mu^\theta)
&\ge
\frac{aK_0^2}{32N}\mathbf 1\{\widetilde\alpha\notin\mathcal B_{\mathrm{good}}\} + \frac{bK_0\Delta}{2N}d_H(\widehat\theta,\theta)
\mathbf 1\{\widetilde\alpha\in\mathcal B_{\mathrm{good}}\}.
\end{aligned}
$$ 
Since \(d_H(\widehat\theta,\theta)\le M\),
$ 
d_H(\widehat\theta,\theta)\mathbf 1\{\widetilde\alpha\in\mathcal B_{\mathrm{good}}\}
\ge
d_H(\widehat\theta,\theta)
-
M\mathbf 1\{\widetilde\alpha\notin\mathcal B_{\mathrm{good}}\}.
$ 
Using \eqref{eq:proof-MbDelta-vs-aK0}, $a K_0^2/(32N) -M b K_0\Delta/(2N) \ge 0$, so that 
$ 
R_\theta(\widetilde\alpha,\nu_\theta^\star)-R_\theta(\alpha^\theta,\mu^\theta)
\ge
\frac{bK_0\Delta}{2N}d_H(\widehat\theta,\theta).
$ 
Finally, \eqref{eq:proof-alpha-tilde-final} implies
\(R_\theta(\alpha,\mu)\ge R_\theta(\widetilde\alpha,\nu_\theta^\star)\), which proves
\eqref{eq:proof-pointwise-final}.

\section{Additional extensions} 

\subsection{Breakeven analysis on prediction error} \label{app:more_more_breakeven}

In this section we present a complementary break-even analysis based on the prediction accuracy of the procedure. Let $\hat{R}^{(m)}, \hat{\pi}^{(m)}, \hat{f}^{(m)}$ be as defined in Section \ref{app:breakeven_other}. 
 We will adopt the convention that $R^{(0)}(f)$ denotes $R(f, \pi = 1)$ where all units are in the prediction set and $f$ is the prediction function and $\widehat{R}^{(0)}(f)$ its in-sample estimator with $\hat{f}^{(0)} \in \mathrm{arg} \min_{f \in \mathcal{F}} \widehat{R}^{(0)}(f)$. Via sample splitting, take an independent validation sample as in \ref{sec:model_selection} (Remark \ref{sec:sample_splitting}), with $\hat{\tau}_{oos}(x)$ denoting the estimated $\tau(x)$ out-of-sample.  
For each \(m\in\mathcal M\), define 
$
\Delta(m)  = \sum_x p(x) \left[(\hat{f}^{(m)}(x) - \tau(x))^2 - (\hat{f}^{(0)}(x) - \tau(x))^2 \right] \hat{\pi}^{(m)}(x),  \widehat \Delta(m)  = \sum_x p(x) \left[(\hat{f}^{(m)}(x) - \hat{\tau}_{oos}(x))^2 - (\hat{f}^{(0)}(x) - \hat{\tau}_{oos}(x))^2 \right] \hat{\pi}^{(m)}(x)
$ 
as the difference in loss with respect to the population $\tau(x)$ on the prediction set under the prediction rule $\hat{\pi}^{(m)}$ and its estimator out-of-sample. 
Negative values of \(\Delta(m)\) indicate that, under policy $\pi^{(m)}$ corresponding to abstention cost \(c_m\), the estimated ignorance-aware rule is better than the estimated prediction that predicts everywhere on the prediction set. 

Denote, for any $m \in \mathcal{M}$ (with $N_{\mathcal{X}}:= |\mathcal{X}|$) 
$
\sigma_m^2 = \mathbb{V}\left(\sqrt{N_{\mathcal{X}} } \widehat \Delta(m) | \mathcal{T} \right), \mathcal{T} = \left(\hat{f}^{(0)}, \Big(\hat{f}^{(m)}, \hat{\pi}^{(m)}\Big)_{m \in \mathcal{M}}\right)
$ 
the variance of $\sqrt{N_{\mathcal{X}}}(\widehat{R}_{oos}^{(m)}(\hat{f}^{(m)}, \hat{\pi}^{(m)}) - \widehat{R}_{oos}^{(0)}(\hat{f}^{(0)}))$, conditional on the training sample.  
We take a variance estimate $S_m$, so that $P(S_m^2 \ge \sigma_m^2) \rightarrow 1$ for any $m \in \mathcal{M}$. We provide formal expressions for $S_m^2$ below.  For any $j \le |\mathcal{M}|$, we then construct a statistic 
\begin{equation} \label{eqn:S_l}
\small 
\begin{aligned} 
U_j = \max_{l \ge j} \left\{\widehat\Delta(m_l)+ \Phi^{-1}(1 - \alpha) \cdot \frac{S_{m_l}}{\sqrt{N_{\mathcal{X}}}}\right\}
\end{aligned} 
\end{equation}
where $\Phi(\cdot)$ is the CDF of a Standard Normal random variable, and $\alpha \in (0,1/2)$. As in Section \ref{app:breakeven_other}, we choose 
$
\hat{j}^\star = \min\{j : U_j < 0\}, \hat{m}^\star = m_{\hat{j}^\star}
$
with the convention that if $U_j \ge 0$ for all $j$, $\hat{m}^\star = \infty$.

\begin{thm}[Inference for break-even analysis] \label{thm:breakeven} Let Assumptions \ref{ass:general_regret}, \ref{ass:entropy_F} hold for the in-sample and out-of-sample estimates and assume the two are independent. Let $\sigma_m^2 > \underline{l}$ for a positive constant $\underline{l} > 0$ almost surely for all $m \in \mathcal{M}$, with $|\mathcal{M}|$ finite. Take any $S_m^2$ so that $\lim_{N_{\mathcal{X}} \rightarrow \infty} P(S_m^2 \ge \sigma_m^2) = 1$ for all $m \in \mathcal{M}$. Then
$
\limsup_{N_{\mathcal{X}} \rightarrow \infty} P(\hat{m}^\star < \infty, \Delta(\hat{m}^\star) \ge 0) \le \alpha. 
$
\end{thm}

See Appendix \ref{proof:thm:breakeven} for the proof.  We are left to derive an expression for a uniform bound $S_m^2$. Take $\xi_x^{(m)}
=
\Big\{(\hat f^{(m)}(x)-\hat\tau_{oos}(x))^2 - (\hat{f}^{(0)}(x)-\hat\tau_{oos}(x))^2\Big\}\hat\pi^{(m)}(x)$.  
A natural approach is to take $S_m^2 = N_{\mathcal{X}}\sum_x p(x)^2 \mathbb{E}[(\xi_x^{(m)} - \mathbb{E}[\bar{\xi}^m| \mathcal{T}])^2| \mathcal{T}]$ where  $\bar{\xi}^m = \sum_x p(x) \xi_x^{(m)}$. 

By replacing the expectations with the empirical average, this approach would provide us with a consistent estimate of a valid upper bound $S_m^2$ (which averages across types), where $\mathbb{E}[\bar{\xi}^m| \mathcal{T}]$ can be consistently estimated by the law of large numbers (since $\xi_x$ are independent conditional on $\mathcal{T}$ and have bounded second moment).

\subsection{Approximate clustering with gradient descent}

Algorithm \ref{alg:gmeans_abstention} describes in details gradient descent for clustering methods. In practice, because the algorithm may not converge to optimum we recommend comparing its solution against the exact solution with hard stopping time to compare the two. 

\begin{algorithm}[!ht]
\footnotesize
\caption{Approximate multi-dimensional clustering with abstention}
\label{alg:gmeans_abstention}
\begin{algorithmic}[1]
\Require Types \(x_1,\dots,x_{|\mathcal X|}\in\mathbb R^d\), estimates \(\hat\tau(x_i)\), variance estimates \(\hat\eta(x_i)^2\), abstention costs \(\hat c(x_i)\), weights \(p(x_i)\), number of predictive groups \(G-1\), temperature \(\rho>0\), smooth-min parameter \(\lambda>0\), initial centroids \(z^{(0)}=(z_2^{(0)},\dots,z_G^{(0)})\), minimum cluster size \(\underline\kappa |\mathcal X|\), step sizes \((\gamma_t)_{t\ge 0}\)

\For{\(t=0,1,2,\dots\)}
    \State \textbf{Soft cluster weights:} for each \(i\) and \(g\in\{2,\dots,G\}\), set
    $ 
    w_{ig}^{(t)}
    :=
    \frac{\exp\!\left(-\|x_i-z_g^{(t)}\|^2/\rho\right)}
    {\sum_{h=2}^G \exp\!\left(-\|x_i-z_h^{(t)}\|^2/\rho\right)}.
    $

    \State \textbf{Profiled weighted cluster means:} for each \(g\in\{2,\dots,G\}\), define, for any candidate centroids \(z\),
    $ 
    \hat\mu_g(z)
    :=
    \frac{\sum_{i=1}^{|\mathcal X|} p(x_i)\,w_{ig}(z,\rho)\,\hat\tau(x_i)}
    {\sum_{i=1}^{|\mathcal X|} p(x_i)\,w_{ig}(z,\rho)}.
    $

    \State \textbf{Profiled soft prediction and ignorance costs:} for each \(g\in\{2,\dots,G\}\), define
    $ 
    \widehat C_{g,\rho}^{\mathrm{pred}}(z)
    :=
    \sum_{i=1}^{|\mathcal X|}
    p(x_i)\,w_{ig}(z,\rho)
    \Big\{
    \big(\hat\mu_g(z)-\hat\tau(x_i)\big)^2-\hat\eta(x_i)^2
    \Big\},
    $ 
    and
    $ 
    \widehat C_{g,\rho}^{\mathrm{ign}}(z)
    :=
    \sum_{i=1}^{|\mathcal X|}
    p(x_i)\,w_{ig}(z,\rho)\,\hat c(x_i).
    $ 
    \State \textbf{Smoothed profiled empirical loss:} define
    $ 
    \widehat Q_{\rho,\lambda}(z)
    :=
    \sum_{g=2}^G
    \operatorname{smin}_{\lambda}
    \left\{
    \widehat C_{g,\rho}^{\mathrm{pred}}(z),
    \widehat C_{g,\rho}^{\mathrm{ign}}(z)
    \right\},
    $ 
    where
    $ 
    \operatorname{smin}_{\lambda}(a,b)
    :=
    -\lambda
    \log\left(
    \exp(-a/\lambda)+\exp(-b/\lambda)
    \right).
    $ 
    As \(\lambda\downarrow 0\), \(\operatorname{smin}_{\lambda}(a,b)\) approaches \(\min\{a,b\}\).

    \State \textbf{Gradient step:} update the centroids by
    $ 
    z^{(t+1)}
    =
    z^{(t)}
    -
    \gamma_t
    \nabla_z \widehat Q_{\rho,\lambda}(z^{(t)}).
    $
    \State \textbf{Hard assignment:} for each \(i\), set
    \[
    \alpha^{(t+1)}(x_i)
    :=
    \arg\min_{g\in\{2,\dots,G\}} \|x_i-z_g^{(t+1)}\|^2.
    \]

    \For{each cluster \(g\in\{2,\dots,G\}\)}
        \State Define the profiled hard-cluster mean
        $ 
        \hat\mu_g^{(t+1)}
        :=
        \frac{
        \sum_{i:\alpha^{(t+1)}(x_i)=g}
        p(x_i)\hat\tau(x_i)
        }{
        \sum_{i:\alpha^{(t+1)}(x_i)=g}
        p(x_i)
        }.
        $

        \State Define the empirical loss from keeping cluster \(g\) predictive and from assigning it to ignorance as
        \[
        \widehat L_g^{\mathrm{pred}}
        :=
        \sum_{i:\alpha^{(t+1)}(x_i)=g}
        p(x_i)
        \Big\{
        \big(\hat\mu_g^{(t+1)}-\hat\tau(x_i)\big)^2-\hat\eta(x_i)^2
        \Big\},
        \qquad
        \widehat L_g^{\mathrm{ign}}
        :=
        \sum_{i:\alpha^{(t+1)}(x_i)=g}
        p(x_i)\hat c(x_i).
        \]

        \If{\(\#\{i:\alpha^{(t+1)}(x_i)=g\}<\underline\kappa |\mathcal X|\) or \(\widehat L_g^{\mathrm{pred}}>\widehat L_g^{\mathrm{ign}}\)}
            \State Set \(\pi^{(t+1)}(x_i)=0\) for all \(i\) with \(\alpha^{(t+1)}(x_i)=g\)
        \Else
            \State Set \(\pi^{(t+1)}(x_i)=1\) for all \(i\) with \(\alpha^{(t+1)}(x_i)=g\)
        \EndIf
    \EndFor

    \If{\(z^{(t+1)}\) and \(\pi^{(t+1)}\) have converged}
        \State \textbf{break}
    \EndIf
\EndFor

\State \Return final centroids \(z^{(t+1)}\), partition \(\alpha^{(t+1)}\), and abstention rule \(\pi^{(t+1)}\)
\end{algorithmic}
\end{algorithm}

\section{Proofs for extensions} \label{app:more_proofs_extensions}

\subsection{Proof of Lemma \ref{lem:estimated_p_unbiased}} 
\label{proof:lem:estimated_p_unbiased}
For each \(x\in\mathcal X\), define
$ 
\widehat \rho_x(f,\pi)
:=
\left\{\bigl(f(x)-\hat\tau(x)\bigr)^2-\hat\eta(x)^2\right\}\pi(x)
+
\hat c(x)\bigl(1-\pi(x)\bigr).
$ 
By the law of iterated expectations,
$ 
\mathbb E\!\left[S_x \widehat \rho_x(f,\pi)\right]
=
\mathbb E\!\left[
\mathbb E\!\left[S_x \widehat \rho_x(f,\pi)\mid S_x\right]
\right]
=
\mathbb E\!\left[
S_x\,\mathbb E\!\left[\widehat \rho_x(f,\pi)\mid S_x\right]
\right].
$ 
Next, using Assumption \ref{ass:sampling2}(ii),
$$ 
\small 
\begin{aligned} 
\mathbb E\!\left[\bigl(f(x)-\hat\tau(x)\bigr)^2-\hat\eta(x)^2 \mid S_x\right]
=
\bigl(f(x)-\tau(x)\bigr)^2
+
\mathbb V(\hat\tau(x)\mid S_x)
-
\mathbb E[\hat\eta(x)^2\mid S_x]
=
\bigl(f(x)-\tau(x)\bigr)^2,
\end{aligned}
$$
and
$ 
\mathbb E[\hat c(x)\mid S_x]=c(x).
$ 
Therefore
$ 
\mathbb E\!\left[\widehat \rho_x(f,\pi)\mid S_x\right]
=
\bigl(f(x)-\tau(x)\bigr)^2\pi(x)+c(x)\bigl(1-\pi(x)\bigr).
$ 
Substituting back,
$ 
\mathbb E\!\left[S_x \widehat \rho_x(f,\pi)\right]
=
\mathbb E[S_x]\,
\Big(
\bigl(f(x)-\tau(x)\bigr)^2\pi(x)+c(x)\bigl(1-\pi(x)\bigr)
\Big).
$ 
By Assumption \ref{ass:sampling2}(i), \(\mathbb E[S_x]=N_Sp(x)\), so
$ 
\mathbb E\!\left[S_x \widehat \rho_x(f,\pi)\right]
=
N_Sp(x)\,
\Big(
\bigl(f(x)-\tau(x)\bigr)^2\pi(x)+c(x)\bigl(1-\pi(x)\bigr)
\Big).
$ 
Summing over \(x\in\mathcal X\) and dividing by \(N_S \) yields
$ 
\mathbb E[\widehat R_S(f,\pi)]
=
\sum_{x\in\mathcal X} p(x)
\Big(
\bigl(f(x)-\tau(x)\bigr)^2\pi(x)+c(x)\bigl(1-\pi(x)\bigr)
\Big)
=
R_c(f,\pi). 
$

\subsection{Proof of Theorem \ref{thm:regret_sampling}}  \label{proof:thm:regret_sampling}

First, we introduce some notation. We denote
$ 
\tau_i:=\tau(x_i), \hat\tau_i:=\hat\tau(x_i), 
\hat\eta_i^2:=\hat\eta(x_i)^2, 
\hat c_i:=\hat c(x_i), 
c_i:=c(x_i), 
S_i:=S_{x_i},
$ 
and let
$
\lambda_i:=\frac{S_i}{\bar p}.
$ 
By Assumption \ref{ass:sampling2}(i),
$ 
0\le \lambda_i\le 1
\text{ almost surely,}
$ 
and
$ 
\mathbb E[n_S]
=
\sum_{i=1}^{|\mathcal X|}\mathbb E[S_i]
=
N_S.
$ 
Define
\[
\widehat R_S(f,\pi)
:=
\frac{1}{N_S}
\sum_{i=1}^{|\mathcal X|}
S_i
\left[
\left\{\bigl(f(x_i)-\hat\tau_i\bigr)^2-\hat\eta_i^2\right\}\pi(x_i)
+
\hat c_i\bigl(1-\pi(x_i)\bigr)
\right].
\]

Let
$ 
(\pi^\star,f^\star)\in\arg\min_{\pi\in\Pi,\ f\in\mathcal F} R_c(f,\pi).
$ 
Since \((\hat f_S,\hat\pi_S)\) minimizes \(\widehat R_S(f,\pi)\) over \((\mathcal F,\Pi)\),
$ 
\widehat R_S(\hat f_S,\hat\pi_S)\le \widehat R_S(f^\star,\pi^\star).
$ 
Therefore,
\begin{equation}\label{eq:basic_in_sampling}
R_c(\hat f_S,\hat\pi_S)-R_c(f^\star,\pi^\star)
\le
\bigl(R_c-\widehat R_S\bigr)(\hat f_S,\hat\pi_S)
+
\bigl(\widehat R_S-R_c\bigr)(f^\star,\pi^\star)
\le
2\sup_{\pi\in\Pi,\ f\in\mathcal F}\bigl|(\widehat R_S-R_c)(f,\pi)\bigr|.
\end{equation}
Hence it suffices to bound
$ 
\mathbb E\left[\sup_{\pi\in\Pi,\ f\in\mathcal F}\bigl|(\widehat R_S-R_c)(f,\pi)\bigr|\right].
$

\paragraph{Step 1: Symmetrization of the centered empirical process.}
For every \((f,\pi)\), define
\[
\mathcal Z_i(f,\pi)
:=
S_i
\left[
\left\{\bigl(f(x_i)-\hat\tau_i\bigr)^2-\hat\eta_i^2\right\}\pi(x_i)
+
\hat c_i\bigl(1-\pi(x_i)\bigr)
\right].
\]
Then by Lemma \ref{lem:estimated_p_unbiased},
$  
\mathbb E[\widehat R_S(f,\pi)] = R_c(f,\pi)$, which implies we can write 
$ 
(\widehat R_S-R_c)(f,\pi)
=
\frac{1}{N_S}\sum_{i=1}^{|\mathcal X|}\Big(\mathcal Z_i(f,\pi)-\mathbb E[\mathcal Z_i(f,\pi)]\Big).
$ 
Let \(\sigma_i\) be i.i.d. Rademacher random variables independent of the data. By Lemma \ref{lem:symmetrization},
$$ 
\small 
\begin{aligned} 
\mathbb E\left[\sup_{\pi\in\Pi,\ f\in\mathcal F}\bigl|(\widehat R_S-R_c)(f,\pi)\bigr|\right]
\le
2\,
\mathbb E\left[
\sup_{\pi\in\Pi,\ f\in\mathcal F}
\left|
\frac{1}{N_S}\sum_{i=1}^{|\mathcal X|}\sigma_i \mathcal Z_i(f,\pi)
\right|
\right].
\end{aligned} 
$$ 
Using
$ 
\mathbb E[\hat\tau_i^2\mid S_i]
=
\tau_i^2+\mathbb E[\hat\eta_i^2\mid S_i],
$ 
which follows from Assumption \ref{ass:sampling2}(ii), we can rewrite (adding and subtracting the relevant terms)
\[
\small
\begin{aligned}
\mathcal Z_i(f,\pi)
=
S_i\Big[
\bigl(\hat\tau_i^2-\mathbb E[\hat\tau_i^2\mid S_i]\bigr)\pi(x_i)
-2f(x_i)(\hat\tau_i-\tau_i)\pi(x_i)
-\bigl(\hat\eta_i^2-\mathbb E[\hat\eta_i^2\mid S_i]\bigr)\pi(x_i) \\
\qquad\qquad
+(\hat c_i-c_i)\bigl(1-\pi(x_i)\bigr)
+\bigl(f(x_i)-\tau_i\bigr)^2\pi(x_i)
+c_i\bigl(1-\pi(x_i)\bigr)
\Big].
\end{aligned}
\]
Therefore, by the triangle inequality,
\begin{equation} \label{eqn:equation_all}
\small
\begin{aligned}
&\mathbb E\left[\sup_{\pi\in\Pi,\ f\in\mathcal F}\bigl|(\widehat R_S-R_c)(f,\pi)\bigr|\right]
 \le
\frac{2 \bar{p}}{N_S}\mathbb{E}[T_1+T_2+T_3+T_4+T_5+T_6],
\end{aligned}
\end{equation} 
where we have 
$$
\footnotesize 
\begin{aligned} 
T_1
&:=&
\mathbb 
\sup_{\pi\in\Pi}
\left|
\sum_i \sigma_i \frac{S_i}{\bar{p}}
\bigl(\hat\tau_i^2-\mathbb E[\hat\tau_i^2\mid S_i]\bigr)\pi(x_i)
\right|, \quad 
T_2
&:=&
2 K
\sup_{\pi\in\Pi,\ f\in\mathcal F}
\left|
\sum_i \sigma_i \frac{S_i}{\bar{p}}
\frac{f(x_i)}{K}(\hat\tau_i-\tau_i)\pi(x_i)
\right|
\\ 
T_3
& :=&
\mathbb \sup_{\pi\in\Pi}
\left|
\sum_i \sigma_i \frac{S_i}{\bar{p}}
\bigl(\hat\eta_i^2-\mathbb E[\hat\eta_i^2\mid S_i]\bigr)\pi(x_i)
\right|,
T_4
&:=&
\sup_{\pi\in\Pi}
\left|
\sum_i \sigma_i \frac{S_i}{\bar{p}}
(\hat c_i-c_i)(1-\pi(x_i))
\right|, \\ 
T_5
&:=& 
\sup_{\pi\in\Pi,\ f\in\mathcal F}
(K + B)^2 \left|
\sum_i \sigma_i \frac{S_i}{\bar{p}}
\frac{\bigl(f(x_i)-\tau_i\bigr)^2}{ (K+B)^2}\pi(x_i)
\right|,
T_6
&:=&
\sup_{\pi\in\Pi}
\left|
\sum_i \sigma_i \frac{S_i}{\bar{p}}
c_i(1-\pi(x_i))
\right|, 
\end{aligned} 
$$
where for $T_2$ and $T_5$ we divided and multiply by the upper envelope $K$ and $(K + B)^2$ respectively. 
Next, we bound each of these expectations conditional on the vector $(S)_{i=1}^n$ and then apply the law of iterated expectations. 
\paragraph{Step 2: complexity bounds.} The following complexity bounds hold. 
\begin{itemize} 
\item[(i)]
By the same argument as in the proof of Theorem \ref{thm:regret}, Assumption \ref{ass:bounded_Pi} implies that, for any probability measure \(Q\) on \(\mathcal X\),
$
\int_0^2\sqrt{\log \mathcal N(\varepsilon,\Pi,L_1(Q))}\,d\varepsilon
\le c_0\sqrt{v_\Pi}
$ 
for a universal constant \(c_0<\infty\). The same bound also holds for the class \(\{1-\pi:\pi\in\Pi\}\) by Lemma \ref{lem:vc_sum}. 
\item[(ii)] Similarly, following Step 4 of the proof of Theorem \ref{thm:regret}, We can bound the covering number of the function class $\mathcal{H};= \{x \mapsto \frac{f(x)}{K}\pi(x): f\in \mathcal{F}, \pi \in \Pi\}$, so that $\int_0^{2} \sqrt{\log(\mathcal{N}(\varepsilon, \mathcal{H}, L_1(Q))} d\varepsilon \le c_0  \sqrt{C(\mathcal{F}) + v_\Pi}$ for a universal constant $c_0 < \infty$. 
\item[(iii)]Define
$$
\small 
\begin{aligned} 
\mathcal G_1
:=
\left\{
x\mapsto \frac{\bigl(f(x)-\tau(x)\bigr)^2}{(K + B)^2} \pi(x):
f\in\mathcal F,\ \pi\in\Pi
\right\}, \qquad \mathcal F_\tau^2
:=
\left\{
x\mapsto \frac{\bigl(f(x)-\tau(x)\bigr)^2}{(K + B)^2}:
f\in\mathcal F
\right\}. 
\end{aligned} 
$$
Since \(|f(x)|\le K\) and \(|\tau(x)|\le B\),
$ 
\sup_{g\in\mathcal F_\tau^2,\ x\in\mathcal X}|g(x)|\le 1.
$ 
Moreover, for any \(f,g\in\mathcal F\),
$ 
\bigl|(f(x)-\tau(x))^2-(g(x)-\tau(x))^2\bigr|
=
|f(x)-g(x)|\,|f(x)+g(x)-2\tau(x)|
\le
2(K+B)|f(x)-g(x)|.
$ 
Therefore, since $\mathcal{F}_\tau^2$ divides by $(K + B)^2$
$ 
\mathcal N(\varepsilon ,\mathcal F_\tau^2,L_1(Q))
\le
\mathcal N\!\left(\frac{\varepsilon (K+B)}{4},\mathcal F,L_1(Q)\right).
$ 
By Lemma \ref{lem:cover_product},
$ 
\mathcal N(\varepsilon,\mathcal G_1,L_1(Q))
\le
\mathcal N\!\left(\frac{\varepsilon}{2},\mathcal F_\tau^2,L_1(Q)\right)
\mathcal N\!\left(\frac{\varepsilon}{2},\Pi,L_1(Q)\right) \le \mathcal N\!\left(\frac{\varepsilon (K + B)}{4},\mathcal F,L_1(Q)\right)
\mathcal N\!\left(\frac{\varepsilon}{2},\Pi,L_1(Q)\right) .
$ 
Hence, from a simple change of variable 
$ 
\sup_Q
\int_0^{2}\sqrt{\log \mathcal N(\varepsilon,\mathcal G_1,L_1(Q))}\,d\varepsilon
\le
c_0'\Big(\frac{1}{(K + B)} \sqrt{C(\mathcal F)}+\sqrt{v_\Pi}\Big)
$ 
for a universal constant \(c_0'<\infty\).

\end{itemize} 

\paragraph{Step 3: Moments bound.} Using the complexity bounds in Step 2, we next invoke Lemma \ref{lem:kita} for each term $T_1, T_2, T_3, T_4, T_5, T_6$. To do so, it suffices to check that for each random variable in the summand the moment bound is sufficiently attained. 
By Assumption \ref{ass:sampling2}(iv), for all \(i\),
\[
\mathbb E\big[|\hat\tau_i-\tau_i|^3\mid S_i\big], 
\mathbb E\big[|\hat\tau_i^2-\mathbb E[\hat\tau_i^2\mid S_i]|^3\mid S_i\big], 
\mathbb E\big[|\hat\eta_i^2-\mathbb E[\hat\eta_i^2\mid S_i]|^3\mid S_i\big], 
\mathbb E\big[|\hat c_i-c_i|^3\mid S_i\big]\le M
\]
and in addition $|c_i|^3 \le M$ by assumption, and $|S_i|/\bar{p} \le 1$ almost surely. 
Therefore, conditional on $S$, we can bound for any $j \in \{1, \cdots, 6\}$ using Lemma \ref{lem:kita}, with $(K + B)_+^2 = \max\{1, (K + B)^2\}$
$$
\small 
\begin{aligned} 
\mathbb{E}[\mathbb{E}[T_j|S]]  \le q_0  (K + B)_+^2 \mathbb{E}\left[\mathbb{E}\Big[\sqrt{M \sum_i 1\{S_i > 0\}(C(\mathcal{F}) + v_\Pi)} | S\Big] \right] \le q_0  (K + B)_+^2  \sqrt{M N_{\mathcal{X}} (C(\mathcal{F}) + v_\Pi)}
\end{aligned} 
$$
for a universal constant $q_0$ (where we use $\sum_i 1\{S_i > 0\} \le N_{\mathcal{X}}$).  

\paragraph{Step 4: law of iterated expectations} Using the expression in Equation \eqref{eqn:equation_all}, and collecting the terms, 
$$
\small 
\begin{aligned} 
\mathbb E\left[\sup_{\pi\in\Pi,\ f\in\mathcal F}\bigl|(\widehat R_S-R_c)(f,\pi)\bigr|\right] \le \frac{2}{N_S} q_0 (K + B)_+^2 \bar{p}  \sqrt{M N_{\mathcal{X}}(C(\mathcal{F}) + v_\Pi)}. 
\end{aligned} 
$$ 
Finally note that because $\mathbb{E}[S_x] > \gamma$, $N_S = \sum_x \mathbb{E}[S_x] \ge \gamma N_{\mathcal{X}}$. Therefore we have $ \frac{2}{N_S} q_0 (K + B)_+^2 \bar{p}  \sqrt{M N_{\mathcal{X}}(C(\mathcal{F}) + v_\Pi)} \le \frac{2}{\gamma N_{\mathcal{X}}} q_0 (K + B)_+^2 \bar{p}  \sqrt{M N_{\mathcal{X}}(C(\mathcal{F}) + v_\Pi)}$ which completes the claim after rearrangement.

\subsection{Proof of Theorem \ref{thm:l0_sparse}} \label{proof:thm:l0_sparse}

We verify Assumption \ref{ass:entropy_F} for \(\mathcal F_{\ell_0}\) and then invoke Theorem \ref{thm:regret}. The argument is standard and follows from \cite{devroye2013probabilistic, vershynin2018high} among others. 

\paragraph{Step 1: $l_2$-cover} 
Since \(\|q(x)\|_2\le U\) and \(\|\theta\|_2\le R\), every \(f_\theta\in\mathcal F_{\ell_0}\) satisfies
$
|f_\theta(x)|\le \|q(x)\|_2\|\theta\|_2\le RU,
\forall x\in\mathcal X.
$
Thus Assumption \ref{ass:entropy_F} holds with envelope \(K=RU\), provided we bound the corresponding entropy integral. We now characterize $\mathcal{C}(\mathcal F_{\ell_0})$. This follows standard arguments in empirical process theory \citep{van1996weak}. Fix a probability measure \(Q\) on \(\mathcal X\). For \(\theta,\theta'\in\Theta_0\),
\begin{equation} \label{eqn:l1_bound}
\|f_\theta-f_{\theta'}\|_{Q,1}
=
\int_{\mathcal X}\big|q(x)'(\theta-\theta')\big|\,dQ(x)
\le
\int_{\mathcal X}\|q(x)\|_2\,\|\theta-\theta'\|_2\,dQ(x)
\le
U\|\theta-\theta'\|_2.
\end{equation}
Therefore, an \(\ell_2\)-cover of \(\Theta_0\) at radius \(\varepsilon/U\) induces an \(L_1(Q)\)-cover of \(\mathcal F_{\ell_0}\) at radius \(\varepsilon\).

\paragraph{Step 2: slicing} Now decompose \(\Theta_0\) according to supports. For each subset \(S\subseteq\{1,\dots,d\}\) with \(|S|\le s\), let
$
\Theta_S:=\{\theta\in\mathbb R^d: \theta_j = 0 \forall j \not \in  S,\ \|\theta\|_2\le R\}.
$
Then
$
\Theta_0=\bigcup_{|S|\le s}\Theta_S.
$ 
For fixed \(S\), \(\Theta_S\) is an Euclidean ball of radius \(R\) in dimension \(|S|\le s\), so its \(\ell_2\)-covering number satisfies for any $Q$ on $x_1, \cdots, x_{N_{\mathcal{X}}}$, 
$ 
\mathcal{N}\!\left(\frac{\varepsilon}{U},\Theta_S, ||\cdot||_2\right)
\le
\left(1+\frac{2RU}{\varepsilon}\right)^s.
$ 
Moreover, the number of supports of size at most \(s\) is bounded by
$
\sum_{j=0}^s \binom{d}{j}\le \left(\frac{ed}{s}\right)^s.
$ 
By Lemma \ref{lem:cover_union}, and using the bound on the covering number under the $l_1$-norm that follows from Equation \eqref{eqn:l1_bound}, this implies, 
$ 
\mathcal{N}(\varepsilon,\mathcal F_{\ell_0},L_1(Q))
\le
\left(\frac{ed}{s}\right)^s
\left(1+\frac{2RU}{\varepsilon}\right)^s.
$ 
\paragraph{Step 3: Dudley's entropy integral bound} Hence
$ 
\int_0^{2RU}\sqrt{\log \mathcal{N}(\varepsilon,\mathcal F_{\ell_0},L_1(Q))}\,d\varepsilon
\le
\int_0^{2RU}
\sqrt{s\log\!\left(\frac{ed}{s}\right)
+
s\log\!\left(1+\frac{2RU}{\varepsilon}\right)}\,d\varepsilon.
$ 
Using \(\sqrt{a+b}\le \sqrt a+\sqrt b\), we can write \\ 
$ 
\int_0^{2RU}\sqrt{\log \mathcal{N}(\varepsilon,\mathcal F_{\ell_0},L_1(Q))}\,d\varepsilon \le 2RU\sqrt{s\log\!\left(\frac{ed}{s}\right)}
+
\sqrt s\int_0^{2RU}\sqrt{\log\!\left(1+\frac{2RU}{\varepsilon}\right)}\,d\varepsilon.
$ 
With the change of variables \(t=\varepsilon/(2RU)\), the second term becomes
$ 
2RU\sqrt s\int_0^1 \sqrt{\log(1+t^{-1})}\,dt
\le
c_1 RU\sqrt s
$ 
for a universal constant \(c_1<\infty\). Therefore, for a universal constant $c_2 < \infty$, 
$ 
\sup_{Q\in\mathcal Q}
\int_0^{2RU}\sqrt{\log \mathcal{N}(\varepsilon,\mathcal F_{\ell_0},L_1(Q))}\,d\varepsilon
\le
c_2 RU\sqrt{s\log\!\left(\frac{ed}{s}\right)}.
$ 
Thus Assumption \ref{ass:entropy_F} holds with
$
C(\mathcal F_{\ell_0})\le c_2^2 R^2U^2\, s\log\!\left(\frac{ed}{s}\right).
$ 
The result follows directly from Theorem \ref{thm:regret}.

\subsection{Proof of Proposition \ref{prop:conservative_cost_model_selection}} 
\label{proof: prop:conservative_cost_model_selection}
We apply Theorem \ref{thm:model_selection} with \(\bar c\) in place of \(c\). This yields
$$ 
\small 
 \begin{aligned} 
& \mathbb E\!\left[
R_{\bar c}(\hat f^{\,\bar c},\hat\pi^{\,\bar c})
-
\min_{f\in\bar{\mathcal F}}
\sum_{x\in\mathcal X} p(x)\bigl(f(x)-\tau(x)\bigr)^2
\right] 
\\ &\le
c_0 J^{1/2}\sqrt{\frac{M_0 + M_0^2}{|\mathcal X|}}
+
\inf_{1\le j \le J}
\left\{
c_0
\sqrt{\tilde M_u\,\frac{C(\mathcal F_j)+v_{\Pi_j}}{|\mathcal X|}}
+
\Delta_{j}^{\mathrm{loc}}(\pi_{j,\bar c}^\star)
+
A_{\bar c}(\pi_{j,\bar c}^\star)
\right\}.
\end{aligned} 
$$ 
Since \(\bar c(x)\ge c(x)\) for all \(x\in\mathcal X\), we have
$
R_c(f,\pi)\le R_{\bar c}(f,\pi)$ 
for all $(f,\pi)$. On the other hand, 
the no-abstention benchmark
$ 
\min_{f\in\bar{\mathcal F}}
\sum_{x\in\mathcal X} p(x)\bigl(f(x)-\tau(x)\bigr)^2
$ 
does not depend on the abstention cost. Therefore
$$
\begin{aligned}
\mathbb E\!\left[
R_c(\hat f^{\,\bar c},\hat\pi^{\,\bar c})
-
\min_{f\in\bar{\mathcal F}}
\sum_{x\in\mathcal X} p(x)\bigl(f(x)-\tau(x)\bigr)^2
\right]
\le
\mathbb E\!\left[
R_{\bar c}(\hat f^{\,\bar c},\hat\pi^{\,\bar c})
-
\min_{f\in\bar{\mathcal F}}
\sum_{x\in\mathcal X} p(x)\bigl(f(x)-\tau(x)\bigr)^2
\right],
\end{aligned} 
$$ 
which proves the claim.

\subsection{Proof of Theorem \ref{thm:regret_local}} \label{proof:thm:regret_local}

Following the argument of Equation \eqref{eqn:basic_in} it suffices to bound $ 
\mathbb E\left[\sup_{\pi\in\Pi, f \in \mathcal{F}}\bigl|(\widehat R-R_c)(f,\pi)\bigr|\right].
$ 
By Brook's theorem (following a similar argument as in \cite{viviano2024policy}), we can partition units $x \in \mathcal{X}$ into $L + 1$ disjoint groups $\{\mathcal{X}_l\}_{l=1}^{L+1}, \cup_{l=1}^{L+1}\mathcal{X}_l = \mathcal{X}$, with independent observations within each group. Therefore, by the triangular inequality, we can write 
$ 
\mathbb E\left[\sup_{\pi\in\Pi, f \in \mathcal{F}}\bigl|(\widehat R-R_c)(f,\pi)\bigr|\right] \le \sum_{l = 1}^{L+1}\mathbb{E}\left[\sup_{\pi\in\Pi, f \in \mathcal{F}}\bigl|(\widehat{R}^l-R_c^l)(f,\pi)\bigr| \right].
$ 
where we denote $R_c^L = \sum_{x \in \mathcal{X}_l} p(x) \left\{ (f(x) - \tau(x))^2 \pi(x) + c(x)(1 - \pi(x))\right\}$ and $\widehat{R}^l$ the empirical analog. We follow then verbatim the proof of Theorem \ref{thm:regret} to bound each component $\mathbb{E}\left[\sup_{\pi\in\Pi, f \in \mathcal{F}}\bigl|(\widehat{R}^l-R_c^l)(f,\pi)\bigr| \right]$. 

\subsection{Proof of Proposition \ref{cor:error}} \label{proof:cor:error}

Denote by \(\mathcal P_L^{\mathrm{un}}\) the unrestricted depth-\(L\) tree class solved by Algorithm \ref{alg:alg3}, and by \(\mathcal P_{L,G}\subseteq \mathcal P_L^{\mathrm{un}}\) the constrained subclass with at most \(G-1\) predictive leaves. Let
$ 
E_L^\star
:=
\inf_{(f,\pi)\in\mathcal P_L^{\mathrm{un}}}\widehat R(f,\pi)
$ 
be the empirical optimum returned by Algorithm \ref{alg:alg3}. Algorithm \ref{alg:alg4} returns a pruned rule \((\hat f^t,\hat\pi^t)\) and reports
$ 
\varepsilon
:=
\widehat R(\hat f^t,\hat\pi^t)-E_L^\star .
$ 
Since \(\mathcal P_{L,G}\subseteq\mathcal P_L^{\mathrm{un}}\), we have
$ 
E_L^\star
\le
\inf_{(f,\pi)\in\mathcal P_{L,G}}\widehat R(f,\pi).
$ 
Therefore,
$ 
\widehat R(\hat f^t,\hat\pi^t)
\le
\inf_{(f,\pi)\in\mathcal P_{L,G}}\widehat R(f,\pi)
+\varepsilon.
$ 
Thus \(\varepsilon\) is an observable upper bound on the empirical optimization error of the pruned rule relative to the constrained class \(\mathcal P_{L,G}\).
 
Repeating the basic inequality used in the proof of Theorem \ref{thm:regret} (Equation \eqref{eqn:basic_in}), 
$ 
R_c(\hat f^t,\hat\pi^t)
-
\inf_{(f,\pi)\in\mathcal P_{L,G}}
R_c\!\left(f,\pi\right)
\le
2\sup_{(f,\pi)\in\mathcal P_{L,G}}
\left|
\widehat R\!\left(f,\pi\right)
-
R_c\!\left(f,\pi\right)
\right|
+\varepsilon.
$ 
Following verbatim Step 1-Step 4 in the proof of Theorem \ref{thm:regret}, and using the complexity bound in the proof of Theorem \ref{thm:clustering_regret} (Step 3), we can write 
$$
\mathbb{E}\left[2\sup_{(f,\pi)\in\mathcal P_{L,G}}
\left|
\widehat R\!\left(f,\pi\right)
-
R_c\!\left(f,\pi\right)
\right|\right] \le q_0 \bar p \max\{1,K^2\} B^{3/2}
\sqrt{
\tilde M_u\,
\frac{(v+1)\log G}{\underline\kappa |\mathcal X|}
}, 
$$
where $v$ is the largest VC complexity of $x \mapsto 1\{\alpha(x) = g\}, g \in \{1, \cdots, G\}$.  
Using Markov's inequality, it follows that with probability at least $1- \delta$, 
$$
2\sup_{(f,\pi)\in\mathcal P_{L,G}}
\left|
\widehat R\!\left(f,\pi\right)
-
R_c\!\left(f,\pi\right)
\right| \le \frac{q_0}{\delta} \bar p \max\{1,K^2\} B^{3/2}
\sqrt{
\tilde M_u\,
\frac{(v+1)\log G}{\underline\kappa |\mathcal X|}
}. 
$$
Since $v \le d G\log(G)$ by Example \ref{exmp:tree_entropy}, 
the proof completes after collecting the terms.

\subsection{Proof of Theorem \ref{thm:breakeven}} \label{proof:thm:breakeven}

We will follow similarly the proof of Theorem \ref{thm:breakevena} with changes in notation whenever necessary (and with a slight abuse of notation). 
\paragraph{Notation} Before deriving the theorem, we introduce some notation. Let 
\begin{equation} \label{eqn:xi_x}
\small 
\begin{aligned}
\xi_x^{(m)}
=
\Big\{(\hat f^{(m)}(x)-\hat\tau_{oos}(x))^2 - (\hat{f}^{(0)}(x)-\hat\tau_{oos}(x))^2\Big\}\hat\pi^{(m)}(x)
\end{aligned} 
\end{equation} 
and similarly denote 
\begin{equation} \label{eqn:sigma_m}
\sigma_m^2 := |\mathcal X|\sum_{x\in\mathcal X} p(x)^2
\mathbb{V}\!\left(\xi_x^{(m)}\mid \mathcal{T}\right), \qquad \mathcal{T} := \left( (\hat{f}^{(m')}, \hat{\pi}^{(m')})_{m' \in \mathcal{M}}, \hat{f}^{(0)}\right) 
\end{equation} 
Note that because $p(x)^2$ is of order $1/N_{\mathcal{X}}^2$, we expect $\sigma_m^2(f) = O(1)$. 
We write 
\begin{equation} \label{eqn:delta_hat_f}
\small 
\begin{aligned} 
\widehat{\Delta}(m) = \sum_x p(x) \xi_x^{(m)}, \qquad \Delta(m)
:= \sum_x p(x) \Big\{(\hat f^{(m)}(x)- \tau(x))^2 - (\hat{f}^{(0)}(x)- \tau(x))^2\Big\}\hat\pi^{(m)}(x)
.
\end{aligned} 
\end{equation}
Note that because $\mathbb{E}[(\hat{f}^{(m)}(x) - \hat{\tau}_{oos}(x))^2 | \mathcal{T}] = \eta(x)^2 + (\hat{f}^{(m)}(x) - \tau(x))^2$ and similarly $\mathbb{E}[(\hat{f}^{(0)}(x) - \hat{\tau}_{oos}(x))^2 | \mathcal{T}] = \eta(x)^2 + (\hat{f}^{(0)}(x) - \tau(x))^2$, it follows that $\mathbb{E}[\widehat{\Delta}(m) | \mathcal{T}] =\Delta(m)$. 
To avoid degenerate solutions (i.e., the variance equals zero), we assume that the relevant variance is bounded away from zero.

To establish asymptotic normality, we first derive the following lemma. 

\begin{lem}[Marginal asymptotic normality of the break-even statistic] \label{prop:clt_break_even} Let Assumptions \ref{ass:general_regret}, \ref{ass:entropy_F} hold for the out of sample estimates, and assume the out-of-sample estimates are independent of $\mathcal{T}$ (in-sample estimates).
Take any $m \in \mathcal{M}$ which is $\mathcal{T}$-measurable. Let $\sigma_m^2 > \underline{l}$ for a constant $\underline{l} > 0$ almost surely, with $\sigma_m^2$ as defined in Equation \eqref{eqn:sigma_m}.  Then
\[
\frac{\sqrt{|\mathcal X|} \Big(\widehat\Delta(m)-\mathbb{E}[\widehat{\Delta}(m) | \mathcal{T}]\Big)}{\sigma_m}
\overset{d}{\longrightarrow}N(0,1).
\]
\end{lem}

\begin{proof}[Proof of Lemma \ref{prop:clt_break_even}]
We will write $\xi_x := \xi_x^{(m)}$ whenever clear from the context. 
By Assumption \ref{ass:general_regret} (independence), and independence of the training and out-of-sample estimates, the random variables 
$ 
p(x)\Big(\xi_x-\mathbb E[\xi_x\mid \mathcal{T}]\Big),
$ 
are conditionally independent and centered given $\mathcal{T}$. 

Therefore, we establish the central limit theorem using Lyapunov central limit theorem conditional on $\mathcal{T}$. 
We first verify a uniform third-moment bound. Write
\[
A_x^{(m)}
:=
\Big\{(\hat f^{(m)}(x)-\hat\tau_{oos}(x))^2\Big\}\hat\pi^{(m)}(x), \qquad 
A_x^{(0)}
:=
(\hat{f}^{(0)}(x)-\hat\tau_{oos}(x))^2\hat\pi^{(m)}(x),
\]
so that \(\xi_x=A_x^{(m)}-A_x^{(0)}\). By Lemma \ref{lem:centered_square}, Assumption \ref{ass:general_regret}(ii) (the uniform bound \(|\tau(x)|\le B\)) and the envelope bound \(|\hat f^{(m)}(x)|,|\hat{f}^{(0)}|\le K\) in Assumption \ref{ass:entropy_F}, there exists a finite constant \(c_0<\infty\), that only depends on $M_0, B, K < \infty$, so that 
\[
\max\Big\{
\mathbb E\big[|A_x^{(m)}-\mathbb E[A_x^{(m)}\mid  \mathcal{T}]|^3\mid \mathcal{T}\big],
\mathbb E\big[|A_x^{(0)}-\mathbb E[A_x^{(0)}\mid \mathcal{T}]|^3\mid  \mathcal{T}\big]
\Big\}
\le
c_0 < \infty.
\]
Hence, since \(|a-b|^3\le 4(|a|^3+|b|^3)\), for a finite constant $c_1 < \infty$ (a function of $c_0$)
\[
\mathbb E\Big[\big|\xi_x-\mathbb E[\xi_x\mid \mathcal{T}]\big|^3
\mid \mathcal{T}\Big]
\le
c_1 < \infty. 
\]
By Assumption \ref{ass:general_regret}(iii),
$
\sum_{x\in\mathcal X} p(x)^3
\le
\max_{x\in\mathcal X}p(x)\sum_{x\in\mathcal X}p(x)^2
\le
\frac{\bar p}{|\mathcal X|}\cdot \frac{\bar p}{|\mathcal X|}
=
\frac{\bar p^2}{|\mathcal X|^2}.
$
Therefore
$ 
\sum_{x\in\mathcal X}
\mathbb E\Big[
\big|p(x)\big(\xi_x-\mathbb E[\xi_x\mid \mathcal{T}]\big)\big|^3
\mid \mathcal{T}
\Big]
\le
c_1
\sum_{x\in\mathcal X} p(x)^3
\le
c_1\bar p^2 |\mathcal X|^{-2}.
$ 
Since \(\sigma_m^2 > \underline{l}\) for a positive constant $\underline{l} > 0$ by assumption, it follows that
\[
\frac{
|\mathcal{X}|^{3/2} \sum_{x\in\mathcal X}
\mathbb E\Big[
\big|p(x)\big(\xi_x-\mathbb E[\xi_x\mid \mathcal{T}]\big)\big|^3
\mid \mathcal{T}
\Big]
}{
\sigma_m^{3}
}
\longrightarrow 0.
\]
Thus the Lyapunov condition holds conditionally on $\mathcal{T}$. Therefore, it follows that as we define $Z_{|\mathcal{X}|} = \frac{\sqrt{|\mathcal X|} \Big(\widehat\Delta(m)-\mathbb{E}[\widehat{\Delta}(m) | \mathcal{T}]\Big)}{\sigma_m}$, we have $P(Z_{|\mathcal{X}|} \le t|\mathcal{T}) \rightarrow \Phi(t)$ where $\Phi(t)$ is the Gaussian CDF. Integrating over \(\mathcal T\) over both sides and applying the dominated convergence theorem gives the
unconditional convergence stated in the lemma.
\end{proof}

The rest of the proof mimics the proof of Theorem \ref{thm:breakevena} and omitted for brevity.

\end{document}